\tikzstyle{empty rect}=[fill=none, draw=black, shape=rectangle, minimum height=.6cm, minimum width=.6cm]
\tikzstyle{rect}=[fill=white, draw=black, shape=rectangle, minimum height=.6cm, minimum width=.6cm]
\tikzstyle{red rect}=[fill=white, draw=red, shape=rectangle, minimum height=.6cm, minimum width=.6cm]
\tikzstyle{condition}=[fill=white, draw=black, shape=circle, minimum size=0.6cm, inner sep=0pt]
\tikzstyle{token}=[draw=black, fill=black, shape=circle, inner sep=0pt, minimum size=0.2cm]
\tikzstyle{pt condition}=[fill=white, draw=black, shape=circle, minimum size=.8cm, inner sep=0pt]
\tikzstyle{big pt condition}=[fill=white, draw=black, shape=circle, minimum size=1.1cm, inner sep=0]
\tikzstyle{negative token}=[fill=white, draw=black, shape=circle, minimum size=0.2cm, inner sep=0pt, thick]
\tikzstyle{small rect}=[fill=white, draw=black, shape=rectangle]
\tikzstyle{BN node}=[fill=white, draw=black, shape=ellipse, align=center, minimum width=2.3cm, inner sep=.2em]
\tikzstyle{white rect}=[fill=white, draw=none, shape=rectangle]
\tikzstyle{circle basic}=[fill=white, draw=black, shape=circle, minimum size=.9cm, inner sep=0cm]
\tikzstyle{big arrow}=[->, {-{Latex[length=3mm,width=2mm]}}]
\tikzstyle{arrow}=[->]
\tikzstyle{dashed 0}=[-, dashed]
\tikzstyle{red dash dotted}=[-, dashdotted, draw=red]
\tikzstyle{red none}=[-, draw=red]
\tikzstyle{thick none}=[-, very thick]
\tikzstyle{dashed arrow}=[->, dashed]
\tikzstyle{residual arrow}=[->, {-{Latex[length=2mm,width=1mm]}}]
\tikzstyle{red residual arrow}=[->, {-{Latex[length=2mm,width=1mm]}}, draw=red]
\tikzstyle{plate}=[-, draw=gray, very thick]
\tikzstyle{dotted line}=[-, dotted]
\newcommand{\tm}{t}
\newcommand{\tmtwo}{u}
\newcommand{\tmthree}{r}
\newcommand{\var}{x}
\newcommand{\vartwo}{y}
\newcommand{\la}[2]{\lambda #1.#2}
\newcommand{\val}{v}
\newcommand{\valtwo}{w}
\newcommand{\ctxholep}[1]{[#1]}
\newcommand{\ctxhole}{\ctxholep{\cdot}}
\newcommand{\evctx}{\mathsf{E}}
\newcommand{\nbvctxtwo}[1]{\nbvctxtwo{#1}}
\newcommand{\sctx}{\mathsf{S}}
\newcommand{\sctxp}[1]{\ctxholep{#1}\sctx}
\newcommand{\defeq}{:=}
\newcommand{\grameq}{::=}
\newcommand{\isub}[2]{\{#1\leftarrow#2\}}
\newcommand{\esub}[2]{[#1\leftarrow#2]}
\newcommand{\grammarpipe}{\mathrel{\big |}}
\newcommand{\ie}{\textit{i.e.}\xspace}
\newcommand{\eg}{\textit{e.g.}\xspace}
\newcommand{\ih}{\textit{i.h.}\xspace}
\newcommand{\red}[1]{{\color{red} {#1}}}
\newcommand{\blue}[1]{{\color{blue} {#1}}}
\newcounter{numberone}
\newcounter{numbertwo}
\newcommand{\tjudg}[3]{#1\vdash #2:#3}
\renewcommand{\int}[1]{\{#1\}}
\newcommand{\mset}[1]{[#1]}
\newcommand{\tmt}{\mathtt{t}}
\newcommand{\tmf}{\mathtt{f}}
\newcommand{\rabs}{\textsc{abs}}
\newcommand{\rvar}{\textsc{var}}
\newcommand{\Bool}{\mathbb{B}}
\newcommand{\coin}[1]{\mathtt{coin}_{#1}}
\newcommand{\pair}[2]{\langle#1,#2\rangle}
\newcommand{\letin}[3]{\mathtt{let}\;#1=#2\;\mathtt{in}\;#3}
\newcommand{\letp}[3]{\mathtt{letp}\;#1=#2\;\mathtt{in}\;#3}
\newcommand{\letpin}[4]{\mathtt{letp} \ \pair{#1}{#2} \ = \ #3 \ \mathtt{in} \ #4}
\newcommand{\leti}{\mathtt{let}\;}
\newcommand{\inl}{\;\mathtt{in}}
\newcommand{\dist}[1]{\{#1\}}
\renewcommand{\dist}{d}
\newcommand{\isZ}[1]{\mathtt{isZero}(#1)}
\newcommand{\pred}[1]{\mathtt{pred}(#1)}
\newcommand{\fix}[2]{\mathtt{fix}\;#1.#2}
\newcommand{\der}{\mathtt{der}\,}
\newcommand{\bang}{\oc}
\newcommand{\caseSym}{\scalebox{0.5}[1]{ $ \Rightarrow $ }}
\newcommand{\case}[2]{\texttt{case}\;#1\;\texttt{of}\; \{#2\}}
\newcommand{\rcond}{\textsc{s-cond}\xspace}
\newcommand{\rpair}{\textsc{s-pair}\xspace}
\newcommand{\rlet}{\textsc{s-let}\xspace}
\renewcommand{\rvar}{\textsc{s-var}\xspace}
\newcommand{\rletp}{\textsc{s-letp}\xspace}
\newcommand{\rbang}{\textsc{s-bang}\xspace}
\newcommand{\rder}{\textsc{s-der}\xspace}
\newcommand{\rapp}{\textsc{s-app}\xspace}
\renewcommand{\rabs}{\textsc{s-abs}\xspace}
\newcommand{\rsample}{\textsc{s-sample}\xspace}
\newcommand{\robs}{\textsc{s-obs}\xspace}
\newcommand{\rbool}{\textsc{s-bool}\xspace}
\newcommand{\icond}{\textsc{i-cond}\xspace}
\newcommand{\pcond}{\textsc{p-cond}\xspace}
\newcommand{\icoin}{\textsc{i-sample}\xspace}
\newcommand{\ipair}{\textsc{i-pair}\xspace}
\newcommand{\ilet}{\textsc{i-let}\xspace}
\newcommand{\plet}{\textsc{p-let}\xspace}
\newcommand{\ivar}{\textsc{i-var}\xspace}
\newcommand{\iletp}{\textsc{i-letp}\xspace}
\newcommand{\ibang}{\textsc{i-bang}\xspace}
\newcommand{\ider}{\textsc{i-der}\xspace}
\newcommand{\iapp}{\textsc{i-app}\xspace}
\newcommand{\iabs}{\textsc{i-abs}\xspace}
\newcommand{\isample}{\textsc{i-sample}\xspace}
\newcommand{\psample}{\textsc{p-sample}\xspace}
\newcommand{\iobs}{\textsc{i-obs}\xspace}
\newcommand{\exponential}{multiset\xspace}
\newcommand{\Names}{\textsf{Names}}
\renewenvironment{itemize}
{
	\begin{list}{\labelitemi}
		{\setlength{\itemsep}{0pt}
			\setlength{\topsep}{0pt}
			\setlength{\parsep}{0pt}
			\setlength{\partopsep}{0pt}
			\setlength{\leftmargin}{15pt}
			\setlength{\rightmargin}{0pt}
			\setlength{\itemindent}{0pt}
			\setlength{\labelsep}{5pt}
			\setlength{\labelwidth}{10pt}
	}}
	{
	\end{list} 
}
\renewenvironment{enumerate}
{
	\begin{list}{\arabic{numberone}.}
		{
			\usecounter{numberone}
			\setlength{\itemsep}{0pt}
			\setlength{\topsep}{0pt}
			\setlength{\parsep}{0pt}
			\setlength{\partopsep}{0pt}
			\setlength{\leftmargin}{15pt}
			\setlength{\rightmargin}{0pt}
			\setlength{\itemindent}{0pt}
			\setlength{\labelsep}{5pt}
			\setlength{\labelwidth}{15pt}
	}}
	{
	\end{list} 
}
\let\markeverypar\everypar
\newtoks\everypar
\newcommand{\RED}[1]{{ \color{red}{#1}}}
\newcommand{\pink}[1]{{\color{magenta}{#1}}}
\newcommand{\violet}[1]{{\color{violet}{#1}}}
\def\eg{{\it e.g.}\xspace}
\def\ie{{\it i.e.}\xspace}
\DeclareMathSymbol{\mhyphen}{\mathord}{AMSa}{"39}
\newcommand{\st}{\text{ s.t. }}
\newcommand{\tand }{\text{ and }}
\newcommand{\tor }{\texttt{ or }}
\newcommand{\BN}{Bayesian network\xspace}
\newcommand{\BNs}{Bayesian networks\xspace}
\newcommand{\Val}[1]{\mathsf{Val}(#1)} 
\newcommand{\bn}{\mathcal B}
\newcommand{\FProd}{\odot} 
\newcommand{\Fprod}{\odot}
\newcommand{\BigFProd}{\bigodot} 
\newcommand{\bX}{\mathbb X}
\newcommand{\bY}{\mathbb Y}
\newcommand{\bZ}{\mathbb Z}
\newcommand{\bE}{\mathbb E}
\newcommand{\bW}{\mathbb W}
\newcommand{\x}{ \mathtt{x}}
\newcommand{\y}{ \mathtt{y}}
\newcommand{\z}{\mathtt {z}}
\newcommand{\bx}{{\overline \x}}
\newcommand{\by}{{\overline \y}}
\newcommand{\bz}{{\overline \z}}
\newcommand{\xs}{\bx}
\newcommand{\ys}{\by}
\newcommand{\BigO}[1]{\mathcal{O}(#1)}
\newcommand{\Exp}[1]{\mathsf{exp}(#1)}
\newcommand{\sA}{A}
\newcommand{\sB}{B}
\newcommand{\sC}{C}
\newcommand{\sP}{P}
\newcommand{\sQ}{Q}
\newcommand{\sL}{L}
\newcommand{\sK}{K}
\newcommand{\cA}{\underline A}
\newcommand{\cB}{\underline B}
\newcommand{\cP}{\underline P}
\newcommand{\cQ}{\underline Q}
\newcommand{\cL}{\underline L}
\newcommand{\cK}{\underline K}
\newcommand{\cAp}[1]{\cA\ctxholep{#1}}
\newcommand{\cPp}[1]{\cP\ctxholep{#1}}
\newcommand{\cLp}[1]{\cL\ctxholep{#1}}
\newcommand{\cKp}[1]{\cK\ctxholep{#1}}
\newcommand{\cAnp}[2]{\cA_{#1}\ctxholep{#2}}
\newcommand{\cLnp}[2]{\cL_{#1}\ctxholep{#2}}
\newcommand{\cKnp}[2]{\cK_{#1}\ctxholep{#2}}
\newcommand{\cLamnp}[2]{\underline{\Lambda}_{#1}\ctxholep{#2}}
\newcommand{\cGamnp}[2]{\underline{\Gamma}_{#1}\ctxholep{#2}}
\newcommand{\One}{\mathsf 1} 
\newcommand{\bt}{\mathtt b}
\newcommand{\tb}{\bt}
\newcommand{\iA}{\at}
\newcommand{\iB}{\bt}
\newcommand{\iP}{\at^+}
\newcommand{\iQ}{\ct^+}
\newcommand{\nX}{ \mathtt X}
\newcommand{\nY}{ \mathtt Y}
\newcommand{\nZ}{ \mathtt Z}
\newcommand{\nW}{ \mathtt W}
\newcommand{\nD}{\mathtt{D}}
\newcommand{\nS}{\mathtt{S}}
\newcommand{\nR}{\mathtt{R}}
\newcommand{\R}{\mathtt{R}}
\newcommand{\X}{ \nX}
\newcommand{\Y}{ \nY}
\newcommand{\Z}{ \nZ}
\newcommand{\W}{ \nW}
\newcommand{\ovdash}[1]{\overset{\violet{#1}}{\vdash}}
\newcommand{\dem}{\triangleright}
\newcommand{\true}{\mathtt{t}}
\newcommand{\false}{\mathtt{f}}
\newcommand{\Real}{\mathbb{R}}
\newcommand{\bool}{\mathbb{B}}
\newcommand{\sample}[1]{\mathtt{sample}_{#1}}
\renewcommand{\pair}[2]{\langle #1, #2 \rangle}
\newcommand{\sem}[1]{\llbracket {#1} \rrbracket}
\newcommand{\den}[1]{\overline{\sem{#1}}}
\newcommand{\arrow}{\multimap}
\newcommand{\lam}{\lambda}
\newcommand{\tmu}{u}
\newcommand{\Nm}[1]{\mathsf{Nm}(#1)}
\newcommand{\gpipe}{\grammarpipe}
\newcommand{\PP}{\mathcal P}
\newcommand{\LL}{\mathcal L}
\newcommand{\subs}[2]{\{#1\leftarrow#2\}}
\newcommand{\hole}[1]{\llparenthesis #1\rrparenthesis}
\newcommand{\eshole}[1]{\holebag{#1}}
\renewcommand{\ctxholep}[1]{\hole{#1}}
\newcommand{\shole}[1]{\holebag{#1}}
\renewcommand{\sctxp}[1]{\shole{#1}\sctx}
\newcommand{\scp}{\sctxp}
\renewcommand{\ss}{\evctx}
\newcommand{\rsym}{\mathsf{r}}
\newcommand{\Rule}{\rsym}
\newcommand{\Root}[1]{\mapsto_{#1}}
\newcommand{\eslist}{\mathsf{S}}
\newcommand{\mult}{\mathbf{m}}
\newcommand{\meas}[1]{\mathtt{meas}(#1)}
\newcommand{\dB}{\mathsf{db}}
\newcommand{\dS}{\mathsf{dsub}}
\newcommand{\sval}{\mathsf{dsub}}
\newcommand{\dbang}{\mathsf{der \oc}}
\newcommand{\dpair}{\mathsf{pm}}  
\newcommand{\dom}{{\mathsf{dom}}}
\newcommand{\ft}[2]{  \overset{#2}{#1}  }
\renewcommand{\dmd}[1]{ {\color{blue} \, \diamond \, #1} }
\newcommand{\dft}[2]{~  {\color{blue} \, \diamond \, \overset{#2}{#1}} }
\newcommand{\ftone}{ \mathbf{1} }
\newcommand{\dmdC}[1]{~  {\color{violet} \, \diamond \, \scriptstyle{#1}} }
\newcommand{\dftC}[2]{~  {\color{violet} \, \diamond \, \scriptstyle{#2}} }
\newcommand{\Proj}[3][]{#2\vert_{#3}^{#1}} 
\newcommand{\Low}{\lambda_{\texttt{low}}}
\newcommand{\low}{\mathtt{low}}
\newcommand{\Bang}{ \lambda_{!} }
\newcommand{\iBang}{\texttt {iTypes}}
\newcommand{\larrow}{\multimap}
\newcommand{\Up}[1]{(#1)^-}
\newcommand{\Down}[1]{(#1)^+}
\newcommand{\up}[1]{{#1}^\uparrow}
\newcommand{\down}[1]{{#1}^\downarrow}
\renewcommand{\iA}{\sA}
\renewcommand{\iB}{\sB}
\renewcommand{\iP}{\sP}
\renewcommand{\iQ}{\sQ}
\newcommand{\iL}{\sL}
\newcommand{\iK}{\sK}
\newcommand{\cpts}{CPT's\xspace}
\newcommand{\pax}{probabilistic axiom\xspace}
\newcommand{\paxs}{probabilistic axioms\xspace}
\newcommand{\bv}{\mathtt {b}}
\newcommand{\bvs}{\mathtt{\overline {b}}}
\newcommand{\Cpts}[1]{\mathsf{Cpts}(#1)}
\newcommand{\flow}[1]{\mathsf{flow}(#1)}
\newcommand{\pattern}[1]{\langle #1\rangle}
\renewcommand{\defeq}{\triangleq}
\renewcommand{\coin}[1]{\mathtt{bernoulli}_{#1}}
\newcommand{\cpt}{CPT\xspace}
\newcommand{\condSym}{\textnormal{\ttfamily\bfseries c}}
\newcommand{\freevar}[1]{\scriptstyle{ {\langle} #1 {\rangle}}}
\newcommand{\CPT}[1]{\condSym {\freevar{#1}}}
\newcommand{\CPTN}[2]{{\condSym^{#1}\!\freevar{#2}}}
\newcommand{\cc}{\condSym}
\newcommand{\rv}{r.v.\xspace}
\newcommand{\rvs}{r.v.s\xspace}
\newcommand{\name}{name\xspace}
\newcommand{\names}{names\xspace}
\newcommand{\linear}{low-level\xspace}
\renewcommand{\bool}{\mathsf{B}}
\renewcommand{\Bool}{\bool}
    \newtheorem{theorem}{Theorem}[section]
\newtheorem*{theorem*}{Theorem} %
\newtheorem*{thm*}{Theorem}
\newtheorem*{prop*}{Proposition}
\newtheorem*{lemma*}{Lemma}
\newtheorem*{example*}{Example}
\newtheorem*{definition*}{Definition}
\newtheorem*{property*}{Property}
\newtheorem*{remark*}{Remark}
\newtheorem{thm}{Theorem}[section]
\newcommand{\version}{1}
\newcommand{\commento}{0}
\newcommand{\SLV}[2]{\ifthenelse{\equal{\version}{0}}{#1}{#2}}
\newcommand{\condinc}[2]{\ifthenelse{\equal{\commento}{0}}{#1}{{\color{orange}{#2}}}}
\newcommand{\Appendix}{Appendix\xspace}
\newtcolorbox{mybox}
{
	colback=white,
	boxsep=0pt,left=4pt,right=4pt,top=4pt,bottom=4pt
}
\newtcolorbox{myboxC} 
{
	colback=blue!05!white,
	boxsep=0pt,left=4pt,right=4pt,top=4pt,bottom=4pt
}
\Crefname{section}{Sect.}{Sections}
\Crefname{theorem}{Thm.}{Thm.}
\Crefname{thm}{Thm.}{Thm.}
\Crefname{proposition}{Prop.}{Prop.}
\Crefname{prop}{Prop.}{Prop.}
\Crefname{definition}{Def.}{Def.}
\Crefname{Def}{Def.}{Def.}
\Crefname{figure}{Fig.}{Figs.}
\Crefname{equation}{Eq.}{Eqss.}
	\theoremstyle{acmdefinition}
	\newtheorem{remark}[theorem]{Remark}}
	\theoremstyle{acmdefinition}
	\newtheorem{notation}[theorem]{Notation}}
	\theoremstyle{acmdefinition}
	\newtheorem{fact}[theorem]{Fact}}
\newcommand\todoc[2][]{\todo[color=pink!20,#1]{#2}} 
\newcommand{\CF}[1]{\violet{**CF: #1}}
\begin{document}

\title{Higher Order Bayesian Networks, Exactly}
\SLV{}{\subtitle{{\large Extended Version}}}

\author{Claudia Faggian}
\orcid{0009-0009-8875-3595}
\affiliation{%
	\institution{IRIF, CNRS, Université Paris Cité}
	\country{France}
}
\email{faggian@irif.fr}

\author{Daniele Pautasso}
\orcid{0009-0008-8865-7942}
\affiliation{%
	\institution{University of Turin}
	\country{Italy}
}
\email{daniele.pautasso@unito.it}

\author{Gabriele Vanoni}
\orcid{0000-0001-8762-8674}
\affiliation{%
	\institution{IRIF, CNRS, Université Paris Cité}
	\country{France}
}
\email{gabriele.vanoni@irif.fr}

\begin{abstract}
Bayesian networks 
are  graphical \emph{first-order} probabilistic models that allow for a compact representation of large probability distributions, and for  efficient inference, both exact and approximate. We introduce  a \emph{higher-order} programming language---in the idealized form of a $\lambda$-calculus---which we prove \emph{sound and complete} w.r.t. \BNs: each \BN can be encoded as a term, and conversely each (possibly higher-order and recursive) program of ground type \emph{compiles} into a \BN. 

The  language allows for  the specification of recursive 
probability models and hierarchical structures. 
Moreover, we provide a \emph{compositional} and \emph{cost-aware} semantics which is based on factors, the standard mathematical tool  used in  Bayesian inference. 
Our results rely  on advanced  techniques rooted into linear logic, intersection types, rewriting theory,  and Girard's geometry of interaction, which are here combined in a novel way.
  
\end{abstract}

\keywords{intersection types,  bounds, Bayesian networks, probabilistic programming, programming languages,  geometry of interaction}

\begin{CCSXML}
	<ccs2012>
	<concept>
	<concept_id>10003752.10003753.10003754.10003733</concept_id>
	<concept_desc>Theory of computation~Lambda calculus</concept_desc>
	<concept_significance>500</concept_significance>
	</concept>
	<concept>
	<concept_id>10003752.10003753.10003757</concept_id>
	<concept_desc>Theory of computation~Probabilistic computation</concept_desc>
	<concept_significance>500</concept_significance>
	</concept>
	<concept>
	<concept_id>10003752.10003790.10003801</concept_id>
	<concept_desc>Theory of computation~Linear logic</concept_desc>
	<concept_significance>500</concept_significance>
	</concept>
	<concept>
	<concept_id>10003752.10003790.10011740</concept_id>
	<concept_desc>Theory of computation~Type theory</concept_desc>
	<concept_significance>500</concept_significance>
	</concept>
	<concept>
	<concept_id>10003752.10010124.10010131.10010133</concept_id>
	<concept_desc>Theory of computation~Denotational semantics</concept_desc>
	<concept_significance>500</concept_significance>
	</concept>
	</ccs2012>
\end{CCSXML}

\ccsdesc[500]{Theory of computation~Lambda calculus}
\ccsdesc[500]{Theory of computation~Probabilistic computation}
\ccsdesc[500]{Theory of computation~Linear logic}
\ccsdesc[500]{Theory of computation~Type theory}
\ccsdesc[500]{Theory of computation~Denotational semantics}

\maketitle





\section{Introduction}
This paper is  a  foundational study, taking a cost-aware approach to  the semantics of higher-order probabilistic programming languages.
Probabilistic models play a crucial role in several fields such as machine learning,  cognitive science, and applied statistics, with applications spanning from finance to biology. A prominent example of such models are Bayesian networks (BNs)~\cite{Pearl88},  a  (first-order, static) graphical formalism  
able to represent  complex  systems 
in a   \emph{compact} way and  enabling  \emph{efficient} inference algorithms.  
BNs decompose  
 large joint distributions into smaller \emph{factors}. These are used in inference algorithms, both exact  (such as message passing and variable elimination) and approximate (sampling-based).
Despite   their  significant strengths, the task of modeling using \BNs is comparable to the task of programming using logical circuits.

\paragraph{Probabilistic Programming Languages} A different approach is taken by \emph{probabilistic programming languages} (PPLs), where statistical models are specified as programs. The fundamental idea behind PPLs is to separate the model description---the program---from the computation of the probability distribution specified by the program---the inference task. This separation aims at making stochastic modeling as accessible as possible, hiding the underlying inference engines, which typically encompass various sampling methods such as importance sampling, Markov Chain Monte Carlo,  and Gibbs sampling. In this paper, we specifically focus on \emph{functional} PPLs, which allow for first-class higher-order functions and compositional semantics (\eg Church \cite{GoodmanMRBT08}, Anglican \cite{Wood14} and Venture \cite{mansinghka2014venture}).


\paragraph{Bayesian Networks and  PPLs}
 Bayesian networks and PPLs are closely interconnected.  
 On the one hand, Bayesian networks can be easily represented as simple, first-order, probabilistic programs~\cite{BUGS, KollerMP97}. On the other hand, several \emph{first-order} PPLs (such as BUGS, a widely used declarative  language, and Infer.NET) compile programs into a graphical model, specifically a Bayesian network, which is then utilized for performing inference tasks. For a detailed tutorial and an analysis of the involved subtleties, we refer to \cite{abs-1809-10756} (Ch.3).
 
%

\paragraph{Towards a Foundational Understanding.}


The research community is devoting considerable effort to establish a solid foundational understanding of functional PPLs. 
	A central   concept is  compositionality, which  is the key to reason in a modular (and scalable) way about programs.
 Such an understanding is crucial for the development of robust formal methods to  facilitate the analysis of probabilistic programs, the construction of Bayesian models, and the verification of inference correctness. Pioneering works by \citet{JacobsZ16, JacobsZ} have paved the way for a logical and semantical comprehension of Bayesian networks and inference from a categorical perspective. The majority of foundational papers, \eg \cite{HeunenKSY17, ScibiorKVSYCOMH18, DahlqvistSDG18, DBLP:journals/pacmpl/VakarKS19, Stein2021CompositionalSF}, have adopted an approach based on  category theory, which has yielded remarkable insights, enabling denotational proofs of correctness and  compositionality principles. 
This line of research however  
 does not 
take into account the  raison d'\^etre  of \BNs,  namely the space and time efficiency of inference.
 In the literature, compositional semantics and efficiency are typically explored as \emph{separate} entities. This dichotomy stands in stark contrast to  Bayesian networks, where the representation, its semantics (the defined joint distribution), and the inference algorithms are deeply \emph{intertwined}.

\paragraph{This paper.} Our   foundational investigation  adopts a cost-aware perspective.
We introduce a semantical framework that integrates the \emph{efficiency} of Bayesian networks with the \emph{expressiveness} of higher-order functional programming, and the \emph{compositional} nature of type systems. 
We adopt an idealized  functional PPL, namely an untyped $\lambda$-calculus enriched with probabilistic primitives.
%
Our language \emph{faithfully} encompasses conventional Bayesian networks (expressed here by first-order terms in normal form),   and comes equipped   with    a semantics and  formal methods which are \emph{resources-sensitive}. 
The core of our approach lies in semantical techniques, including rewriting theory and  type systems. These  form the groundwork for a compilation  scheme that translates higher-order terms into Bayesian networks, which serve as a low-level language.
We list below our main contributions:
\begin{itemize}
	\item \emph{A higher-order language for BNs.} We introduce  a probabilistic call-by-push-value $\lambda$-calculus that  we prove 
	\emph{sound and complete for BNs}:  not only any \BN can be encoded as a term (which is standard), but also---conversely---any   higher-order (possibly recursive)  program of ground type will eventually reduce to a first-order  normal form, corresponding to a \BN (\Cref{thm:compileI} and \Cref{prop:BN}). Our language also supports the encoding of advanced stochastic models, including \emph{template} \BNs, and  the specification of \emph{recursive} probability models. Notably, the operational semantics we define  corresponds to the process of unrolling the template into an actual BN, in line with the intended semantics of the templates.

	\item \emph{A factor-based semantics.} We endow each term of ground type with a \emph{factor}-based semantics. Factors, indeed, are the mathematical structure which   give     semantics to BNs. 
	Technically, computing this factor-based semantics requires tracking the generation and sharing of random variables---this task is achieved using non-idempotent intersection types.
	This technique allows us 
	 to extract a \BN $\bn_t$  from the  type derivation of a ground term $\tm$, and to  prove that the semantics (in the sense of  \BNs theory) of $\bn_\tm$  coincides with the semantics of $\tm$. 
	 
	\item \emph{The factor semantics is compositional.}
	The main technical achievement of the paper is to establish the compositionality of the factor semantics for typed terms  (\Cref{sec:term-semantics} and \Cref{sec:proof_of_connexity}).
	 This is particularly noteworthy since operations on factors do not exhibit this property, in general. The design of the type system plays a crucial role in ensuring compositionality, thereby enabling the modular reasoning that one expects in a high-level programming language.
	
	
	\item \emph{The factor semantics is resource-sensitive.}
	Our semantics takes resource consumption into account---its computational complexity (both in terms of space and time) is similar to that for 	Bayesian networks. 
	 Additionally, the type system offers a precise estimate of the  cost associated with computing the semantics of a term (i.e., the cost of calculating the \emph{exact} distribution defined by the term).
	
	\item \emph{Proof techniques.} 
			The proof of our results incorporates sophisticated techniques that have their foundation in linear logic, leveraging  significant advancements made over the past 15 years. Specifically, we employ intersection types, rewriting theory, linear logic, and concepts inspired by Girard's geometry of interaction in a novel and synergistic manner.
\end{itemize}

\vskip 8pt
\SLV{Proofs and more examples are available in the Technical Report \cite{long}.}
{Proofs and more examples are  in the Appendix.}


\newcommand{\Dry}{\texttt{Dry}}
\newcommand{\Rain}{\texttt{Rain}}
\newcommand{\Sprinkler}{\texttt{Sprinkler}}
\newcommand{\Wet}{\texttt{Wet}}

\newcommand{\dry}{\textsf{dry}}
\newcommand{\rain}{\textsf{rain}}
\newcommand{\sprinkler}{\textsf{sprinkler}}
\newcommand{\wet}{\textsf{wet}}

\newcommand{\bias}{\textsf{bias}}
\newcommand{\ctoss}{\textsf{coin}}

\section{About Bayesian Networks, Sharing,  and the $\lambda$-Calculus, Informally}\label{sec:informal}

\paragraph{Bayesian Reasoning.} 
In the Bayesian interpretation, probabilities describe \emph{degrees of belief} in events,
and  inference allows for  reasoning under uncertainty. 
One challenge in Bayesian reasoning is how to represent joint probability distributions. Indeed, these can quickly become  very large:
in general, a probability distribution over $n$ boolean variables requires storing $2^n$ values.
\BNs are able to express a joint probability distribution over several variables
in a compact way (\emph{factorized representation}), allowing for \emph{efficient inference},
without ever needing to reconstruct the full joint distribution. 

\paragraph{An Example of Bayesian Network.}

\begin{figure}
	\scalebox{.75}{ \tikzfig{sprinklerBN} }
	\caption{An example of \BN (from \cite{DarwicheBook}).}
	\label{fig:BNrain}
\end{figure}

Let us start with an informal example. We want
to model  the  fact that the lawn being  \texttt{Wet} in the morning
may depend on either \texttt{Rain} or the \texttt{Sprinkler} being on.
In turn, both \texttt{Rain} and the regulation of the  \texttt{Sprinkler}
depend on being or not in the \texttt{Dry Season}. 
The dependencies between these  four variables (shortened to $ \nD,\nS,\nR,\nW $)
are represented as arrows in \Cref{fig:BNrain}, while (for each variable)
the strength of the dependencies is quantified by a \emph{conditional probability table} (\cpt).
Assume we wonder: did it rain last night? 
Given that we are in \texttt{DrySeason}, our \emph{prior} belief is that \texttt{Rain}
happens with probability $ 0.2 $. However, if we observe that the lawn is \texttt{Wet},
our confidence increases.  The updated belief is called \emph{posterior}.
The model in \Cref{fig:BNrain} allows us to infer the posterior probability of \texttt{Rain},
given the \emph{evidence} that the lawn is \texttt{Wet}, \ie $\Pr(\R=\true | \W=\true)$. 
Posteriors  are typical queries which can be answered by \emph{Bayesian inference},
which has at its core Bayes conditioning, often condensed in the following informal formula:
\[
\textsf{Posterior} =  \textsf{Prior} \times \textsf{ Likelihood} \div \textsf{Evidence}
\]
Concretely, in our example:
\begin{align*}
	\Pr(\Rain=\true|\Wet=\true)  = \dfrac{\Pr(\Wet=\true|\Rain=\true) \Pr(\Rain=\true)}{\Pr(\Wet=\true)} 
			 =  \dfrac{\Pr(\Rain=\true,\Wet=\true)}{\Pr(\Wet=\true)} 
\end{align*}

So, to compute the posterior $\Pr(\mathtt{Rain}=\true|\mathtt{Wet}=\true)$, we have to  compute the \emph{marginal}  $\Pr({\Rain}=\true,{\Wet}=\true)$, which can be  obtained by 
summing out the other variables  from the joint probability. \emph{Marginalization} is illustrated in 
\Cref{fig:marginalization} (all rows which agree on the value of \Rain\ and \Wet\ are merged into a single row,  summing up the  probabilities). We remark that the joint distribution over $\nD,\nS,\nR, \nW$ has $2^4$ entries, 
although here we only display a subset of them.



\definecolor{myred}{RGB}{200, 20, 20}
\definecolor{mygreen}{RGB}{90, 150, 60}
\newcommand{\myred}[1]{{\color{myred}#1}}
\newcommand{\mygreen}[1]{{\color{mygreen}#1}}
\begin{figure}[t]
	\footnotesize	
	\begin{tabular}{ccc}
		Joint distribution over $\nD,\nS,\nR,\nW$  &  &  Marginal over $\nR,\nW$ \\
		$\begin{array}{|c|c|c|c|c|}
			\hline
			\nD & \nS &	\nR & \W  & \Pr(d,s,r,w)  \\
			\hline
			\true& \true& \mygreen{\true}  & \mygreen{\true} & \mygreen{0.09504}   \\
			\true & \false& \mygreen{\true}  & \mygreen{\true} &\mygreen{0.0168} \\
			\false& \true& \mygreen{\true}  & \mygreen{\true} & \mygreen{0.0297} \\
			\false& \false& \mygreen{\true}  & \mygreen{\true} & \mygreen{0.189} \\
			\true& \true& \myred{\true}  & \myred{\false} &   \myred{0.00096} \\
			\true & \false& \myred{\true}  & \myred{\false} & \myred{0.0072}\\
			\false& \true& \myred{\true}  & \myred{\false} & \myred{0.0003} \\
			\false& \false& \myred{\true}  & \myred{\false} & \myred{0.081} \\
			\multicolumn{4}{|c|}{\dots} & \dots\\
			\hline
		\end{array}$
		& 
		$\begin{array}{c}
			\Longrightarrow \\
				\text{summing  out  $\nD,\nS$} \\[10pt] 
		\end{array}$
		& 
		$\begin{array}{c}
			\begin{array}{|c|c|c|}
				\hline
				\nR  &  \nW &  \Pr(r,w) \\
				\hline
				\mygreen{\true}  & \mygreen{\true} & \mygreen{0.33} \\
				\myred{\true}  & \myred{\false} &   \myred{0.09} \\
				\multicolumn{2}{|c|}{\dots} & \dots\\
				\hline
			\end{array}\\[10mm]
			{{\Pr(r,w) = \mathlarger\sum\limits_{d,s\in\set{\true,\false}}\Pr(d,s,r,w)} }	\\	
		\end{array}$
	\end{tabular}
	\caption{Joint distribution corresponding to the \BN in \Cref{fig:BNrain}, and \emph{marginalization}.}
	\label{fig:marginalization}
\end{figure}
The  marginal probability  $\Pr(\Wet=\true)$ of the evidence is computed in a similar way (yielding $0.69$).
Then, we obtain the posterior by normalizing: $\Pr(\Rain=\true|\Wet=\true)=0.33  / 0.69 = 0.48$.
Further evidence (for example, the Sprinkler is broken) would  once again update our belief.
In practice, the   numerator of Bayes theorem (the unnormalized marginal) often suffices, since the actual posterior is proportional to it: 
\[\textsf{Posterior} \propto  \textsf{Prior} \times \textsf{Likehood }\]
Summing up, the key step in \emph{exact inference} is the  computation of marginals.


%

%
%
%

\paragraph{Bayesian Networks as Terms.}
In probabilistic programming, it is standard to describe a \BN with a term (see \eg \cite{GordonHNR14} for a brief tutorial). 
We can encode our initial example into a rather standard probabilistic call-by-value $\lambda$-calculus, as follows:
\begin{equation}
\begin{array}{lll}
	\leti  \textsf{dry}&= \coin{0.6} \inl \\
	\leti  \textsf{rain}&=  \case{\pattern{\dry}} {\tmt\caseSym\coin{0.8};\tmf\caseSym\coin{0.1}} \inl \\
	\leti  \textsf{sprinkler} & = \case{\pattern{\dry}}{\tmt\caseSym\coin{0.2}; \tmf\caseSym\coin{0.75}}   \inl  \\
	\leti \textsf{wet} & =  \texttt{case}\;  \pattern{\rain, \sprinkler} 
	\;\texttt{of}\;\{\pair{\tmt}\tmt\caseSym\coin{0.99}; \pair\tmt\tmf\caseSym\coin{0.7}; \\  &\hspace{4.1cm} \pair\tmf\tmt\caseSym\coin{0.9}; \pair\tmf\tmf\caseSym	\coin {0.01}\}\\
	\texttt{in }  \wet &
\end{array}
\label{fig:BNterm}
\end{equation}
The idea is that any \cpt can be encoded with a \texttt{case} construct, together with a probabilistic primitive $\texttt{sample}_d$, which returns a value sampled from a (countably supported) probability distribution $d$. In this example we sample from a Bernoulli distribution.
This way, all Bayesian networks can be encoded in a very basic fragment of the simply typed call-by-value $\lambda$-calculus (this will be discussed in the next sections).  What if we allow for a richer, non linear, $\lambda$-calculus?

\paragraph{Beyond Ground Bayesian Networks.} 
Standard Bayesian networks describe probabilistic models in an intuitive  and compact way, 
but have some  inherent limitations. They are a first-order model, lacking \emph{modularity and compositionality}.
A well-established way to increase the expressive power of Bayesian networks  
 are \emph{templates}\footnote{We refer to  \cite{KollerBook}, Ch. 6,  for a detailed presentation and pointers to the vast literature.},
which allow  for the description of \emph{hierarchical}  models and  for taking into account 
\emph{temporality}. Dynamic \BNs \cite{DBN} (in \Cref{fig:HMM}) are an instance  extensively used in 
real-world applications---\eg mobile robotics. A more structured approach is indeed
essential when models become large and  complex.  While standard Bayesian networks 
have a precise  mathematical definition, rooted in graph theory and statistics, 
templates are  a more informal notion. We show that techniques from functional programming language theory
can provide a neat and mathematically sound framework, founded on the theory of lambda calculus,
  exactly matching the intended semantics of templates.

\paragraph{Repeated Coin Tosses.} 
Let us consider the following  experiment, where repetition is involved.
\begin{enumerate}
	\item Sample a bias $r_i$ from a discrete distribution 
	(for simplicity, let us assume there are only two possible choices: $r_1$ or $r_2$).
	\item Toss $m$ times a coin of bias $r_i$. 
	\item Return the results of the $m$ (biased) coin tosses.
\end{enumerate}

It is standard to graphically describe such an experiment by means of the 
\emph{plate} notation \cite{BUGS, Buntine94}
(see \Cref{fig:template}), a graphical meta-formalism for  representing 
models with \emph{repeated structures} and \emph{shared parameters}.
A rectangular plate grouping random variables  indicates multiple 
copies of the sub-graph.
A number ($m$) is  drawn to represent the number of repetitions.  Unrolling 
the plate $m$ times  defines a ground  Bayesian network. Please notice that the intended \BN is the unrolled one.
In \Cref{fig:template},
we show the template which models our experiment (a), and the ground \BN resulting from its unrolling (b).

\begin{figure}[t]
	\centering
	\begin{minipage}[b]{0.42\linewidth}
		\centering
		\begin{subfigure}[b]{.3\textwidth}
		\centering
		\scalebox{.8}{ \tikzfig{coin_plate} }
		\caption{Plate}
		\end{subfigure}
		\begin{subfigure}[b]{.68\textwidth}
		\centering
		\scalebox{.8}{ \tikzfig{coin_plate_unrolled} }
		\caption{Unrolling $m$ times}
		\end{subfigure}
	\caption{Modeling $m$ coin tosses.}
	\label{fig:template}
	\end{minipage}
	\hfill
	\vline
	\hfill
	\begin{minipage}[b]{0.55\linewidth}
		\centering
		\begin{subfigure}[b]{.46\textwidth}
		\centering
		\scalebox{.8}{ \tikzfig{state_HMM} }
		\caption{Template}
		\end{subfigure}
		\begin{subfigure}[b]{.46\textwidth}
		\centering
		\scalebox{.8}{ \tikzfig{state_HMM_unrolled} }
		\caption{Unrolling (2 times)}
		\end{subfigure}
	\caption{Discrete Time Dynamic \BN.}
	\label{fig:HMM}
	\end{minipage}
\end{figure}


\paragraph{Repetitions, Sharing, and PPL}
How can we describe  this experiment as a $\lambda$-term? Let us assume  $m=2$.
It is tempting to encode the template in the following way:
\[
\begin{array}{ll}
	\leti & \bias = \sample d \inl  \\
	\leti & \ctoss =  \case{\pattern{\bias}}{r_1\caseSym\coin{r_1}; r_2\caseSym\coin{r_2}} \\
	\texttt{in} & \langle {\ctoss},{\ctoss}\rangle
\end{array}
\]
Unfortunately, the call-by-value policy makes sure that all the instances of  \ctoss\ have the \emph{same} shared value, so the possible outcomes are  only $\pattern{ \true, \true}$ and $\pattern{\false,\false}$. Indeed, since only values can be substituted for variables,  all expressions, even the probabilistic ones, have to be evaluated before being substituted. Switching the evaluation order to call-by-name does not solve the problem: now \emph{all} the probabilistic primitives are copied \emph{before} being evaluated. This means that all   coin tosses are  independent, this way not belonging necessarily to the same coin: some could have bias $r_1$ and some others $r_2$. We need a finer evaluation mechanism that allows the programmer to say when values have to be shared, and  when instead we want to actually duplicate unevaluated expressions.

\paragraph{Call-by-Push-Value.} 
More than 20 years ago \citet{Levy99} introduced call-by-push-value
as a subsuming paradigm and functional/imperative synthesis,
refining the computational $\lambda$-calculus by \citet{Moggi89}.
The slogan was: ``a value \emph{is}, a computation \emph{does}'',
as this language tries to unify call-by-name and call-by-value,
providing two new primitives. The former \emph{thunks} a computation inside a value, and the latter \emph{forces} the evaluation of a value as a computation. Similar ideas have  been  independently developed in  the linear logic community, using Girard's translations.  There,  linear $\lambda$-calculi use the $\oc$ to thunk, and $\textsf{der}$(eliction) to force  \cite{BentonW96, Simpson05, MelliesT10, EggerMS14, Ehrhard16}.
\newcommand{\mycoin}{\CPT{\bias}}
\begin{figure}[t]
	\small
	\[
	\begin{array}{ll}
			\leti & {\bias}=\sample d \inl\\
			\leti & \ctoss = \, \oc \left( 	\mycoin  \right) \inl\\
			\leti & {y_1 } = \der \ctoss \inl\\
			\leti & {y_2 } = \der \ctoss \\
			\texttt{in} & \langle y_1,y_2\rangle
	\end{array}
	\ \to \
	\begin{array}{ll}
			\leti & \bias=\sample d \inl\\
			\leti & {y_1 } = \der \oc(\mycoin) \inl\\
			\leti & {y_2 } = \der \oc(\mycoin) \\	
			\texttt{in} & \langle y_1,y_2\rangle
	\end{array}
	\ \to^* \
	\begin{array}{ll}
				\leti & \bias=\sample d \inl\\
				\leti & {y_1 } = \mycoin \inl\\
				\leti & {y_2 } = \mycoin \\
				\texttt{in} & \langle y_1,y_2 \rangle
	\end{array} 
	\]
	\caption{The $\lambda$-term correctly modeling two coin tosses, and its reduction to normal form. $\CPT{\bias}$ stands for the conditional expression $\case{\pattern{\bias}}{r_1\caseSym\coin{r_1}; r_2\caseSym\coin{r_2}}$. The result stored in \textsf{bias} is correctly shared, while \textsf{coin} is copied before performing the toss, thus giving two independent and identically distributed values to  $y_1$ and $y_2$.}
	\label{fig:BNreduction}
\end{figure}
Having all of this in mind, we encode our experiment as the leftmost term of \Cref{fig:BNreduction}. 

\paragraph{Operational Semantics.}
The  simplest operational semantics for probabilistic programs is in terms of \emph{sampled} values. 
A standard approach in probabilistic $ \lambda $-calculi (including \cite{EhrhardT19}) is to
give the operational semantics via  Markov chains: sequential evaluation produces
distributions over  \emph{execution paths}.
Here, we follow a different route, because we want to model the unrolling of a
higher-order term into a ground Bayesian Network, as shown in \Cref{fig:BNreduction}.
By firing all the redexes but the probabilistic choices, the term $\tm$ reduces to a \emph{normal form}  that represents a standard, ground Bayesian network, exactly matching the unrolling of the template in \Cref{fig:template}.

\newcommand{\holebag}[1]{\langle\mkern-5.5mu\langle #1\rangle\mkern-5.5mu\rangle}

\newcommand{\BB}{\set{\true,\false}}

\newcommand{\observe}[2]{\mathtt{obs}(#1=#2)}
\newcommand{\obs}{\observe}
\newcommand{\obst}[1]{\observe{#1}{\true}}
\newcommand{\obsf}[1]{\observe{#1}{\false}}
\newcommand{\obsb}[1]{\observe{#1}{\bv}}

\newcommand{\Xb}{\X^{\bv}}
\newcommand{\Xt}{\X^{\true}}
\newcommand{\Xf}{\X^{\false}}
\newcommand{\Yb}{\Y^{\bv}}
\newcommand{\Yt}{\Y^{\true}}
\newcommand{\Yf}{\Y^{\false}}

\newcommand{\XX}{\mathcal{X}}
\newcommand{\YY}{\mathcal{Y}}
\newcommand{\ZZ}{\mathcal{Z}}

\section{Preliminaries on Calculus and Types}\label{sec:calculus}
This section presents the probabilistic programming language we are going to use throughout this paper. 
The language includes  constructs for describing \emph{sampling} and \emph{conditioning}. 
We have already argued why we opted for a language that is able to thunk computations and force values. 

\subsection{Syntax}\label{sec:syntax}
Our language, dubbed $ \Bang$-calculus, is a fragment of \cite{EhrhardT19} probabilistic call-by-push-value.
For ease of  presentation, in this paper we limit  \emph{ground types} to  booleans. This way, all random variables are assumed  binary, and as a consequence, we only sample from Bernoulli distributions. Generalizing the language to   \emph{discrete} \rvs is straightforward.

\paragraph{Terms.}
Let $\mathcal{V}$ be a countable set of variables. $\Bang$-terms are defined by the following grammar:
%
\[\begin{array}{llcl}
	\textsc{Terms} &	\tm,\tmu &\grameq & \val \gpipe 
    \sample\dist  \gpipe \case{\val}{\val_i\caseSym\sample {d_i}} \gpipe	
	\observe{x}{\bv}  \\[2pt]
	&&&		\letin{ x }{\tmu}{\tm}  \gpipe \letp{\pair x y}{\val}{\tm}\gpipe  \tm\val \gpipe \lam x. \tm  \gpipe \der \val  \\[2pt]
	\textsc{Values} &	\val,\valtwo & \grameq &  x\in\mathcal{V} \gpipe \pair{\val}{\valtwo} \gpipe \oc t \gpipe \bv \\[2pt]
	\textsc{Booleans} & \bv& \grameq&\true \gpipe \false \\
\end{array} \]
Following \cite{Ehrhard16,EhrhardT19}, we use Linear Logic inspired notations: $\oc t$ corresponds to $\mathtt{thunk}(t)  $ and $\der t$ to $ \mathtt{force}(t) $. 
The probabilistic primitive  $\sample d$ samples a boolean value from a (Bernoulli) distribution $d$. 
The \texttt{case} construct is just a generalized if/then/else---please notice that the \texttt{case} expression  is  restricted, because we reserve it to the encoding of  \cpts, as we have informally described in \Cref{sec:informal}. 
Observed data (the \emph{evidence}) are  specified syntactically using
an \emph{observe} construct, written $\texttt{obs}$---for example $\obs {\wet}{\true}$; we will give several examples of its use in \Cref{sec:evidence}.

\emph{Free} and \emph{bound variables} are defined as 
usual: $\la\var\tm$ binds $\var$ in $\tm$, and the same for $\leti$ and \texttt{letp}. A term is \emph{closed} when 
there are no free occurrences of variables in it.
Terms are considered modulo $\alpha$-equivalence, and capture-avoiding (meta-level) substitution of 
all the free occurrences of $\var$ for $\tmtwo$ in $\tm$ is noted 
$\tm\isub\var\tmtwo$. 
\paragraph{Syntactic Sugar.} 
The grammar of the calculus is  rather restricted,  reminiscent of  A-normal forms (and similarly to \cite{Levy99}). It is standard to recover general constructs as follows:
\[
\begin{array}{rcl}
	\tm\tmtwo & \defeq & \letin{z}{\tmtwo}{\tm z}\\
	\pair{\tmu_1}{\tmu_2} & \defeq & \letin{z_1}{\tmu_1}{\letin{z_2}{\tmu_2}{\pair{z_1}{z_2}}}\\
	\der \tmu & \defeq & \letin{z}{\tmu}{\der z}\\
	\letp{\pair {y_1} {y_2}}{\tmtwo}{\tm} & \defeq & \letin{z}{\tmtwo}{\letp{\pair  {y_1} {y_2}}{z}{\tm}}\\
	\texttt{case}\;\tmu\;\texttt{of}\; \{\val_i\caseSym\tm_i\} & \defeq & \letin{z}{\tmu}{\texttt{case}\;z\;\texttt{of}\; \{\val_i\caseSym\tm_i\}}\\
	\obsb{\tmu}	 & \defeq & \letin{z}{\tmu}{\obsb z}
\end{array}
\]
%
{\begin{notation}We often write $\pattern{v_1, \dots, v_n}$ for a $n$-tuple,  ignoring the tree order. In particular, we write $\bvs$ for tuples of booleans $\pattern{\bv_1,\ldots,\bv_n}$. 
	\end{notation}
}
\paragraph{(Call-by-Push-Value) Simple Types} In the actual technical development of this paper, we  will use intersection types. However, we prefer to first give the intuitions about typing in the  more familiar setting of simple types.
The ground types are (tensors of) booleans.
Following \citet{Levy99} and \citet{EhrhardT19}, we then define by mutual induction two kinds of types: \emph{positive} types and \emph{general} types. Only \emph{positive} types can be assigned to variables in the type environment and can appear in the left hand side of an arrow.
\[\begin{array}{lrcl}
	\textsc{Ground Types}& \sL, \sK & \grameq & \Bool \gpipe  \sL\otimes \sK
	\\[2pt]
	\textsc{Positive Types}&	 \sP,\sQ & \grameq & L \gpipe \oc \sA  \\[2pt]
	\textsc{Types}&\sA,\sB &\grameq&  \sP \gpipe \sP \larrow \sA   
\end{array}\]
%
\begin{figure}[t]
	\small
	\begin{mybox}
		\textbf{Higher-Order $\Bang$-Calculus}
		\begin{myboxC}
			\textbf{First-Order Rules} 
			\begin{gather*}
			\infer[\rsample]{ \PP \vdash \sample{\dist} : \bool }{}  
			\quad  
			\infer[\rcond]{ \PP \vdash \texttt{case}\;\val\;\texttt{of}\; \{\bvs\caseSym\sample{\dist_{\bvs}}\}_{\bvs\in\{\tmt,\tmf\}^n} : \Bool}
			{\PP \vdash \val : \otimes^n \,\Bool}\\
			\infer[\robs]{ \PP, x:\Bool  \vdash \obsb{x} : \bool }{}  \\
			\infer[\rvar]{ \PP, x : \sP \vdash x : \sP}{} \qquad \infer[\rbool]{\tjudg{\PP}{\bv}{\Bool}}{} 
			\qquad
			\infer[\rlet]{\PP \vdash \letin{x}{\tmu}{\tm} : \sA }{ \PP \vdash \tmu : \sP & \PP, x : \sP \vdash \tm : \sA }
			\\[4pt]
			\infer[\rpair]{ \PP  \vdash \pair {\val} {\valtwo} :  \sL_1 \otimes \sL_2}
			{ \PP \vdash \val :  \sL_1 & \PP \vdash \valtwo: \sL_2} 
			\qquad 
			\infer[\rletp]{\PP  \vdash \letp{\pair x y }{\val}{\tm}: \sA }
			{ \PP \vdash \val : \sL_1 \otimes \sL_2 &
				\PP, x : \sL_1, y : \sL_2  \vdash \tm : \sA }
			\end{gather*}
		\end{myboxC}
		\vskip -.4cm
		\begin{gather*}
		\infer[\rabs]{\PP \vdash \lambda x.\tm :  \sP \larrow \sA }{\PP, x : \sP \vdash \tm : \sA}
		\qquad
		\infer[\rapp]{\PP  \vdash \tm\val : \sA }{\PP \vdash \tm   : \sP \larrow \sA & \PP \vdash \val : \sP}
		\\[4pt]
		\infer[\rbang]{ \PP \vdash \oc \tm :  \oc \sA}{ \PP \vdash \tm: \sA  }
		\qquad
		\infer[\rder]{\PP \vdash \der \val :  \sA}{\PP \vdash \val : \oc \sA}
		\end{gather*}
	\end{mybox}
	\caption{The simply typed $\Bang$-calculus.}
	\label{fig:simple_types}
\end{figure}
%
The typing rules are in \Cref{fig:simple_types}, where a context $\PP$ is a sequence of assignments of positive types $\sP$ to variables $x$. As usual,  a judgment $ \PP  \vdash  t: \iA$  indicates that $\tm$ has type $\iA$ given typing context $\PP$. 
 We write   $\pi \dem  \PP  \vdash  t: \iA$   to indicate that  $\pi$ is a type derivation  of the given judgment. 
  All the rules in \Cref{fig:simple_types} are  standard but $\rcond$. 
Notice that  in rule $\rcond$ by $\{\bvs\caseSym\sample{\dist_{\bvs}}\}_{\bvs\in\{\tmt,\tmf\}^n}$ 
we mean that for each possible $n$-tuple of booleans $\bvs\in\{\tmt,\tmf\}^n$ there is a corresponding \texttt{sample} clause.
For the sake of brevity, from now on we will often shorten a $\mathtt{case}$ expression depending on $n$ variables $x_1,\dots,x_n$ as follows:
\[ \CPT{x_1,\dots,x_n} \defeq \case{\langle x_1,\dots,x_n \rangle}{\bvs \caseSym \sample{d_\bvs}}_{\bvs\in\{\tmt,\tmf\}^n} \]

\begin{remark}[Additive Contexts]
The reader familiar with Linear Logic and  calculi based on it (such as \cite{BentonBPH93})
may be surprised by the fact that here (as in \cite{Levy99, Ehrhard16}) the context is managed additively. This is because the only types which are allowed in a context are positive, hence  either of the form $\oc \sA$, or  booleans, coded by additives.\footnote{Positive types can be contracted and weakened. This is clear for types of the form $!\sA$, but holds also for ground types. Indeed a boolean type corresponds to the additive formulas $\One \oplus \One$. Notice that $\bot \& \bot = (\One \oplus \One)^\perp$ can be  weakened and contracted.} Notice that proper linear  types, such as  $\sA\larrow \sB$, are not allowed in the context (if allowed, their management would be multiplicative).


\end{remark}

\paragraph{The Higher-Order and the  Low-Level Language.}\label{sec:lowlevel}
It is standard  to encode  a ground \BN with a simple let-term of ground type.
The reader can easily realize that every ground \BN can  be described  in 
a simple, first-order fragment of the $\Bang$-calculus, as we have done in the examples in \Cref{sec:informal}.
In particular, there is no need for the modalities $\oc$ and $\mathtt{der}$, which are instead the key to implement higher-order behaviors. Abstraction and application are not necessary, as well. 
We refer to such a fragment as $\Low$-calculus. Formally, the grammar for $\Low$-terms is:
\[\begin{array}{llcl}
	\textsc{Low-level Terms}  &	\tm,\tmu &\grameq & \val \gpipe 
    \sample\dist  \gpipe \case{\val}{\bvs\caseSym\sample {d_\bvs}} \gpipe  \obsb{x}\gpipe\\[2pt]
&&&	 	\letin{ x }{\tmu}{\tm}  \gpipe	\letp{\pair x y}{\val}{\tm} \\[4pt]
	\textsc{Low-level Values} &	\val,\valtwo & \grameq &  x\in\mathcal{V} \gpipe \pair{\val}{\valtwo} 
	\gpipe \true \gpipe \false  
\end{array} \]
It is easy to check that a $\Low$-term is typable with a ground context if and only if it has ground type and is typable  with  first-order rules (those highlighted in \Cref{fig:simple_types}), only.
\condinc{}{
\begin{lemma}\label{lem:low_types}
Assume  $t$ is a \linear term and  $\LL$ a context of  ground types.
\begin{center}
	$\pi \dem \LL \vdash t: \sA \quad \Rightarrow  \quad \pi \dem_{\low} \LL \vdash t: \sA $, ~ and ~ the type  $ \sA$ is \emph{ground}.
\end{center}
\end{lemma}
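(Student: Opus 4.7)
The plan is to proceed by structural induction on the type derivation $\pi \dem \LL \vdash t: \sA$, maintaining three invariants simultaneously: (i) the context assigns only ground types, (ii) the derived type $\sA$ is ground, and (iii) $\pi$ uses only the first-order rules (i.e., those highlighted in \Cref{fig:simple_types}). The conjunction of (ii) and (iii) is exactly the desired conclusion, and (i) is guaranteed in the base case by the hypothesis on $\LL$ and propagates through the inductive step.

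First I would handle the axiomatic cases. For $\rvar$, the variable $x$ receives a type $\sP$ drawn from $\LL$; since $\LL$ is ground by assumption, $\sP$ is ground and the rule itself is first-order. The cases $\rbool$, $\rsample$, and $\robs$ produce the type $\Bool$, which is ground, and each of these rules is first-order.

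For the inductive step, the key observation is that since $t$ is a \linear term, the last rule of $\pi$ cannot be $\rabs$, $\rapp$, $\rbang$, or $\rder$: these rules respectively produce terms of the shape $\lambda x.t'$, $t'v$, $\oc t'$, $\der v$, none of which belong to the $\Low$-grammar. Hence the last rule is necessarily one of $\rpair$, $\rlet$, $\rletp$, $\rcond$, all of which are first-order. For each of these, I would check that the immediate subterms are again \linear terms — this is immediate from inspecting the $\Low$-grammar — and then invoke the induction hypothesis on the subderivations, taking care that the extended contexts remain ground. Specifically, in $\rlet$ the subterm $\tmu$ is low-level with type $\sP$, so by the induction hypothesis $\sP$ is ground; hence the context $\LL, x:\sP$ for the subderivation on $\tm$ is still ground, and the induction hypothesis yields that the type of $\tm$ (and thus of the whole let) is ground. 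The case $\rletp$ is analogous, using that a ground $\sL_1 \otimes \sL_2$ decomposes into ground components. For $\rpair$, the tensor of two ground types is ground. For $\rcond$, the conclusion type is $\Bool$, which is ground.

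The only mild obstacle is conceptual rather than technical: one must make sure that the positivity restriction on contexts in rules like $\rlet$ does not force the introduction of a $\oc$-type at the boundary between the subterm $\tmu$ and the continuation $\tm$. This is precisely ruled out by the induction hypothesis applied to $\tmu$: since $\tmu$ is low-level and its context is ground, its derived type is ground and in particular not of the form $\oc \sA$. Apart from this observation, the proof is a routine traversal of the typing rules.
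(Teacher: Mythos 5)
Your proof is correct and follows the same route the paper takes for the closely related Progress lemma (induction on the type derivation, using the shape of the $\Low$-grammar to exclude the higher-order rules $\rabs$, $\rapp$, $\rbang$, $\rder$); the paper itself leaves this particular lemma as ``easy to check'' without spelling out a proof. You also correctly isolate the one genuinely non-trivial point --- that in the $\rlet$ case the binding type $\sP$ could a priori be of the form $\oc\sA$, which is ruled out only by applying the induction hypothesis to the bound subterm in the ground context --- which is exactly the subtlety that makes the groundness hypothesis on $\LL$ essential rather than the statement being a tautology.
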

\CF{NB :		if 	$ \pi \dem_{\low} \LL \vdash t: \sA $ (with $\LL$ a ground context), then   the type  $ \sA$ is \emph{ground}.\\
			non e' una tautologia perche' $\dem_{\low} x:!(A\larrow A)\vdash x:!(A\larrow A)$ (e abbiamo \rlet). 
		Ma non occorre  un enunciato formale.}
}
Going back to  our introductory intuitions, we see the $\Bang$-calculus as the target high-level language in which the statistical  model is designed by the programmer, and the $\Low$-calculus as the low-level language, closer to ground \BNs. \emph{Compiling} $\Bang$-terms into $\Low$-terms is taken care of by \emph{semantical} tools. The first of such tools is the reduction relation, which we introduce next. 

\subsection{Operational Semantics}\label{sec:operational}
In  \Cref{sec:informal} we have anticipated  that  the operational semantics of our calculus  formalizes the unrolling of a template into a ground \BN, which is its intended meaning. Formally,
every $\Bang$-term of ground type compiles (\ie, rewrites) into a $\Low$-term.

\paragraph{Root Rules.}
Since the  goal is to produce a term describing a Bayesian network, here reduction  does not fire probabilistic redexes, \ie we do not actually sample from distributions. As a consequence,  a  term  of shape $\sample{\dist}$ never reduces to a value.
This feature of our language forces us to opt for a notion of reduction, dubbed \emph{reduction  at a distance} \cite{Milner07, AccattoliK10}, which is a bit more sophisticated
than usual, and reminiscent of reduction on graphs, such as proof-nets and  bigraphs. Precisely, our  reduction is similar to that in  \cite{BucciarelliKRV20, ArrialGK23}.
%
Reduction is called \emph{at a distance} because in some of the rules the interacting parts of a redex can be separated by an arbitrarily long (possibly empty) list of \texttt{let} constructs---\ie they are \emph{distant}. Formally, we need the notion of 
\emph{substitution list}, \ie a sequence of nested \texttt{let} constructors:
\[\begin{array}{r@{\hspace{.5cm}}rlll}
	\textsc{Substitution Lists} & \sctx & \grameq &  \holebag{\cdot} \mid   \letin{\var}{\tmtwo}\sctx \gpipe \letp{\pair\var\vartwo}\val\sctx
\end{array}
\]

We are  now able to define the  rewriting rules 
which are the base  of our reduction relation. We call the term on the  left-hand side a \emph{redex}. 
\[\begin{array}{rll@{\hspace{0.5cm}}rll}
	\multicolumn{6}{c}{\textsc{Root Rules}}\\
	\holebag {\la\var\tm} \sctx\,\val & \mapsto_{\dB}   &  \holebag {\tm\isub\var\val}\sctx &
	\letin\var{ \holebag {\val}\sctx}\tm& \mapsto_{\dS}& \holebag{\tm\isub\var\val} \sctx \\[2pt]
	\der  !\tm  &\mapsto_{\dbang}& \tm &	\letp  {\pair x y}{\pair {\val}{\valtwo}}\tm  &\mapsto_{\dpair}&  
	 {  \tm \isub x {\val}  \isub y {\valtwo}}		\\
\end{array}
\]
$\holebag {\tm}\sctx$ stands for the term obtained from $\sctx$ by replacing the  hole $\holebag{\cdot}$ with $\tm$ (possibly capturing the free variables of $\tm$). The rule $ \mapsto_\dB $ fires a (possibly distant) beta-redex. The rule $\mapsto_\dS$  fires a (possibly distant) \texttt{let},
provided that its argument is a value. The rule  $\mapsto_\dbang$  defrosts a frozen term. The rule $\mapsto_\dpair $  performs pattern matching with pairs.
We set $\mapsto\,\defeq\, \mapsto_{\dB}\cup\mapsto_{\dS} \cup\mapsto_{\dbang}\cup \mapsto_{\dpair}$.



 \paragraph{Reduction.}
  A \emph{reduction step} $\to$ is the   closure  of   $\Root{}$  under evaluation context.  Reduction $\to_{\Rule}$ (for $\Rule\in \set{\dB, \sval, \dbang,\dpair}$) is defined similarly.
\emph{Evaluation contexts}, which are terms containing exactly 
one occurrence of a special symbol---the \emph{hole} $\ctxhole$---   
are defined as follows:
\[\begin{array}{lrcl}
	\textsc{Evaluation Contexts} & \ss & \grameq & \ctxhole \gpipe \ss\val \gpipe 
	\letin {x}{\ss}\tm \gpipe \letin{x}{\tmu}\ss
\end{array}\]
$\ss\hole{\tm}$ stands for the term obtained from $\ss$ by replacing the  hole $\ctxhole$ with $\tm$ (possibly capturing the free variables of $\tm$).
As it is standard  with programming languages, we adopt a weak notion of reduction, which here means that we do not reduce inside the scope of a $!$ (a thunk), nor in the scope of a $\lambda$.
Please notice that given a term of shape $\letin x u t$, reduction can be performed inside either $u$ or $t$. This is essential to make possible a reduction  such as the one in \Cref{fig:BNreduction}. As a consequence, 
reduction  is \emph{not} deterministic. However,  the choice of redex is irrelevant, in the following sense.
\begin{proposition}[Confluence]\label{diamond}\mbox{}
	\begin{enumerate}
		\item 	The reduction $\to$ is confluent. 
		\item Every normalizing term is strongly normalizing.
		\item 	All maximal   reduction sequences from a  term $t$  have the same length.
	\end{enumerate}
\end{proposition}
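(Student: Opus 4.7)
My plan is to derive all three items from a single, stronger property: the \emph{diamond property} for $\to$, stating that whenever $t \to t_1$ and $t \to t_2$ with $t_1 \neq t_2$, there exists $t_3$ such that $t_1 \to t_3$ and $t_2 \to t_3$. Once the diamond is established, item~1 (confluence) follows by the classical induction on the lengths of the two reductions; for item~2, I would proceed by induction on the length $n$ of a shortest reduction from $t$ to a normal form, using the diamond to show that every one-step successor of $t$ normalizes in at most $n-1$ steps, hence every reduction from $t$ has length at most $n$; and item~3 follows from the same argument sharpened to equality, since the diamond forces any two distinct one-step successors to meet in exactly one further step.

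The heart of the proof is thus the diamond property, which I would prove by a case analysis on the pair of root rules fired and on the relative positions of the two redexes inside evaluation contexts and substitution lists. The key structural observation I would lean on is that values (variables, booleans, pairs of values, and thunks $\oc t$) are \emph{inert}: because reduction is weak and does not cross $\oc$ or $\lambda$, no value has a reducible subterm. Consequently every substitution $t\isub{x}{v}$ triggered by $\mapsto_{\dB}$ or $\mapsto_{\sval}$ neither destroys nor creates overlaps with redexes disjoint from the occurrences of $x$, and at most duplicates the inert $v$.

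With inertness in hand, the cases fall into three groups. When the two redexes lie in disjoint evaluation contexts, firing either leaves the other untouched and the diagram closes in one step on each side. When they are nested, the at-a-distance formulation of $\mapsto_{\dB}$ and $\mapsto_{\sval}$ absorbs the surrounding substitution list $\sctx$; firing either step preserves the other redex up to a rearrangement of $\sctx$, and one closes the diagram by invoking the substitution lemma (together with the variable convention). The remaining pairs, involving $\mapsto_{\dbang}$ or $\mapsto_{\dpair}$, are not at a distance and admit no self-overlap, so those sub-cases are direct.

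I expect the main obstacle to be the bookkeeping in the nested sub-cases: the substitution list of an outer $\mapsto_{\sval}$ may contain the inner redex, or the two substitution lists of two interacting $\mapsto_{\sval}$/$\mapsto_{\dB}$ steps may interleave, and one must verify that the resulting rearrangements genuinely coincide. This is a standard but tedious diagram chase, analogous to the one performed for the linear substitution calculus, and the inertness of values plus the usual substitution lemma should keep the algebra manageable.
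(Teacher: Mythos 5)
Your proposal is correct and follows essentially the same route as the paper, which proves the proposition as a consequence of a diamond(-like) property for $\to$, citing \cite{BucciarelliKRV20} for the details; your key observation that only inert values are ever substituted (so no step can erase or duplicate another redex) is exactly what makes the diamond, and hence the uniform-normalization and length-invariance consequences, go through.
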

\begin{proof}
 Consequences of a diamond-like property,  essentially  as in \cite{BucciarelliKRV20}. 
\end{proof}

\paragraph{Progress and BN Normal Forms.} We say that a term $\tm$ is in normal form if no reduction applies ($\tm \not \to$).
It is well-known that simply typed $\lambda$-calculi are strongly normalizing. Here we prove that  every $\Bang$-term of ground type reduces to a normal form which is a  \linear term, that we dub \emph{BN normal form}.  The idea---which we will make formal  in \Cref{sec:terms_as_BN}---is that  normal forms of ground type directly correspond to ground \BNs, hence the name.

\begin{proposition}[Progress]\label{lem:progressSimple}   Let  $\tm$ be a $\Bang$-term in normal form and   $\pi \dem  \LL\vdash \tm:\sL$  a  type derivation, where all types in the  context $\LL$ are ground.  Then $\tm$ is  a   $\Low$-term,
and $\pi$ contains first order rules, only. 
\end{proposition}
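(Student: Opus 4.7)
I would proceed by induction on the type derivation $\pi$, doing a case analysis on the last rule and showing that under the hypotheses (ground context $\LL$, ground conclusion $\sL$, $\tm$ in normal form) every rule that is not first-order leads to a contradiction, while first-order rules let the IH apply to subderivations whose contexts and relevant types remain ground.

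\emph{Direct exclusions.} The rules $\rabs$ and $\rbang$ produce types $\sP \larrow \sA$ and $\oc \sA$ respectively, neither of which is ground, so they cannot be the last rule. Rule $\rder$ is ruled out by a short value analysis: $\tm = \der \val$ with $\val : \oc \sL$, and $\val$ must be a value; a variable would force $\oc\sL$ into $\LL$, contradicting groundness; pair/boolean values have the wrong type; and $\val = \oc r$ makes $\tm$ a $\mapsto_{\dbang}$-redex, contradicting normality.

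\emph{The application case.} For $\rapp$, $\tm = t'\val$ with $t' : \sP \larrow \sL$ in the ground context $\LL$, with $t'$ normal. I would establish a side lemma by induction on (the derivation of) $t'$: a normal term of arrow type in a ground context is either $\lambda x.r$ or has the shape $\holebag{t''}\eslist$ whose head, after traversing the let/letp list, is itself a normal form of arrow type, terminating (by finiteness) either in a variable (impossible, since the context is ground) or a $\lambda$-abstraction. In the first case $\tm = (\lambda x. r)\val$ is a $\mapsto_{\dB}$-redex (at a distance, using $\eslist$), and in the variable case the type cannot be arrow; either way we contradict the hypothesis.

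\emph{The let/letp case.} For $\rlet$, $\tm = \letin{x}{u}{t'}$ with $u : \sP$. If $\sP$ is ground then both subderivations satisfy the IH hypotheses and we are done. Otherwise $\sP = \oc \sA$, and I would invoke a second side lemma: a normal $u$ of positive type $\oc \sA$ in a ground context must have the form $\holebag{\oc r}\eslist$ for some $r$ and substitution list $\eslist$. The proof is a straightforward induction on $u$: the only values of type $\oc \sA$ are $\oc r$ (a variable is forbidden by ground context, pairs/booleans have the wrong shape); and if the derivation ends by $\rlet$ or $\rletp$, the IH gives the shape of the body, which we can repackage into a larger substitution list. But then the outer $\letin{x}{u}{t'}$ exactly matches $\mapsto_{\dS}$, contradicting normality. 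The case $\rletp$ is handled analogously, using pair-value analysis and $\mapsto_{\dpair}$.

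\emph{Remaining cases.} The rules $\rvar, \rbool, \rsample, \robs$ are first-order and yield low-level terms immediately. For $\rcond$ and $\rpair$ the subderivations type values at ground types in $\LL$, so the IH applies directly and the overall derivation stays first-order. The main technical obstacle is the pair of side lemmas used for $\rapp$ and $\rlet$/$\rletp$: both formalize the intuition that in a ground context the only way to introduce a non-ground type is via $\oc$ or $\lambda$, and in each position these introductions either sit inertly inside a thunk (not in an evaluation position here) or interact with a $\der$/application/$\mathtt{let}$ consumer to produce one of the root redexes, contradicting the normality hypothesis.
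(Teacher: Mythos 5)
Your proposal is correct and follows essentially the same route as the paper: the paper proves a strengthened statement by a single induction on $\pi$ with one clause per shape of the conclusion type (normal forms of type $\oc\sA$ are $\holebag{\oc u}\sctx$, of arrow type are $\holebag{\lambda x.u}\sctx$, etc.), which is exactly the content of your two side lemmas, and then derives the same contradictions with normality via the distant redexes $\mapsto_{\dB}$, $\mapsto_{\dS}$, $\mapsto_{\dbang}$, $\mapsto_{\dpair}$. Packaging the characterization of normal forms at non-ground types as separate side lemmas rather than as extra clauses of one mutual induction is only a cosmetic difference.
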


\begin{proof}By induction on the structure of the derivation $\pi$. \SLV{}{(In the \Appendix).} \end{proof} 

%
%
%
%
%
\newcommand{\gnf}{\mathit{n}}
\newcommand{\glp}{s}
\newcommand{\gprob}{p}
\newcommand{\gconst}{k}
The proposition above allows us to describe the shape of BN normal forms, \ie the normal forms  of ground type. 
If we restrict our attention to closed terms, the BN normal forms are the subset of closed low-level terms generated by the following set of productions\footnote{Please notice that the converse is not true: not all terms generated by this grammar are typable.}:
	\[
		\gnf  \grameq  \letin{x}{\glp}\gnf\   \gpipe v \gpipe \glp \qquad\qquad
		\glp  \grameq  \letin{x}{\glp}\glp  \gpipe \sample{\dist} \gpipe \CPT{x_1,\dots,x_n}\  \gpipe \observe{x}{\bv}
	\]		
Please notice that BN normal forms are not values, in general. This is because we do not reduce probabilistic primitives.

\section{The Intersection Type System} \label{sec:itypesSEC}
Simple type systems 
guarantee termination, but are  poor in expressiveness. In this work we want to specify rich behaviors, such as recursion, and this is
why we switch to the \emph{untyped} $\lambda$-calculus. 
However, this is not enough to obtain the results we want, and we still do need a form of typing.
Since we are interested in giving a Bayesian network \emph{semantics} to terms, we need to \emph{keep track} of  the random variables defined by a term, and  to 
handle the fact that recursive programs generate a finite but \emph{unbounded number} of random variables.  
Random variables are associated to the sampling and conditional primitives, so  we need to take into account how many times these primitives are duplicated during the reduction. 
A quite natural option (especially when using a calculus inspired by linear logic, as we do) is to consider  a type system that explicitly takes into account how many times a sub-term is copied during the evaluation. \emph{Non-idempotent intersection types} for the \emph{untyped $\lambda$-calculus} do precisely that. This approach has several advantages:
\begin{itemize}
	\item \emph{Tracking random variables:} ground types now correspond precisely to random variables, as we discuss in \Cref{sec:named_types} below.
	\item \emph{Distinguishing copies:} non-idempotent intersection types intrinsically take into account how many times a (sub-)term is copied. This is of fundamental importance for us, since we want to keep track of all the random variables generated by our typed program. Think for example of a single (thunked) $\sample\dist$ instruction that is  duplicated several times during execution. Non-idempotent intersection type derivation duplicate \emph{in advance} each use of a sub-term, and give each a (possibly different) type. 
	\item \emph{Enforcing termination:} Bayesian networks are representations  for \emph{finite} probabilistic models. 
	Therefore  we are interested in terminating programs, only. Intersection types provide such a guarantee, still allowing for complex behaviors such as general recursion.
	\item \emph{Ruling out ill-formed terms:} as a by-product, intersection  types act as a traditional type system discarding terms which ``go wrong''. 
\end{itemize}

We highlight that we could have proceeded differently. We could have considered a \emph{typed} \texttt{PCF}-like language, as in \cite{EhrhardT19}. Still---for the reasons described above---we would have needed to add, on top of its type system, an intersection type system, like in \cite{DBLP:journals/corr/abs-1104-0193,Ehrhard16}. However, dealing with two layers of types is heavy, and we preferred keeping the syntax as simple as possible. This way, from now on we will consider just the intersection type system for the (untyped) $\Bang$-calculus which we have presented in \Cref{sec:syntax}.
 In \Cref{sec:PCF} we sketch how the (two-layered) type system for the call-by-push-value PCF would look like.

\begin{remark}[Intersection Types and Turing Completeness]\label{rem:Turing}
Type systems  ensure safety and desirable properties such as termination,  or deadlock-freeness. \emph{Intersection types} for the untyped $\lambda$-calculus \cite{CoppoDezani1978, CDV81}  bring this idea to its extreme consequence. Intersection types not only \emph{guarantee} termination, but also \emph{characterize} it, providing a \emph{compositional} presentation of \emph{all and only} the terminating programs.  They  can indeed be  seen as \emph{a semantical tool} for higher-order languages.  
Being the untyped $\lambda$-calculus Turing-complete, the price to pay is that intersection type systems are inherently undecidable. This is not typically considered an issue because such  systems have a semantic nature: they are used to give denotational models. However, please notice that \emph{the first-order fragment} of our system  \emph{is decidable}.
\end{remark}

\subsection{ Towards Bayesian Networks: Named Types  }\label{sec:named_types}

Before  presenting the intersection type system, we need to introduce one more ingredient, namely \emph{named} booleans, which we use  to track random variables. In this subsection we give some simple examples to convey the intuitions, which we then formalize in \Cref{sec:itypes}.

Describing a \BN (or a marginal distribution) by means of a term of type $\otimes^n\Bool$ is a standard, easy task. However, retrieving a Bayesian network from a term of type $\otimes^n\Bool$ is less immediate, even with \linear terms.
Does a \linear term of ground type  define a marginal distribution? If yes, over how many random variables? And what is the underlying \BN, if any?

\begin{example}\label{ex:names1} Let us consider the term $t$, where 
	$\CPT y \defeq \case{y}{\bv \caseSym \sample{d_{\bv} }}$.
	\[	t \defeq \letin x {\coin {0.2}}  
	{\letin  y x  {\letin{z}{\CPT y}{\pair x y}}  } : \Bool \otimes \Bool\]
	We know that the term $t$ defines two  \rvs (say $\X$ and $\Y$), because there are two probabilistic constructs. We also know that the output is  a probability distribution over    tuples in $\Bool \otimes \Bool$. It is however not obvious which variables are involved in the final marginal distribution. 
	We can track which random variables are involved in the term by \emph{naming} the booleans and assigning a \emph{distinct name} to the subject of each \pax (we will formalize this in \Cref{sec:itypes}).
	{\small 	\[\infer{\vdash \letin x {\sample d}  
			{\letin  y x  {\letin{z}{\CPT y}{\pair x y}}  }:{\bool_\X}\otimes {\bool_\X}}
		{ {\vdash \sample d:{\bool_\X}}   
			& \infer{   x:{\bool_\X}  \vdash   \letin  y x  {\letin{z}{\CPT y}{\pair x y}} :{\bool_\X}\otimes {\bool_\X}}
			{x:{\bool_\X}\vdash x:{\bool_\X} & \infer{{x:{\bool_\X}, y:{\bool_\X}\vdash {\letin{z}{\CPT y}{\pair x y}}}  }
				{y:\bool_\X \vdash \CPT y:\bool_\Y & \infer{x:{\bool_\X}, y:{\bool_\X}, z:\bool_\Y  \vdash \pair x y : {\bool_\X}\otimes {\bool_\X}}{ x:{\bool_\X} \vdash x:{\bool_\X}  &  y:{\bool_\X} \vdash y:{\bool_\X}} }    }}
		\]}
	We now realize that the marginal distribution defined by the term $t$ is in fact $\Pr(\X=\bv, \X=\bv)$, for  $\bv \in \set{\true, \false}$. This is a \emph{redundant} version of  $\Pr(\X=\bv)$, the  distribution over  the single variable $\X$. 
	The probabilities associated to  the tuple of values in  $\Bool_\X \otimes \Bool_\X$ are  indeed  $\pattern{\true, \true} \mapsto 0.2, \pattern{\false, \false}\mapsto 0.8 $. Necessarily, the  tuples  $\pattern{\true, \false}$ and $\pattern{\false,\true}$ have probability $0$.
	Please contrast the term $t$ above with the term $\letin x {\coin{ 0.2}}  {\letin  y {\CPT x} { \pattern{x,y}}}: \bool_\X \times \Bool_\Y$, which instead defines a distribution over \emph{two} variables.
\end{example}

In the following  we formalize these ideas, exploiting named types to associate a \BN to a ground term. 
We  write  only the  names  $\X,\Y$, but please think of them as named booleans $\Bool_\X, \Bool_\Y$.

\subsection{The Type System} \label{sec:itypes}

In this section, we introduce the type system, which will be the  base for the factor semantics in \Cref{sec:term-semantics}.
In order to focus on the main ideas, we postpone the   treatment of the evidence, namely of the construct $\observe\var{\texttt{b}}$ to \Cref{sec:evidence}. This is because while being conceptually easy, it requires some fine tuning of the type system, that would make its description harder to understand.

\paragraph{The Type System.}
The grammar of  intersection types is derived from the one for simple types, and for this reason we keep the same meta-variables. Indeed, we shall not use simple types anymore in the rest of the paper, so no confusion can occur.
We assume a countable set  $\Names=\{\X,\Y,\Z \dots\}$ of symbols, called \emph{names}, which play the role of \emph{atomic types}.
The grammar of types  includes ground  types, multisets of types (the proper intersection types), and functional types.
\[\begin{array}{lrcl}
	\textsc{Ground Types}&	 \iK, \iL & \grameq& \X\in \Names  \gpipe \iK\otimes \iL  \\[2pt]
	\textsc{Positive Types}&	 \iP,\iQ & \grameq & L \gpipe\, \mset{A_1,\ldots,A_n}  \\[2pt]
	\textsc{Types}&\iA,\iB &\grameq&  \sP \gpipe \iP \larrow \iA   
\end{array}\]
Here $[\dots]$ denotes the multiset constructor. Please notice that the empty multiset $[]$ is a positive type, as well. Two changes are present w.r.t. the definition of simple types:
\begin{enumerate}
	\item \emph{Named Booleans :} the type $\bool$ of booleans has been substituted by a countable set of names. This means that each \emph{use} of a boolean variable has now a distinct type (name). Morally, different names correspond to different random variables.
	\item \emph{Multisets for Thunks :}  types  of shape $\oc A$  have been replaced by multisets of types.
	As usual in non-idempotent intersection types, the idea is that every type inside a multiset corresponds to a single use of the typed term (see  \Cref{sec:types_use} for an example).
\end{enumerate}
The typing rules are in \Cref{fig:iTypes}; typing contexts, which we separate in 
 \emph{ground contexts} (denoted by $\Lambda$) and multiset contexts (denoted by $\Gamma$ or $\Delta$), are defined in the next paragraph.
Given a type $\iA$, we denote by $\Nm \iA$  the \emph{set} of names which appear in $\iA$.
	The definition  extends to typing contexts (\eg $ \Nm\Lambda $) and type  derivations ($\Nm{\pi}$). 
%
\paragraph{Contexts.}  
A \emph{typing context} $\Sigma$ is a (total) map from variables to
positive types such that only finitely many variables are not
mapped to the empty multiset $[]$. 
The \emph{domain} of $\Sigma$ is
the set $\dom(\Sigma) \defeq \{x\mid \Sigma(x)\neq []\}$. 
A context $\Sigma$ is \emph{empty} if $\dom(\Sigma)=\emptyset$.
A typing context $ \Sigma $ is denoted by $x_1 :\iP_1, \ldots , x_n
:\iP_n$ if $\dom(\Sigma) \subseteq \{x_1, \ldots, x_n\}$ and $
\Sigma(x_i) = \iP_i $ for all $1\leq i \leq n$.
Given two typing
contexts $ \Sigma_1$ and $ \Sigma_2$ such that $ \dom(\Sigma_1) \cap
\dom(\Sigma_2) = \emptyset $, the typing context $\Sigma_1,\Sigma_2$ is
defined as $(\Sigma_1,\Sigma_2)(x) \defeq  \Sigma_1(x)$ if $x \in
\dom(\Sigma_1)$, $(\Sigma_1,\Sigma_2)(x) \defeq  \Sigma_2(x)$ if $x \in
\dom(\Sigma_2)$, and $ (\Sigma_1,\Sigma_2)(x) \defeq  []$ otherwise.
Observe that $\Sigma, x :[]$ is equal to $\Sigma$.
Given a context $\Sigma$, it is convenient to partition it into a ground and a \exponential context.  We call \emph{ground context}  (denoted by  $\Lambda$)  the restriction of $\Sigma$ to the variables which are mapped to ground types, and we call \emph{\exponential context} (denoted by  $\Gamma$ or $\Delta$) its complement, \ie the restriction of $\Sigma$ to the variables which are mapped to multisets.
Multiset union $\uplus$ is extended to \exponential contexts point-wise, i.e.  $(\Gamma \uplus
\Delta)(x) \defeq  \Gamma(x) \uplus \Delta(x)$, for each variable $x$.

\begin{remark}
	Please notice that in \Cref{fig:iTypes} we have operated a few simplifications in the presentation of the type system. In particular, we do not type boolean constants anymore---this simplifies the $\icond$ rule, that now is an axiom. Moreover,  notice that tuples always have the multiset context empty.
	As we aforementioned, we postpone the   treatment of  $\observe\var{\texttt{b}}$ to \Cref{sec:evidence}. 
\end{remark}

\paragraph{Type Derivations.} We write   $\pi \dem  \Lambda, \Gamma  \vdash  t: \iA$   to indicate that  $\pi$ is a type derivation (using the full type system  in \Cref{fig:iTypes}), proving  that $\tm$ has type $\iA$ given typing context $\Lambda$ (ground) and $\Gamma$ (multiset). 
We write   $\pi  \dem_\low \Lambda  \vdash  t: \iL$ for a   derivation $\pi$  which  uses first order rules, restricted to  ground contexts, only.

\begin{figure}[t]
	\small
	\begin{mybox}		
		\textbf{Higher-Order Calculus}
		\begin{myboxC}
			\textbf{First-Order Rules}
			\begin{gather*}
				\infer[\isample]{ \Lambda \vdash {\sample{d}} : \X}{ \X \not \in \Nm{\Lambda} } 
				\qquad
				\infer[\icond]{ \Lambda, y_1 : \Y_1, \dots, y_n : \Y_n \vdash \CPT{y_1,\dots, y_n} : \X }
				{ \X \not \in \{\Y_1,\dots,\Y_n\} \tand \X \not \in \Nm{\Lambda} } 
				\\[4pt]	
				\infer[\ivar]{ \Lambda, x : \iP \vdash x : \iP}{}   	
				\qquad
				\infer[\ilet]{ \Lambda, \Gamma_1\uplus\Gamma_2  \vdash \letin{x}{\tmu}{\tm} : \iA }
				{ \Lambda, \Gamma_1 \vdash \tmu : \iP  &  \Lambda, \Gamma_2, x : \iP \vdash \tm : \iA }
				\\[4pt]
				\infer[\ipair]{ \Lambda \vdash \pair{v}{w} :  \iL_1 \otimes \iL_2 }
				{ \Lambda \vdash v : \iL_1  &  \Lambda  \vdash w :\iL_2 } 
				\qquad
				\infer[\iletp]{ \Lambda,   \Gamma  \vdash \letp{\pair x y }{\val}{t} :  \iA }
				{ \Lambda \vdash \val : \iL_1 \otimes \iL_2 
					& \Lambda, x : \iL_1, y : \iL_2, \Gamma \vdash t : \iA 
				}
			\end{gather*}
		\end{myboxC}
		\vskip -.4cm
		\begin{gather*}
			\infer[\iabs]{ \Lambda, \Gamma \vdash  \lam x.t : \iP \larrow \iA }{ \Lambda, \Gamma, x : \iP \vdash t :  \iA}
			\qquad
			\infer[\iapp]{ \Lambda, \Gamma_1\uplus\Gamma_2 \vdash tv : \iA }
			{ \Lambda, \Gamma_1 \vdash t : \iP \larrow \iA  &  \Lambda, \Gamma_2 \vdash v : \iP}
			\\[4pt]
			\infer[\ibang]{  \Lambda,   \biguplus_i\Gamma_i \vdash \oc \tm : \mset{\iA_1,\dots,\iA_n} }{ \big( \Lambda, \Gamma_i \vdash \tm: \iA_i \big)_{i = 1}^n  }
			\qquad
			\infer[\ider]{ \Lambda,   \Gamma \vdash \der \val :  \iA}
			{ \Lambda, \Gamma \vdash v : \mset \iA}
		\end{gather*}
	\end{mybox}
	\vspace{-8pt}
	\caption{The intersection type system $\iBang$. }
	\vspace{-8pt}
	\label{fig:iTypes}
\end{figure}

\paragraph{Naming Condition.}  We call \emph{main names} those which type  the subject of an
\icond or \icoin rule.  Given a type derivation $\pi$, we assume  that all the  main  names  are \emph{pairwise distinct}. So, each \icond or \icoin rule is uniquely identified by a name $\X$. This requirement is easy to implement. Indeed, it is a sort of Barendregt convention, but for types. One could consider the name introduced by a probabilistic axiom as the address of the axiom in the  type derivation.

\paragraph{The First-Order Fragment.} 
Notice that the first-order fragment   in \Cref{fig:iTypes} is the same as the first-order fragment of   simple types (\Cref{fig:simple_types}), the only difference being that now  the booleans are named.  Clearly this fragment is \emph{decidable}, since any first-order simply typed derivation 
of $\LL \vdash t: \otimes^n\Bool$ (where every type in the context  $\LL$ is ground) can easily be 
named, by assigning a distinct name to the subject of every \paxs, and to every occurrence of boolean type in $\LL$. 


\subsection{Properties of the Type System}\label{sec:TS_properties} 
The intersection type system satisfies all the  properties one would expect---proofs are in the \Appendix. First, types are stable under reduction and expansion (the latter property  not holding for simple types).
\begin{proposition}[Subject Reduction/Expansion]\label{prop:subsconv}
Let $\tm$ be a $\Bang$-term such that $\tm\to\tmtwo$. Then $\tjudg{\Sigma}{\tm}{\iA}$ if and only if $~\tjudg{\Sigma}{\tmtwo}{\iA}$.
\end{proposition}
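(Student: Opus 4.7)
The plan is to proceed by first establishing the statement for the root reduction rules $\mapsto_\dB, \mapsto_\dS, \mapsto_\dbang, \mapsto_\dpair$, and then lifting the result to the contextual closure $\to$ by a straightforward induction on the evaluation context $\ss$. The induction on $\ss$ only uses inversion and reconstruction of the rules $\ilet$ and $\iapp$, neither of which analyzes the reduced subterm, so the interesting work lies at the root.

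The key technical tool will be a pair of lemmas characteristic of non-idempotent intersection types, namely value substitution and its converse (anti-substitution). Concretely, I would prove: if $\pi \dem \Lambda, \Gamma, x:\iP \vdash t : \iA$ and $\sigma \dem \Lambda, \Delta \vdash \val : \iP$, then there is a derivation $\Lambda, \Gamma \uplus \Delta \vdash t\isub{x}{\val} : \iA$; and conversely, every derivation of $t\isub{x}{\val}$ can be split into sub-derivations typing $\val$ and $t$. The multiset structure of positive types is exactly what makes both directions work: when $\iP = \mset{\iA_1,\dots,\iA_n}$, each of the $n$ occurrences of $x$ in $t$ consumes one component $\iA_i$, and the $\ibang$ rule provides precisely one sub-derivation per component. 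When $\val$ is not introduced by $\ibang$ (i.e.\ $\iP$ is a ground type or a tensor), there is exactly one use of $x$ and the lemma collapses to the usual statement.

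With substitution in hand, each root rule is handled by a small inversion-and-reconstruction argument. For $\mapsto_\dB$ and $\mapsto_\dS$, inversion on the $\iapp$/$\ilet$ rule and on the chain of $\ilet$ and $\iletp$ rules forming the substitution list $\sctx$ isolates the derivation of $\val$; this derivation is then either plugged back in via the substitution lemma (yielding SR) or extracted out of the contractum's derivation (yielding SE). The rules $\mapsto_\dbang$ and $\mapsto_\dpair$ are simpler: for $\der\,\oc t$, the $\ider$/$\ibang$ pair must type $\oc t$ with a singleton multiset $\mset{\iA}$, and SR/SE amount to erasing/introducing the two rules; for the $\dpair$ rule, the $\ipair$/$\iletp$ pair is eliminated by a double application of the substitution lemma.

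The main obstacle, I expect, is the bookkeeping for ``at a distance'' rules, where the reduct and redex differ by an arbitrary substitution list $\sctx$ that must be threaded through the derivation. The technique is to establish a small commutation lemma stating that a typing derivation of $\shole{t}\sctx$ decomposes uniquely into a ``spine'' of $\ilet$/$\iletp$ rules corresponding to $\sctx$ sitting on top of a derivation of $t$ under an enriched context; this matches the structure of $\sctx$ as a context and lets us shift the position at which substitution happens. A secondary subtlety is maintaining the naming condition that main names be pairwise distinct: the $\ibang$ rule's sub-derivations can be $\alpha$-renamed on names to avoid clashes before substituting, and since reduction never fires a probabilistic redex, no new main names are ever created in the contractum. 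These observations together should give both directions of the statement.
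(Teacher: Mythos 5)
Your plan matches the paper's proof essentially step for step: the appendix establishes split/anti-split and substitution/anti-substitution lemmas for multiset types, then proves SR and SE by induction on the evaluation context with one case per root rule, handling the ``at a distance'' rules by an inner induction on the substitution list $\sctx$ (your commutation lemma, inlined). The only cosmetic difference is that the paper reformulates $\mapsto_\dB$ and $\mapsto_\dpair$ so that they produce explicit \texttt{let}-bindings rather than meta-level substitutions, confining the use of the substitution lemma to the $\mapsto_\dS$ case, whereas you apply it directly at $\dB$ and twice at $\dpair$ — both routes are equivalent.
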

Subject reduction can be strengthened, showing that there is a measure that decreases along each reduction sequence. This  gives a combinatorial proof  that typable terms are strongly normalizing.
\begin{theorem}\label{thm:S-soundness} Let $\tm$ be a $\Bang$-term. If $\tm$ is typable, then $\tm$ is strongly normalizing.
\end{theorem}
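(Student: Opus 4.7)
The standard route for non-idempotent intersection types is to exhibit a natural-number measure on type derivations that strictly decreases along every reduction step. Combined with \Cref{prop:subsconv} (subject reduction), this rules out infinite reduction sequences, giving strong normalization directly; alternatively, it gives at least one terminating reduction, and \Cref{diamond}.(2) upgrades this to strong normalization.

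First, I would define a measure $|\pi| \in \mathbb{N}$ by induction on the structure of $\pi$, counting roughly the number of rule instances, with multiplicities induced by the \ibang rule (so the $n$ premises of an \ibang with conclusion $\mset{A_1,\dots,A_n}$ contribute the sum of their measures). The crucial design property is that the measure is \emph{additive} with respect to the multiset structure of the intersection types: copying a sub-derivation in a multiset contributes exactly the sum of the copies, matching the non-idempotent copying performed by substitution.

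Second, I would prove the standard \emph{substitution lemma} for the system: if $\pi \dem \Lambda, \Gamma, x : \mset{A_1,\dots,A_n} \vdash t : B$ and, for each $i$, $\sigma_i \dem \Lambda, \Gamma_i \vdash v : A_i$ (with the naming condition coherently managed), then there is a derivation $\pi\isub{x}{v} \dem \Lambda, \Gamma \uplus \biguplus_i \Gamma_i \vdash t\isub{x}{v} : B$ whose measure equals $|\pi| + \sum_i |\sigma_i| - n$ (or an analogous closed formula), where the $-n$ accounts for the $n$ \ivar axioms used to type the occurrences of $x$ that disappear under substitution. The analogous statements for the tensor destructor and for $\der\oc t$ follow the same pattern. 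The point is that substitution does \emph{not} inflate the measure beyond what is already there.

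Third, using these substitution lemmas, I would establish a \emph{quantitative} subject reduction: if $\pi \dem \Sigma \vdash t : A$ and $t \to t'$, then there is $\pi' \dem \Sigma \vdash t' : A$ with $|\pi'| < |\pi|$. The proof is a case analysis on the root step ($\mapsto_{\dB}$, $\mapsto_{\dS}$, $\mapsto_{\dbang}$, $\mapsto_{\dpair}$) followed by closure under evaluation contexts. In each case the redex consumes at least one rule instance (\iabs and \iapp in $\dB$; \ilet and \ivar in $\dS$; \ider and the relevant premise of \ibang in $\dbang$; \iletp and \ipair in $\dpair$) which more than compensates for the combined measure obtained from the substitution lemma. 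Closure under evaluation contexts is immediate because the measure is local.

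Finally, since $|\pi| \in \mathbb{N}$ strictly decreases along every reduction step of a typable term, no infinite reduction sequence can start from a typable term, so $t$ is strongly normalizing. The main obstacle is keeping the bookkeeping of multisets and of the naming condition consistent when stating the substitution lemma — in particular, making sure that the $n$ independent derivations typing the substituted value can be merged with the rest of the derivation while respecting the pairwise distinctness of main names — but this is routine once the measure is set up so that each copy of a value appears exactly once in the accounting.
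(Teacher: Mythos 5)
Your overall strategy is exactly the paper's: define a natural-number measure on derivations, prove a quantitative substitution lemma, show the measure strictly decreases under every root step, and conclude by closure under evaluation contexts. The gap is in the measure you propose. A uniform count of rule instances, with bookkeeping driven only by the multiset structure, does not survive the duplication of \emph{ground-typed} values. In this calculus the ground part of the context is managed \emph{additively}: a variable $x : \iL$ of ground type may occur $k$ times in $t$, each occurrence typed by its own \ivar axiom, and the rule $\mapsto_{\dS}$ then substitutes $k$ copies of the \emph{same} derivation of $v : \iL$ (built from \ivar and \ipair rules) for those occurrences. With a uniform count this can make the measure \emph{increase}: e.g.\ $v = \pair{y}{z}$ contributes $3$ rules per copy, so three occurrences of $x$ turn $3$ \ivar axioms into $9$ rules, outweighing the single \ilet that disappears. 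Your substitution-lemma formula $|\pi| + \sum_i |\sigma_i| - n$ implicitly assumes every copy of the value is tracked by an element of a multiset; that is true for thunks but false for the ground fragment, which is precisely where the intersection types give you no handle on the number of copies.

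The paper's fix is to assign weight $0$ to all axioms and to \ipair, \iabs, \ibang (so that every derivation of a value at ground type has measure $0$ --- this is Lemma~\ref{lemma:basic_val} --- and duplicating it is free), and positive weights only to \ilet~($1$), \ider~($1$), \iapp~($2$) and \iletp~($3$). The substitution lemma then becomes an exact equality $\meas{\pi'} = \meas{\pi^v} + \meas{\pi}$ with no correction term. The non-unit weights are also forced: with the at-a-distance rules as implemented in the appendix, $\mapsto_{\dB}$ trades an \iapp for an \ilet and $\mapsto_{\dpair}$ trades an \iletp plus an \ipair for two \ilet's, so a uniform weight of $1$ on these rules would give no strict decrease there either. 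To repair your argument you would need to (i) make ground-value derivations weigh $0$, and (ii) calibrate the weights of \iapp and \iletp against \ilet so that the rearrangement steps still strictly decrease; at that point you have essentially reconstructed the paper's measure.
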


Crucially, the progress lemma, stated for simple types, still holds for intersection types. This means that every (higher-order, possibly recursive) $\Bang$-term which is typable with ground type in a ground context, eventually reduces to a \RED{BN} normal form.
\begin{theorem}[Compiling into the low-level]\label{thm:compileI}   Let   $\tm$ be a $\Bang$-term such that 
$\pi \dem \Lambda\vdash \tm:\iL$. Then $\tm\to^*\tmu$, where $\tmu$ is  a   $\Low$-term in  normal form  (a BN  normal form).
\end{theorem}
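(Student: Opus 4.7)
The plan is to combine the three results already established in this section: strong normalization, subject reduction, and a progress-style lemma for the intersection type system. First, by \Cref{thm:S-soundness}, since $\tm$ is typable, $\tm$ is strongly normalizing, hence reduces to some normal form $\tmu$, i.e.\ $\tm \to^* \tmu$ with $\tmu \not\to$. Second, by iterated application of Subject Reduction (\Cref{prop:subsconv}), the same type derivation data is preserved along the reduction, so we obtain a derivation $\pi' \dem \Lambda \vdash \tmu : \iL$ in the intersection type system, with the same ground context $\Lambda$ and ground conclusion $\iL$.

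It remains to show that a normal form typable in a ground context with ground type must be a $\Low$-term. This is the intersection-type analogue of \Cref{lem:progressSimple}. The proof is by induction on the structure of $\pi'$, following exactly the same case analysis as in the simple-types case, since the shapes of the intersection typing rules mirror the simple-type rules (the only differences being that $\Bool$ is replaced by names $\X \in \Names$, and $\oc \sA$ is replaced by multisets $\mset{\iA_1,\dots,\iA_n}$). The key cases are: (i) the rules \iapp, \iabs, \ibang, \ider cannot appear at the root of a derivation whose conclusion is a ground type in a ground context, because their conclusions carry a functional or multiset type, or require a non-ground hypothesis that would create a redex in normal position; (ii) for \ilet and \iletp, the fact that the argument is in normal form and has positive type, together with the ground context, forces the subterm to be a low-level value or a low-level computation, and the induction hypothesis applies to the continuation (whose context remains ground because the bound variable receives a ground type in this setting); (iii) the remaining rules \isample, \icond, \ivar, \ipair directly produce $\Low$-term constructors.

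The main obstacle, and the only point requiring care, is case (i): one must rule out, for instance, a normal form of the shape $\der \val$ or $\tm\, \val$ at ground type in a ground context. For $\der \val$, if $\val$ is a variable, then this variable would have to be typed with a multiset $\mset{\iA}$, contradicting the ground context assumption; if $\val = \oc \tm$, then $\der \oc \tm \to_{\dbang} \tm$, contradicting normality. A symmetric argument handles applications using the distance-aware $\mapsto_\dB$ rule. This confirms that $\tmu$ must be generated by the $\Low$-grammar of \Cref{sec:lowlevel}, completing the proof.
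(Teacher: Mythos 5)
Your proposal is correct and follows essentially the same route as the paper: strong normalization (\Cref{thm:S-soundness}) plus iterated subject reduction (\Cref{prop:subsconv}) to reach a typed normal form, then a progress lemma for the intersection type system proved by induction on the derivation with the same strengthened case analysis as \Cref{lem:progressSimple} (ruling out $\der\val$ and applications at the root via the ground-context assumption and the distance rules, and forcing the \texttt{let}-bound type to be ground since a normal argument of multiset type would be $\holebag{\oc s}\sctx$ and hence create a redex). This matches the paper's argument, which explicitly reduces the theorem to the observation that the simple-type progress proof carries over to intersection types.
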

Notice that the  normal form $\tmu$ has necessarily  a type derivation $\pi' \dem_\low \Lambda\vdash \tmtwo:\iL$ (using  first-order rules only).
Finally, we highlight that the type system is syntax driven. As a consequence: 
	\begin{proposition}[Unique Derivation]\label{prop:uniqueness} Let  $\tm$ be a $\Bang$-term, and $\Lambda$  a ground context. Then there exists at most one type derivation $\pi$ such that
			$\pi \dem \tjudg{\Lambda}{\tm}{\iL}$.
	\end{proposition}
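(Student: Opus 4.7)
The plan is to proceed by structural induction on $\tm$, exploiting the \emph{syntax-directed} nature of the rules in \Cref{fig:iTypes}: each syntactic constructor is targeted by exactly one typing rule, so the root rule of any derivation $\pi \dem \Lambda \vdash \tm : \iL$ is forced by the outermost constructor of $\tm$. Moreover, since both $\Lambda$ and $\iL$ are ground, the multiset context at the root must be empty, so the $\Gamma_1 \uplus \Gamma_2$ splittings in rules like $\ilet$ and $\iapp$ collapse to the trivial $\emptyset \uplus \emptyset$, eliminating what would otherwise be a genuine source of nondeterminism.

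First I would dispatch the base cases ($\ivar$, $\isample$, $\icond$): these are axioms entirely determined by the conclusion $(\tm, \Lambda, \iL)$, with names canonically fixed by the Naming Condition (cf.\ the remark that the name introduced by a probabilistic axiom is the address of the axiom in the derivation). The easier inductive cases follow from unique factorization of the conclusion: $\ipair$ uses $\iL = \iL_1 \otimes \iL_2$; $\ibang$ unpacks the conclusion multiset into $n$ sub-derivations matched to each $\iA_i$; and $\iabs$, $\ider$, $\iletp$ read off their premise types directly. In all these cases, applying the IH to the sub-terms yields uniqueness of the sub-derivations and hence of $\pi$.

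The subtle cases are $\ilet$ and symmetrically $\iapp$, where the positive type $\iP$ of the bound sub-term (resp.\ of the argument) does not appear in the conclusion. To handle them, I would strengthen the IH so that for any sub-term, ground context and type, \emph{both} the multiset context \emph{and} the derivation are uniquely determined. The strengthening resolves the apparent circularity: non-idempotency of intersection multisets guarantees that every free occurrence of $x$ in $\tm'$ contributes a specific entry to $\iP$, so $\iP$ is rigidly pinned down by the uses of $x$ in the (unique) derivation of $\tm'$. The boundary sub-case where $x$ does not appear in $\tm'$ is handled by inspecting $u$ itself: since $u$'s derivation in $\Lambda$ has a canonical shape by the strengthened IH, $\iP$ is forced (a name if $u$ is a probabilistic axiom, $\mset{}$ if $u$ is a thunk with no required copies, and so on).

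The main obstacle is precisely this circularity in the $\ilet$/$\iapp$ cases: on the one hand, $\iP$ is needed to invoke the IH on $\tm'$; on the other, $\iP$ is what one would read off from the derivation of $\tm'$. Overcoming it relies on the strengthened IH (tracking the multiset context together with the derivation) combined with the Naming Condition, which together prevent spurious renamings of the names introduced in sub-derivations and fix the multiset context uniquely.
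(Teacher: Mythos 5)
Your route is genuinely different from the paper's. The paper does \emph{not} prove this by structural induction on arbitrary terms: it first proves uniqueness only for $\Low$-terms (Lemma~\ref{lem:uninf}), where only first-order rules and ground types occur, so the intermediate type in a \ilet is always ground and is pinned down by the induction hypothesis ``at most one ground type for $u$ under $\Lambda$''; it then transports uniqueness to arbitrary typable terms via strong normalization, Progress, and a version of subject expansion strengthened to preserve uniqueness of derivations. This detour exists precisely to avoid the difficulty your direct induction runs into, namely that in the higher-order rules the cut type is invisible in the conclusion.

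That difficulty is where your sketch has a genuine gap. Your strengthened IH takes the ground context and the type as \emph{inputs} and returns the multiset context as an \emph{output}. In \ilet, when $\iP$ is ground, the assignment $x:\iP$ lands in the \emph{ground} context of the second premise --- an input to your IH, not an output --- and ground contexts admit weakening, so the body does not determine $\iP$; it must be recovered from the $u$-premise. You anticipate an ``inspect $u$'' fallback but tie it to ``$x$ does not occur in the body'', which is the wrong dichotomy: the case split is on whether $\iP$ is ground or a multiset, and one must additionally argue that the shape of $u$ forces which case holds. The problem is sharper for \iapp: there the unknown $\iP$ sits inside the type $\iP\larrow\iA$ of the \emph{function} premise, so your strengthened IH cannot even be invoked on that premise (its type is not known), and when the argument is of the form $\oc s$ the argument premise does not determine $\iP$ either, since a thunk admits many multiset types. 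Making the direct induction go through thus requires a substantially stronger invariant --- essentially a principal-typing statement that, given the term, the ground context and the final type, the \emph{entire} judgment including hidden arrow domains is determined, proved jointly with the ground/multiset case analysis. That is a harder theorem than the strengthening you state, and it is exactly what the paper's normalize-then-expand argument sidesteps.
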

This means that for each $\Bang$-term $\tm$ and ground context $\Lambda$, there exists at most one ground type $\iL$ such that  $\tjudg{\Lambda}{\tm}{\iL}$ admits a type derivation. In particular, the type derivation for a term in BN normal form is uniquely determined. 
By subject expansion 
we have:
	\begin{theorem}\label{cor:unique}
	 Let $\tm$ be a closed $\Bang$-term $\tm$. If $\tm$ reduces to BN normal form, then a type derivation $\pi\dem\, \vdash \tm:\iL$ \emph{exists} and is \emph{unique}.
	\end{theorem}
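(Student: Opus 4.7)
The theorem is essentially a packaging of two earlier results, namely subject expansion (\Cref{prop:subsconv}) and the uniqueness property of \Cref{prop:uniqueness}. The plan is to handle existence and uniqueness separately. Let $\tmu$ be a BN normal form of $\tm$, so that $\tm\to^*\tmu$ and $\tmu$ is a closed $\Low$-term in normal form.

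For \emph{existence}, the first step is to exhibit a type derivation $\pi'\dem_\low\ \vdash \tmu:\iL$ for some ground type $\iL$. This is essentially the content of the remark that follows \Cref{fig:iTypes}: the first-order fragment of $\iBang$ coincides with a first-order simply typed system in which booleans are annotated with names, and such a system is decidable by a syntax-directed bottom-up pass through $\tmu$. Concretely, one proceeds by induction on the structure of $\tmu$ following the grammar of BN normal forms, assigning a fresh main name to the subject of each \isample or \icond rule (which ensures the naming condition on main names), and reading off the other ground types mechanically at \ivar, \ipair, \iletp and \ilet. Once $\pi'$ is in hand, iterated application of subject expansion (\Cref{prop:subsconv}) along the reduction $\tm\to^*\tmu$ lifts $\pi'$ to a derivation $\pi\dem\ \vdash \tm:\iL$, using the same ground type $\iL$.

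For \emph{uniqueness}, we directly invoke \Cref{prop:uniqueness}: for any $\Bang$-term and any ground context, there is at most one type derivation ending in a judgment of ground type. Since $\tm$ is closed, the relevant context is the empty (ground) context, so the derivation $\pi$ obtained above is the unique one.

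The main (modest) obstacle is the existence step for $\tmu$: one must verify, case by case on the BN normal form grammar, that the naming side-conditions of \isample and \icond ($\X\notin \Nm\Lambda$ and $\X$ distinct from the argument names) can always be met by taking $\X$ genuinely fresh, and that the first-order structure of $\tmu$ forces the remaining ground types. Once this is in place, the rest of the proof is a straightforward combination of \Cref{prop:subsconv} and \Cref{prop:uniqueness}; in particular, subject expansion does not duplicate or create new main names, so the naming condition is preserved when reversing the reduction from $\tmu$ to $\tm$.
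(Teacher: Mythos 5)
Your overall architecture is the paper's: existence comes from subject expansion (\Cref{prop:subsconv}) applied along $\tm\to^*\tmu$, and uniqueness is a direct appeal to \Cref{prop:uniqueness} with the empty ground context. Where you diverge is in the existence step for the normal form $\tmu$ itself, and that step as written does not go through. You claim that any closed term matching the grammar of BN normal forms can be typed by a syntax-directed bottom-up pass that assigns fresh main names and ``reads off'' the remaining ground types. The paper explicitly flags that this is false: the footnote attached to that grammar states that not all terms it generates are typable. A concrete counterexample is the closed normal form $\true$ (generated via $\gnf \grameq v$ and $v \grameq \bv$), which has \emph{no} derivation in the intersection type system, since boolean constants are not typed there; terms with clashing observations, where the same variable would need both $\Xt$ and $\Xf$, give further failures. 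So an unconditional induction on the grammar cannot produce the derivation $\pi'\dem\,\vdash\tmu:\iL$ you need as the base of the expansion.

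The repair is that no such induction is needed, because ``BN normal form'' is defined in the paper as a normal form \emph{of ground type} --- the grammar is only an over-approximation of the shape of such terms. The hypothesis that $\tm$ reduces to BN normal form therefore already hands you a derivation $\pi'\dem\,\vdash\tmu:\iL$; indeed, under the purely syntactic reading the theorem itself would be false (take $\tm=\true$, which is its own grammar-matching normal form but is untypable). With that reading, your remaining steps --- subject expansion to transport $\pi'$ back to $\tm$ at the same type $\iL$, then \Cref{prop:uniqueness} for uniqueness --- are exactly the paper's proof.
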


	\subsection{Putting the Calculus and the Type System  at Work}\label{sec:types_use}
	 We illustrate with some examples   the expressiveness of the calculus, and the use of the type system.
	\paragraph{Expressiveness.}
	Since  we have access to the full  expressiveness of the untyped $\lambda$-calculus (\Cref{rem:Turing}),  we  can use  a  standard  encoding (in its call-by-push-value flavor)  of integers, arithmetic, if/then/else, and fixed point combinators.
	\begin{example}[Encoding Recursive Behavior] 
		We are now able to  encode   Dynamic Bayesian networks, such us the one depicted in \Cref{fig:HMM} (from \cite{KollerBook}, Ch.6). 
		The idea behind this model is that a system evolves with time in a stochastic way. At each time step, one random variable $\mathtt{S}_{i+1}$, which depends only on the previous state $\mathtt{S}_{i}$, represents the new state, while the observation $\mathtt{O}_{i+1}$ depends only on the current state $\mathtt{S}_{i+1}$. A typical query in these kinds of models is
		\[
		\Pr(\mathtt{S}_n=\mathtt{s}\mid \mathtt{O}_1=\mathtt{o}_1,\ldots,\mathtt{O}_n=\mathtt{o}_n)
		\]
		which intuitively means: after $n$ time steps, what is the probability of being in a certain state, knowing all the observations?
		We can write the template of \Cref{fig:HMM}  as follows:
		
		\[
		\arraycolsep=1.6pt
		\begin{array}{ll}
			\tm &\defeq \la{n}\letin{s_0}{\coin{p}}{\tmtwo \; n \; {s_0}} \\
			\tmtwo &\defeq \texttt{fix } \oc( \, \lambda x. \lambda n. \lambda s. \texttt{if} \; \isZ{n} 
			\begin{array}[t]{lll}
				\texttt{then} & s  \\
				\texttt{else} &\leti s' = \CPTN{\mathtt{S}}{s} \, \texttt{in} \\
				&\leti o' = \CPTN{\mathtt{O}}{s'} \, \texttt{in} \\
				&\leti m = \pred{n} \, \texttt{in} \\
				&\leti r = (\der x) \; m \; s' \\
				&\texttt{in } \pair{o'}{r} \, )
			\end{array}
		\end{array}
		\]
		
		Then $\tm\underline{n}$---where $\underline{n}$ is an  encoding of the integer $n$---represents the template that is to be  unrolled $n$ times. 
		Operationally, we have exactly that the fixed point operator is unfolded $n$ times generating the unrolled Bayesian network. From the point of view of the type system, we have that the type of $\tm\underline{n}$ \emph{depends} on the integer $n$:
		\[
		\tjudg{}{\tm\underline{n}}{\mathtt{O}_1\otimes\cdots\otimes\mathtt{O}_n\otimes\mathtt{S}_n}
		\]
	\end{example}
	
	
	
\paragraph{The Intersection Types, in use.} The term which encodes  the BN of \Cref{fig:BNrain} is easily typed in the first-order fragment---the reader can find  the  derivation in  \Cref{ex:flow}. 
Here we give an example of type derivation where   \emph{multiple copies} are involved,  so that we  need to use  the \emph{multiset type}.
First, observe that a  type of shape $[\iA_1, \dots, \iA_n]$ can be thought of as an informative refinement of the thunk type $\oc \iA$. Sub-terms are typed several times, once for each copy that will be produced during the reduction. 
This feature is crucial to handle  random variables. 
We stress that   the inability to deal with an unbounded number of random variables is 
the key issue which limits to first-order 
the compilation of programs into BNs, or  data flow analysis (see \eg \cite{abs-1809-10756,GorinovaGSV22}).
\begin{example}[Multiple Coin Tosses] \label{ex_coins}Consider  a   term $\tmu$ similar to that in \Cref{fig:BNreduction},  modeling two tosses of the same (biased) coin:
	$ 	\tmu ~\defeq ~ \letin{x}{\sample{d}}{\letin{y}{\oc(\CPTN{}{x})}{\pair{x}{\der y, \der y}}}$.  For readability, here we use some syntactic sugar.
	The (unique) type derivation for $\tmu$ is
	\vspace*{-4pt}
	\begin{prooftree}\footnotesize
		\def\extraVskip{3pt}
		\def\ScoreOverhang{1pt}
		\AxiomC{$ \vdash \sample{d} : \nX $}
			\AxiomC{$x : \nX  \vdash \CPTN{}{x} : \nY_1 $}
				\AxiomC{$x : \nX  \vdash \CPTN{}{x} : \nY_2 $}
			\insertBetweenHyps{\hskip .2cm}
			\BinaryInfC{$x : \nX \vdash \oc(\CPTN{}{x}) : [\nY_1,\nY_2] $}
			\AxiomC{$x : \nX \vdash x : \nX$}
				\AxiomC{$x : \nX, y : [\nY_1]  \vdash y : [\nY_1] $}
				\UnaryInfC{$x : \nX, y : [\nY_1]  \vdash \der y : \nY_1$}
					\AxiomC{$ x : \nX, y : [\nY_2] \vdash y : [\nY_2] $}
					\UnaryInfC{$x : \nX, y : [\nY_2] \vdash \der y : \nY_2 $}
				\insertBetweenHyps{\hskip .3cm}
				\BinaryInfC{$x : \nX, y : [\nY_1,\nY_2] \vdash \pair{\der y}{\der y} : \nY_1 \otimes \nY_2 $}
				\insertBetweenHyps{\hskip .1cm}
			\BinaryInfC{$x : \nX, y : [\nY_1,\nY_2] \vdash \pair{x}{\der y, \der y} : \nX \otimes \nY_1 \otimes \nY_2$}
		\BinaryInfC{$x : \nX \vdash \letin{y}{\oc(\CPTN{}{x})}{\pair{x}{\der y, \der y}} : \nX \otimes \nY_1 \otimes \nY_2$}
		\insertBetweenHyps{\hskip -.4cm}
		\BinaryInfC{$ \pi \dem \vdash  \tmu : \nX \otimes \nY_1 \otimes \nY_2 $}
	\end{prooftree}
\end{example}


\renewcommand{\name}{\rv}
\renewcommand{\names}{\rvs}

\newcommand{\Pa}[1]{\mathsf{Pa}(#1)}
\newcommand{\aphi}{\mathcal{T}}

\section{The Semantics of  Bayesian Networks}
In this section we formally define  the semantics of \BNs.
First, let us  briefly revise the language of Bayesian modeling. For more details, we refer to \cite{DarwicheHandbook} for a concise presentation, and to standard texts for an exhaustive treatment~\cite{Pearl88,DarwicheBook,NeapolitanBook}.
\condinc{}{Following  \cite{DarwicheBook}, in this paper we are only concerned with  random variables   whose set of values is finite, so \emph{discrete random variables}.}

\subsection{Random Variables}\label{sec:rvs}
Bayesian methods provide a formalism for reasoning about partial beliefs under
conditions of uncertainty. Since we   cannot determine for
certain  the state of some features of interest,  we settle for determining how \emph{likely} it is that a particular feature is in a particular state. 
Random variables represent features  of the system being modeled.   
For the purpose of 
modeling,  a random variable  can be seen as a \emph{name} for an atomic proposition (\eg "Wet") which assumes values from a set of states (\eg $\{\true,\false\}$).  The system  is  modeled as \emph{a joint} probability distribution on all possible values of the variables of interest -- an element in the sample space represents a possible state of the system.

\begin{example} The  canonical sample space sketched in \Cref{fig:marginalization}  consists of $2^4$ tuples (only some entries are displayed); to each tuple  $\bx$ 
	is associated a probability.
The event $(R=\true) $  contains $2^3$ tuples,  the event $(\nR=\true,\nW=\true)$ contains $2^2$ tuples, and has probability $0.33$.
\end{example}

\begin{remark}
Notice that in Bayesian modeling, random variables are identified first, and only implicitly become functions on a sample space. 
We refer to the excellent textbook by Neapolitan \cite{NeapolitanBook} (Ch.~1) for a formal treatment relating the  notion of random variable as used in Bayesian inference, with the classical definition of function on a sample space.  

\condinc{}{Given a  set of  variables $\bX$,
it is convenient and standard  
to assume as \emph{canonical sample space}  the set $\Omega=\Val \bX$.
We can then see each  variable $X_i$ as a function from $\Omega$ to $\Val {X_i}$, projecting $ (x_1, \dots, x_n) $ into $x_i$.
Hence, a random variable in the classical sense.
}

\end{remark}

 Given a countable set $\Names$, we associate to each name $\X\in\Names$ a set of values, denoted by  $\Val \X$ (typically $\Val{\X}=\{\true, \false\}$). From now on, we silently identify a name  $\X$ with the pair $(\X, \Val \X)$, which effectively defines a  \emph{random variable} (\rv).
A  finite set of names $\bX=\{\X_1, \dots, \X_n\}$ defines a "compound" \rv whose value set $\Val \bX$   is the Cartesian  product 
$  \Val {\X_1} \times \dots \times \Val {\X_n} $.

\begin{notation}\label{notation_BN}
The metavariables $\bX,\bY,\bZ$  range over finite \emph{sets of names} (\rvs). 
As standard,  a lowercase letter   $\x$  denotes a generic value  $  \x\in \Val \nX $, and $\xs$  denotes a tuple in the cartesian product 
$\Val{\bX}\defeq  \Val {\nX_1} \times \dots \times \Val {\nX_n} $. Moreover, we use juxtaposition as a tuple constructor, \eg if $\x\in\Val{\X}$ and $\ys\in\Val{\Y_1}\times\cdots\times\Val{\Y_n}$, then $\x\ys\in\Val{\X}\times\Val{\Y_1}\times\cdots\times\Val{\Y_n}$.
Given a subset $\bY\subseteq\bX$, we denote by $\Proj{\bx}{\bY}$ the restriction of $\bx$ to ${\bY}$ (so, $\Proj{\bx}{\bY} \in \Val \bY$ ). 
Given two sets of names $\bX$ and $\bY$, we say that $\bx\in\Val{\bX}$ and $\by\in\Val{\bY}$ \emph{agree on the common names} ($\bx\sim \by$ for short) whenever $\Proj{\bx}{\bX\cap\bY}=\Proj{\by}{\bX\cap\bY}$.

\end{notation}

\subsection{Factors, Sum and Product Operations}
Inference algorithms  rely on basic operations on a class of functions 
known as \emph{factors}, which generalize    the notions of  probability distribution and of conditional distribution. Factors will be the key ingredients also in our semantics.

\begin{definition}[Factor]
	A \emph{factor} $\ft\phi{\bX}$  over a set of  names (\rvs) $\bX$
	is a function $\ft\phi{\bX}:\Val{\bX}\to\mathbb{R}_{\geq 0}$
	mapping each  tuple  $\bx\in \bX$  to a non-negative real.
\end{definition}
When $\bX$ is clear from the context, we simply write  $\phi$ (omitting the superscript $\bX$); then $\Nm{\phi}$  denotes $\bX$. Letters $\phi,\psi$ range over factors. 
Please notice that in the literature about BNs, $\phi(\bx)$ is often written $\phi_\bx$. We adopt this  convenient notation when making explicit calculations.

\begin{example} Factors generalize familiar concepts from probability theory.
\begin{itemize}
	\item A \emph{joint probability distribution} over the set $\bX$ is a factor $\phi$ which maps each tuple  $\xs\in \Val \bX$ to
	a probability $\phi(\xs)$ such that $\sum_{\xs\in \Val{\bX}} \phi(\xs) =1$.
	\item A \emph{conditional probability table} (\cpt)  for $\X$ given $\bY$  is a factor $\ft\phi {\{\X\}\cup\bY}$ which maps each tuple $\x \ys\in\Val{\{\X\}\cup\bY}$ to a probability $\phi(\x \ys)$ such that for each $\ys\in\bY$, $\sum_{\x\in \Val{\X}} \phi(\x \ys) =1$.\footnote{Please notice that here, and in the following definition of sum out, we slightly abuse the notation. In fact, as standard, we consider the tuples as sequences indexed by the set of random variables. 
	Every time, we present the tuples ordered in the most convenient way for a compact  definition.}
\end{itemize}
\end{example}
Factors come with two important operations: sum (out) and product. 
%
Summing out a name (\rv)  $\Z$ from a factor  means that we are removing  $\Z$,
 thus obtaining a  smaller  factor. As depicted in \Cref{fig:marginalization}, intuitively we do so  by merging all tuples  which agree on all the other variables but $\Z$.
\begin{definition}[Sum Out]
	The \emph{sum out} of $\bZ\subseteq\bX$ from  $\ft\phi{\bX}$ is a factor $\sum_\bZ \phi$ over  $ \bY \defeq \bX  - \bZ $, defined as: 
	\[\left(\sum_\bZ \phi\right)(\by)\defeq\sum\limits_{\bz\in \Val{\bZ}}\phi(\bz\, \by)\]
\end{definition}
Multiplication of factors is defined in such a way  that only ``coherent`` pairs are  multiplied.
\begin{definition}[Product]
	The \emph{product} of $\ft{\phi_1}{\bX}$ and $\ft{\phi_2}{\bY}$  is a factor $\phi_1\FProd \phi_2$ over $\bZ \defeq \bX\cup \bY$, defined as:
	\[(\phi_1\FProd \phi_2)(\bz) \defeq \phi_1(\bx) \phi_2(\by) \quad  \mbox{where $\bx= \Proj{\bz}{\bX}$ and $\by= \Proj{\bz}{\bY}$.}\]
\condinc{}{\RED{where $\bx\sim \bz$ and $\by\sim \bz$, \ie $\bx$ and $\by$ 
agree with $\bz$ on  the common variables (see  \Cref{notation_BN}).}}
\end{definition}
We denote $n$-ary products by $\BigFProd_{n} \phi_n $. We denote by $\ftone_\bY\defeq\ft\ftone\bY$ the factor over the set of names $\bY$, sending every tuple of $\Val{\bY}$ to $1$. Observe that $\ft{\phi}{\bX} \odot \ft{\ftone}{\bY} = \ft{\phi}{\bX}$ if $\bY \subseteq \bX$.
Factors over an empty set of variables are allowed, and called trivial. In particular, we write $ \ftone_\emptyset\defeq\ft \ftone \emptyset$ for 
the trivial factor assigning $1$ to the empty tuple.
%
Product and summation  are both commutative, product is associative, and---crucially---they distribute \emph{under suitable conditions}: 
\begin{equation} \label{eq:sum_prod}
	\text{If }\bZ\cap \Nm{\phi_1}=\emptyset \text{ then } 
	\sum_\bZ ( \phi_1 \FProd \phi_2)  = \phi_1 \FProd \left( \sum_\bZ  \phi_2 \right)  
\end{equation}
This distributivity is  the key property  on which  exact inference algorithms rely.
CPT's being factors, they admit sum and product operations. Please notice that the result of such operations is not necessarily a \cpt, but, of course, it is a factor. 
\begin{remark}[Cost of  Operations on Factors]\label{rem:factors_cost}
	Summing out any number of variables from a factor $\phi$ demands $\BigO{\Exp{w}}$ time and space, where $w$ is the number of variables over which $\phi$ is defined. 
Multiplying $k$ factors demands 
$\BigO{k\cdot \Exp{w}}$
 time and space, where $w$ is the number of variables in the resulting factor.
\end{remark}
\subsection{The Semantics of Bayesian Networks}\label{sec:BN}

Bayesian networks are graph-theoretic objects able to represent large probability distributions compactly, via 
 a \emph{factorized representation}. Inference algorithms then implement    \emph{ factorized  computations.}

\begin{definition}
	A \emph{\BN} $\bn$ over the set of \rvs $\bX$ is a pair $(\mathcal G,\aphi)$ where:
	\begin{itemize}
		\item $\mathcal G$ is a directed acyclic graph (DAG) over the set of nodes $\bX$. 
		\item 	$\aphi $   assigns, to each variable $\X\in \bX$ a \emph{conditional probability table (a \cpt)}, which is a  factor $ \phi^\X$ over variables $\{\X\}\cup \Pa \X$, where $\Pa \X$  denotes the set of  parents of $\X$ in $\mathcal G$. 
	\end{itemize}
\end{definition}
The graph structure and  independence assumptions on it yield the correctness of  the semantics.
\begin{theorem}[\citet{Pearl86}]\label{thm:pearl}
	A \BN $\bn$ over the set of \rvs $\bX$ defines a unique probability distribution over $\bX$ (its \emph{semantics}):
	\[\sem{\bn} \defeq \BigFProd_{\X\in \bX} \phi^\X.\]
\end{theorem}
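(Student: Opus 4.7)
The statement has two parts: (i) the product $\BigFProd_{\X\in\bX}\phi^\X$ is well-defined as a factor over $\bX$, and (ii) the resulting factor is in fact a probability distribution, i.e. it is nonnegative and sums to $1$ over $\Val{\bX}$. Uniqueness is immediate once one establishes (i): product of factors is commutative and associative, so the iterated product does not depend on the order in which the CPTs $\phi^\X$ are combined. For (i), since each $\phi^\X$ is a factor over $\{\X\}\cup\Pa{\X}\subseteq\bX$, the iterated product is a factor over $\bX$ (using that the products of factors lives on the union of their name sets). Nonnegativity is clear from the nonnegativity of each CPT and the definition of product.

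The real content is showing normalization to $1$. My plan is to proceed by induction on $|\bX|$, using the fact that $\mathcal{G}$ is a DAG and hence admits a topological order: in particular, it has at least one \emph{leaf} (sink) node $\Z\in\bX$, that is, a node which belongs to no $\Pa{\Y}$ for $\Y\neq\Z$. Consequently $\Z\notin \Nm{\phi^\Y}$ for every $\Y\neq\Z$, while $\phi^\Z$ is a CPT over $\{\Z\}\cup\Pa{\Z}$ satisfying $\sum_{\z\in\Val{\Z}}\phi^\Z(\z\,\bp)=1$ for every assignment $\bp\in\Val{\Pa{\Z}}$.

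Using the distributivity property \eqref{eq:sum_prod} (which applies precisely when the summed-out variable does not appear in the other factor), I can push the summation over $\Z$ past all other CPTs:
\[
\sum_{\z\in\Val{\Z}}\Bigl(\BigFProd_{\X\in\bX}\phi^\X\Bigr)(\bx)
\;=\;\Bigl(\BigFProd_{\Y\in\bX\setminus\{\Z\}}\phi^\Y\Bigr)(\bx\vert_{\bX\setminus\{\Z\}})
\;\cdot\;\sum_{\z\in\Val{\Z}}\phi^\Z(\z\,\Proj{\bx}{\Pa{\Z}}),
\]
and the rightmost sum equals $1$ by the CPT condition on $\phi^\Z$. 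The residual object $(\mathcal{G}',\aphi')$ obtained by deleting the leaf $\Z$ is still a BN over $\bX\setminus\{\Z\}$ (deletion of a leaf preserves acyclicity, and all remaining CPTs still live on subsets of $\bX\setminus\{\Z\}$), so the induction hypothesis applies and gives $\sum_{\bx'\in\Val{\bX\setminus\{\Z\}}}\bigl(\BigFProd_{\Y\neq\Z}\phi^\Y\bigr)(\bx')=1$. Combining, $\sum_{\bx\in\Val{\bX}}\sem{\bn}(\bx)=1$.

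\textbf{Expected obstacle.} The only delicate point is the bookkeeping around \Cref{eq:sum_prod}: one must verify that a leaf $\Z$ occurs in \emph{no} other $\phi^\Y$ (which is exactly the graph-theoretic condition that $\Z$ is not a parent of any $\Y$) so that the distributive law applies, and that the notions of tuple projection and of factor restriction align cleanly between the BN on $\bX$ and the residual BN on $\bX\setminus\{\Z\}$. Everything else is standard factor algebra. The base case $|\bX|=0$ gives the trivial factor $\ftone_\emptyset$, which sums to $1$ as desired.
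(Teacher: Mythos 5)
Your argument is correct, but note that the paper offers no proof of this statement to compare against: it is quoted as a classical result of \citet{Pearl86} and used as background. What you have written is the standard textbook argument (reverse topological elimination of a sink), and it goes through: well-definedness and order-independence follow from commutativity and associativity of $\FProd$; nonnegativity is immediate; and for normalization, picking a sink $\Z$ guarantees $\Z\notin\Nm{\phi^\Y}$ for $\Y\neq\Z$, so the distributivity condition of \Cref{eq:sum_prod} applies, $\sum_{\Z}\phi^\Z=\ftone_{\Pa\Z}$ by the CPT condition, and $\ftone_{\Pa\Z}$ is absorbed into the remaining product since $\Pa\Z\subseteq\bX\setminus\{\Z\}$ (using the paper's observation that $\phi\FProd\ftone_\bY=\phi$ when $\bY\subseteq\Nm\phi$). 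Two very minor points: in your displayed equation the left-hand side should be evaluated at a tuple of $\Val{\bX\setminus\{\Z\}}$ rather than of $\Val\bX$, since summing out $\Z$ changes the carrier; and be aware that Pearl's theorem in its full form also asserts that this is the \emph{unique} distribution whose conditionals relative to $\mathcal G$ are the given CPTs, which is a stronger claim than order-independence of the product --- but the statement as the paper phrases it (the product is a well-defined probability distribution) is exactly what you prove.
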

Please notice also that, given a \BN $\bn$ defining a probability distribution over $\bX$, 
 the  \emph{marginal distribution} of $\sem \bn$ over a subset $\bY \subseteq \bX$ is defined
by $\sum_{\bX-\bY} \sem \bn$.


\newcommand{\compose}{\mathit{Compose}}

\newcommand{\tcoin}{\texttt{coin}}
\newcommand{\tcond}{\texttt{cond}}
\newcommand{\tvar}{\texttt{var}}
\newcommand{\tlet}{\texttt{let}}
\newcommand{\tpair}{\texttt{pair}}

\newcommand{\iMin}{\texttt{iMin}}

\section{The Semantics of Typed Terms} \label{sec:term-semantics}


This section contains the main result of this paper, namely the fact that we can endow terms of ground type with a factor-based semantics that reflects the probabilistic behavior of the term. Moreover, we prove that the semantics  is compositional, in the sense that it can be computed following the structure of intersection type derivations.

\paragraph{Semantics of the Probabilistic Axioms.}\label{sec:axioms}
The intuition guiding the definition of our semantics is the very definition of the semantics of a BN, as we have just seen in  \Cref{sec:BN}. In general, this depends only on the \cpts which are assigned to each random variable. We apply the same principle in the realm of (intersection) typed $\Bang$-terms. The idea is that we can associate a CPT to each probabilistic axiom in a type derivation, and then we define the semantics of the typed term as their product (as factors), summing out the names not occurring in the final type judgment to obtain the marginal which is specified by the term.  
We start by formally defining the factor which is associated to a probabilistic axiom.  	We give an \emph{example} of how  the following definitions  work in \Cref{fig:paxs}.
\begin{definition}[Probabilistic Axioms] \label{def:CPTpax}
	Recall that   we  identify each name $\X$ with the pair   $(\X, \Val{\X})$,  effectively defining a \rv (see \Cref{sec:rvs}).
\begin{itemize}
	\item $\isample$.
	We associate to the axiom $ \Lambda \vdash {\sample{d}} : \nX $ the  factor $\phi$ over the \name  $\set{\X}$ such that 
	$\phi (\x)\defeq\dist(\x)$  for each $\x\in \Val{\X}$.

	\item $\icond$. Let us consider the following instance of the $\icond$ axiom.
	\[\infer[\icond]{ \Lambda, y_1:\Y_1, \dots y_n:\Y_n \vdash   \case {\pattern{y_1, \dots, y_n}} {\bvs\Rightarrow\sample{\dist_{\bvs}}}_{\bvs\in\{\true,\false\}^n}:\X }
	{   \quad&    \X\not\in  \{\Y_1, \dots, \Y_n\} \tand  \X\not\in \Nm{\Lambda}}\]
	We associate to this axiom the factor $\phi$ over the set of \rvs $\{\Y_1, \dots, \Y_n,\X\}$, such that $\phi(\bvs\x)\defeq \dist_{\bvs}(\x)$,
	for each $\bvs\x\in \Val{\Y_1} \times \dots \times  \Val{\Y_n} \times \Val{\X}$\footnote{Please notice that we are a bit informal here, because we  assume that $\Y_1, \dots, \Y_n$ are pairwise distinct. This is not an obligation, so the actual definition is more involved. This is, however, just a technical point that does not affect the meaning of the definition. For the sake of completeness, we provide the technically precise definition in the \Appendix.}.
	
\end{itemize}
\end{definition}
%
{\small 
	\begin{figure}[b]\[\begin{array}{c|c}
		\begin{array}{c}
		 \infer[\isample]{\Lambda\vdash   \coin {0.2}:\R} {}\\[4pt]
		\phi_1\defeq
		\begin{array}{|c|c|}
			\hline	
			\R  & \Pr (\R) \\
			\hline	
			\true	& 0.2 \\
			\false	& 0.8 \\
			\hline
		\end{array}\end{array} &
		%
		\begin{array}{c}
	\infer[\icond]{\Lambda, 	r: \R \vdash \texttt{case}\; r
	\;\texttt{of}\;\{\tmt\caseSym\coin{0.7}; 	\tmf\caseSym	\coin{0.01}\}:\W}{} \\[4pt]
		\phi_2 \defeq 
		\begin{array}{|cc|c|}
			\hline	
			\R  & \W  & \Pr(\W | \R) \\
			\hline	
			\true  & \true	& 0.7 \\
			\true  & \false	& 0.3 \\
			\hline
			\false & \true	& 0.01 \\
			\false & \false	& 0.99 \\
			\hline
		\end{array}\end{array}
	\end{array}\]
	\caption{The factor associated to this  $\isample$ axiom is $\phi_1$.  
		  The factor associated to this   $\icond$  axiom is $\phi_2$. 
		  For example,  $\phi_2(\true\false)\defeq \coin{0.7}(\false)$.}
	\label{fig:paxs}
\end{figure}}

\paragraph{Semantics of a Type Derivation.}\label{sec:derivations}
Once given  the interpretation of  the \paxs, it is straightforward to extend the interpretation to  any type derivation of ground type. We multiply all the factors associated to the axioms, and then sum out  all the names not appearing in the conclusion.
\begin{definition}[Semantics of a Type Derivation]\label{def:sem}\label{def:core}
	Let $\pi\dem J$ be a type derivation,  $\Cpts{\pi}$ the set of factors associated to its probabilistic axioms, and $\bX=\bigcup_{\phi\in \Cpts{\pi}} \Nm{\phi}$. Then the semantics of $\pi$ is:
	\[
	\sem{\pi} \defeq \sum_{\bX-\Nm{J}} \left(\BigFProd_{\phi\in \Cpts{\pi}} \phi \right)
\]
\end{definition}
\begin{remark}\label{ex:var_ax}
	 If $\pi$ is a type derivation such that $\Cpts{\pi}=\emptyset$, then $\sem{\pi}={\ftone}_{\emptyset}$. Notice, in particular, that this is the case for  each \ivar axiom. 
\end{remark}

\paragraph{On Compositionality.} One immediately notices that the above  definition of semantics  does  \emph{not} look compositional. Indeed, the semantics of a type derivation $\pi$ is computed looking \emph{globally} at $\pi$,  in particular at its \paxs. We ask ourselves, and we answer in the positive, if it is possible to give a more local,  modular way, of computing the very same semantics. Informally, given a type derivation $\pi$ obtained from the composition of $\pi_1,\ldots,\pi_n$, such as
\[ \pi\quad\defeq\quad\infer {J }{ \pi_1 & \dots &\pi_n } \]
we would like to obtain $\sem{\pi}$ from $\sem{\pi_1},\ldots,\sem{\pi_n}$. In particular, following the pattern of Definition~\ref{def:sem}, we could write:
\begin{equation}\label{eq:compositionality}
	\sem{\pi} = \sum_{\bZ}  \left( {\BigFProd_i \sem{\pi_i}}\right)\quad
	\text{ where } \bZ= \bigcup_i \Nm{\sem{\pi_i}} - \Nm{J}
\end{equation}
In words, composition  is obtained by \emph{first} performing the product  $ \BigFProd_i\sem {\pi_i} $---which yields a factor  over the names $\bigcup_i \Nm{\sem {\pi_i}}$--- and \emph{then} marginalizing, by summing out the names which do not appear in the conclusion $J$. Such a notion of composition is yet a variant of the pervasive paradigm
\begin{center}
	\emph{	composition = parallel composition + hiding.}
\end{center}
The problem now is that the equation~\ref{eq:compositionality} is not a priori true. This is because sum out and product do not distribute in general, but only under suitable conditions. The type system design is crucial to guarantee that the factors semantics is indeed compositional.  We illustrate this fact with an example.
\begin{example}[Types and Compositionality.] \label{ex:non-comp} 
	Consider the following  derivation, which is \emph{non well-typed}
	because the names introduced by the first and  third \pax are  not distinct. 
	Below,  $\cc^1,\cc^3$ are \texttt{sample} terms, and $\CPTN2{z}$, $ \CPTN4{z'} $ 
	are \texttt{case} expressions. The terms 
	$ t,  u$ are  those typed by  $\pi_1,   \rho_1 $, respectively.
	The premise $ \rho_2$ is the obvious derivation of 
	$x : \X, y : Y \vdash \pair{x}{y} : \X \otimes \Y $. 
\begin{center}
	\small	
	\begin{prooftree}
	\AxiomC{$ \vdash \cc^1{} : \Z \dmd{ \ft{\phi^1}{\{ \Z\}} }$}
		\AxiomC{$  z : \Z \vdash \CPTN2{z} :  \X \dmd{ \ft{\phi^2}{\{\Z, \X\}} }$}
	\insertBetweenHyps{\hskip .2cm}
	\BinaryInfC{$\pi_1 \dem  \letin{z}{\cc^1{}}{\CPTN2{z}} : \X $}
		\AxiomC{$ x : \X \vdash \cc^3{} : \Z \dmd{ \ft{\phi^3}{\{\Z\}} }$}
			\AxiomC{$ x : \X, z': \Z \vdash \CPTN4{z'} :  \Y \dmd{ \ft{\phi^4}{\{\Z, \Y\}} }$}
		\insertBetweenHyps{\hskip .2cm}
		\BinaryInfC{$  \rho_1 \dem  x : \X \vdash \letin{z'}{\cc^3{}}{\CPTN4{z'}}  : \Y $}
			\AxiomC{$ \rho_2 $}
		\insertBetweenHyps{\hskip .2cm}
		\BinaryInfC{$ \pi_2 \dem x : \X \vdash \letin{y}{ u}{\pair{x}{y}} : \X \otimes \Y$}
	\BinaryInfC{$\pi \dem \vdash \letin{x}{ t}{\letin{y}{ u}{\pair{x}{y}}} : \X \otimes \Y $}
	\end{prooftree}
\end{center}
	We annotate each of the four \paxs with the corresponding \cpt.
	It is easy to check that compositionality (\Cref{eq:compositionality}) does not hold, because
	$\sem{\pi} = \sum_{\Z}(\phi^1 \FProd \phi^2 \FProd \phi^3\Fprod\phi^4) ~\not=~
	(\sum_{\Z} \phi^1 \Fprod\phi^2)\Fprod (\sum_{\Z} \ft{\phi^3\FProd \phi^4}{}) =  \sem {\pi_1}\FProd \sem{\pi_2}.$ 
	 Observe  that $\sem{\pi}$ is a factor over $\set{\X,\Y}$,  $\sem{\pi_1}$ over $\set{\X}$, and $\sem{\pi_1}$ over $\set{\Y}$. So for example  we have:	
	{\small 	
	\[
		{\sem \pi}_{\true\true} = \phi^1_\true \cdot \phi^2_{\true\true} \cdot \phi^3_\true \cdot\phi^4_{\true\true} +
		\phi^1_\false \cdot \phi^2_{\false\true} \cdot \phi^3_\false \cdot\phi^4_{\false\true}
		\qquad
		\sem{\pi_1}_\true = \phi^1_\true \cdot  \phi^2_{\true\true}  +  \phi^1_\false \cdot \phi^2_{\false\true}
		\qquad
		\sem{\pi_2}_\true = \phi^3_\true \cdot  \phi^4_{\true\true}  +  \phi^3_\false \cdot \phi^4_{\false\true}
	\]
	}
	Therefore
{\small $
	\sem{\pi}_{\true\true} \not=  
	\left(\phi^1_\true \cdot  \phi^2_{\true\true}  +  \phi^1_\false \cdot \phi^2_{\false\true} \right)
	\cdot 
	\left(   \phi^3_\true \cdot  \phi^4_{\true\true}  +  \phi^3_\false \cdot \phi^4_{\false\true}   \right)=
	\sem{\pi_1}_\true \cdot \sem{\pi_2}_\true = 
	(\sem{\pi_1}\Fprod\sem{\pi_2})_{\true\true}.
	$}
\end{example}

\paragraph{Proving Compositionality.}\label{sec:compositionality}
We are ready to prove that $\sem \pi$ can be compositionally defined for every typed derivation $\pi \dem \Lambda \vdash t: \iL$.
The crucial property---guaranteed by the type system---is that $\pi$ is well-formed, in the technical sense given below. Such a property is the key ingredient in the proof of compositionality, because it ensures the distributivity of the sum  over the product.

\begin{definition}[Well-Formedness] \label{def:connection} \
	\begin{enumerate}
		\item The type derivations $\pi_1\dem J_1, \dots, \pi_n \dem J_n$ are \emph{compatible} if 
		\[	\mbox{ exists } j \st \Z\in \Nm {\pi_j} \tand \Z\not \in \Nm {J_j}\qquad\Rightarrow \qquad \Z\not \in \Nm{\pi_i} \mbox{ for any } i\not = j.\]
		\item 	A type derivation $\pi$ is \emph{well-formed} if for every rule in $\pi$, its premises are compatible. 
	\end{enumerate}
\end{definition}

\begin{lemma}[Well-Formed Derivations]\label{lem:connexity}
	Every type derivation $\pi\dem \Lambda \vdash \tm:\iL$  is well-formed.
\end{lemma}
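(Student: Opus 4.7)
The plan is to proceed by structural induction on $\pi$, and to strengthen the induction hypothesis with a locality invariant tailored to the notion of compatibility. Specifically, I will prove simultaneously that for every sub-derivation $\rho \dem J$ of $\pi$,
\[
\Nm{\rho} \subseteq \Nm{J} \cup \mathrm{MN}(\rho),
\]
where $\mathrm{MN}(\rho)$ denotes the set of names introduced as main names by \isample and \icond axioms inside $\rho$. Informally: every name occurring somewhere in a derivation either remains visible at its conclusion or is a main name introduced internally.

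Granting this invariant, compatibility of the premises $\pi_1, \ldots, \pi_n$ of any rule in $\pi$ will follow rather directly. If $\Z \in \Nm{\pi_j} \setminus \Nm{J_j}$, the invariant applied to $\pi_j$ forces $\Z \in \mathrm{MN}(\pi_j)$. The naming condition---main names are pairwise distinct throughout $\pi$---then excludes $\Z \in \mathrm{MN}(\pi_i)$ for every $i \neq j$. Assuming $\Z \in \Nm{\pi_i}$, the invariant applied to $\pi_i$ would force $\Z \in \Nm{J_i}$. At this point I plan to argue by case analysis on the rule: the ambient ground context $\Lambda$ is shared between all premises, so $\Z \in \Nm{\Lambda}$ contradicts $\Z \notin \Nm{J_j}$; the ``bridging'' positive type (e.g.\ $\iP$ in \ilet, \iletp, \iapp, or the argument type $\iP$ in \iapp) appears as a component of both $J_i$ and $J_j$, so $\Z$ being in that type would again contradict $\Z \notin \Nm{J_j}$; and any residual component of $\Nm{J_i}$ such as the part of $\Gamma_i$ disjoint from $\Gamma_j$, or the output type $\iA$, cannot contain the fresh main name $\Z$ by the freshness side conditions of \isample and \icond combined with the naming convention. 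In all cases we reach a contradiction, so $\Z \notin \Nm{\pi_i}$ and compatibility holds.

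The hard part, and where I expect most of the work to live, is the inner proof of the locality invariant. It will go by a second structural induction on $\rho$. For the axioms and for unary rules that do not lose any name going from premise to conclusion (\iabs, \ider, \ipair), the step is immediate, because $\Nm{J}$ already contains $\Nm{J'}$ of the premise. The genuinely delicate rules are \ilet, \iletp, \iapp, and the multi-premise \ibang, where a positive type can ``die'' at the rule: $\iP$ in \ilet, for instance, appears in both premise conclusions but not in the rule's conclusion, so one computes $\Nm{\rho} \subseteq \Nm{J} \cup \Nm{\iP} \cup \mathrm{MN}(\rho)$ and must absorb the extra $\Nm{\iP}$. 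I plan to do this by recursion on the structure of $\iP$---reducing eventually to the base case in which a name in a ground type traces back, by the IH on the premises and the side conditions of \isample/\icond, to either a main name of some sub-derivation or a name inherited from the ambient ground context, both of which are already in $\Nm{J} \cup \mathrm{MN}(\rho)$. The key enablers here are the freshness of main names w.r.t.\ $\Lambda$, the additive propagation of $\Lambda$ throughout the derivation, and the syntax-directedness of the type system (\Cref{prop:uniqueness}), which fixes the choice of types at binders so that no ``spurious'' names can enter an intermediate type.
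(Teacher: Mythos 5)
Your route is genuinely different from the paper's (which goes through a flow graph on type positions, proves it acyclic via normalization and subject reduction, and derives a ``Named Paths'' lemma, \Cref{lem:jointree}), but as written it has a gap that surfaces in two places, and both are really the same missing fact. First, in the compatibility step: after you deduce $\Z \in \Nm{J_i}$, you must exclude $\Z$ from the components of $J_i$ that are \emph{not} shared with $J_j$, namely the premise-specific multiset context $\Gamma_i$ and (when it is not the bridging type) the result type of $J_i$. The justification you offer --- the freshness side conditions of \isample/\icond plus the naming convention --- does not do this: freshness only constrains the \emph{local ground context at the axiom} $\alpha^\Z$ inside $\pi_j$, and the naming convention only forbids $\Z$ from being a \emph{main} name elsewhere. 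Nothing local forbids $\Z$ from occurring as an ordinary atom in a \ivar axiom or a binder annotation deep inside the sibling $\pi_i$ and propagating into $\Gamma_i$; ruling that out is precisely the content of \Cref{lem:jointree}, which you would be assuming rather than proving.

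The same issue reappears in your inner induction. In the \ilet/\iapp cases you must absorb $\Nm{\iP}$ for the vanishing bridge type $\iP$, but the hypothesis $\Nm{\pi_1} \subseteq \Nm{J_1} \cup \mathrm{MN}(\pi_1)$ is useless there, since $\Nm{\iP} \subseteq \Nm{J_1}$ makes the inclusion trivially satisfied without telling you \emph{where in $J_1$} those names live. What you actually need is the stronger statement that every name in the \emph{result type} of a judgment traces back to that judgment's \emph{context} or to an internally introduced main name --- and that statement is false as stated for higher-order types (e.g.\ $\vdash \lambda y.y : \Y \larrow \Y$, or $\vdash \oc(\lambda y.y) : [\Y \larrow \Y]$, where $\Y$ is in the result type but in neither the context nor $\mathrm{MN}$); fixing it requires distinguishing input from output occurrences, i.e.\ exactly the polarized flow graph, whose tree/forest structure on same-name positions in turn depends on acyclicity (\Cref{prop:DAG}), proved globally via reduction to BN normal form. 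Your argument would likely close for the first-order fragment, but for the full calculus the ``tracing'' you invoke is the hard theorem, not a routine recursion on the structure of $\iP$. To repair the proof, either strengthen your invariant to a polarity-aware statement about where each occurrence of a name in a judgment is rooted and prove it survives \iapp/\ibang/\ider (at which point you will have reconstructed the flow-graph argument), or invoke \Cref{lem:jointree} directly as the paper does.
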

We postpone to \Cref{sec:flow} the discussion of the proof,
 which relies on a fine analysis of the flow of the computation. 
  Using this result, we are able to prove that the semantics---that we have defined in a global way---can  indeed be computed compositionally, validating \Cref{eq:compositionality}.
%
\begin{proposition}[Compositionality]\label{lem:comp}
	Let  $\pi \dem J$ be the following type derivation:
	\[ \infer{\pi \dem J}{ \pi_1\dem J_1 & \dots &\pi_n\dem J_n } \]
	If $\pi_1,\ldots,\pi_n$ are compatible, then:
	\[\sem{\pi} = \sum_{\bZ}  \left( {\BigFProd_i \sem{\pi_i}}\right)\quad
	\text{ where } \bZ\defeq \bigcup_i \Nm{\sem{\pi_i}} - \Nm{J}\]
\end{proposition}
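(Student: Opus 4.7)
The plan is to decompose the global semantics of $\pi$ into the local factors contributed by each sub-derivation, and then use the compatibility hypothesis to legitimately push sums inside products via the distributivity law~\eqref{eq:sum_prod}.

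First, I would set $\Pi_i \defeq \bigodot_{\phi \in \Cpts{\pi_i}} \phi$ and $\bY_i \defeq \Nm{\Pi_i} - \Nm{J_i}$, so that by definition $\sem{\pi_i} = \sum_{\bY_i} \Pi_i$. Because the naming convention ensures that each probabilistic axiom has a unique name and lives in exactly one sub-derivation, we have $\Cpts{\pi} = \biguplus_i \Cpts{\pi_i}$; hence $\sem{\pi} = \sum_{\bX - \Nm{J}} \bigodot_i \Pi_i$ where $\bX = \bigcup_i \Nm{\Pi_i}$.

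Second, I would exploit compatibility to control the ``hidden'' names $\bY_i$. If $\Z \in \bY_i$, then $\Z \in \Nm{\Pi_i} \subseteq \Nm{\pi_i}$ and $\Z \notin \Nm{J_i}$, so by \Cref{def:connection} we get $\Z \notin \Nm{\pi_j}$ for every $j \neq i$, and in particular $\Z \notin \Nm{\Pi_j}$ and $\Z \notin \Nm{J_j}$. A straightforward inspection of the rules in \Cref{fig:iTypes} shows that names in the conclusion are inherited from the premises, i.e.\ $\Nm{J} \subseteq \bigcup_i \Nm{J_i}$, so we also obtain $\Z \notin \Nm{J}$. Consequently the sets $\bY_1,\ldots,\bY_n$ are pairwise disjoint, each $\bY_j$ is disjoint from $\Nm{\Pi_i}$ for $i \neq j$, and $\bY \defeq \bigcup_i \bY_i$ is disjoint from $\Nm{J}$.

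Third, I would apply \Cref{eq:sum_prod} iteratively: since $\bY_i \cap \Nm{\Pi_j} = \emptyset$ for $j \neq i$, we can pull each $\sum_{\bY_i}$ inside the product to obtain
\[
\sum_{\bY} \bigodot_i \Pi_i \;=\; \bigodot_i \sum_{\bY_i} \Pi_i \;=\; \bigodot_i \sem{\pi_i}.
\]
Finally I would check the set-algebraic identity. Because the $\bY_i$ are disjoint from the other $\Nm{\Pi_j}$, we have $\bigcup_i \Nm{\sem{\pi_i}} = \bigcup_i (\Nm{\Pi_i} - \bY_i) = \bX - \bY$, and therefore $\bZ = (\bX - \bY) - \Nm{J} = (\bX - \Nm{J}) - \bY$. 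Using $\bY \cap \Nm{J} = \emptyset$, the outer sum splits as $\sum_{\bX - \Nm{J}} = \sum_{\bZ}\sum_{\bY}$, yielding $\sem{\pi} = \sum_{\bZ} \bigodot_i \sem{\pi_i}$, as required.

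The main conceptual obstacle is not the calculation itself but justifying every application of distributivity; this is precisely what well-formedness (\Cref{lem:connexity}) buys us. The delicate point I would be most careful about is the inclusion $\bY \cap \Nm{J} = \emptyset$, which is where both the compatibility condition and the ``name-inheritance'' property of the typing rules are simultaneously needed; the rest is bookkeeping with unions and differences of finite sets of names.
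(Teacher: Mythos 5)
Your proof is correct and follows essentially the same route as the paper's: express each $\sem{\pi_i}$ as a sum over its hidden names of the product of its CPTs, use compatibility to make those hidden-name sets pairwise disjoint and disjoint from the other premises (hence from $\Nm{J}$), and then apply the distributivity law to commute the sums with the product. The only differences are cosmetic — you treat general $n$ where the paper reduces wlog to $n=2$, and you spell out the name-inheritance observation $\Nm{J}\subseteq\bigcup_i\Nm{J_i}$ that the paper leaves implicit under ``by compatibility''.
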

\begin{proof}
	Wlog,  consider $n=2$.	Let us set $\Phi_{\pi}\defeq\BigFProd_{\phi\in \Cpts{\pi}} \phi$ 
	for each derivation $\pi$.
	 By  \Cref{def:sem}, we can write $\sem{\pi_i}=\sum_{\bW_i} \Phi_{\pi_i}$,
	where $\bW_i=\Nm{\Phi_{\pi_i}} - \Nm{J_i}$ ($i \in \{1,2\}$). 
	Crucially, since $\pi_1$ and $\pi_2$ are compatible, $\bW_1 \cap \Nm{{\pi_2}} = \emptyset$ and $\bW_2 \cap \Nm{{\pi_1}} = \emptyset$.
	Hence, by \Cref{eq:sum_prod}, sum and product distribute, and we have:
	\begin{align}\label{eq:binary_rule}
		\sem{\pi_1} \Fprod \sem{\pi_2} = & \sum_{\bW_1} \Phi_{\pi_1} \FProd \sum_{\bW_2} \Phi_{\pi_2}
		= \sum_{\bW_1} \sum_{\bW_2}  ( \Phi_{\pi_1} \FProd  \Phi_{\pi_2}) = \sum_{\bW_1} \sum_{\bW_2} \Phi_\pi
	\end{align}
	where we used the fact that $\Cpts{\pi} = \Cpts{\pi_1} \cup \Cpts{\pi_2}$.
	Now let $\bY_i\defeq\Nm{\sem{\pi_i}}$; since $ \Nm{\Phi_{\pi_i}} =  \bY_i \uplus \bW_i$ and,
	by the compatibility of $\pi_1$ and $\pi_2$, $\bW_i \cap \Nm{\Phi_{\pi_j}} =\emptyset$ for $i\not=j$, we have
	\begin{equation}\label{eq:names_prod}
		\Nm{\Phi_\pi} = (\bY_1\cup \bY_2) \uplus (\bW_1\uplus \bW_2).
	\end{equation}
	Let $\bZ \defeq (\bY_1\cup\bY_2) - \Nm{J}$; by \Cref{eq:names_prod} and compatibility, 
	$\Nm{\Phi_\pi} - \Nm{J} = \bZ \uplus \bW_1 \uplus \bW_2 $. Therefore
	\[ \sum_\bZ	\big(\sem{\pi_1} \Fprod \sem{\pi_2} \big) = 
	\sum_\bZ \sum_{\bW_1} \sum_{\bW_2} \Phi_\pi = \sum_{\Nm{\Phi_\pi}-\Nm J} \Phi_{\pi} ~\defeq ~\sem{\pi} \] 
	where we  sum out $\bZ$ from both sides of \Cref{eq:binary_rule}.
\end{proof}
 

\begin{figure} [t]
	\small
	\begin{mybox}
		\textbf{Higher-Order $\Bang$-Calculus}
		\begin{myboxC}
			\textbf{First-Order Rules} 
			\begin{gather*}
			\infer[\isample]{ \Lambda \vdash \sample d : \X   \dmd { \ft{\phi}{\{\X\}}  } }
			{ \X \not\in \Nm{\Lambda}  
				\quad{\scriptsize (\phi \mbox{ as in \Cref{def:CPTpax}}) }   } 
			\quad
			\infer[\icond]{ \Lambda, y_1:\Y_1, \dots, y_n :\Y_n \vdash \CPT{y_1, \dots, y_n} :  \X \dmd { \ft{\phi}{\{\Y_1, \dots, \Y_n, \X \} } } }
			{ \X \not \in \{\Y_1, \dots, \Y_n\} \tand \X \not \in \Nm{\Lambda} \quad  
				{\scriptsize (\phi \mbox{ as in \Cref{def:CPTpax}}) } }
			\\[2pt]
			\infer[\ivar]{ \Lambda, x  : \iP \vdash x  : \iP \dmd{ \ft{1}{ \emptyset} } }{  }
			\quad
			\infer[\ilet]{\Lambda, \Gamma_1 \uplus \Gamma_2 \vdash \letin{x}{u}{t} : \iA \dmd{ \ft{\sum_\bZ \psi_1 \odot \psi_2}{(\bY_1\cup\bY_2) - \bZ} } }
			{ \Lambda, \Gamma_1 \vdash u : \iP \dmd{ \ft{\psi_1}{\bY_1} } 
				&  \Lambda,\Gamma_2, x :  \iP \vdash t : \iA  \dmd{ \ft{\psi_2}{\bY_2} } 
				&  \scriptstyle{\bZ = (\bY_1\cup \bY_2) - \Nm{\Lambda, \Gamma_1, \Gamma_2, \iA} }
			}
			\\[2pt]	
			\infer[\ipair]{ \Lambda \vdash \pair{v}{w} : \iL_1 \otimes \iL_2  \dmd{ \ft{1}{ \emptyset} } }
			{\Lambda \vdash{} v : \iL_1 \dmd{ \ft{1}{\emptyset} } & \Lambda \vdash w : \iL_2 \dmd{ \ft{1}{\emptyset} } }
			\qquad
			\infer[\iletp]{\Lambda, \Gamma \vdash \letp{\pair{x}{y}}{\val}{t} : 
				\iA  \dmd{ \ft{\psi}{\mathbb{Y}} } }
			{ \Lambda \vdash \val :\iL_1 \otimes \iL_2 \dmd{ \ft{1}{\emptyset} } 
				&  \Lambda, \Gamma, x : \iL_1, y : \iL_2 \vdash t : \iA \dmd{ \ft{\psi}{\mathbb{Y}} } 
			}
			\end{gather*}
		\end{myboxC}	
		\vskip -.4cm		
		\begin{gather*}
		\infer[\iabs]{\Lambda, \Gamma \vdash \lambda x.t : \iP \larrow \iA \dmd{ \ft{\psi}{\bY} } }
		{\Lambda, \Gamma, x : \iP \vdash t : \iA \dmd{ \ft{\psi}{\bY} } }
		\quad
		\infer[\iapp]{\Lambda, \Gamma_1\uplus\Gamma_2 \vdash tv : \iA \dmd{ \ft{\sum_\bZ \psi_1 \odot \psi_2}{(\bY_1 \cup\bY_2) - \bZ} } }
		{\Lambda, \Gamma_1 \vdash t : \iP \larrow \iA \dmd{ \ft{\psi_1}{\bY_1} } 
			&  \Lambda, \Gamma_2 \vdash v : \iP \dmd{ \ft{\psi_2}{\bY_2} } 
			&  \scriptstyle{ \bZ = (\bY_1\cup \bY_2) - \Nm{\Lambda, \Gamma_1, \Gamma_2, \iA} }
		}
		\\[2pt]
		\infer[\ibang]{\Lambda, \biguplus^n_{i=1} \Gamma_i \vdash \oc t : [\iA_1, \dots, \iA_n] \dmd{ \ft{\bigodot_i \psi_i}
		{\bigcup_i \bY_i} } }{ \big( \ \Lambda, \Gamma_i \vdash t: \iA_i \dmd{\ft{\psi_i}{\bY_i}} \ \big)^n_{i=1} & n \geq 0 }
		\qquad
		\infer[\ider]{\Lambda, \Gamma \vdash \der v :  \iA \dmd{ \ft{\psi}{\bY} } }
		{\Lambda, \Gamma \vdash v : [\iA]   \dmd{ \ft{\psi}{\bY} } }
		\end{gather*}	
	\end{mybox}								
	\caption{Inductive Interpretation of typed terms (judgments are annotated---in blue---with their interpretation).}
	\label{fig:decorated}\label{fig:inductive_sem}
\end{figure}

\paragraph{Inductive Interpretation of Type Derivations.} \label{sec:interpretation}
Now that we have proved that our semantics is compositional, we are able  to compute it inductively,
 starting from the axioms, and then 
  following the structure of the type derivation.  Probabilistic axioms are assigned  a \cpt as indicated in 
  \Cref{def:CPTpax}. The $\ivar$ axioms  are assigned the  trivial factor $\ftone_\emptyset$ (see \Cref{ex:var_ax}). Then the semantics of each sub-derivation  is inductively obtained following \Cref{lem:comp}.  In  \Cref{fig:decorated} we decorate the intersection type system with factors, according to  this process. A decorated type judgment  is written $\tjudg{\Sigma}{\tm}{\iA}\dmd{\psi}$, where  $\psi$ is the inductively computed  factor.

\begin{lemma}\label{lem:semantics}
	Let $\pi\dem J\dmd{\psi}$ be a well-formed type derivation. Then $\sem{\pi}=\psi$.
\end{lemma}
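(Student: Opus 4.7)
The plan is to proceed by induction on the structure of the derivation $\pi$, using compositionality (Proposition~\ref{lem:comp}) for the inductive step, with the crucial input being the well-formedness guarantee of Lemma~\ref{lem:connexity}. The whole point of the decorated system in Figure~\ref{fig:inductive_sem} is to compute, locally at each rule, exactly the factor that global Definition~\ref{def:sem} would give, so the proof is essentially a bookkeeping check that every local rule matches the formula \[\sum_{\bZ}\BigFProd_i \sem{\pi_i}, \qquad \bZ=\bigcup_i\Nm{\sem{\pi_i}}-\Nm{J}\] delivered by Proposition~\ref{lem:comp}.

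For the base cases I would check the four axioms. For $\isample$ and $\icond$ the set $\Cpts{\pi}$ consists of the single factor $\phi$ of Definition~\ref{def:CPTpax}, whose names are already among those of the conclusion judgment, so the sum-out in Definition~\ref{def:sem} is empty and $\sem{\pi}=\phi=\psi$. For $\ivar$ (and the value-only $\ipair$ leaves), $\Cpts{\pi}=\emptyset$, so $\sem{\pi}=\ftone_\emptyset=\psi$ by Remark~\ref{ex:var_ax}.

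For the inductive step, fix a rule with premises $\pi_1,\ldots,\pi_n$ and conclusion $\pi\dem J$. By Lemma~\ref{lem:connexity} the whole derivation is well-formed, hence in particular $\pi_1,\ldots,\pi_n$ are compatible, and Proposition~\ref{lem:comp} applies. By the induction hypothesis $\sem{\pi_i}=\psi_i$, so \[\sem{\pi}=\sum_{\bZ}\BigFProd_i\psi_i,\qquad \bZ=\bigcup_i\Nm{\psi_i}-\Nm{J}.\] It remains to verify, rule by rule, that this coincides with the $\psi$ prescribed by Figure~\ref{fig:inductive_sem}. For $\ilet$ and $\iapp$ the rule literally writes this formula, with $\bZ=(\bY_1\cup\bY_2)-\Nm{\Lambda,\Gamma_1,\Gamma_2,\iA}$ which is exactly $\bigcup_i\Nm{\psi_i}-\Nm{J}$. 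For $\iabs$, $\ider$, and $\iletp$ the rule carries the factor unchanged; here I check that the unique premise shares all its names with the conclusion judgment (abstraction only moves a variable from the context into the type; dereliction only replaces $[\iA]$ by $\iA$; $\iletp$ absorbs the names of $\iL_1\otimes\iL_2$ into the new context entries), so $\bZ=\emptyset$. For $\ibang$ compatibility plus $\Nm{\psi_i}\subseteq\Nm{\Lambda,\Gamma_i,\iA_i}\subseteq\Nm{J}$ forces again $\bZ=\emptyset$, and the formula degenerates to the plain product $\BigFProd_i\psi_i$ of the rule.

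The main obstacle is really this rule-by-rule check: one must keep careful track of which names belong to which sub-derivation's judgment, and verify for each rule that the $\bZ$ predicted by Proposition~\ref{lem:comp} matches either the explicit set written in $\ilet$/$\iapp$ or the empty set implicit in the "product-only" rules. Everything else is a direct application of the compositionality lemma and the induction hypothesis.
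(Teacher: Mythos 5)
Your proof is correct and follows essentially the same route as the paper's: induction on the derivation, with the inductive step reduced to Proposition~\ref{lem:comp} via the compatibility of the premises, the only difference being that you spell out the rule-by-rule check that the decorations of Figure~\ref{fig:inductive_sem} match the formula of Proposition~\ref{lem:comp} (which the paper leaves implicit). One small remark: the appeal to Lemma~\ref{lem:connexity} is superfluous here, since well-formedness of $\pi$ (and hence of each sub-derivation $\pi_i$, which is what licenses the induction hypothesis) is already the hypothesis of the statement, and Lemma~\ref{lem:connexity} only concerns derivations of ground type anyway.
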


\begin{proof}
	By induction on the derivation. The property trivially holds for all axioms.
	Assume that  $\pi \dem J\dmd{\psi_i}$ is obtained from derivations $\pi_i$ $(1 \leq i \leq n)$. Since $\pi$ is well-formed, by definition of well-formedness also the derivations $\pi_i$ are well-formed. Then
	by \ih, $\psi_i = \sem{\pi_i}$. Thus, by letting $ \bZ \defeq \bigcup_i \Nm{\psi_i} - \Nm{J} $, we have
	\[  \psi := \sum_\bZ \Big(\BigFProd_i \psi_i \Big) = 
	\sum_\bZ \Big(\BigFProd_i \sem{ \pi_i} \Big)=\sem{\pi}  \]
	where the last equality follows from \Cref{lem:comp}.
\end{proof}
Since derivations of ground type are always well-formed (\Cref{lem:connexity}), we have proved that:
\begin{theorem}\label{thm:core}
	Let $\pi\dem \tjudg{\Lambda}{\tm}{\iL}\dmd{\psi}$ be a derivation of ground type. Then $ \sem{\pi}=\psi $.
	%
\end{theorem}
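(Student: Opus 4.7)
The plan is to observe that this theorem is essentially an immediate corollary of the two results that have just been established, namely Lemma~\ref{lem:connexity} and Lemma~\ref{lem:semantics}. So the proof itself is one line, but it is worth articulating which ingredient does the heavy lifting.

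First, I would invoke Lemma~\ref{lem:connexity}, applied to the ground-type derivation $\pi\dem\Lambda\vdash\tm:\iL$, to conclude that $\pi$ is \emph{well-formed} in the sense of Definition~\ref{def:connection}: at every rule of $\pi$, the premise sub-derivations are compatible, meaning that any name introduced inside one sub-derivation and then summed out there does not reappear anywhere else in a sibling sub-derivation. Second, with well-formedness in hand, I would apply Lemma~\ref{lem:semantics}, which was proved by induction on $\pi$ using the compositionality proposition (Proposition~\ref{lem:comp}) at every inductive step. That lemma directly yields $\sem{\pi} = \psi$, where $\psi$ is the factor assigned to the conclusion judgment by the inductive decoration of Figure~\ref{fig:inductive_sem}. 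This matches exactly the statement to be proved.

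The main obstacle is not in this theorem itself but in the two prerequisites it bundles together. The heart of the matter is the interaction between the global definition of semantics (Definition~\ref{def:sem}), which takes a single product over all CPTs and a single sum over all hidden names, and the inductive/local reading of Figure~\ref{fig:inductive_sem}, which interleaves small products and small sum-outs at every rule. These two readings agree only because sum and product distribute under the side condition of Equation~\ref{eq:sum_prod}, and that side condition holds at every rule precisely because the derivation is well-formed. Thus the whole argument rests on Lemma~\ref{lem:connexity}, whose proof is postponed and relies on a more delicate flow analysis of the type derivation; once that lemma is granted, the present theorem follows by a two-step citation.
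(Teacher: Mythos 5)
Your proposal is correct and matches the paper exactly: the theorem is stated there as an immediate consequence of Lemma~\ref{lem:connexity} (ground-type derivations are well-formed) combined with Lemma~\ref{lem:semantics} (well-formed derivations satisfy $\sem{\pi}=\psi$). Your identification of where the real work lies --- the distributivity condition of Equation~\ref{eq:sum_prod} being guaranteed by well-formedness --- is also the paper's own framing.
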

This theorem states that 
the   semantics $\sem \pi$ (as in \Cref{def:sem}) of a type derivation $\pi$ can be   \emph{inductively computed}, as described in  \Cref{fig:decorated}. 

%

\paragraph{Invariance of the Semantics.}
The semantics   is invariant under reduction and expansion. This is due to the fact that probabilistic axioms are stable w.r.t. reduction and expansion. 
\SLV{}{Indeed, non-idempotent intersection derivations somehow internalize the process of rewriting.}
\begin{theorem}[Invariance]\label{thm:invariance}Let $\tm$ be a $\lambda_\bang$-term and $t\to u$. Then:
	$\Lambda  \vdash t: \iL\dmd{\psi}  \Leftrightarrow 
	\Lambda \vdash u: \iL\dmd{\psi}$.
\end{theorem}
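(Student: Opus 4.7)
The plan is to show that the reduction $t \to u$ preserves not only typability, but also (up to a canonical bijection of probabilistic axioms) the entire multiset $\Cpts{\pi}$ of CPTs that appears in the type derivation. Combined with the global characterisation of the semantics from \Cref{def:sem}, namely
\[ \sem{\pi} \;=\; \sum_{\bX - \Nm{J}}\BigFProd_{\phi \in \Cpts{\pi}} \phi , \]
and with \Cref{thm:core} (which identifies this global quantity with the inductively computed decoration $\psi$, since ground-typed derivations are well-formed by \Cref{lem:connexity}), invariance of $\psi$ follows: the $\Rightarrow$ direction is reduction-preservation, and the $\Leftarrow$ direction is subject expansion. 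Uniqueness of ground-typed derivations (\Cref{prop:uniqueness}) ensures that the decorated judgments in the statement unambiguously refer to these derivations.

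The first step is a \emph{semantic substitution lemma}: given $\pi \dem \Lambda, \Gamma, x : \iP \vdash t : \iA$ and $\sigma \dem \Lambda, \Delta \vdash v : \iP$ with $\Gamma \uplus \Delta$ well-defined, one builds a derivation $\pi\{\sigma/x\} \dem \Lambda, \Gamma \uplus \Delta \vdash t\{x \leftarrow v\} : \iA$ satisfying $\Cpts{\pi\{\sigma/x\}} = \Cpts{\pi} \uplus \Cpts{\sigma}$. The delicate case is $\iP = [\iA_1,\ldots,\iA_n]$, where $v = \oc u$ and $\sigma$ is an \ibang rule with premises $\sigma_i \dem \Lambda, \Delta_i \vdash u : \iA_i$; each of the $n$ occurrences of $x$ corresponding to the $n$ elements of the multiset is replaced by a copy of $u$ typed by the appropriate $\sigma_i$. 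Because non-idempotent intersection types pre-allocate exactly one subderivation per copy, no CPT is created or destroyed; and the naming convention keeps the main names in different $\sigma_i$ disjoint.

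The second step is a case analysis of the root rules, extended to evaluation contexts and substitution lists (the latter is handled by a simple commutation of \ilet/\iletp, which does not touch any probabilistic axiom):
\begin{itemize}
\item $\mapsto_\dB$ and $\mapsto_\dS$: apply the substitution lemma directly, after peeling off the list of \texttt{let}/\texttt{letp} nodes that form the substitution context $\sctx$; the $\iabs$/$\iapp$ pair, resp.\ the $\ilet$ surrounding the value $v$, is consumed and its premises fed into substitution.
\item $\mapsto_\dbang$: an \ider applied to a singleton-multiset \ibang cancels, and the unique premise subderivation is left untouched; $\Cpts{}$ is literally identical.
\item $\mapsto_\dpair$: the \iletp consuming an \ipair is removed and replaced by two nested substitutions (of variables, so no CPT crosses the boundary); again $\Cpts{}$ is preserved pointwise.
\end{itemize}
Closure under evaluation contexts is immediate, since the surrounding context contributes the same axioms on both sides.

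Once $\Cpts{\pi_t} = \Cpts{\pi_u}$ is established (as equal multisets of named CPTs) and $\Nm{J}$ is obviously preserved (same conclusion $\Lambda \vdash \cdot : \iL$), the global formula for $\sem{\cdot}$ yields $\sem{\pi_t} = \sem{\pi_u}$, and \Cref{thm:core} rewrites this as the desired equality of decorated judgments. The main obstacle is the multiset case of the substitution lemma: one must show that the $n$ copies of $u$ manufactured by the reduction are exactly accounted for by the $n$ premises of the \ibang rule, and that their CPTs remain distinguishable thanks to the naming convention. This is precisely the design requirement that motivated the use of non-idempotent intersection types and the pairwise-distinct main-name condition in \Cref{sec:itypes}.
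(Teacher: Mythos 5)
Your proposal is correct and follows essentially the same route as the paper: the paper derives invariance from \Cref{thm:core} together with Subject Reduction/Expansion, whose proofs (via a substitution lemma) explicitly track that $\Cpts{\pi}$ is preserved across each reduction step, exactly as you do. The multiset case of your substitution lemma and the per-rule case analysis match the paper's appendix development, so no gap remains.
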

 

\paragraph{Semantics Completion.}
	We conclude  with a  remark. 
The reader may expect that the interpretation of a type derivation $\pi\dem J$ were a factor over $\Nm J$. 
	For example, one could expect the interpretation of   an identity axiom to be 
		$ y : \nY \vdash y : \nY  \dmd{ \ft  {\ftone}{\set \Y} } $   \quad instead of \quad
		$ y : \nY \vdash y : \nY  \dmd{ \ft  {\ftone}{\emptyset}}.$
	The fact is that our semantics  focuses only on  the \emph{probabilistic content} of the derivation    $\pi \dem J$. Please notice  that  the non-probabilistic information is already fully contained in the type judgment $J$, because intersection types carry such  information. 
	Indeed, an interpretation of $\pi \dem J$ as a factor over $\Nm J$ is easily obtained by a form of \emph{completion}.  
	We give  more details in the \Appendix.  
We mention also  that the   completed interpretation is a needed step  to bridge  the gap between our semantics and \emph{weighted relational models/probabilistic coherence spaces} such as \cite{LairdMMP13,EhrhardT19, EhrPagTas14}.


\section{Bayesian Networks Go With The Flow}\label{sec:flow}
	
In this section  we prove  two results which we have already anticipated:
\begin{itemize}
	\item we show that every (closed) term $t$ of ground type corresponds to a \BN ${\bn_t}$, and that the two have the 
	\emph{same semantics}.\footnote{Precisely, the  joint distribution  underlying $\tm$ and $\bn_t$ is exactly the same. We then have to  take into account that, in general, the term $t$ encodes also a query, defining a \emph{marginal} of the full joint distribution (\Cref{prop:BN}).}
	\item we prove  the central result making the semantics of terms compositional, namely  that every type derivation $\pi \dem \Lambda \vdash t:\iL$ is well-formed  (\Cref{lem:connexity}). 
\end{itemize}
The key ingredient underlying both results is the same:
we associate to each type derivation $\pi$ a directed graph---$\flow \pi$---which  essentially describes the \emph{flow of the computation} in $\pi$. The crucial  fact is proving that $\flow \pi$ is \emph{acyclic}. Our argument exploits sophisticated techniques borrowed from the theory of Linear Logic, and in particular from  Girard's Geometry of Interaction \cite{GIRARD1989221, LagoFVY17, PPDP2020,POPL2021,LICS2021}. In order to convey more clearly the basic ideas, here we give the definitions only for the \linear fragment.  Recall that this fragment suffices to type every BN normal form.
The development for the full higher-order calculus is in   the \Appendix.

\paragraph{The Flow Graph of a Type Derivation is a DAG}\label{sec:dag}
Intuitively, the graph we are going to build tracks the occurrences of atomic types (\ie the  random variables) throughout the type derivation.   
We indicate  a specific occurrence of an atom  inside a  ground type $\iL$ by means of a (type) context, \ie a type with a hole, as follows:
\[\begin{array}{lrcl}
	\textsc{Ground Type Ctxs}&	 \cK, \cL & \grameq&  \ctxhole \gpipe  \cK \otimes \iL \gpipe \iK \otimes \cL 
\end{array}\]
	So $\cL$ denotes an occurrence of atom (here $\X$)
	inside the type $\iL \defeq \cLp{\X}$. Notice for  example that  the type  $(\X\otimes \X)\otimes \Y$ contains  three occurrences of atoms. 
Given a type derivation $\pi$, 
we assume given a distinct label to each occurrence of atom appearing in  each judgment of $\pi$. We call such a label  a \emph{position}.  We can now build a graph that has all the positions of  $\pi$ as vertices, and that tracks the flow of each name  $\X$.
\begin{definition}[Flow Graph]
	Let $\pi \dem \Lambda \vdash t:\iL$.
	The flow graph $\flow{\pi}$ of $\pi$ is the \emph{directed graph} which  has as  \emph{vertices} all the positions occurring in $\pi$, and \emph{edges} as indicated in \Cref{fig:normal_flow}.
\end{definition}

\begin{example}\label{ex:flow}
	We show the type derivation for the term (\ref{fig:BNterm}) encoding our initial example, annotated with the flow graph. The reader can already notice that the flow exactly matches the DAG corresponding to the  Bayesian network  in \Cref{fig:BNrain}.
	\begin{small}
		\begin{prooftree}
		\def\extraVskip{3pt}
		\AxiomC{$\vdash \coin{0.6} : \vnode{14}{\nD}$}
			\AxiomC{$d : \vnode{17}{\nD} \vdash \CPTN{\nS}{d} : \vnode0{\nS}$}
				\AxiomC{$d : \vnode1{\nD} \vdash \CPTN{\nR}{d} : \vnode2{\nR}$}
					\AxiomC{$ s : \vnode4{\nS}, r : \vnode3{\nR} \vdash \CPTN{\nW}{s,r} : \vnode5{\nW}$}
						\AxiomC{$w : \vnode6{\nW} \vdash w : \vnode7{\nW}$}
					\BinaryInfC{$ s : \vnode9{\nS}, r : \vnode8{\nR} \vdash \letin{w}{\CPTN{\nW}{s,r}}{w} : \vnode{10}{\nW} $}
				\insertBetweenHyps{\hskip .2cm}
				\BinaryInfC{$d : \vnode{16}{\nD}, s : \vnode{11}{\nS} \vdash \letin{r}{\CPTN{\nR}{d}}{\letin{w}{\CPTN{\nW}{s,r}}{w}} : \vnode{12}{\nW} $}
				\insertBetweenHyps{\hskip .2cm}
			\BinaryInfC{$ d : \vnode{15}{\nD} \vdash \letin{s}{\CPTN{\nS}{d}}{\letin{r}{\CPTN{\nR}{d}}{\letin{\nW}{\CPTN{\nW}{s,r}}{w}}} : \vnode{13}{\nW} $}
		\insertBetweenHyps{\hskip .2cm}
		\BinaryInfC{$\vdash \letin{d}{\coin{0.6}}{ \letin{s}{\CPTN{\nS}{d}}{\letin{r}{\CPTN{\nR}{d}}{\letin{w}{\CPTN{\nW}{s,r}}{w}}} }  : \vnode{18}{\nW}$}
		$
		\dirflowedges{node9/node4, node8/node3, node7/node10, node10/node12, node12/node13, node15/node16, node15/node17, node16/node1, node11/node9, node13/node18}
		\specdirflowedge{node0}{node11}{to path={ .. controls +(.6,-.5) and +(-.6,-0.5) .. (\tikztotarget) } }
		\specdirflowedge{node5}{node6}{to path={ .. controls +(.5,-.5) and +(-.5,-0.5) .. (\tikztotarget) } }
		\specdirflowedge{node2}{node8}{to path={ .. controls +(.5,-.5) and +(-.5,-0.5) .. (\tikztotarget) } }
		\specdirflowedge{node14}{node15}{to path={ .. controls +(.5,-.5) and +(-.5,-0.5) .. (\tikztotarget) } }
		\bentdirflowedges{node1/node2/55, node6/node7/55, node3/node5/45, node4/node5/45, node17/node0/55}
		$
		\end{prooftree}
	\end{small}
\end{example}
We will 
exploit the fact that the flow graph of a type derivation of ground type is acyclic.
\begin{proposition}[The Flow is Acyclic] \label{prop:DAG}
	Let $\pi \dem \Lambda \vdash t:\iL$. Then $\flow{\pi}$ is a DAG.
\end{proposition}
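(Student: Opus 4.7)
The plan is to prove the statement by structural induction on the type derivation $\pi$, strengthening the induction hypothesis so that it gives us enough grip on each sub-derivation. Specifically, I would show that for every sub-derivation $\pi' \dem \Lambda' \vdash t' : \iK$, the flow graph $\flow{\pi'}$ is acyclic and, moreover, it enjoys a polarity invariant: no directed path of $\flow{\pi'}$ starts at an output position of the conclusion (one occurring in $\iK$) and ends at an input position (one occurring in $\Lambda'$). This additional invariant is what makes the induction go through at the binding rules, since it prevents a sub-derivation from being "traversed backwards" when it is plugged into a larger context.

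For the axiomatic base cases, acyclicity is immediate. The rule $\ivar$ produces only direct edges from context positions to the matching conclusion positions; $\isample$ introduces a single fresh, isolated output position; and $\icond$ adds edges from the input positions $\Y_1, \ldots, \Y_n$ to the freshly introduced output position $\X$. No cycle can close thanks to the side-condition $\X \notin \{\Y_1, \ldots, \Y_n\} \cup \Nm{\Lambda}$, and the polarity invariant is also satisfied, as all new edges go from inputs to outputs. The rules $\ipair$ and $\iletp$ merely redistribute positions in a purely structural way, so acyclicity and the polarity invariant pass through without difficulty.

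The genuine work lies in the $\ilet$ rule, where we glue $\pi_1 \dem \Lambda, \Gamma_1 \vdash u : \iP$ and $\pi_2 \dem \Lambda, \Gamma_2, x : \iP \vdash t : \iA$ by identifying the positions of $\iP$ in the conclusion of $\pi_1$ with the positions of $\iP$ at the binding $x : \iP$ of $\pi_2$. A hypothetical cycle in the combined graph would cross the interface and return, so it must alternate between $\flow{\pi_1}$ and $\flow{\pi_2}$, possibly using shared context positions of $\Lambda$. The naming condition, which forces all main names to be pairwise distinct across the whole derivation, ensures that names introduced inside $\pi_i$ cannot reappear in $\pi_j$ for $i \neq j$, so the only interfaces are the $\iP$ binding and the shared $\Lambda$. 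The strengthened induction hypothesis applied to $\pi_2$ then forbids a path that enters $\pi_2$ at an output-like position (namely $x : \iP$, which plays the role of an output of $\pi_1$) from exiting at an input-like position of its conclusion. A dual argument handles $\pi_1$. Hence no cycle can close, and the polarity invariant of the resulting $\ilet$ judgment is preserved: new outputs in $\iA$ are reached from new inputs in $\Lambda, \Gamma_1, \Gamma_2$ only via forward paths.

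The main obstacle I expect is formulating the polarity invariant cleanly and verifying it rule-by-rule, in particular tracking how positions in the shared context $\Lambda$ are replicated into the two premises of $\ilet$ (and of $\iapp$ in the higher-order extension) without creating back-edges. Once this bookkeeping is in place, the proof is a direct structural induction. Conceptually, this is exactly the flavor of Geometry-of-Interaction argument mentioned in the paper: positions come with a polarity, edges respect it, and global acyclicity is maintained compositionally by a local invariant at each typing rule.
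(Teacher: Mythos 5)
Your route is genuinely different from the paper's, and the difference is exactly where the gap sits. The paper does \emph{not} prove acyclicity by structural induction on an arbitrary derivation: the direct induction is carried out only for terms in normal form, where (by Progress) the derivation is first-order and every interface type is ground. For a general higher-order term the paper instead invokes \Cref{thm:compileI} to reduce $\tm$ to a BN normal form and shows, inside the strengthened subject-reduction lemma, that directed cycles are \emph{preserved} by reduction; a cycle in $\flow{\pi}$ would therefore descend to a cycle in the (acyclic) flow graph of the normal form. Strong normalization is doing essential work there, and it is not an accident.

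The concrete gap in your induction is that the polarity invariant is too weak at the higher-order rules. When the interface type $\iP$ of an $\ilet$ or $\iapp$ is ground, all its positions carry one polarity, every cross-premise edge points the same way, and your argument closes --- this is why the first-order case is easy. But when $\iP$ is a multiset containing arrow types, the two translations $\Up{\iP}$ and $\Down{\iP}$ assign \emph{both} polarities to positions of $\iP$: for $\iP = [\iB \larrow \iC]$ the interface edges over atoms of $\iC$ are directed from $\pi_2$ to $\pi_1$ while those over atoms of $\iB$ are directed from $\pi_1$ to $\pi_2$. A hypothetical cycle can then enter $\pi_1$ at an input position of its conclusion, exit at an output position, cross to $\pi_2$, and do the same there before closing up; every internal segment is an input-to-output path, which your invariant explicitly permits. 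So ``no directed path from an output of the conclusion to an input of the conclusion'' is preserved but does not exclude cycles at $\iapp$; making a direct induction work would require a genuinely relational invariant on how conclusion positions may be connected (essentially the GoI interpretation of the type), which is far heavier than what you sketch. A secondary slip: the Naming Condition constrains only the \emph{main} names of probabilistic axioms, so it cannot be used to argue that names occurring in one premise do not occur in another --- the shared ground context $\Lambda$ guarantees that the same atom occurrences appear in all premises.
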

\begin{proof}[Proof Sketch]
	If $\tm$ is in normal form, it is immediate to verify (by induction on the derivation $\pi$) that $\flow{\pi}$ is \emph{acyclic}. If $\tm$ is not in normal form, by  \Cref{thm:compileI} we know that there exists a term $\tmtwo$ in  normal form such that $\tm\to^*\tmtwo$. It is then enough to prove that cycles are preserved along a reduction sequence, which we do by strengthening the subject reduction statement.
\end{proof}

\begin{figure}[t]
	{\footnotesize 	 
		\centering
		\[
		\infer[\icoin]{\Lambda \vdash \sample{d} : {\X} }{} 
		\qquad 
		\infer[\icond]{\Lambda, y_1 : \vnode0{\Y_1}, \dots, y_k : \vnode1{\Y_k}  \vdash  \CPT{y_1, \dots, y_k}: \vnode2{\X} }{}
		\bentdirflowedges{node0/node2/30, node1/node2/30}
		\qquad
		\infer[\ivar]{ \Lambda, x : \cLp{\vnode0{\X}} \vdash x : \cLp{\vnode1{\X}} }{}  
		\bentdirflowedges{node0/node1/70}
		\]
%
%
		\[
		\infer[\ipair]{ \Lambda \vdash \pair {v}{w} : \cLnp{1}{\vnode0{\X}} \otimes \cLnp{2}{\vnode1{\Y}} }
		{ \Lambda \vdash v : \cLnp{1}{\vnode2{\X}} & & \Lambda \vdash w : \cLnp{2}{\vnode3{\Y}} } 
		\dirflowedges{node2/node0, node3/node1}
		\qquad
		\infer[\iletp]{ \Lambda \vdash \letp{\pair{x}{y}}{v}{t} : \cLp{\vnode5{\Z}} }
		{ \Lambda \vdash v : \cKnp{1}{\vnode0{\X}} \otimes \cKnp{2}{\vnode2{\Y}} & \Lambda, y : \cKnp{2}{\vnode1{\Y}}, x : \cKnp{1}{\vnode3{\X}} \vdash t : \cLp{\vnode4{\Z}} }
		\specdirflowedge{node0}{node3}{to path={ .. controls +(.5,-.5) and +(-.5,-0.5) .. (\tikztotarget) } }
		\specdirflowedge{node2}{node1}{to path={ .. controls +(.3,-.3) and +(-.3,-0.3) .. (\tikztotarget) } }
		\dirflowedges{node4/node5}
		\]
		
		\begin{minipage}[t]{0.4\linewidth}
		\[		
		\infer[\ilet]{\Lambda \vdash \letin{x}{\tmu}{\tm : \cLp{\vnode0{\Y}}} }
		{ \Lambda \vdash \tmu : \cKp{\vnode1{\X}} & \Lambda, x : \cKp{\vnode2{\X}} \vdash \tm : \cLp{\vnode3{\Y}} }
		\specdirflowedge{node1}{node2}{to path={ .. controls +(.4,-.4) and +(-.4,-0.4) .. (\tikztotarget) } }		
		\dirflowedges{node3/node0}	
		\]
		\end{minipage}
		\qquad		
		\begin{minipage}[t]{0.4\linewidth}
		\centering
		\textbf{Ground Contexts:}
		\[
		\infer{\Lambda, x : \cLp{\vnode0{\X}} \vdash \dots}
		{\Lambda, x : \cLp{\vnode1{\X}} \vdash \dots & \quad & \Lambda, x : \cLp{ \vnode2{\X}} \vdash \dots}
		\dirflowedges{node0/node1, node0/node2}
		\]	
		\end{minipage}
	}
	\caption{Flow graph on the low-level fragment. The flow graph for the full calculus is defined in the \Appendix.} \label{fig:normal_flow}
\end{figure}

\subsection{Terms and  Bayesian Networks: Soundness and Completeness}\label{sec:terms_as_BN}
We can encode any \BN $\bn$ with a low-level term $\tm_\bn$, in the standard way; it is immediate to check that $\bn$ and  $\tm_\bn$ have the same semantics. 
Conversely, the flow graph gives us a way to extract a  Bayesian network $\bn_t$ from every  term $t$ of ground type.  Indeed, each probabilistic axiom corresponds to a random variable, and the dependencies   between \paxs in the flow graph correspond to the edges in the Bayesian network. We have already illustrated this process in \Cref{ex:flow}. More formally, we extract a \BN as follows.
{For clarity, we  focus on \emph{closed} terms. If the term is not closed,  what we would  extract is a \emph{conditional} \BN.} 

\begin{definition}[BN Extraction]\label{def:extraction}

Let $t$ be a  closed $\Bang $-term of ground type. The derivation $\pi~\dem \vdash t:\iL$ defines 
 a Bayesian network $\bn_t = (\mathcal G,\aphi)$ over $\Nm{\pi}$ where:
	\begin{itemize}
		\item $\aphi$ maps each name $\X\in\Nm{\pi}$ to the factor which is associated to the axiom introducing $\X$.
		\item $\mathcal G$ is obtained from $\flow{\pi}$ by collapsing all the vertices  labeled by the same name. 
	\end{itemize}
\end{definition}
This extraction process is correct, in the following sense 
\SLV{(the proof is straightforward).}{(the proof is straightforward by unfolding  the definition of $\sem \pi $ and $\sem {\bn_t}$).}
\begin{thm}[From Ground Type Terms to BNs]\label{prop:BN}\label{thm:BN}
		Let $\pi\,\dem \vdash t:\iL$ be a derivation of ground type, and $\bn_t$ the \BN associated to it, as in  \Cref{def:extraction}. Then  \[\sem \pi = \sum_{\Nm{\pi}-\Nm{\iL}}   \sem {\bn_t}.\]
		
	\condinc{}{Let  $ \bn $ be a BN, and $t_{\bn}$ a low-level term term encoding it. 
	Then $\sem {t_{\bn}} = \sem \bn$ (recall that  $t_{\bn}$ has a unique type derivation)}
\end{thm}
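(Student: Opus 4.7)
The plan is to unfold both sides of the equation and observe that they reduce to the same product of CPTs, summed out over the same set of names. The work splits into three pieces: (i) checking that $\bn_t$ is a well-formed BN, (ii) establishing a bijection between the names of $\pi$ and the probabilistic axioms of $\pi$, and (iii) comparing the two definitions of semantics.

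First, I would check that $\bn_t = (\mathcal G, \aphi)$ really is a BN. The graph $\mathcal G$ is a DAG: $\flow{\pi}$ is acyclic by \Cref{prop:DAG}, and one needs to verify that the quotient by ``same name'' does not create cycles, which is where the bulk of the care lies (see below). The map $\aphi$ associates to each $\X\in\Nm{\pi}$ a factor which is a proper CPT over $\{\X\}\cup \Pa(\X)$: for an \icoin introducing $\X$, the associated factor is over $\{\X\}$ and $\Pa(\X)=\emptyset$ matches, with $\sum_\x d(\x)=1$; for an \icond introducing $\X$ with context $y_1{:}\Y_1,\dots,y_n{:}\Y_n$, the flow edges (see \Cref{fig:normal_flow}) yield exactly $\Pa(\X)=\{\Y_1,\dots,\Y_n\}$ after collapsing, and for each $\bvs$ we have $\sum_\x d_{\bvs}(\x)=1$ by definition of Bernoulli sampling.

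Second, I would establish that $\Nm{\pi}$ is in bijection with $\Cpts{\pi}$ and moreover $\Nm{\pi}=\bigcup_{\phi\in\Cpts{\pi}}\Nm{\phi}$. The naming condition in \Cref{sec:itypes} guarantees that each probabilistic axiom carries a unique main name, and no two axioms share a main name. For a closed term of ground type, every name occurring anywhere in $\pi$ is either the main name of some $\isample$/$\icond$ rule or is a name propagated through \ivar axioms from such a rule, hence equal to some main name. Consequently, setting $\bX\defeq\bigcup_{\phi\in\Cpts{\pi}}\Nm{\phi}$ as in \Cref{def:sem}, we have $\bX=\Nm{\pi}$.

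Third, applying Pearl's theorem (\Cref{thm:pearl}) to $\bn_t$:
\[
\sem{\bn_t} \;=\; \BigFProd_{\X\in\Nm{\pi}} \phi^\X \;=\; \BigFProd_{\phi\in\Cpts{\pi}} \phi,
\]
the last equality using the bijection from step two (each $\X$ is uniquely the main name of one probabilistic axiom, whose associated factor is $\phi^\X$). Summing out $\Nm{\pi}-\Nm{\iL}$ on both sides and comparing to \Cref{def:sem}:
\[
\sum_{\Nm{\pi}-\Nm{\iL}} \sem{\bn_t} \;=\; \sum_{\Nm{\pi}-\Nm{\iL}} \BigFProd_{\phi\in\Cpts{\pi}} \phi \;=\; \sem{\pi},
\]
since $\Nm{J}=\Nm{\iL}$ for the judgment $J$ at the root of $\pi$.

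The main obstacle, and the only place where genuine work is hidden, is verifying that the quotient graph $\mathcal G$ inherits acyclicity from $\flow{\pi}$. Acyclicity of $\flow{\pi}$ does not by itself forbid two distinct position-level edges $v_1\!\to\!v_2$ with labels $\X\!\to\!\Y$ and $w_1\!\to\!w_2$ with labels $\Y\!\to\!\X$, which would collapse to a 2-cycle. To rule this out, I would exploit the fact that all positions labeled by a given name $\X$ form a ``fiber'' that is itself linearly connected in $\flow{\pi}$ through the chain of \ivar, \ipair, \iletp, \ilet edges emanating from the unique axiom introducing $\X$; any cycle in $\mathcal G$ would then lift to a cycle in $\flow{\pi}$ by concatenating edges along these fibers, contradicting \Cref{prop:DAG}. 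This step can equivalently be phrased by arguing directly on the flow: for each name $\X$ there is a canonical representative (the conclusion position of its probabilistic axiom), and the edges of $\mathcal G$ correspond to the edges $\flow{\pi}$ induces between such representatives, which are acyclic.
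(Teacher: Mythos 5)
Your proposal is correct and follows the same route as the paper, which dispatches this theorem by simply unfolding \Cref{def:sem} and \Cref{thm:pearl} and matching $\Cpts{\pi}$ with the CPTs of $\bn_t$ via the bijection between main names and probabilistic axioms (your steps two and three). The one place where you go beyond the paper's one-line argument is the verification that the name-collapsed graph $\mathcal G$ inherits acyclicity from $\flow{\pi}$; this is indeed left implicit in \Cref{def:extraction}, and your lifting argument is sound because each name-fiber is a directed tree rooted at the main position of its axiom (as shown in the proof of \Cref{lem:jointree}), so every cross-name edge of $\mathcal G$ can be prefixed by a directed path from that root, turning any cycle of $\mathcal G$ into a directed cycle of $\flow{\pi}$, contradicting \Cref{prop:DAG}.
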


\subsection{From DAGs to Compositionality}\label{sec:proof_of_connexity}
We use the technology of the flow graph also to prove the fundamental result which we used to determine  compositionality, namely that all derivations of ground type are well-formed (\Cref{lem:connexity}). 
The crucial property  is the following, which is  remarkably reminiscent of the  characterizing
property of jointrees, the data structure underlying 
the \emph{message passing} algorithm for exact inference on BNs. 
\SLV{The proof relies on the fact that the flow graph is a DAG.}{The proof (obtained from the fact that the flow graph is a DAG) is in the \Appendix.}
\begin{lemma}[Named Paths] \label{lem:jointree} 
	Let $\pi \dem \Lambda \vdash t: \iL$, and let $\nX$ be the main name of a \pax $\alpha^\X$. Then  in 
	$\flow{\pi}$,  each position with  name $\nX$ is  connected to the occurrence
	of  $\, \X$ in $\alpha^\X$ by a path where all  positions have  name $\nX$.
\end{lemma}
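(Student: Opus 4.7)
My approach is to analyze the flow graph locally, classifying edges according to whether they preserve names, and then to exploit acyclicity (\Cref{prop:DAG}) to trace every $\nX$-position back to the unique axiom that introduces $\nX$.

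\textbf{Step 1: Classification of edges.} By inspection of \Cref{fig:normal_flow} (and its higher-order extension in the appendix), I would verify that every edge $(p,q)$ in $\flow{\pi}$ falls in one of two categories: either $p$ and $q$ carry the same name (\emph{name-preserving} edges, as in \ivar, \ipair, \iletp, \ilet, and the ground context rule), or $(p,q)$ is a \emph{name-introducing} edge from a context position named $\Y_i$ in an \icond rule to the conclusion position named $\nX$ of that same rule. In particular, name-introducing edges only arise as part of a \pax, and they point \emph{into} the main-name position of an \icond.

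\textbf{Step 2: Unique source for each name.} Next I would establish the following local invariant: for every position $p$ of name $\nX$ in $\pi$, either (i) $p$ is the conclusion position of the unique \pax $\alpha^\X$ whose main name is $\nX$, or (ii) $p$ has at least one incoming edge from another position of name $\nX$. This is proved by inspecting each rule of \Cref{fig:iTypes} and checking the position where each $\nX$-occurrence lives. Context $\nX$-positions that are \emph{introduced} by variable-binding rules (\ilet, \iletp, \iabs, and \icond itself) are connected in the parent rule's diagram by an $\nX$-preserving edge coming from the sibling premise (e.g.\ for \ilet, node 2 receives from node 1). Context $\nX$-positions that are merely \emph{inherited} by the ground context rule receive an $\nX$-preserving edge from the parent. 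Conclusion $\nX$-positions created by \ipair, \iletp, \ilet, \iapp, \ider etc.\ receive an $\nX$-preserving edge from the corresponding premise position. The only $\nX$-positions with no incoming $\nX$-preserving edge are the conclusion of an \icoin axiom (no incoming edges at all) or the conclusion of an \icond axiom (only name-introducing incoming edges). Finally, by the Naming Condition, the axiom with main name $\nX$ is unique.

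\textbf{Step 3: Conclude using acyclicity.} Let $G_\nX$ be the subgraph of $\flow{\pi}$ induced by positions of name $\nX$ together with $\nX$-preserving edges. By \Cref{prop:DAG}, $\flow{\pi}$ is a DAG, hence so is $G_\nX$. Pick any $\nX$-position $p$. Starting at $p$, repeatedly follow an incoming edge of $G_\nX$; by Step~2, this procedure can stop only at the conclusion $\nX$-position of $\alpha^\X$, and by acyclicity of $G_\nX$ it must stop after finitely many steps. The reversed trace is a path in $\flow{\pi}$ all of whose vertices have name $\nX$, connecting $p$ to the $\nX$-occurrence of $\alpha^\X$, as required.

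\textbf{Main obstacle.} The only delicate part is the exhaustive local verification in Step~2, particularly for the higher-order rules \ibang, \ider, \iabs, \iapp, where copies of sub-derivations (one per element of a multiset $[\iA_1,\dots,\iA_n]$) must be handled and the flow across $\oc$/$\der$ must be checked to still obey the ``each $\nX$-position is either a source or has an $\nX$-incoming edge'' dichotomy. Once this dichotomy is pinned down rule by rule, the rest of the argument is an immediate consequence of the DAG property.
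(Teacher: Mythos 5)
Your proof follows essentially the same route as the paper's: both arguments rest on a rule-by-rule analysis of which positions receive an incoming edge from a position with the same name, combined with acyclicity of $\flow{\pi}$ to make a backward trace terminate at a ``source''. (The paper phrases the local invariant slightly differently---each position has \emph{at most one} same-named parent, so the same-named components are directed trees rooted at the sources---but your ``at least one incoming same-named edge unless source'' version is all that existence of a path requires, so this difference is immaterial.)

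There is, however, one concrete hole in your Step~2. The dichotomy as you state it---the only $\nX$-positions with no incoming name-preserving edge are the main positions of \isample/\icond axioms---is false for an arbitrary name: the positions occurring in the ground context $\Lambda$ of the \emph{final} judgment $\Lambda \vdash t : \iL$ are input positions with no rule below them, hence they have no incoming edge at all, and your backward walk could in principle dead-end there rather than at $\alpha^\X$. The argument is saved only because $\nX$ is a \emph{main} name: the side condition $\X \notin \Nm{\Lambda}$ on the probabilistic axioms, together with the additive propagation of the ground context (the $\Lambda$ of the final conclusion reappears in every judgment above it, in particular in the context of $\alpha^\X$), forces $\nX \notin \Nm{\Lambda}$ for the conclusion as well, so no such dead-end position exists for this $\nX$. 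The paper states and uses exactly this observation; you need to add it explicitly---it is, together with the pairwise distinctness of main names (which you do invoke to get uniqueness of $\alpha^\X$), the only place where the ``main name'' hypothesis actually enters. With that observation supplied, your Steps~2 and~3 go through.
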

%
Then, we are finally able to give the proof of \Cref{lem:connexity}.
\begin{proof}[Proof of \Cref{lem:connexity}]
	We prove that each rule in $\pi\dem \Lambda \vdash t:\iL$ has compatible premises. 
	Let $\pi_1 \dem J_1, \dots, \pi_n\dem J_n$ be the premises of a rule in $\pi$. Assume there exists a $\pi_j$ such that 
	$\nX \in \Nm{\pi_j}$ and $\nX \not \in \Nm{J_j}$. It is easy to see that $\nX$ is necessarily  the main name of  
	a \pax $\alpha^\nX$ in $\pi_j$. 
By	\Cref{lem:jointree}, each position with  name $\nX$  in $\flow \pi$ is  connected to the occurrence
	of $\X$ in $\alpha^\X$ by a path where all positions have name $\nX$. Since $\nX \not \in \Nm{J_j}$,
	it is  impossible that $\nX \in \Nm {\pi_i}$ for $i \not = j$.
\end{proof}


\section{Dealing with Evidence, aka  Computing the Posterior}\label{sec:evidence}
\newcommand{\Q}{\mathtt Q}
\newcommand{\q}{\mathtt q}
\newcommand{\E}{\mathtt E}
\newcommand{\e}{\mathtt e}


Quoting  \cite{GordonHNR14}, inference is the task of ``computing an explicit representation of the probability distribution implicitly specified by a probabilistic program''. So far,  we have focused  on the computation of  marginals, without explicitly dealing with evidence.
Admittedly, 
the most interesting inference problem is to compute  a \emph{posterior} distribution, such as $ \Pr(\Rain|\Wet=\true) $. 
We know  that
 this is the same as $ \Pr (\Rain, \Wet=\true) / \Pr(\Wet=\true) $ (as recalled in \Cref{sec:informal}). In standard practice, to compute a  posterior means to compute the  unnormalized  posterior 
  \begin{equation}\label{eq:joint}
\Pr(\Rain, \Wet=\true).
 \end{equation} We  can then
 normalize \eqref{eq:joint}  by dividing for 
 $\Pr(\Wet=\true) $, 
which   is also  available---\emph{for free}---from \eqref{eq:joint},  just by adding up all the probabilities  appearing in  it. 
Indeed, $ \Pr(\Wet=\true) =\sum_{\Rain} \Pr(\Rain, \Wet=\true)$.

Technically, in the setting of BNs (and in our setting), 
there is no essential difference between computing a marginal like  $\Pr (\Rain, \Wet=\true)$ or a marginal like $\Pr(\Rain, \Wet)$. 
The key to deal with evidence (such as $\Wet=\true$) is to update each \cpt in the model, by zeroing out those rows   that are inconsistent with the observed data. 
The resulting  factors are  no
longer normalized, but are   still  valid factors, and their product  gives the unnormalized posterior distribution.
We 
illustrate this with an example, referring to \cite{DarwicheBook} (Ch. 6.7) or  \cite{KollerBook} (Ch. 9.3.2)  for the details.


\begin{example}[Basic]\label{ex:evidence} Consider a simple BN of graph $\Rain \rightarrow \Wet$. The factors associated to $\Rain$ and $\Wet$ are respectively $\phi_1$ and $\phi_2$, as given in \Cref{fig:paxs}. We know (\Cref{sec:BN}) that $\Pr(\Rain,\Wet) = \phi_1 \Fprod \phi_2$.
Assume we have  \emph{evidence}   $\e\defeq (\Wet=\true)$. The (unnormalized) posterior distribution 
 $\Pr(\Rain, \Wet=\true)$ is again the product of the factors associated to the nodes of the BN (as in \Cref{thm:pearl}) but this time starting from the updated factors $\phi_1^\e$ and $\phi_2^\e$.   In our case, $\phi_1^\e = \phi_1$, while $\phi_2^\e$ is modified as in  \Cref{fig:evidence}. So  $\Pr(\Rain, \Wet=\true) = \phi_2^\e\Fprod\phi_1^\e $ (the result is  in \Cref{fig:evidence}).
{\small
\begin{figure}[h]\label{fig:efactors}\centering
\[	
\phi_2^\e =  
\begin{array}{|cc|c|}
	\hline	
	\R  & \W  & \Pr(\W=\true | \R) \\
	\hline	
	\true  & \true	& 0.7 \\
	\hline
	\false & \true	& 0.01 \\
	\hline
\end{array}
\quad
{ \phi_2^\e\Fprod\phi_1^\e } =
\begin{array}{|cc|c|}
	\hline	
	\R   & \W  &   \Pr(\R,\W=\true) \\
	\hline	
	\true  & \true	& 0.14 \\
	\hline
	\false & \true	& 0.008 \\
	\hline
\end{array}
\quad
\dfrac{\phi_2^\e\Fprod\phi_1^\e}{0.148} =
\begin{array}{|cc|c|}
	\hline	
	\R  & \W  & \Pr(\R|\W=\true) \\
	\hline	
	\true  & \true	& 0.14 /0.148=  0.946 \\
	\hline
	\false & \true	& 0.008 /0.148 = 0.054 \\
	\hline
\end{array}
\]
\caption{Dealing with evidence} \label{fig:evidence}
\end{figure}
}
Hence, we immediately have: 
 \begin{itemize}
 	\item $\Pr(\Wet=\true) =0.148$, which is obtained   by adding up all  the entries in  $\Pr(\Rain, \Wet=\true)$;
 	\item 
 	$\Pr(\Rain| \Wet=\true)$, which is obtained by normalizing $\Pr(\Rain, \Wet=\true)$, as depicted in \Cref{fig:evidence}. 
 \end{itemize}
So, for example, we have the posterior $\Pr(\Rain=\true|\Wet=\true)=0.14 /0.148=  0.946$, a huge increase from our prior belief  $\Pr(\Rain=\true)\defeq \phi_1(\true)=0.2$.

\end{example}

\subsection{The  Semantics of Terms, with Evidence}
The main ingredients being the same, our   approach is easily adapted to deal with evidence,  syntactically encoded by  the construct $\obsb x$. 
To express evidence, we enrich the definition of  intersection types  with   two more sets of atomic types, namely  $\{\Xt \mid  \X\in \Names\}$ and $\{\Xf \mid  \X\in \Names\}$.  
%
\[\begin{array}{lrcl}
	\textsc{Atomic Types}& \XX	 & \grameq&  \X \gpipe    \Xt   \gpipe  \Xf  \\[4pt]
	\textsc{Ground Types}&	 \iK, \iL & \grameq&  \XX  \gpipe \iK\otimes \iL  
\end{array} 
\qquad\vline\quad
\begin{array}{lrcl}
	\textsc{Positive Types}&	 \iP,\iQ & \grameq & L \gpipe\, \mset{A_1,\ldots,A_n}  \\[4pt]
\textsc{Types}&\iA,\iB &\grameq&  \sP \gpipe \iP \larrow \iA  
\end{array} \]
Technically, we require the atomic types appearing in a type derivation to be pairwise consistent, meaning that if two atomic types $\XX_1,\XX_2$ have the same name $\X$, then they  are both either  $\X$ or  $\Xt$ or $\Xf$. It is easy to check that for this property to hold for a type derivation  $\pi\dem J$, it just suffices to require that it holds for the  atomic types in the conclusion  $J$.

\condinc{}{NOTE For CF: add this result in the Appendix, formally}

Given a derivation $\pi$ and a name $\X$, we define $\Val{\X}=\{\true\}$  if $\X$ occurs in $\pi$ as $\Xt$, $\Val{\X}=\{\false\}$  if $\X$ occurs in $\pi$ as $\Xf$, and $\Val{\X}=\{\true,\false\}$ otherwise. Essentially, we treat an observed \rv as unary, akin to factor reduction in \cite{KollerBook}.

\paragraph{Types and Semantics.}
We can now complete  \Cref{fig:iTypes}  (and  \Cref{fig:inductive_sem}) with the   typing rule for $\obsb{x}$: 
\[	\infer[\iobs]{ \Lambda, x:\Xb  \vdash \obsb{x} : \Xb  \dmd{ \ft{1}{ \emptyset} }}{} \]
where we  annotated the rule with  the  semantics  indicated by \Cref{def:sem} (a remark similar to \ref{ex:var_ax} applies).
The typing rules for \isample and \icond in \Cref{fig:iTypes} (and \Cref{fig:inductive_sem}) are  updated as follows:
{\small \begin{gather*}
	\infer[\isample]{ \Lambda \vdash \sample d : \XX   \dmd { \ft{\phi}{\{\X\}}  } }
{ \X \not\in \Nm{\Lambda} } 
\qquad
\infer[\icond]{ \Lambda, y_1:\YY_1, \dots, y_n :\YY_n \vdash \CPT{y_1, \dots, y_n} :  \XX \dmd { \ft{\phi}{\{\Y_1, \dots, \Y_n, \X \} } } }
{ \X \not \in \{\Y_1, \dots, \Y_n\} \tand \X \not \in \Nm{\Lambda} }
\end{gather*}}
 where we annotate the rules with  the factor $\phi$ indicated  by \Cref{def:CPTpax}, taking into account the definition of $\Val \X$ given above (see \Cref{ex:evidence1}).  
Everything else stays the same: all definitions and results  in Sections  \ref{sec:term-semantics} and \ref{sec:flow} remain valid.
  An  example will clarify what happens, and how the  evidence is reflected into the semantics, via  the type derivation.
\begin{example}[Basic]\label{ex:evidence1} Consider a term modeling the  BN of \Cref{ex:evidence}, \ie  $\Rain \rightarrow \Wet$:  
	\[\letin \rain {\coin {0.2}} {\letin \wet { \CPT{\rain}} {\pair \rain \wet}}:\R\otimes \W.\] 
	where $ \CPT{\rain} \defeq   \texttt{case}\; \rain
	\;\texttt{of}\;\{\tmt\caseSym\coin{0.7}; 
	\tmf\caseSym	\coin {0.01}\}
	$.
	To express the evidence $\W=\true$, we use the  construct $\obs \wet \true$, yielding the (sugared) term		
	\[	\letin \rain {\coin {0.2}} {\letin \wet { \CPT{\rain}} {\pair {\rain} {\obs \wet \true }}}: \R \otimes \W^\true \]
	Let us see how  the evidence is reflected into the semantics, thanks to the type derivation $\pi$. 
	{\small 	\[ \infer{\pi \dem \vdash \letin r {\mathtt{bern}_{0.2}}  
			{\letin{w}{\CPT  r}{\pair r  {\obs w \true}}}  :{ \R}\otimes { \W^{\true}} \dmd{\phi_1\FProd \psi}}
		{ {\vdash \mathtt{bern}_{0.2}:{\R} \dft{\phi_1}{\set{\R}}}   
			& 
			{ \infer[{\ilet}]{{r:{\R} \vdash {\letin{w}{\CPT  r}{\pair r  {\obs w \true}}}} :\R \otimes \W^{\true} \dmd{\psi}}
				{{\infer[{\icond}]{ r: \R \vdash \CPT  r: {\W^{\true}} \dft {\psi}{\set{\R,\W}}}{}} & \infer{r:{ \R},   w: \W^{\true}  \vdash \pair r  {\obs w \true} : { \R}\otimes { \W^{\true}}\dmd{\ftone_\emptyset}}
					{ r:{ \R} \vdash  r :{ \R} \dmd{\ftone_\emptyset}  & { \infer[{\iobs}]{ w:{ \W^{\true}} \vdash  {\obs w \true}:{ \W^{\true}}\dmd{\ftone_\emptyset}}{}}   } }    }}
		\]}
	The \iobs axiom forces the  type  $\W^\true$. In turn, $\W^\true$ is  propagated to the $\icond$ axiom, via the $\ilet$ rule.
	What is  the semantics of the \icond axiom?  Spelling out \Cref{def:CPTpax}, and recalling that here  $\Val{\W}=\{\true\}$, we have that the factor $\psi$ associates a scalar to each of the \emph{two tuples} in $\Val{\R}\times\Val{\W}$: 
\begin{center}
		$\psi{(\true\true)}\defeq\coin{0.7}(\true)$ and $\psi{(\false\true)}\defeq\coin{0.01}(\true)$.
\end{center}
That is, $\psi$ is exactly the factor $ \phi_2^\e$ in \Cref{fig:evidence}. Please observe that the difference in the interpretation is entirely due to the type $\W^{\true}$. With the same definitions as in \Cref{sec:term-semantics} , we have  $\sem \pi 
=\phi_1\Fprod \psi$, which  is exactly 
$ \phi_1^\e\Fprod \phi_2^\e ~= ~\Pr(\Rain,\Wet=\true)$.
\end{example}
A very similar reasoning would apply to express $\Pr(\Rain|\Wet=\true)$ in a model with several variables, like the one in \Cref{fig:BNrain}; marginalization is taken care by the semantics exactly as before.
\condinc{}{ \todoc{Give this  in the \Appendix, if time allows}}
 It is more interesting to revisit \Cref{ex_coins}, to  illustrate how we  deal with  
 evidence together with \emph{copies}.



\begin{example}[Coin Tosses]\label{ex_coins2} Consider again the term $\tmu$ of \Cref{ex_coins}, modeling two tosses of the same (biased) coin. 
Assume we want to \emph{infer (learn) the bias of the coin} from the result of the tosses.  
For example, if we observe that both  the  coin tosses yield $\true$, this would increase our confidence that the coin is biased toward $\true$. Indeed the semantics of the updated term  changes in this sense, as we show here.
 The following  term\footnote{As usual, we  use some syntactic sugar.} $\tmu'$ expresses the evidence, given   the model encoded by $\tmu$. Notice the \emph{modularity} in the encoding.
	\[\tmu' ~~\defeq ~~\letpin{x}{y_1,y_2}{\tmu}{\pair{x}{\obst{y_1},\obst{y_2}}} \]
	
%
	Once again, each  observation  $ \obst{y_i} $ forces  the type $\Y_{i}^{\true}$, so recording the evidence $\true$ for each $\Y_i$;  
	the type derivation takes care of propagating the evidence, as shown below:
	{\small \begin{prooftree}
		\AxiomC{\RED{$\pi' \dem \tmu: \X\otimes \nY_1^\true \otimes \nY_2^\true$}}
		\AxiomC{$x : \nX \vdash x : \nX$}
		\AxiomC{$ y_1 : \nY_1^\true \vdash \obst{y_1} : \nY_1^\true$}
		\AxiomC{$  y_2 : \nY_2^\true \vdash \obst{y_2} : \nY_2^\true$}
		\BinaryInfC{$ y_1 : \nY_1^\true, y_2 : \nY_2^\true \vdash \pair{\obst{y_1}}{\obst{y_2}} : \nY_1^\true \otimes \nY_2^\true$}
		\BinaryInfC{$ x : \nX, y_1 : \nY_1^\true, y_2 : \nY_2^\true \vdash \pair{x}{\obst{y_1},\obst{y_2}} : \nX \otimes \nY_1^\true \otimes \nY_2^\true$}
		\RightLabel{\iletp}
		\BinaryInfC{$\rho \dem ~\vdash \letpin{x}{y_1,y_2}{\tmu}{\pair{x}{\obst{y_1},\obst{y_2}}} : \nX \otimes \nY_1^\true \otimes \nY_2^\true$}
	\end{prooftree}}
The derivation $\pi' \dem \tmu$ (highlighted in red) is the same as the derivation $\pi\dem \tmu$ of \Cref{ex_coins}, but with each type $\nY_i$  replaced by  $\nY_i^\true$ (as forced by the \iletp  typing rule).
The semantics  changes as expected, reflecting that the evidence   increases our confidence that the coin is biased toward $\true$. 
The reader can find  the computation of the semantics developed in full in the \Appendix.
\end{example}


\section{A Cost-Aware Semantics}\label{sec:cost}
In this section we examine the cost of computing the semantics of a term $t$ of ground type, that is the cost of inferring the marginal distribution defined by $t$. We assume  $\pi$ to be an arbitrary  derivation of a judgment $J$ of shape $\Lambda \vdash t:\iL$.
Looking at  \Cref{def:core}  one easily realizes the following:
\begin{proposition}[Cost Upper Bound]\label{prop_uppercost}
The cost of computing $\sem \pi$ according to \Cref{def:core} is $\BigO{m_\pi \cdot 2^{n}}$, 
where $n= |\Nm{\Cpts\pi}|$ is the number of names which appear in $\Cpts\pi$, 
and $m_\pi \leq n$ is the number of \paxs in $\pi$. 
\end{proposition}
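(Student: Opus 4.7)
The proof will be a direct application of the complexity figures recalled in Remark~\ref{rem:factors_cost}, read off the shape of Definition~\ref{def:core}. Recall that $\sem{\pi}$ is obtained by (i) forming the $m_\pi$-fold product $\Phi \defeq \BigFProd_{\phi \in \Cpts{\pi}} \phi$ and then (ii) summing out the names in $\bX - \Nm{J}$ from $\Phi$.

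First I would observe that the resulting factor $\Phi$ is, by construction, defined on the variable set $\bX = \bigcup_{\phi \in \Cpts{\pi}} \Nm{\phi} = \Nm{\Cpts{\pi}}$, which has cardinality $n$. Therefore $\Phi$ has at most $2^n$ entries. Applying the second bound of Remark~\ref{rem:factors_cost} with $k = m_\pi$ and $w = n$, step (i) can be carried out in $\BigO{m_\pi \cdot 2^n}$ time and space. Next, applying the first bound of Remark~\ref{rem:factors_cost} to $\Phi$ (a factor on $n$ variables), step (ii) has cost $\BigO{2^n}$, regardless of how many variables are summed out. Summing the two contributions yields $\BigO{m_\pi \cdot 2^n + 2^n} = \BigO{m_\pi \cdot 2^n}$, which is the announced upper bound.

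It remains to justify the side remark that $m_\pi \leq n$. By the naming condition imposed in Section~\ref{sec:itypes}, the main name introduced by each probabilistic axiom of $\pi$ is pairwise distinct, and each such axiom contributes its main name to $\Nm{\Cpts{\pi}}$ through the associated factor (see Definition~\ref{def:CPTpax}). Hence $\Nm{\Cpts{\pi}}$ contains at least the $m_\pi$ distinct main names, giving $m_\pi \leq n$. This step is essentially bookkeeping, so there is no real obstacle in the argument; the only point worth being careful about is to invoke both clauses of Remark~\ref{rem:factors_cost} with the correct $w$ (namely $n$, the size of $\bX$), rather than with the arity of the individual input factors.
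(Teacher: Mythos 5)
Your argument is correct and is exactly the reading the paper intends: the paper gives no explicit proof (it says the bound is "easily realized" from Definition~\ref{def:core}), and your derivation simply instantiates the two clauses of Remark~\ref{rem:factors_cost} with $k=m_\pi$ and $w=n$, plus the observation that pairwise-distinct main names force $m_\pi\leq n$. Nothing is missing.
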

This is the \emph{upper bound} to the cost of computing the semantics of $\pi$. But there is more to the story.
Non-idempotent type systems are known to provide quantitative information about the typed term, 
such as bounds on the execution time. 
There is a rich  literature on systems which capture (possibly tight) complexity bounds in different styles of computation \cite{deCarvalho18, DBLP:journals/jfp/AccattoliGK20,ICFP2022}, including \emph{the expected runtime} of probabilistic computations \cite{LagoFR21}.
However, in  the setting of Bayesian modeling and exact inference, the runtime, and even the expected runtime, has little  relevance, because the \emph{dominating  cost} (in time and space) is due to the computation of the probability distribution. 
Such a cost is what interests us---our type system is able to provide accurate bounds.

\paragraph{On the Cost of Inductively Computing  $\sem{\pi}$.} 
Let us point out the differences, \emph{computationally}, between the definition of $\sem{\pi \dem J}$ according to \Cref{def:core}---which only considers the \paxs and the names occurring in the conclusion $J$---and  the inductive 
definition illustrated in  \Cref{fig:decorated}. 
The latter, which computes the semantics following the structure of the type derivation,  allows for a \emph{more efficient} computation, in general.
This is because summing out step-by-step yields intermediate factors of smaller size.
This indeed is exactly the way algorithms for exact inference work.
  
\begin{example}
	Consider a \BN whose DAG is  a chain $\nX_1 \to \nX_2 \to \dots \to \nX_n $. The following term defines the marginal over the single variable $\X_n$.
\[
\pi \dem \letin {x_1}{\sample d}
{\letin {x_2} {\CPTN{2}{x_1}}
		{\dots \ \letin{x_n}{\CPTN{n}{x_{n-1}}}{x_n}}} : \nX_n
	\]
\begin{itemize}
	\item 	Computing $\sem{\pi}$ according to \Cref{def:core} has cost $\BigO{n\cdot 2^n}$. Indeed, we   first compute the product of $n$ factors---that is  \emph{the full joint distribution} over $\nX_1, \dots, \nX_n$---and then  sum out all the variables but $\X_n$.
	\item 	Regardless of the number of variables in the chain, the cost of computing the  semantics  \emph{following the structure of the derivation} is $\BigO{n\cdot 2^3}$, as one can easily verify by trying to write down the derivation: we do not actually need to explicitly build a distribution over $n$ random variables.
\end{itemize}
Clearly, there is no guarantee that the cost of  inductively computing  the semantics of $\pi$  is always \emph{strictly less} than  computing  $\sem \pi$ directly.
\end{example}

%
%

If we examine the cost of computing the semantics of $\pi$ inductively, 
we obtain a better upper bound    than in  \Cref{prop_uppercost}---please observe  that the latter essentially corresponds to computing the full joint distribution underlying the model. 
\begin{proposition}[Inductive Cost]\label{prop_inductive_cost}
Let  $ \pi$ be a type derivation, $m_\pi$  the number of \paxs in $\pi$, 
and $n=|\Nm{\Cpts\pi}|$  the number of names which appear in $\Cpts\pi$ (as in \Cref{prop_uppercost}).
The cost of inductively computing the semantics of $\pi$ following the structure of the derivation is 
\[\BigO{m_\pi \cdot 2^W}\]
where 
$W\leq n $ is  the maximal cardinality of any set of names $\bY$ appearing in the derivation, 
when decorated as in \Cref{fig:decorated}.
\end{proposition}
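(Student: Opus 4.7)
The plan is to proceed by induction on the structure of the decorated type derivation $\pi$, accounting locally for the cost of each rule of \Cref{fig:decorated} and then summing the contributions.

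First, I would observe that at each rule, the \emph{local cost} of producing the annotated factor from those already computed at the premises is $\BigO{2^W}$. Concretely: the probabilistic axioms \isample{} and \icond{} tabulate a CPT over at most $W$ names at cost $\BigO{2^W}$; the axioms \ivar{} and \ipair{} yield the trivial factor $\ftone_\emptyset$ at cost $\BigO{1}$; the rules \iletp, \iabs, \ider{} propagate their premise factor verbatim at cost $\BigO{1}$; and the rules \ilet, \iapp, \ibang{} perform a product $\bigodot_i \psi_i$ followed by a sum-out of a set $\bZ$. By \Cref{rem:factors_cost}, and noting that the intermediate factor $\bigodot_i \psi_i$ has a set of names bounded by $W$ (since by hypothesis $W$ is the maximum cardinality of any name set appearing in the decoration, including those arising from such intermediate products), both the product and the sum-out cost $\BigO{2^W}$.

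Next, I would amortize these local costs to obtain the global bound. The key observation is that non-trivial factors can originate only at the $m_\pi$ probabilistic axioms of $\pi$: any rule whose premise factors are all trivial produces a trivial factor at cost $\BigO{1}$ and can be discarded for the purposes of counting. The remaining operations can be charged to probabilistic axioms as follows: each axiom produces exactly one non-trivial factor; each product $\bigodot_i \psi_i$ strictly decreases the number of ``live'' non-trivial factors (so at most $m_\pi - 1$ non-trivial products occur along any accumulation branch); and each sum-out eliminates a name which is the main name of some \pax, so at most $|\Nm{\Cpts{\pi}}| \leq m_\pi$ non-trivial sum-outs occur. Hence at most $\BigO{m_\pi}$ rules of $\pi$ incur the $\BigO{2^W}$ local cost, and the total is $\BigO{m_\pi \cdot 2^W}$.

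The main obstacle is precisely this amortization step: a derivation $\pi$ may contain arbitrarily many \ilet{} and \iapp{} rules compared to the number $m_\pi$ of probabilistic axioms, so one cannot naively charge $\BigO{2^W}$ per rule without incurring an undesirable dependence on the size of $\pi$. The care required is in verifying that rules acting only on trivial factors contribute $\BigO{1}$, and that every non-trivial product or sum-out can be injectively associated with some pax, making the amortization tight.
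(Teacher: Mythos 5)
Your proposal is correct and matches the paper's proof in essence: the paper formalizes your ``each non-trivial product decreases the number of live non-trivial factors'' amortization as a weighted nodes-vs-leaves lemma showing $\sum_J(\fct{J}-1)=m_\pi-1$ (where $\fct{J}$ counts the non-trivial premise factors of each judgment), and it bounds the sum-outs by $m_\pi$ exactly as you do, via the flow-graph fact that only main names are ever summed out, each at most once. One minor slip: the inequality $|\Nm{\Cpts{\pi}}|\leq m_\pi$ is backwards (the paper has $m_\pi\leq n$, since $\Nm{\Cpts{\pi}}$ may contain names that are not main names of any probabilistic axiom, e.g.\ names coming from the ground context), but your conclusion of at most $m_\pi$ non-trivial sum-outs still follows directly from the fact that each such sum-out eliminates at least one main name.
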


\begin{remark}
The reader familiar with inference algorithms will realize that inductively computing the semantics following the structure of the derivation implements inference by a form of the Variable Elimination algorithm, and indeed 
such an observation can be made formal. The needed technical details, however, are beyond the scope of this paper.  
\end{remark}
\begin{remark}[Observed Variables]\label{rem:cost}
	Both bounds---in \Cref{prop_uppercost}  and \Cref{prop_inductive_cost}--- can be made tighter by restricting $n$ to the set $\widehat\bY$ of non observed names. The set $\bE$ of observed variables does not actually contribute to the number of entries in a factor, since each one  has a single possible value. So $\Val {\widehat\bY} \times \Val \bE$ is isomorphic to $\Val {\widehat\bY}$.
\end{remark}

\paragraph{Exact Bounds.}
Given that the  type system is resource-sensitive by design, it is not difficult to refine 
type derivations by  decorating  $\pi$ with the number of elementary steps needed to compute its semantics. We stress that this information can be retrieved \emph{without} performing calculations on factors---it is enough to  keep track of the relevant names occurring in each judgment. 
It is straightforward to fine-tune the complexity measure and provide bounds with different granularity.

In \Cref{fig:cost-system} we  give a minimalist  example, limited to the \emph{decidable} first-order system, where the weight on the turnstile counts the number of multiplications performed during the computation (the cost  of sums being negligible w.r.t. the cost of products). Here, given a judgment $J \, \diamond \, \bX$, we write $\widehat{\bX}$ for the restriction of  $\bX$ to the names that are not observed in $J$: this is to take into account that observed names do not actually contribute to the factor size, hence to the overall cost, as remarked above. 

\paragraph{Same Model, Different Cost.}
Since the cost of computing the semantics (\ie the cost of exact inference) depends on the structure of the term itself, different terms describing the very same marginal distribution may have different costs. Indeed, a term encodes not only a \BN and a marginal, but also a way to compute it, in agreement with the  ideas which underlay  exact inference \emph{on terms}, as proposed by \citet{KollerP97}. Our type system detects such a difference.
The following example  illustrates this point.
\begin{example}[Same BN, Different Cost]\label{ex:cost}
Consider the following two terms, both in normal form, and both corresponding to the same \BN 
whose graph is the  chain $\nX \rightarrow \nY \rightarrow \nZ$.
\begin{enumerate}
	\item  $t_1 \defeq \letin{x}{\sample{d}}{\letin{y}{\CPTN{\Y}{x}}{\letin{z}{\CPTN{\Z}{y}}{z}}} : \nZ $
	\item $t_2 \defeq  \letin{z}{(\letin{y}{(\letin{x}{\sample{d}}{\CPTN{\Y}{x}})}{\CPTN{\Z}{y}})}{z} : \nZ $
\end{enumerate}
Both  terms define the same marginal distribution over  $\nZ$. 
However, inductively computing such a distribution has a 
different cost\footnote{The reader can find the explicit computations   in the \Appendix.}:  $12$ multiplications for $t_1$, and  $8$ for $t_2$. 
\end{example}	
\begin{figure}[t]
	\small	
	\begin{gather*}	
	\infer[\icoin]{ \Lambda \ovdash{0} \sample d : \XX     { \dftC{\phi}{\{\X\}}  } }
	{ \X\not\in \Nm{\Lambda} } 
	\quad
	\infer[\icond]
	{ \Lambda, y_1:\YY_1,\dots,y_n :\YY_n \ovdash{0} \CPT{y_1, \dots y_n} : \XX \dftC{\phi}{\{\nY_1,\dots,\nY_n,\nX \} } } 
	{ \X \not \in\{\Y_1, \dots, \Y_n\} \tand \X \not \in \Nm{\Lambda} }
	\\[4pt]
	\infer[\ivar]{ \Lambda, x : \iP \ovdash{0} x : \iP { \dftC{1}{\emptyset} } }{ }
	\qquad 
	\infer[\iobs]{ \Lambda, x : \X^\tb \ovdash{0} \obs{x}{\tb} : \X^\tb { \dftC{1}{\emptyset} } }{ }
	\\[4pt]	
	\vcenter{
	\infer[\ilet]{\Lambda \ovdash{n_1+n_2+m} \letin{x}{u}{t} : \iA 
		{ \dftC{\sum_\bZ \psi_1 \odot \psi_2}{(\bY_1\cup\bY_2) - \bZ} } }{ \Lambda \ovdash{n_1} u  : \iP { \dftC{\psi_1}{\bY_1} } 
		&  \Lambda, x :  \iP \ovdash{n_2} t : \iA { \dftC{\psi_2}{\bY_2} } 
		&   \scriptstyle{\bZ = (\bY_1\cup \bY_2) - \Nm{\Lambda,  \iA} }
	}}
	\quad
	\violet{
	\small
	m =	
	\begin{cases}
		2^{|\widehat{\bY_1} \cup \widehat{\bY_2}|}  & \text{if } \bY_1,\bY_2 \not = \emptyset  \\
		0  &  \text{otherwise}. 
	\end{cases}
	}
	\\[4pt]
	\infer[\ipair]{ \Lambda \ovdash{0} \pair{v}{w} : \iL_1 \otimes \iL_2  { \dftC{1}{ \emptyset}} }
	{ \Lambda \ovdash{0}{} v : \iL_1  { \dftC{1}{ \emptyset}} & \Lambda \ovdash{0} w :  \iL_2  { \dftC{1}{ \emptyset} } } 
	\qquad
	\infer[\iletp]{\Lambda \ovdash{n} \letp{\pair{x}{y}}{\val}{t} : \iA  { \dftC{\psi}{\mathbb{Y}} } }
	{ \Lambda \ovdash{0} \val :\iL_1 \otimes \iL_2 
		\dftC{1}{ \emptyset} 
		&  \Lambda, x : \iL_1, y : \iL_2 \ovdash{n} t : \iA { \dftC{\psi}{\mathbb{Y}} } 
	}
	\end{gather*}			
	\caption{First-order type system annotated with the \emph{cost} of computing the factor.}
	\label{fig:cost-system}
\end{figure}

\paragraph{Cost and Reduction.} 
The upper bound in \Cref{prop_uppercost} is invariant by reduction, 
because the number of \paxs  in a type derivation is invariant. 
On the other hand,  the cost of \emph{inductively} computing the semantics of a type derivation $\pi$  (\Cref{prop_inductive_cost})
is \emph{not} stable by reduction, because  the structure of the derivation changes,   and so 
the size $W$ of the largest factor to be inductively computed may grow or shrink. 
This is indeed the rationale behind the program transformations which correspond to the Variable Elimination algorithm performed on terms \cite{LAFI2023}.
Starting from a normal form, the efficiency may be improved via expansion (the reverse of reduction).

\paragraph{Cost of  Factors Product vs.   Matrices Product.}\label{sec:on_product} 
We  stress how  much the product of factors  \emph{differs} from the product of matrices. In this difference  lies the efficiency of a factors-based semantics w.r.t. to a categorical 
\cite{JacobsZ}  or relational \cite{EhrPagTas14,EhrhardT19} one, where a central role is played by a  product $\otimes$ which behaves as the tensor product of matrices.
	\begin{example}\label{ex:tensor}
	Let $t_1,t_2$ be \texttt{case} expressions, respectively encoding two CPTs $\phi^{\X_1} = \Pr(\X_1|\Y_1, \Y_2,\Y_3)$ 
	and $\phi^{\X_2} = \Pr(\X_2|\Y_1, \Y_2,\Y_3)$, where two distinct   \rvs ($\X_1$ and $\X_2$, respectively) are conditioned to the \emph{same} set of \rvs  $\Y_1, \Y_2,\Y_3$. 
	We can see each  $\phi^{\X_i}$ interpreting $t_i$ as a stochastic matrix (of size $2^4$). One   easily realizes that computing the tensor product  $\phi^{\X_1} \otimes \phi^{\X_2}$ of the two matrices, requires to \emph{compute and store}  $2^4\cdot 2^4=2^8$ entries. In contrast,  
	the factor product $\phi^{\X_1} \FProd \phi^{\X_2}$  computes $2^5$ entries. 
	Indeed, in  a categorical or relational model, to compute 
	the semantics of the  term $\pair {t_1}{t_2}$ will  (in general) pass via  $\phi^{\X_1} \otimes \phi^{\X_2}$. 
	On this basis, it is 
		easy to build a term which encodes a BN  over the 5 variables $\X_1, \X_2,\Y_1, \Y_2,\Y_3$, and  whose   inductive interpretation  (in a categorical or relational model) requires to compute and store $2^8$ values---one such a term is given  in the \Appendix. 
		This  is somehow weird---from the point of view of BNs---given that  the \emph{full} joint distribution over 5 variables  has size $2^5$. 
\end{example}



\section{Lifting the Type System to Call-by-Push-Value PCF}\label{sec:PCF}
To streamline the presentation, 
we  carried out  our analysis in the setting of the untyped $\lambda$-calculus, on top of which we have defined an intersection type system. As  mentioned in \Cref{sec:itypes}, all our methods can be lifted to a more
user friendly PCF-like sintax, similar  to that in \cite{EhrhardT19}. Terms of a call-by-push-value PCF are those of the $\lambda$-calculus  in \Cref{sec:calculus} plus some operations needed to handle natural numbers (including a constant $\underline{\mathtt{n}}$ for each  number $n$) and a fixed point combinator:
\[\begin{array}{rrcl}
	\textsc{Terms} &	\tm,\tmu &\grameq & \ldots\gpipe \mathtt{succ}\ \val \gpipe \mathtt{pred} \ \val \gpipe \mathtt{ifZero}\ \val\ \tm\ \tmtwo\gpipe \fix{\var}{\tm}\\[2pt]
	\textsc{Values} &	\val,\valtwo & \grameq &  \ldots \gpipe \underline{\mathtt{n}}
\end{array} \]
The reduction rules are as expected. As standard, PCF  types (represented in \red{red}) are  the simple types of \Cref{sec:calculus} extended with the ground type of natural numbers:
\[\begin{array}{lrcl}
	\textsc{Ground Types}& \red\sL, \red\sK & \grameq & \ldots\gpipe \red{\mathtt{N}}
\end{array}\]
The fact that PCF is a \emph{typed} calculus does \emph{not} mean that we can  get rid of intersection types (here represented in \blue{blue}) to give semantics to the terms. Indeed, we still need to keep track of the names which are generated during the computation.
As usual with PCF, ground \emph{intersection types} need to be extended  with a type constant $\overline{\mathtt{n}}$ for each natural number $n$:
\[\begin{array}{lrcl}
	\textsc{Ground Intersection Types}& \blue\sL, \blue\sK & \grameq & \ldots\gpipe \blue{\overline{\mathtt{n}}}
\end{array}\]
At this point we need to type PCF terms. A PCF term will come with \emph{two} types: a standard type, and an intersection type, as in~\cite{DBLP:journals/corr/abs-1104-0193,EhrPagTas14,Ehrhard16}. For example, the term $\sample{d}$ is typed as $\vdash\sample{d}:\red{\Bool}:\blue\X$. The idea is that in a judgment $\vdash\tm:\red A:\blue C$, the intersection type $\blue C$ refines the standard PCF type $\red A$. In the example, $\blue\X$ refines $\red{\mathtt{B}}$, being $\blue\X$ the name of a boolean variable. In Figure~\ref{fig:PCF}, we report a selection of the typing rules for this language. One can notice that the rules $\psample$, $\pcond$, and $\plet$ are obtained by overlapping the ones for simple types with the ones for intersection types. The typing rules which are specific to  PCF constructs are standard,  and have been adapted from~\cite{Ehrhard16}.
\begin{figure}[t]
	\[
	\infer[\psample]{ \Lambda \vdash {\sample{d}} : \red\Bool:\blue\X}{ \blue\X \not \in \Nm{\Lambda} } 
	\qquad
	\infer[\pcond]{ \Lambda, y_1 : \red\Bool:\blue{\Y_1}, \dots, y_n : \red\Bool:\blue{\Y_n} \vdash \CPT{y_1,\dots, y_n} : \red\Bool:\blue\X }
	{ \blue\X \not \in \{\blue{\Y_1},\dots,\blue{\Y_n}\} \tand \blue\X \not \in \Nm{\Lambda} }
	\]\\[-8pt]
	\[
	\infer[\plet]{ \Lambda, \Gamma_1\uplus\Gamma_2  \vdash \letin{x}{\tmu}{\tm} : \red\iA:\blue C }
	{ \Lambda, \Gamma_1 \vdash \tmu : \red\iP:\blue Q  &  \Lambda, \Gamma_2, x : \red\iP:\blue Q \vdash \tm : \red\iA:\blue C }
	\]\\[-8pt]
	\[
	\infer{\tjudg{\Lambda}{\underline{\mathtt{n}}}{\red{\mathtt{N}}}:\blue{\overline{\mathtt{n}}}}{}
	\qquad\quad \infer{\tjudg{\Lambda,\Gamma_1\uplus\Gamma_2}{\mathtt{ifZero}\ \val\ \tm\ \tmtwo}{\red A}:\blue C}{\tjudg{\Lambda, \Gamma_1}{\val}{\red{\mathtt{N}}:\blue{\overline{\mathtt{0}}}}  & \tjudg{\Lambda,\Gamma_2}{\tm}{\red A}:\blue C} \qquad\quad
	\infer{\tjudg{\Lambda,\Gamma_1 \uplus \Gamma_2}{\mathtt{ifZero}\ \val\ \tm\ \tmtwo}{\red A}:\blue C}{\tjudg{\Lambda,\Gamma_1}{\val}{\red{\mathtt{N}}:\blue{\overline{\mathtt{n+1}}}}  & \tjudg{\Lambda,\Gamma_2}{\tmtwo}{\red A}:\blue C}
	\]\\[-8pt]
	\[
	\infer{\Lambda,\tjudg{\biguplus_{i=0}^n\Gamma_i}{\fix{\var}{\tm}}{\red A}:\blue C}{\tjudg{\Lambda,\Gamma_0,x:\,\red{!A}:\blue{[C_1,\ldots,C_n]}}{\tm}{\red A}:\blue C & \left( \tjudg{\Lambda,\Gamma_i}{\fix{\var}{\tm}}{\red A}:\blue {C_i} \right)_{i=1}^n }
	\]
	\vspace{-16pt}
	\caption{The type system for call-by-push-value PCF (selected rules).}
	\vspace{-8pt}
	\label{fig:PCF}
\end{figure}

\section{Related Work}

Theoretical research on   functional PPLs, pioneered by \cite{Saheb-Djahromi78} is a very active area. Early investigation  \cite{KollerMP97, FriedmanKP98, PlessL01, Park03, RamseyP02, ParkPT05} has evolved in a large body of work, to support the growing development of 
software.  Landmark foundational work aiming at providing sound and compositional methods for  \emph{higher-order} Bayesian inference  includes \cite{BorgstromLGS15} and \cite{ScibiorKVSYCOMH18}. The former is  based on operational techniques, while the latter on  a modular denotational semantics. Our work  is inspired by  both. The importance of   capturing the data flow of \emph{higher-order} probabilistic programs, 
is   stressed in   \cite{CastellanP19,Paquet21}, which rely on event structures. Relevantly, 
\BNs are  at the core of  \citet{Paquet21} game semantics proposal,  motivated by  the fact  that---in practice---most modern inference engines do not manipulate directly the program syntax but rather some graph  representation of it. We share this view, and the goal of  providing reasoning tools for such implementations. We also mention \cite{GorinovaGSV22}, which introduces an information flow type system,  in a \emph{first-order}, imperative setting. 
As  already recalled in the introduction, the theory of Bayesian networks has  been investigated extensively from a categorical viewpoint by Jacobs and Zanasi~\cite{JacobsZ16,JacobsZ,JacobsKZ19,bookJacobs}.

The foundational studies based on \BNs which we have cited above focus on    expressiveness and compositionality. However, they do not take into account space and time consumption of  probabilistic reasoning, which is the very motivation for the introduction \cite{Pearl86} and development of BNs.
As a matter of fact, the cost of  \emph{actually computing the semantics} explodes when taking a categorical \cite{JacobsZ} or relational \cite{EhrPagTas14,EhrhardT19} approach, because of the   product \emph{$\otimes$}, which behaves as the tensor product of matrices (see \Cref{ex:tensor}); 
inductively computing the semantics of $n$ binary \rvs easily leads to intermediate 
computations whose size is much larger than $2^n$ (the size of the full joint distribution).
This fact has already been pointed out in a recent paper  by~\citet{EhrhardFP23}, which advocates the need for a new approach to quantitative semantics, more attentive to the resource consumption.
That paper  imports factors techniques into the setting of \emph{multiplicative linear logic}, essentially  a \emph{linear} $\lambda$-calculus with tuples (roughly, our first-order fragment)---the authors leave as an open challenge the treatment  of linear logic exponentials  (roughly, the calculus of \Cref{fig:simple_types}). Our framework, indeed, is  able to deal with 
a fully-fledged  $\lambda$-calculus, thanks to the intersection type system and the flow-graph techniques built on top of it. 
Actually, our factor-based semantics can be seen as an optimized version of semantics based on the weighted relational model, such as  \cite{LairdMMP13} and  Probabilistic Coherence Space \cite{DanosH02, DBLP:journals/iandc/DanosE11, EhrPagTas14, EhrhardT19}. Finally, we mention that 
\citet{DBLP:journals/pacmpl/ChiangMS23} also use techniques inspired from linear logic to provide a denotational semantics and exact inference procedures to recursive probabilistic programs.

%
%

The idea that   programming languages provide a way to overcome the limitations of
standard Bayesian networks  
has been actively propounded by \citet{KollerP97,PfefferK00}, which introduced a modeling framework   based on  Object-Oriented programming. 
This  approach has eventually led to the object-oriented language Figaro \cite{PfefferBook}.
		
\section{Conclusions}\label{sec:conclusion}

We have presented a \emph{higher-order} probabilistic programming language which  allows  for  the specification
of recursive probability models and hierarchical structures.  Higher-order programs are \emph{compiled into standard \BNs} operationally, via rewriting, and denotationally, via an intersection type system. The novelty of our contribution
is that (1) the compositional  semantics  is based on \emph{factors}, the very mathematical notion which is used to give semantics to \BNs, and which is the basis of \emph{exact} inference algorithms, and (2) our semantics and type system are \emph{resource-sensitive.} The notion of  \emph{resource} appears here in two forms: (1) we precisely track generation and sharing of \emph{random variables}, and (2) we account for the actual \emph{cost} of inference---the cost of computing the semantics of a typed term. We obtain these quantitative results relying  on advanced semantic techniques rooted into linear logic, intersection types, rewriting theory,  and Girard's geometry of interaction, which are here combined in a novel way. 



The fact that our semantics    is  factor-based  implies that  
standard algorithms for exact inference (which are acting on factors) can  be  applied. 
A further natural direction is to investigate inference  techniques   based on term transformations, in the spirit of~\cite{KollerMP97, LAFI2023}.


\begin{acks}
	The authors are in debt with  Thomas Ehrhard and Michele Pagani for many insightful discussions. We  thank  Ugo Dal Lago, Beniamino Accattoli,  Delia Kesner for their useful remarks.  We are also grateful to the anonymous referees, whose valuable comments have 
improved the presentation of this work. 
	This research was supported by the ANR project PPS: ANR-19-CE48-0014. The third author is also supported by the European Union’s Horizon 2020 research and
	innovation programme under the Marie Sklodowska-Curie
	grant agreement No 101034255.
	\end{acks}

\section*{Data Availability Statement}
Missing proofs and more examples are available in the \SLV{Technical Report \cite{long}}{Appendix}. Please notice that 
while in the paper we postpone  to \Cref{sec:evidence} the treatment of types and semantics for the \texttt{observe} construct, 
 the proofs in the \Appendix directly integrate it.

\bibliography{biblioBN}


\newpage

\appendix


\renewcommand{\pink}[1]{#1}   

\newcommand{\iTypes}{\iBang}
\newcommand{\iTypesObs}{\texttt{iTypes+}\xspace}

\section*{APPENDIX}
We include here  proofs  that have been omitted in the body of the article, and some more examples.

\subsection*{Reduction, in the  notation of Explicit Substitutions}
 Since let-terms can be quite heavy to manage, in the Appendix we  usually  employ \textbf{the notation of  explicit substitutions}, which allows for more concise proofs: 

	\[  \begin{array}{cll}
\tm \esub x {\tmu}	 & \defeq &  \letin x \tmu \tm,   \\
 \tm\esub{\pair\var\vartwo}{\val}     &   \defeq  & 	\letp{\pair\var\vartwo}{\val}{\tm}.
\end{array}\]

Substitution lists $\sctx$, evaluation contexts $\ss$, and root reduction rules can therefore be written as follows. As before:
\begin{itemize}
	\item $\shole  {\tm}\sctx$ stands for the term obtained from $\sctx$ by replacing the  hole $\holebag{\cdot}$ with $\tm$,
	\item  $\ss\hole{\tm}$ stands for the term obtained from $\ss$ by replacing the  hole $\ctxhole$ with $\tm$,
\end{itemize}
possibly capturing the free variables of $\tm$.

\[\begin{array}{r@{\hspace{.5cm}}rlll}
	\textsc{Substitution Lists} & \sctx & \grameq &  \shole{\cdot} \mid   \esub{\var}{\tmtwo}\sctx \gpipe \esub{\pair\var\vartwo}{\val}\sctx
\end{array}
\]
\[\begin{array}{lrcl}
	\textsc{Evaluation Contexts} & \ss & \grameq & \ctxhole \gpipe \ss\val \gpipe 
	{\tm}\esub {x}{\ss} \gpipe {\ss}\esub{x}{\tmu}
\end{array}\]
\[\begin{array}{rll@{\hspace{1cm}}rll}
	\multicolumn{6}{c}{\textsc{Root Rules}}\\
	\scp {\la\var\tm}\ \val & \mapsto_{\dB}   &  \scp {\tm\esub\var\val} &
	\tm\esub\var{ \scp {\val}}& \mapsto_{\dS}& \scp{\tm\isub\var\val} \\[2pt]
	\der  !\tm  &\mapsto_{\dbang}& \tm &	{\tm} \esub  {\pair x y}{\pair {\val}{\valtwo}}  &\mapsto_{\dpair}&  
	 {  \tm \esub x {\val}  \esub y {\valtwo}}
\end{array}
\]

Please notice that here we use a  formulation for the $ \mapsto_{\dB}$ and  $\mapsto_{\dpair}$ rules which is slightly different but \emph{equivalent} to that in \Cref{sec:operational}. The advantage (at the level of proofs) is that   actual  substitution is  now  carried by the $\mapsto_{\dS}$ rule only. 

\section{Complements to  Sect.~\ref{sec:calculus} (the Calculus)}\label{app:progress}
We write $\pi\dem_{\low}$ when $\pi$ contains first-order rules, only.
\begin{lemma*}[\ref{lem:progressSimple}, Progress]  
	Let $\tm$ be a $\Bang$-term in \emph{normal} form such that $\pi \dem \LL \vdash \tm : \sL$,
	where $\sL$ and all types in the context $\LL$ are ground. 
	Then $\pi$ only uses first-order rules , \ie $\pi \dem_\low  \LL \vdash \tm : \sL$.
\end{lemma*}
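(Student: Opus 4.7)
The plan is to proceed by structural induction on $\pi$, case-analyzing the last rule applied and ruling out every non-first-order rule. The easy cases are the axioms ($\isample$, $\icond$, $\iobs$, $\ivar$), the rule $\ipair$, and the constant/ground cases: they are first-order by definition, and for $\ivar$ the assumption that $\Lambda$ is ground forces the assigned positive type to be ground. The rules $\iabs$ and $\ibang$ produce respectively a type of shape $\iP\larrow\iA$ and a multiset type $[A_1,\ldots,A_n]$, neither of which is ground, so they cannot be the last rule when the conclusion type $\sL$ is ground.

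For $\ilet$ and $\iletp$ I would apply the induction hypothesis to the premises, after two checks. First, since the whole context is ground, the multiset components $\Gamma_1\uplus\Gamma_2$ of the premises must be empty, so the premise contexts remain ground. Second, since $t$ is in normal form and evaluation contexts reach inside a \texttt{let} (and into the value position of $\iletp$), the sub-terms are also in normal form. For $\iletp$ I additionally observe that the value position cannot be a pair (else a $\dpair$-redex), which combined with the grammar of values forces it to be a variable, typable by $\ivar$. For $\ilet$ I need the intermediate type $\iP$ to be ground; this is the crux of the proof and is discussed below.

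The main obstacle is ruling out $\iapp$ and $\ider$ as the last rule, since both can produce ground types. The case $\ider$ is short: $t=\der v$ with $v:[\sL]$, by the grammar $v$ is a value, and in a ground context the only value that can have multiset type is $!t'$ (variables have ground type from the context, pairs have product type, booleans have ground type), which would make $t=\der(!t')$ a $\dbang$-redex, contradicting normality. The case $\iapp$ requires showing that a normal form $u$ of arrow type in a ground context cannot exist except in a shape that creates a $\dB$-redex when placed in front of an argument.

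I would handle $\iapp$, together with the remaining obligation in the $\ilet$ case, via a single strengthened induction hypothesis covering also non-ground conclusion types. Concretely, I would prove simultaneously: (a) the statement of the lemma; (b) in a ground context, no normal form has multiset type (which, via the $\ilet$ premise analysis, validates step 2 above and also blocks the internal $\iapp$-of-multiset subcase); and (c) any normal form of arrow type in a ground context is of the form $\scp{\lambda x.u}{\sctx}$ for some $u$ and substitution list $\sctx$ (so that application to any value yields a $\dB$-redex at a distance, contradicting normality). Claims (b) and (c) are proved by the same case analysis, with the value-position reasoning of the $\ider$ case providing the key base cases, and with the recursive constructs ($\ilet$, $\iletp$, $\iapp$) handled by feeding back the induction hypotheses to sub-derivations whose contexts remain ground.
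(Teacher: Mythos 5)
Your overall strategy is the paper's: the proof there also proceeds by induction on $\pi$ with a statement strengthened to all conclusion types, characterizing the \emph{shape} of normal forms at non-ground types (arrow type $\Rightarrow$ $\scp{\lambda x.\tmu}$, multiset type $\Rightarrow$ $\scp{\oc\tmu}$, product type $\Rightarrow$ $\scp{\val}$), and your claim (c) and your handling of $\ider$, $\iapp$, $\iletp$ and the axioms match it. However, your claim (b) is false as stated: since reduction is weak (no reduction under $\oc$), the term $\oc\tmu$ is a normal form of multiset type for \emph{any} $\tmu$, in any context --- and your own reasoning in the $\ider$ case, where you observe that the only value of multiset type in a ground context is $\oc\tmu'$, already exhibits such a normal form. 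So (b) cannot be proved, and the step where you use it to conclude that the intermediate type $\iP$ of a $\ilet$ is ground does not go through as written.

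The fix is exactly the shape characterization you already use for arrow types: prove instead that a normal form of multiset type in a ground context has the form $\scp{\oc\tmu}$. Then, in the $\ilet$ case, if $\iP$ were a multiset, the first premise's subject would have this shape, and since $\oc\tmu$ is a value, $\letin{x}{\scp{\oc\tmu}}{\tm}$ is a $\mapsto_{\dS}$-redex at a distance --- contradicting the normality of the \emph{whole} let-term rather than of the sub-term. This is the argument the paper gives. With that replacement (and noting that (b) was the only load-bearing use of the false claim), the rest of your case analysis is sound.
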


\begin{proof}
We will prove a more general result, namely that if $t$ is a $\Bang$-term in \emph{normal} form and $\pi \dem  \LL\vdash t :\sC$, where all types in the  context $\LL$ are ground, then the following properties hold:
\begin{enumerate} 
	\item if $\sC= \oc \sA$ then $\tm = \scp{!u} $;
	\item if $\sC= \sP \larrow \sA$ then $t = \scp{\lam x. \tmu}$; 
	\item if $\sC = \sL_1 \otimes \sL_2$ then $t = \scp{v}$;
	\item if $\sC = \sL$ then $t$ is a \linear term, \ie $\pi \dem_\low \LL \vdash t: \sL$.
\end{enumerate}
The proof proceeds by induction on the type derivation $\pi$, considering its last rule. 
\begin{itemize}	
	\item Case $\rvar$ is immediate. Since we assume that the typing context only contains ground types, necessarily: 
	\[
	\pi \dem \LL', x : \sL \vdash x: \sL
	\]
		
	\item Case $\rpair$. Immediate by \ih: 
	\[
	\infer[\rpair]{ \pi \dem \LL \vdash \pair {\val_1}{\val_2} : \sL_1 \otimes \sL_2}
	{ \LL \vdash \val_1 :  \sL_1 & \LL \vdash \val_2 : \sL_2} 
	\]	
		
	\item Case $\rsample$ and $\rcond$  are immediate.	
	
	\item 	 {Case $\robs$ is  immediate.	}
		
	\item Case $\rbang$ is immediate:
	\[
	\infer[\rbang]{ \pi \dem \LL \vdash \oc \tmu :  \oc \sA}{ \LL \vdash \tmu: \sA  }
	\]	
		
		
	\item Case $\rabs$. Immediate: 
	\[
	\infer[\rabs]{ \pi \dem \LL \vdash \lambda x.u :  \sP \larrow \sA }{\LL, x : \sP \vdash u :  \sA}
	\]
		
	\item Case $\rlet$. The derivation $\pi$ has shape:
	\[ 
	\infer[\rlet]{\pi \dem \LL \vdash u \esub{x}{s}  : \sC }{ \LL \vdash s : \sP & \LL, x : \sP \vdash u : \sC }
	\]
	Observe that $\sP$ is necessarily a ground type. Indeed, $\sP = \oc\sA$ is not possible,
	because by \ih we would have $s = \scp{\oc r}$, making $u \esub{x}{s} $ a redex. 
	By \ih, all the claims hold for both $u$ and $s$, hence they hold for $ u \esub{x}{s}$.
			
	\item Case $\rletp$. Necessarily, $t \defeq u\esub{\pair x y }{v} $ in normal form
	implies $v = z$ (a single variable), because $u \esub{\pair x y }{\pair {v_1} {v_2}} $ would be a redex. So we have	
	\[
	\infer[\rletp]{ \pi \dem \LL \vdash u \esub{\pair x y }{z} : \sC }
	{ \infer[\rvar]{\LL \vdash z : \sL_1 \otimes \sL_2}{}  &
				\LL, x : \sL_1, y : \sL_2 \vdash u : \sC }
	\]
	By \ih, the claim holds for $u$, and therefore for $ u \esub {\pair x y}{z} $.  		
	
\end{itemize}		
The following cases never apply  because of the assumption that $t$ is normal:	

\begin{itemize}

	\item Case $\rapp$.  We have 
	\[
	\infer[\rapp]{\pi \dem \LL \vdash u v : \sC }{\LL \vdash u : \sP \larrow \sC &  \LL \vdash v : \sP}
	\]
	where by \ih $u=\scp{\lam x.s}$, which is not possible, because $u v$ would be a redex.
		
	\item Case $\rder$. This case does not apply, because by \ih we would have $u = \oc s$, making $\der u $ a redex.
	\[
	\infer[\rder]{\pi \dem \LL \vdash \der u : \sA}{\LL \vdash u : \oc \sA}
	\]	
\end{itemize}
\end{proof}


%

By Progress and the fact that typable terms are strongly normalizing, we have
\begin{corollary}
	If $\pi \rhd \LL \vdash t : \sL$, then $\tm  \to^* \tmu$, where $\tmu$ is a \linear term in normal form (with  type derivation 
	$\pi' \dem_\low \LL \vdash u : \sL$).
\end{corollary}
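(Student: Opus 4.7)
The plan is to combine the Progress lemma (\Cref{lem:progressSimple}) with two standard ingredients that are either directly available or easy to adapt: strong normalization of the simply typed $\Bang$-calculus, and subject reduction for simple types. First I would invoke strong normalization to produce a term $\tmu$ in normal form such that $\tm \to^* \tmu$. This property can be obtained directly by a reducibility argument adapted to call-by-push-value with reduction at a distance, but a more economical route is to observe that every simply typable judgment admits a (non-idempotent) intersection type refinement, whereupon \Cref{thm:S-soundness} applies.

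Next I would apply subject reduction for simple types to transfer the typing $\pi \dem \LL \vdash \tm : \sL$ across the reduction sequence $\tm \to^* \tmu$, yielding a simple type derivation $\pi'' \dem \LL \vdash \tmu : \sL$. Although the excerpt states subject reduction only for intersection types (\Cref{prop:subsconv}), the simply typed version follows by the same structural pattern: one checks the four root rules $\mapsto_{\dB}$, $\mapsto_{\dS}$, $\mapsto_{\dbang}$, $\mapsto_{\dpair}$ and then closes under evaluation contexts; the only mild subtlety is bookkeeping substitutions inside substitution lists $\sctx$, but this is routine.

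Finally, I would apply the Progress lemma to $\pi''$. Since $\tmu$ is in normal form, has ground type $\sL$, and is typed in a ground context $\LL$, Progress forces $\pi''$ to use first-order rules only, i.e.\ $\pi' \dem_\low \LL \vdash \tmu : \sL$. By the characterization stated just before \Cref{lem:progressSimple}, a term typable in a ground context with first-order rules alone is precisely a $\Low$-term, so $\tmu$ is low-level, as required.

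The main obstacle I anticipate is not conceptual but organizational: the excerpt does not explicitly record strong normalization or subject reduction for the simply typed fragment, so one has to either prove them directly or factor through the intersection type system. Factoring through intersection types is the cleaner option, since \Cref{thm:S-soundness} and \Cref{prop:subsconv} are already available; the only remaining task is then to exhibit, for every simply typed derivation, a corresponding intersection type derivation with the same underlying reduction behaviour, which is a standard construction.
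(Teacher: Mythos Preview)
Your proposal is correct and follows essentially the same route as the paper: the paper's proof is the single sentence ``By Progress and the fact that typable terms are strongly normalizing,'' which implicitly packages strong normalization, subject reduction, and \Cref{lem:progressSimple} exactly as you unfold them. Your observation that subject reduction for simple types is not stated explicitly and must either be proved directly or obtained via the intersection system is accurate; the paper simply takes it (and strong normalization for the simply typed fragment) as standard.
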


\section{Complements to  Sect.~\ref{sec:itypesSEC} (the Type System)}

\condinc{}{
\subsection{Typed names, more examples}
We give also a second example to illustrate as tracking  random variables become quickly involved even at the first order.
	\begin{example}\label{ex:names2} Consider the term
		\[	\tm \defeq \letin a {\sample d} {\letin h a {\letin b {\CPTN2 h}  {\letin c {\CPTN3 a} {\pattern {a,c, h}}}}}: \Bool \times \Bool \times \Bool\]
		%
		Once again, 	simple type are not informative enough.  We know that the term defines three \rvs $\X_1,\X_2,\X_3$, but the dependencies among them are not immediate.  Again, we track the random variables by naming the types.
				
					
		We can easily obtain a named derivation 
		\begin{center}
			$t:  \Bool_{\X_1} \otimes \Bool_{\X_3} \otimes \Bool_{\X_1}$, 
		\end{center}
		Now we    see that the output is a marginal distribution over the two  variables $\X_1,\X_2$, only. From the probabilistic axioms, we can also easily 
		read the dependencies: $\X_1\rightarrow \X_2, \X_1\rightarrow \X_3$. 
		
	\end{example}

}

\subsection{The Full Type System}
For convenience, we give explicitly the full type system, here denoted \iTypesObs,  which takes into account also  the construct $\obsb x$,
as we have described in \Cref{sec:evidence}.

 The grammar of types is as follows, where  $\X\in \Names$:
\[\begin{array}{lrcl}
	\textsc{Atomic Types}& \XX	 & \grameq&  \X \gpipe   { \Xt } \gpipe { \Xf}  \\[4pt]
	\textsc{Ground Types}&	 \iK, \iL & \grameq& { \XX}  \gpipe \iK\otimes \iL  \\[4pt]
	\textsc{Positive Types}&	 \iP,\iQ & \grameq & L \gpipe\, \mset{A_1,\ldots,A_n}  \\[4pt]
	\textsc{Types}&\iA,\iB &\grameq&  \sP \gpipe \iP \larrow \iA  
\end{array} \]

The typing rules for the full system \iTypesObs are in \Cref{fig:iTypes_obs}. 
\begin{figure}[H]
	\small
	\begin{mybox}		
		\textbf{Higher-order calculus}
		\begin{myboxC}
			\textbf{First-order rules}
			\begin{gather*}
				\infer[\isample]{ \Lambda \vdash {\sample{d}} : \XX}{ \X \not \in \Nm{\Lambda} } 
				\qquad
				\infer[\icond]{ \Lambda, y_1 : \YY_1, \dots, y_n : \YY_n \vdash \CPT{y_1,\dots, y_n} : \XX }
				{ \X \not \in \{\Y_1,\dots,\Y_n\} \tand \X \not \in \Nm{\Lambda} } 
				\\[4pt]	
					\infer[\iobs]{ \Lambda, x:\Xb  \vdash \obsb{x} : \Xb}{\quad \quad \bv \in \{\true,\false\}} 
				\qquad
				\infer[\ivar]{ \Lambda, x : \iP \vdash x : \iP}{}   	
				\qquad
				\infer[\ilet]{ \Lambda, \Gamma_1\uplus\Gamma_2  \vdash \letin{x}{\tmu}{\tm} : \iA }
				{ \Lambda, \Gamma_1 \vdash \tmu : \iP  &  \Lambda, \Gamma_2, x : \iP \vdash \tm : \iA }
				\\[4pt]
				\infer[\ipair]{ \Lambda \vdash \pair{v}{w} :  \iL_1 \otimes \iL_2 }
				{ \Lambda \vdash v : \iL_1  &  \Lambda  \vdash w :\iL_2 } 
				\qquad
				\infer[\iletp]{ \Lambda,   \Gamma  \vdash \letp{\pair x y }{\val}{t} :  \iA }
				{ \Lambda \vdash \val : \iL_1 \otimes \iL_2 
					& \Lambda, x : \iL_1, y : \iL_2, \Gamma \vdash t : \iA 
				}
			\end{gather*}
		\end{myboxC}
		\vskip -.4cm
		\begin{gather*}
			\infer[\iabs]{ \Lambda, \Gamma \vdash  \lam x.t : \iP \larrow \iA }{ \Lambda, \Gamma, x : \iP \vdash t :  \iA}
			\qquad
			\infer[\iapp]{ \Lambda, \Gamma_1\uplus\Gamma_2 \vdash tv : \iA }
			{ \Lambda, \Gamma_1 \vdash t : \iP \larrow \iA  &  \Lambda, \Gamma_2 \vdash v : \iP}
			\\[4pt]
			\infer[\ibang]{  \Lambda,   \biguplus_i\Gamma_i \vdash \oc \tm : \mset{\iA_1,\dots,\iA_n} }{ \big( \Lambda, \Gamma_i \vdash \tm: \iA_i \big)_{i = 1}^n  }
			\qquad
			\infer[\ider]{ \Lambda,   \Gamma \vdash \der \val :  \iA}
			{ \Lambda, \Gamma \vdash v : \mset \iA}
		\end{gather*}
	\end{mybox}
	\caption{The full intersection type system:  \iTypesObs }
	\label{fig:iTypes_obs}
\end{figure}

\subsection{Properties of the Type System}
\paragraph{Subject Reduction, Subject Expansion, Termination and Prograss.} The extensive proofs of Subject Reduction and Subject Expansion  are  in \Cref{sec:subject-reduction} and \Cref{sec:subject-expansion}, respectively.  The statements are strengthened to obtain several other properties we need.
The proof of Progress is very similar to the proof in \Cref{app:progress}. 
Please notice that the results mentioned above hold both for the system in \Cref{fig:iTypes} and  for the full system \iTypesObs  in \Cref{fig:iTypes_obs}. The proofs are carried out in the latter. 

\paragraph{Unique type derivation.}
A separate treatment in  needed for the property of unique type derivation in \Cref{prop:uniqueness}: this property---in the way it is stated in \Cref{sec:itypesSEC}---only holds for the system in \Cref{fig:iTypes}, that is a system 
with \emph{no}  observed types, but \emph{not}  for the  full system (\Cref{fig:iTypes_obs}). 
In \Cref{sec:unique_general} we generalize  the   property of "unique derivation", providing a  form which holds   in general.

\subsection{Unique Derivation (\Cref{prop:uniqueness}), holding for the system in \Cref{fig:iTypes}}	

The interest  of \Cref{prop:uniqueness} (and its immediate corollary \Cref{cor:unique})  is that given a  term  $\tm$ in BN normal form,  then its \emph{semantics is uniquely   determined}, because  its  type derivation $\pi$ 
 is  uniquely determined. The same holds for any term $\tm$ which reduces to BN normal form.
In fact, if $\tm$ is closed, we 
 can  simply (and \emph{uniquely})  write $\sem \tm$ for $\sem \pi$.

To prove \Cref{prop:uniqueness} we  need a preliminary lemma.
\begin{lemma}\label{lem:uninf}
	Let  $\tm$ be a $\Low$-term, and $\Lambda$  a ground context. If there exists a type derivation 
	$\pi \dem_\low \tjudg{\Lambda}{\tm}{\iL}$, then it is unique.
\end{lemma}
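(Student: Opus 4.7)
The plan is to proceed by structural induction on the $\Low$-term $\tm$, exploiting two features of the first-order fragment of \Cref{fig:iTypes}: (a) the typing rules are syntax-directed, so each term constructor admits exactly one candidate rule, and (b) no first-order rule introduces or eliminates a multiset type, hence every positive type occurring in a first-order derivation is ground. Together (a) and (b) remove most of the apparent nondeterminism in type assignment: there is never a choice of which rule to apply, nor a choice of whether a type is ground versus a multiset.

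I would then walk through the cases of $\tm$. The easy cases are those where the conclusion directly pins down every sub-type: $\tm = x$ forces $\iL = \Lambda(x)$ via $\ivar$; $\tm = \pair{v}{w}$ forces the unique splitting $\iL = \iL_1 \otimes \iL_2$ via $\ipair$ and the IH takes over; $\tm = \sample d$ and $\tm = \CPT{y_1,\ldots,y_n}$ force $\iL$ to be a name and, in the $\icond$ case, the premise names $\Y_i$ are read off from $\Lambda(y_i)$; $\tm = \obsb{x}$ forces $\iL = \Xb$ via $\iobs$. For $\tm = \letp{\pair x y}{v}{\tm'}$ I would use the fact that in a $\Low$-term $v$ cannot be an explicit pair (otherwise we would have a $\dpair$-redex), so $v$ is a variable $z$, and $\Lambda(z) = \iL_1 \otimes \iL_2$ uniquely fixes both factors; the sub-derivation for $\tm'$ under $\Lambda, x{:}\iL_1, y{:}\iL_2$ is then unique by IH.

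The main obstacle is the case $\tm = \letin{x}{\tmu}{\tm'}$, where the ground type $\iL'$ assigned to $x$ is not visible in the conclusion. The plan here is a nested case analysis on $\tmu$ (itself a $\Low$-term): if $\tmu$ is a value, $\iL'$ is determined structurally by $\tmu$ and $\Lambda$; if $\tmu$ is a probabilistic axiom, $\iL'$ is its main name, which by the naming convention of \Cref{sec:itypes} is globally fixed (main names are required to be pairwise distinct across $\pi$, and we consider derivations up to this canonical naming); if $\tmu$ is itself a let, we apply the inductive reasoning one level deeper. Once $\iL'$ is pinned down, the IH yields uniqueness of the derivation of $\tmu : \iL'$ in $\Lambda$ and of $\tm' : \iL$ in $\Lambda, x{:}\iL'$, and hence uniqueness of $\pi$.

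The only delicate point—an apparent loss of uniqueness when $x$ does not occur free in $\tm'$ and $\tmu$ contains a probabilistic axiom—is entirely absorbed by the naming convention: each probabilistic axiom in $\pi$ comes with a fixed main name, so there is no residual freedom in $\iL'$. With this convention in place, all cases of the induction close and the lemma follows.
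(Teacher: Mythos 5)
Your proposal follows essentially the same route as the paper's proof: structural induction on the $\Low$-term, exploiting syntax-directedness of the first-order rules and the fact that the ground context, together with the canonical naming of probabilistic axioms, determines the derived type. You correctly identify the crux ($\letin{x}{\tmu}{\tm'}$, where the type of $x$ is invisible in the conclusion); the paper resolves it by reading the induction hypothesis in its strengthened form --- for each $\Low$-term and ground context there is at most one pair $(\iL,\pi)$ --- and your nested case analysis on $\tmu$ is just a hand-unrolling of that same hypothesis, so it is correct but more laborious than necessary.

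Two points deserve flagging. First, in the $\iletp$ case you argue that $v$ cannot be an explicit pair because that would create a $\dpair$-redex. This conflates $\Low$-terms with BN normal forms: the lemma is stated for \emph{arbitrary} $\Low$-terms, so $v$ may well be $\pair{v_1}{v_2}$. This is a misstatement rather than a gap, since the excluded case is handled by exactly the same argument (the type of any low-level value is determined by $\Lambda$ and the value's structure), but as written the case split is justified by a false premise. Second, the paper's proof explicitly remarks that $\pi$ contains no $\iobs$ rules and no observed types, because the lemma is scoped to the system of \Cref{fig:iTypes}; once observed types are admitted, uniqueness genuinely fails (e.g.\ $\sample{d}$ can be typed $\X$, $\Xt$, or $\Xf$), which is precisely why the paper introduces the separate notion of general derivation. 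Your $\obsb{x}$ case is therefore harmless only under the tacit assumption that observed types do not occur anywhere else in the derivation; you should make that scoping explicit.
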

\begin{proof}
	The proof proceeds by induction on $t$. 
	Remark that $\pi$ does not contain \iobs rules, nor observed types.
	\begin{itemize}
		\item Case $t = x$.
		Necessarily $\Lambda = \Lambda', x : \iL$, 
		therefore the derivation $\pi \dem \Lambda', x : \iL \vdash x : \iL$ is uniquely determined.
		\item Cases $t = \sample{d}$ and $t = \CPT{x_1,\dots,x_n}$ are immediate, 
		as the derivation $\pi$ is uniquely determined by the choice of $\Lambda$.
		\item Case $t = \pair{v_1}{v_2}$. 
		By \ih there are is at most one $\iL_i$ such that $\pi_i \dem \Lambda \vdash v_i : \iL_i$ ($i \in \{1,2\}$); 
		moreover, if they exists, $\pi_1$ and $\pi_2$ are both unique. 
		Then there is only one way to obtain the derivation $\pi \dem \Lambda \vdash \pair{v_1}{v_2} : \iL$, 
		where $\iL = \iL_1 \otimes \iL_2$, by rule \ipair.
		\item Case $t = \letin{x}{u}{s}.$ By \ih there is at most one $\iK$ such that $\pi_1 \dem \Lambda \vdash u : \iK$.
		Similarly, by \ih there is at most one $\iL$ such that $\pi_2 \dem \Lambda, x : \iK \vdash s : \iL$, 
		Moreover, if they exist, both $\pi_1$ and $\pi_2$ are unique. 
		Then there is only one way to obtain the derivation $\pi \dem \Lambda \vdash \letin{x}{u}{s} : \iL$, by rule \ilet.
		\item Case $t = \letp{\pair{y_1}{y_2}}{v}{u}$. Similar to the previous case. 
	\end{itemize}
\end{proof}


		\begin{prop*}[\ref{prop:uniqueness}, Unique Derivation]Let  $\tm$ be a $\Bang$-term, and $\Lambda$  a ground context. Then there exists at most one type derivation $\pi$ such that
		$\pi \dem \tjudg{\Lambda}{\tm}{\iL}$.
	\end{prop*}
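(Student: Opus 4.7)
The plan is to proceed by structural induction on $\tm$, in the spirit of the proof of Lemma \ref{lem:uninf} for $\Low$-terms. The base cases—when $\tm$ is a variable, $\sample d$, $\CPT{y_1,\ldots,y_n}$, or $\obsb x$—are immediate: the rule that can conclude the derivation is uniquely determined by the syntactic form of $\tm$, and all types appearing in the premises are fixed by the ground context $\Lambda$ together with the given ground type $\iL$ of the conclusion. The pair case $\tm=\pair{v}{w}$ also goes through smoothly, because ground types are freely generated from $\otimes$ and atomic names, so the decomposition $\iL=\iL_1\otimes\iL_2$ is unique, and the inductive hypothesis applies to the premises $\Lambda\vdash v:\iL_1$ and $\Lambda\vdash w:\iL_2$.

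The genuinely difficult case is $\tm=\letin{x}{u}{s}$, and similarly the higher-order constructs involving $\oc$, $\mathtt{der}$, $\lambda$, and application, where the intermediate positive type $\iP$ assigned to $u$—and thereby the context type of $x$ in the derivation of $s$—does not appear in the final judgment and is therefore not immediately pinned down by the conclusion. A naive induction would leave a degree of freedom in $\iP$ (for instance the choice of a multiset $[\iA_1,\ldots,\iA_n]$ when $u$ is a thunk), which is precisely the gap that must be closed.

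To overcome this, I would exploit the operational semantics. By Theorem \ref{thm:compileI} together with confluence and diamond from Proposition \ref{diamond}, the term $\tm$ reduces to a \emph{unique} BN normal form $\tmu$, which is a $\Low$-term; by Lemma \ref{lem:uninf}, $\tmu$ admits a unique first-order type derivation $\pi'\dem_\low\Lambda\vdash\tmu:\iL$. The uniqueness of $\pi\dem\Lambda\vdash\tm:\iL$ would then follow by induction on the length of the reduction sequence $\tm\to^\ast\tmu$, provided each single reduction step $\tm\to\tmtwo$ admits at most one anti-reduction at the level of derivations: that is, if $\pi_1$ and $\pi_1^\ast$ are two derivations of $\Lambda\vdash\tm:\iL$ whose subject-reductions both yield the same derivation of $\Lambda\vdash\tmtwo:\iL$, then $\pi_1=\pi_1^\ast$.

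The main obstacle is therefore establishing this \emph{injectivity of subject expansion}, which must be verified case by case for the four rewriting rules $\mapsto_\dB$, $\mapsto_\dS$, $\mapsto_\dbang$, and $\mapsto_\dpair$. This step is plausible precisely because the type system is non-idempotent: each multiset instance tracked in the derivation corresponds to a specific sub-derivation that will be consumed by a reduction step, so the operation on derivations induced by a reduction can in principle be read backwards from the redex position. Concretely, for $\mapsto_\dbang$ the multiset $[\iA]$ collapses to its single instance; for $\mapsto_\dS$ and $\mapsto_\dpair$ the substituted types are recovered by inspecting the ivar occurrences replaced by the substituted value; for $\mapsto_\dB$ the argument type is determined by the type of the abstracted variable. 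Once this per-step injectivity is established, combining it with subject expansion (Proposition \ref{prop:subsconv}) and the uniqueness at the normal form yields uniqueness for $\pi$, completing the induction.
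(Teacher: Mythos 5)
Your proposal is correct and follows essentially the same route as the paper's own proof: the paper likewise abandons direct structural induction, argues by induction on the length of the reduction sequence to the low-level normal form, invokes Lemma~\ref{lem:uninf} at the base case, and discharges the inductive step via a strengthened Subject Expansion theorem (Theorem~\ref{thm:subject-expansion}) whose ``moreover'' clause is exactly the uniqueness-propagation you identify as the main obstacle, established by a correspondingly strengthened anti-substitution lemma with the case analysis on the four root rules that you sketch. No gaps.
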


\begin{proof}
Let us suppose $\tm$ is typable as $\pi \dem \tjudg{\Lambda}{\tm}{\iL}$. We prove that $\pi$ is unique. By \Cref{thm:S-soundness}, we have that $\tm$ is strongly normalizing. Then we argue by induction on the number of steps $n$ needed to reach its normal form.
\begin{itemize}
	\item If $n=0$, then $\tm$ is in normal form. In particular, by \Cref{thm:compileI}, we have that $\tm$ is $\low$-term and therefore by \Cref{lem:uninf} satisfies the claim.
	\item If $n>0$, then we have $\tm\to\tmtwo\to^{n-1}\tmthree$, where $\tmthree$ is in normal form. By \Cref{prop:subsconv} $\tm$ is typable as $\pi \dem \tjudg{\Lambda}{\tm}{\iL}$ if and only if $\tmtwo$ is typable as $\pi' \dem \tjudg{\Lambda}{\tmtwo}{\iL}$. Then, we apply the \ih to $\tmtwo$, obtaining that if there exists $\pi' \dem \tjudg{\Lambda}{\tmtwo}{\iL}$, then $\pi'$ is unique. We conclude by \Cref{thm:subject-expansion}.
\end{itemize}
\end{proof}

%
%

\newcommand{\general}{general\xspace}
\newcommand{\canonical}{canonical\xspace}
\newcommand{\ObsT}[1]{\mathtt{Obs}(#1)}

\newcommand{\gen}[1]{#1^*}
\newcommand{\moreg}{\succeq}
\newcommand{\lessg}{\preceq}

\newcommand{\skeleton}[1]{\widehat{#1}}

\subsection{Unique General Derivation, holding for the full  type system (with \texttt{observe})}\label{sec:unique_general}
When we consider  the full system  \iTypesObs (in \Cref{fig:iTypes_obs}), which includes \emph{observed types}, then Subject Reduction, Subject Expansion,  Progress,  and \Cref{thm:compileI} still hold, in the same formulation of \Cref{sec:TS_properties}. 
The \emph{unique derivation}  property in \Cref{prop:uniqueness} instead does not hold as stated. For example, the term $\sample d$ can be typed in three valid ways:
\[\vdash \sample d: \X   \qquad \vdash \sample d: \Xt  \qquad \vdash \sample d: \Xf\]

We notice however that  all the three derivations above have the same shape  $\vdash \sample d: \XX,$ for $\XX\in \{\X,\Xt,\Xf\}$, and that the type $\X$ can be seen as   ``\emph{more general}'' than $\Xt$ and $\Xf$. 
This remark turns out to be a  property of all derivations of ground type, and allows us to generalize \Cref{prop:uniqueness} in a way that holds for the full type system.

\subsubsection{Unique General Derivation}
 Recall that an atomic type $\XX$ may assume  two forms: either as observed type ($\Xt,\Xf$), or as    unobserved type $\X$.  Since the type system is \emph{syntax driven},  the  term itself  can stipulate that  some  types are observed, via the construct $\obsb x$, which is  reflected in a $\iobs$-rule. 
We prove that if a closed term $\tm$ is typable, it admits \emph{exactly one}  derivation $\pi$ which is \general, in the sense that  (intuitively) it contains \emph{no more information than  that  provided by the term} $\tm$ (\Cref{prop:general}). Moreover, all the other valid  type derivations for the same term $\tm$ are obtained as refinement of $\pi$ (\Cref{prop:most_general}).\\

 Let us first formalize the intuitive notion that
a type derivation is \general if it declares as observed  all and only the types that the term prescribes  as observed. 
\begin{definition}[General Derivation] A type derivation $\pi$ in  system $\iTypes^+$ is \emph{\general} if  each atomic type which appears in $\pi$ and which does no  
 type the subject of any $\iobs$ rule, is  \emph{unobserved}	 (\ie, has form $\X$ and not $\Xb$).
\end{definition}

The proposition below (that we prove in Sect.~\ref{sec:general_proof}) allows us to recover uniqueness. Please notice that any derivation in the type system of \Cref{fig:iTypes} is \general. This way, \Cref{prop:uniqueness} is a special case of 
\Cref{prop:general}.
  \begin{proposition}[Unique \general derivation]\label{prop:general} Let  $\tm$ be a closed  $\Bang$-term. In   system  \iTypesObs, 
	if there exists a  derivation 
	$\pi \dem \tjudg{}{\tm}{\iL}$, then there exists a \emph{unique} \emph{\general}  derivation
	$ \pi'\dem \tjudg{}{\tm}{{\iL'}}$. 
\end{proposition}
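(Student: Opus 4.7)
The plan is to reduce the question to BN normal forms (using Theorem~\ref{thm:compileI}) and then mimic the strategy of Proposition~\ref{prop:uniqueness}, with one extra ingredient: a careful analysis of how the decoration of atomic types is propagated through a derivation.

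\textbf{Step 1 (Characterizing observed names).} I would first observe that, thanks to the in-derivation consistency requirement on atomic types, the set of \emph{decorated} names in any derivation $\pi$ is exactly the set of names $\X$ that appear as the subject name of some $\iobs$ rule in $\pi$. Indeed, an $\iobs$ rule forces its subject to carry $\Xb$ for some $\bv$, and consistency then propagates this decoration to every occurrence of $\X$ in $\pi$. Consequently, $\pi$ is general iff every name of $\pi$ that does \emph{not} appear as the subject of an $\iobs$ rule is undecorated. This characterization turns ``being general'' into a condition that can be decided rule by rule.

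\textbf{Step 2 (BN normal form case).} For a closed term $\tmu$ in BN normal form, I would prove the statement by induction on $\tmu$, in the style of Lemma~\ref{lem:uninf}. The $\iobs$ case is forced; the $\isample$ and $\icond$ cases fix a single fresh name which, by generality, must be undecorated; the $\ilet$, $\iletp$, $\ipair$ and $\ivar$ cases propagate types from the subterms and the context, and the inductive hypothesis yields a unique general derivation for each subterm. Existence is obtained by constructing the derivation whose fresh names are all undecorated (and match, via consistency, whatever decoration is demanded by an $\iobs$ rule further up). Uniqueness follows because, once generality is imposed, all the unobserved names are forced to be undecorated, while the observed ones must match, at every position where they appear, the unique decoration coming from the $\iobs$ rule.

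\textbf{Step 3 (Lifting to arbitrary closed terms).} For existence, take any derivation $\pi\dem\vdash \tm:\iL$; reduce $\tm$ to its BN normal form $\tmu$ via Theorem~\ref{thm:compileI} and apply Subject Reduction (Proposition~\ref{prop:subsconv}) to obtain a derivation of $\tmu$; Step 2 gives a general derivation $\rho'\dem\vdash \tmu:\iL'$; finally, Subject Expansion yields a derivation $\pi'\dem\vdash \tm:\iL'$. I would then check that $\pi'$ is general: reduction and expansion do not create or destroy $\iobs$ rules (they are structural), so the set of observed names is invariant, and Step 1 ensures that generality is preserved. For uniqueness, given two general derivations $\pi_1',\pi_2'\dem\vdash \tm:\iL_i'$, reduce both to derivations of $\tmu$ by Subject Reduction; generality is preserved, so Step 2 forces the two reducts to coincide, and the determinism of Subject Expansion in non-idempotent intersection systems forces $\pi_1'=\pi_2'$.

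\textbf{Main obstacle.} The delicate point is Step 2, specifically the $\ilet$ case, where the type $\iP$ bound to $x$ is shared between the two premises: one has to verify that in a general derivation the decoration of any name occurring in $\iP$ is fully determined by (i) the $\iobs$ rules below it in both premises and (ii) consistency of the derivation. Equivalently, one must show that the flow of names through the rule prevents two different general derivations from disagreeing on the decoration of $\iP$. The flow-graph machinery of Section~\ref{sec:flow} (in particular Lemma~\ref{lem:jointree}) is exactly the right tool for this: it ensures that every position carrying a name $\X$ is connected to the $\iobs$ or probabilistic axiom that introduces $\X$, so the decoration at any position is propagated deterministically from a unique source.
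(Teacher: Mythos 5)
Your Step~1 is sound and matches the paper's use of the set $\ObsT{\pi}$ of observed atoms, but the rest of the strategy (normalize, induct on BN normal forms, lift via subject reduction/expansion) runs into a gap that your own ``main obstacle'' paragraph does not resolve. The problem is that \emph{generality is not compositional}: in a \ilet rule, a name $\X$ that types the subject of an \iobs rule in one premise may also occur in the other premise (through the shared type $\iP$ or the context), where---by the pairwise-consistency convention---it must carry the same decoration $\Xb$ even though it types no \iobs subject \emph{there}. Hence the sub-derivations of a general derivation are in general not themselves general, and conversely the unique general derivations of the subterms given by your inductive hypothesis (say $\vdash u : \X$ undecorated) need not compose into any derivation when the other premise demands $x:\Xt$. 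So ``the inductive hypothesis yields a unique general derivation for each subterm'' does not let you build or characterize the general derivation of the compound term. The flow graph and Lemma~\ref{lem:jointree} are not the missing ingredient: they concern connectivity of name occurrences, whereas consistency already makes the decoration a global, per-name property of the whole derivation. What is needed is a decoration-independent invariant, and this is exactly what the paper supplies: it defines the \emph{skeleton} of a derivation (erase all decorations and all $\mathtt{obs}$ constructs), observes that skeletons live in the observation-free system and are therefore unique by Proposition~\ref{prop:uniqueness}, and proves separately that two derivations with the same skeleton can differ only on atoms typing no \iobs rule---so their generalizations coincide, and a general derivation equals its own generalization. Existence, moreover, is obtained in the paper by a direct syntactic generalization of \emph{any} given derivation (Lemma~\ref{lem:generalization}); no detour through normal forms is required.

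Your Step~3 has a second gap on the uniqueness side. You reduce two general derivations $\pi_1',\pi_2'$ of $\tm$ to derivations of the normal form and conclude $\pi_1'=\pi_2'$ from ``the determinism of Subject Expansion''. But Theorem~\ref{thm:subject-expansion} transports uniqueness from the reduct to the source only when the reduct's derivation is unique \emph{in the full system}, which fails here (already $\sample{d}$ admits three derivations, with types $\X$, $\Xt$, $\Xf$). What you would actually need is injectivity of subject reduction on general derivations, or a version of subject expansion that is deterministic relative to general derivations of the reduct; neither is established. Likewise, the claims that reduction preserves the set of \iobs rules and preserves generality are plausible (they parallel the preservation of probabilistic axioms proved in the Appendix) but are asserted rather than proved. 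All of these complications disappear in the paper's route: handle the decoration by the skeleton argument, and delegate the structural uniqueness to Proposition~\ref{prop:uniqueness} as a black box.
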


If $\tm$ is a closed term, and	$ \pi\dem \tjudg{}{\tm}{{\iL}}$ is its \general derivation, then any other  type derivation for $\tm$ 
is obtained by refinement, in the following sense.
We define an order relation  on atomic types: 
\begin{center}
	$\X \moreg  \Xb$ ($\X$ is \emph{more general} than $\Xb)$, for each $\X\in \Names$.
\end{center}
The order extends to all types, and to type derivations, in the natural  way, \emph{point-wise}. 

\begin{proposition}[Most \general derivation]\label{prop:most_general} Let  $\tm$ be a closed  $\Bang$-term and $ \pi\dem \tjudg{}{\tm}{{\iL}}$ its \general derivation. Then  $ \pi \moreg \pi'$ for any derivation 
	$\pi' \dem \tjudg{}{\tm}{\iL'}$.
\end{proposition}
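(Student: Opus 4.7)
\textbf{Proof proposal for Proposition \ref{prop:most_general}.} The plan is to show that any derivation $\pi'$ of $\vdash t : \iL'$ can be \emph{generalized} into a general derivation of $t$, which by Proposition \ref{prop:general} must coincide with $\pi$; this will establish $\pi \moreg \pi'$ directly. The key observation is that, in the type system $\iTypesObs$, the only typing rule whose side condition depends on whether an atomic type is of the form $\X$ or $\Xb$ is $\iobs$ itself (which requires its subject to have an observed type $\Xb$). All other rules operate uniformly on atoms, caring only about which name $\X$ is used.

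First I would formalize the generalization operation. Given $\pi'$, let $\mathcal{O}(\pi')$ be the set of names $\X$ that occur as $\Xb$ (for some $\bv$) somewhere in $\pi'$, and let $\mathcal{O}_0(\pi') \subseteq \mathcal{O}(\pi')$ be the subset of those names that appear as the subject of at least one $\iobs$ rule in $\pi'$. By the consistency requirement on atomic types in a derivation, for each $\X \in \mathcal{O}(\pi')$ there is a unique $\bv_\X \in \{\true,\false\}$ such that every occurrence of name $\X$ in $\pi'$ is the type $\X^{\bv_\X}$. Define $\gen{\pi'}$ as the syntactic object obtained from $\pi'$ by replacing every occurrence of $\X^{\bv_\X}$ with $\X$, for every $\X \in \mathcal{O}(\pi') \setminus \mathcal{O}_0(\pi')$.

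Next I would verify by induction on $\pi'$ that $\gen{\pi'}$ is a valid derivation in $\iTypesObs$. The routine cases are all non-$\iobs$ rules: they are insensitive to the observed/unobserved status of atoms, so replacing $\Xb$ by $\X$ preserves validity (the side conditions such as $\X \notin \Nm{\Lambda}$ only mention names, not observation tags, and the consistency constraint is preserved since we substitute uniformly). The $\iobs$ rule instances in $\pi'$ survive unchanged in $\gen{\pi'}$, since their subject names lie in $\mathcal{O}_0(\pi')$ and are therefore not touched by the generalization. Hence $\gen{\pi'} \dem\, \vdash t : \gen{\iL'}$ is a valid derivation. Moreover, by construction every atom in $\gen{\pi'}$ that does not type the subject of an $\iobs$ rule is unobserved, so $\gen{\pi'}$ is \emph{general}.

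Finally, by Proposition \ref{prop:general}, $t$ admits a unique general derivation, which is $\pi$; hence $\gen{\pi'} = \pi$. Since $\gen{\pi'}$ is obtained from $\pi'$ by replacing atoms $\Xb$ by their unobserved counterparts $\X$ (with $\X \moreg \Xb$), we have $\gen{\pi'} \moreg \pi'$ by definition of the pointwise order, and therefore $\pi \moreg \pi'$, as required.

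\textbf{Main obstacle.} I do not expect serious difficulties. The one point requiring attention is the bookkeeping for the consistency condition on atomic types within a derivation: one must ensure the substitution $\Xb \mapsto \X$ is applied \emph{uniformly across the whole derivation} for each $\X$ being generalized, and that this is compatible with rule-by-rule validity. This is handled cleanly because (i) consistency in $\pi'$ guarantees a well-defined $\bv_\X$ per name, and (ii) no rule other than $\iobs$ constrains the observation status of the atoms it manipulates. A secondary minor subtlety is that $\mathcal{O}_0(\pi')$ must equal the set of observed names in $\pi$; this follows from the fact that the $\iobs$ rule instances are determined by the syntax of $t$ (each $\obsb{x}$ subterm forces exactly one $\iobs$ application), and is also a by-product of the uniqueness argument used to prove Proposition \ref{prop:general}.
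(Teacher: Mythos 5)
Your proof is correct and follows exactly the route the paper intends: the generalization operation you construct is precisely the paper's Lemma~\ref{lem:generalization} (which the paper states as ``immediate to check by inspecting the typing rules''), and combining it with the uniqueness of the general derivation from Proposition~\ref{prop:general} plus the observation that $\gen{\pi'} \moreg \pi'$ by construction is the intended argument. Your handling of the consistency condition and of the fact that only \iobs{} constrains the observation status of atoms matches the paper's setup.
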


 \subsubsection{Proof of  \Cref{prop:general}}\label{sec:general_proof}
 \paragraph{Existence.} We first establish the existence of a \general derivation, for every term which has a type derivation.
The following property is  immediate to check by inspecting the typing rules.

 \begin{lemma}[Generalization]\label{lem:generalization}Let $\pi \dem \tjudg{\Sigma}{\tm}{\iA}$ be  a type derivation, and let   $\ObsT{\pi}$ be the  set of the atomic types which occur as the subject of an $\iobs$ rule. 
 		 Replacing in $\pi$ all occurrences of  atom  $\Xb$  with $\X$, for each atom which appears in $\pi$ but does not belong to 
 		 $\ObsT{\pi}$ yields a valid type derivation, written $\gen \pi \dem \tjudg{\gen \Sigma}{ \tm}{\gen \iA}$.  Moreover, the derivation $\gen{\pi}$ is \emph{general}; we  call  $\gen \pi$  the \emph{generalization} of $\pi$.
 \end{lemma}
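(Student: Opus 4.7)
The argument is a straightforward induction on the structure of $\pi$, but before carrying it out we isolate the key global observation that makes the uniform replacement well-defined. Recall that atomic types appearing in a type derivation must be pairwise consistent: all occurrences of atoms with a given name $\X$ are either all $\X$, all $\Xt$, or all $\Xf$. Therefore, for each name $\X$ the operation ``replace $\Xb$ by $\X$ wherever $\X \notin \ObsT{\pi}$'' is unambiguous and leaves all other atoms untouched. Call this operation $(\cdot)^{*}$; we need to show that $\gen\pi$ is a valid derivation of $\gen\Sigma \vdash \tm : \gen\iA$, and that it is general.

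The plan is to proceed by induction on $\pi$, examining the last rule. In each case the side conditions of the rule refer only to \emph{names} (e.g.\ $\X \notin \Nm{\Lambda}$ in $\isample$ and $\icond$), not to whether atoms are observed, so they are preserved by $(\cdot)^{*}$. The matching constraints between premises and conclusion (e.g.\ the shared type $\iP$ in $\ilet$, the positive type $\mset{A}$ in $\ider$, the multiset in $\ibang$, the product $\iL_1 \otimes \iL_2$ in $\iletp$ and $\ipair$) are all preserved because the replacement is performed uniformly on every atom of a given name throughout the whole derivation. For the axiom cases $\isample$, $\icond$, $\ivar$ one simply checks that the rewritten judgment is still an instance of the rule. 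For the inductive cases one applies the induction hypothesis to the premises and reassembles the rule.

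The only case that needs dedicated attention is $\iobs$. If $\pi$ ends in $\Lambda, x : \Xb \vdash \obsb{x} : \Xb$, then by the very definition of $\ObsT{\pi}$ we have $\X \in \ObsT{\pi}$, so the operation $(\cdot)^{*}$ does \emph{not} touch $\Xb$ here; the rule is preserved verbatim. Because of consistency, this also means that any other occurrence of an atom with name $\X$ elsewhere in $\pi$ is already $\Xb$ and is likewise left unchanged — so the induction goes through without any mismatch between premises and conclusion.

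Finally we argue that $\gen\pi$ is general: by construction, every atom occurring in $\gen\pi$ is either (a) an atom whose name is not in $\ObsT{\pi}$, in which case it has been rewritten to its unobserved form $\X$, or (b) an atom whose name is in $\ObsT{\pi}$, in which case it names the subject of some $\iobs$ rule of $\pi$ (hence of $\gen\pi$, since $\iobs$ rules are preserved). Thus every atom in $\gen\pi$ that does not type the subject of an $\iobs$ rule is unobserved, which is exactly the definition of general. The main (mild) obstacle in writing this out is bookkeeping: one must be careful that the consistency property of the original derivation is what guarantees that the uniform global replacement is coherent with the local typing constraints of each rule; the proof collapses to a sequence of routine checks once this observation is made.
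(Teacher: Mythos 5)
Your proof is correct and follows the same route as the paper, which dismisses the lemma as ``immediate to check by inspecting the typing rules''; your structural induction is precisely that inspection, spelled out. The two observations you isolate --- that the pairwise-consistency of atoms makes the uniform replacement coherent across rule boundaries, and that atoms typing the subject of an $\iobs$ rule are exactly those left untouched so that rule remains a valid instance --- are the right ones, and the generality of $\gen\pi$ then follows by construction as you say.
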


 \paragraph{Uniqueness.}  To prove that a closed term of ground type has  a \emph{unique} \general derivation, we rely on $\Cref{prop:uniqueness}$. We first need the  notion of  skeleton, and some technical lemmas.
 
	We write $\skeleton \iA$ for the type obtained form  $\iA$ by replacing  all occurrences of the atom $\Xb$ with $\X$, for each $\X\in \Names$. We write $ \skeleton \tm $ for the term obtained from $\tm$ by replacing all occurrence of subterm $\obsb x$ with $x$,
for each $x\in \mathcal{V}$.  
\begin{lemma}[Skeleton]
	Let $\pi \dem \tjudg{\Sigma}{\tm}{\iA}$ be  a type derivation.
	Replacing in $\pi$ each judgment $ \tjudg{\Sigma_u}{\tmu}{\iA_u}$ with the judgment $\tjudg{\skeleton {\Sigma_u}}{\skeleton \tmu}{\skeleton {\iA_u}}$ yields a type derivation  $\skeleton \pi \dem  \tjudg{\skeleton{\Sigma}}{\skeleton \tm}{\skeleton \iA}$, which we call the \emph{skeleton} of $\pi$.
\end{lemma}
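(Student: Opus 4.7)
The plan is to proceed by structural induction on the type derivation $\pi$, showing that for each rule, applying the skeleton operation to all judgments in the rule instance yields either the same rule or (in one special case) the $\ivar$ rule. The key observation is that the skeleton operation is a uniform renaming that leaves the underlying \emph{shape} of terms and types intact: it neither introduces nor removes any multiset structure, pair structure, arrow, or thunk. In particular, $\skeleton{\cdot}$ commutes with all term and type constructors (\eg, $\skeleton{\iL_1 \otimes \iL_2} = \skeleton{\iL_1} \otimes \skeleton{\iL_2}$, $\skeleton{\pair{v}{w}} = \pair{\skeleton{v}}{\skeleton{w}}$, etc.), and the same holds for contexts.

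The first observation to record is that $\Nm{\skeleton{\iA}} = \Nm{\iA}$, since both $\X$ and $\Xb$ have name $\X$. This ensures that the freshness side conditions of the probabilistic axioms ($\X \notin \Nm{\Lambda}$ in $\isample$, and similarly in $\icond$) are preserved by the skeleton operation. Consequently, if the original $\isample$ or $\icond$ instance is valid, so is its skeleton: just observe that $\skeleton{\sample{d}} = \sample{d}$ and $\skeleton{\CPT{y_1,\dots,y_n}} = \CPT{y_1,\dots,y_n}$, and apply the same rule with atoms $\X$ replacing $\XX$ and $\Y_i$ replacing $\YY_i$.

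For all remaining non-observation rules ($\ivar, \ilet, \ipair, \iletp, \iabs, \iapp, \ibang, \ider$), the check is routine: the rule mentions only matching of types between premises and conclusion, together with multiset union and disjoint context composition, all of which commute with $\skeleton{\cdot}$. By the induction hypothesis, the skeletons of the premises are derivable; since the rule's shape is preserved, the same rule derives the skeleton of the conclusion. The only genuinely interesting case is the $\iobs$ rule
\[
\infer[\iobs]{\Lambda, x:\Xb \vdash \obsb{x} : \Xb}{\,}
\]
Applying the skeleton operation yields the judgment $\skeleton{\Lambda}, x:\X \vdash x : \X$, which is precisely an instance of the $\ivar$ axiom. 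So in $\skeleton{\pi}$, every $\iobs$ rule is replaced by an $\ivar$ axiom; all other rules are preserved unchanged (up to the renaming of atoms).

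No step is really an obstacle here; the only thing to be careful about is to ensure that the renaming is applied \emph{uniformly and consistently} across the whole derivation, which is guaranteed by the fact that we defined $\skeleton{\cdot}$ globally as a syntactic operation on types, contexts, and terms rather than locally on each judgment. Putting everything together gives the required derivation $\skeleton{\pi} \dem \tjudg{\skeleton{\Sigma}}{\skeleton{\tm}}{\skeleton{\iA}}$.
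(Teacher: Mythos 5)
Your proof is correct. The paper actually states this lemma without proof, and your structural induction is exactly the routine argument it implicitly relies on: the one substantive point is that an $\iobs$ axiom $\Lambda, x:\Xb \vdash \obsb{x} : \Xb$ turns into the $\ivar$ axiom $\skeleton{\Lambda}, x:\X \vdash x:\X$ (since $\skeleton{\obsb{x}} = x$ and $\skeleton{\Xb}=\X$), which is precisely the observation the paper uses later when it notes that the skeleton of a derivation only uses the rules of the observation-free system and hence inherits the unique-derivation property.
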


By construction, all derivations with the same skeleton are \emph{identical}, modulo quotienting   $\X$ and $\Xb$.  In fact, we can be more precise. 
\begin{lemma}[Main]Let $\pi_1 \dem \tjudg{\Sigma_1}{\tm}{\iL_1}$ and $\pi_2 \dem \tjudg{\Sigma_2}{\tm}{\iL_2}$
be type derivations with the same skeleton $\skeleton {\pi_1} = \skeleton{\pi_2}$. The two derivations may only differ 
on the atoms which do not type an $\iobs$-rule. 
\end{lemma}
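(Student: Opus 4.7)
The plan is to proceed by structural induction on the common skeleton $\skeleton{\pi_1}=\skeleton{\pi_2}$, comparing the two derivations node-by-node. Since the skeleton erases only the distinction between observed and unobserved atoms (and the $\obsb{\cdot}$ wrapper), the skeleton equality forces $\pi_1$ and $\pi_2$ to apply exactly the same rule at every corresponding node, to subterms which agree up to $\obsb{x}\mapsto x$ and with contexts and types that agree up to the atomic observed/unobserved distinction. In each case I check atom-by-atom (where an atom-position is a node of the derivation together with a location inside a type at that node) that the only disagreement between $\pi_1$ and $\pi_2$ occurs at atoms which are \emph{not} subjects of an $\iobs$-rule.

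For the base cases, the only axiom that forces an atom's observed/unobserved status is $\iobs$: the rule $\Lambda, x:\Xb \vdash \obsb{x}:\Xb$ rigidly demands that both the context atom and the conclusion atom have form $\Xb$ with the specific $\bv$ appearing syntactically in $\obsb{x}$. Since the skeleton retains the subterm $\obsb{x}$ (only the outer wrapping is erased for atoms, and the subterm itself is preserved modulo the $\obsb{x}\mapsto x$ map, which still records the original boolean $\bv$ at the node via the applied rule), both derivations must apply the $\iobs$ rule with the same $\bv$, and hence agree on these atoms. For $\isample$, $\icond$, and $\ivar$, the atoms introduced in the conclusion are not subjects of any $\iobs$-rule at this node, so the statement is vacuous.

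For the inductive step, I treat each rule of the type system ($\ilet$, $\iletp$, $\ipair$, $\iapp$, $\iabs$, $\ibang$, $\ider$) and observe that none of them introduces new $\iobs$-subject positions or modifies existing ones: they simply thread atoms from premises to conclusion (possibly duplicating them, via shared ground contexts or multiset contexts). By the inductive hypothesis, the premises $\pi_1^i,\pi_2^i$ already agree on atoms at $\iobs$-subject positions, and the conclusion inherits this property since its atoms are either inherited from some premise or freshly introduced in a non-$\iobs$ context. The main technical obstacle is bookkeeping: I must make precise the correspondence of atom-positions across rules that merge premises (notably the sharing of $\Lambda$ through every premise of a rule, and the multiset-splitting in the exponential rules), and verify that the consistency constraint within each derivation---if a name $\X$ appears observed somewhere, it must be observed everywhere in that derivation---is automatically respected by the rule structure and never conflicts with the inductive claim.
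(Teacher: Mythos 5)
Your proposal is correct and takes essentially the same route as the paper, whose proof is simply ``by induction on the term $\tm$'': since both derivations type the same term, the $\iobs$-rules occur at the same nodes and rigidly force the type $\Xb$ with the $\bv$ written in the term, while skeleton equality fixes the names, so the only remaining freedom is the observed/unobserved status of atoms not typing an $\iobs$-rule. The only slight imprecision is your remark that the skeleton ``records'' $\bv$ --- it does not, but this is immaterial because the shared term $\tm$ itself carries $\bv$ and determines the $\iobs$ instances in both derivations.
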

\begin{proof}
	By induction on the term $\tm$.
\end{proof}

\begin{corollary}Two type derivations with the same skeleton have the same generalization:
	\[\skeleton {\pi_1} = \skeleton {\pi_2} \Rightarrow \gen {\pi_1} = \gen {\pi_2}. \]
\end{corollary}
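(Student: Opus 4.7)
\medskip
\textbf{Proof plan for the Corollary.} The plan is to leverage the Main Lemma directly: since two derivations with the same skeleton can only disagree on the atoms that do not type an $\iobs$-rule, and since generalization is precisely the operation that forces such atoms into their canonical unobserved form, the two generalizations must coincide.

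First, I would make precise the observation that, if $\skeleton{\pi_1} = \skeleton{\pi_2}$, then the two derivations have the exact same tree shape and the same underlying term $\tm$ (guaranteed by the statement of the Main Lemma). In particular, the $\iobs$ rules appear at the same positions in $\pi_1$ and $\pi_2$, and each such rule has the form $\Lambda, x : \Xb \vdash \obsb{x} : \Xb$ — so its subject type is necessarily an observed atom $\Xb$. Since the skeleton of the rule records the name $\X$, and since this position cannot be generalized (it is bound by the $\iobs$ constraint), the Main Lemma actually forces $\pi_1$ and $\pi_2$ to agree \emph{identically} on these positions. Consequently $\ObsT{\pi_1} = \ObsT{\pi_2}$; call this common set $\mathcal{O}$.

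Next, I would consider any position $p$ in the derivation tree where $\pi_1$ and $\pi_2$ might disagree. By the Main Lemma, $p$ does not type an $\iobs$-rule, so the atom at $p$ is not in $\mathcal{O}$ in either derivation. Thus, whatever its form in $\pi_i$ (either $\X$ or some $\Xb$), generalization replaces it by $\X$ — and the common skeleton guarantees the name is the same $\X$ on both sides. Therefore $\gen{\pi_1}$ and $\gen{\pi_2}$ carry $\X$ at position $p$. At every other position, $\pi_1$ and $\pi_2$ already agreed and generalization acts identically on both. This gives $\gen{\pi_1} = \gen{\pi_2}$.

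The only mildly delicate step is the identification of $\ObsT{\pi_1}$ with $\ObsT{\pi_2}$, which is really a matter of unwinding definitions: the skeleton preserves both the term and the rule structure, so the set of $\iobs$-rules and their subject atoms are identical in both derivations. Beyond this, the argument is essentially a pointwise comparison driven by the Main Lemma, so no real obstacle arises and the proof fits in a few lines.
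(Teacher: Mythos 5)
Your argument is correct and is exactly the route the paper intends: the corollary is stated as an immediate consequence of the Main Lemma together with the definition of generalization, and your write-up simply makes explicit the two observations the paper leaves implicit (that $\ObsT{\pi_1}=\ObsT{\pi_2}$ because the skeleton fixes the term and hence the $\iobs$-rules, and that generalization erases precisely the atoms on which the Main Lemma permits disagreement). No gap; the pointwise comparison you describe is the whole proof.
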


By using \Cref{prop:uniqueness}, we have 	
\begin{proposition}[Uniqueness] Let  $\tm$ be a closed $\Bang$-term. If  $\tm$ has two \emph{\general} derivations of ground type 
	$ {\pi_1}\dem \tm:L_1$ and $  {\pi_2} \dem \tm: L_2$ then   $ {\pi_1} =  {\pi_2}$.
\end{proposition}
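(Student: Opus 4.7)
The plan is to reduce the statement to the already-established uniqueness for the observe-free type system (\Cref{prop:uniqueness}), using the skeleton construction as a bridge.

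First, given the two general derivations $\pi_1 \dem\, \vdash t : L_1$ and $\pi_2 \dem\, \vdash t : L_2$, I would consider their skeletons $\skeleton{\pi_1}$ and $\skeleton{\pi_2}$. By the Skeleton Lemma, both are valid type derivations in the observe-free system (\Cref{fig:iTypes}), since the skeleton operation sends each $\iobs$ rule to an $\ivar$ axiom and removes every observed atom $\Xb$ in favor of the plain atom $\X$. The common subject term $\skeleton{t}$ is again closed, because the skeleton only peels off $\obsb{\cdot}$ wrappers without altering any binder.

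Next, I would invoke \Cref{prop:uniqueness}, which asserts that in the observe-free system a closed term admits at most one type derivation, and moreover (by the remark immediately following the proposition) a unique ground type. Applied to $\skeleton{t}$, this yields $\skeleton{L_1} = \skeleton{L_2}$ and $\skeleton{\pi_1} = \skeleton{\pi_2}$.

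I would then appeal to the Corollary stating that derivations with identical skeletons have identical generalizations, whence $\gen{\pi_1} = \gen{\pi_2}$. Finally, since both $\pi_1$ and $\pi_2$ are \general, each satisfies $\pi_i = \gen{\pi_i}$ (all non-observed atoms are already of the form $\X$, so the generalization operation acts as the identity). Combining these equalities gives $\pi_1 = \pi_2$, as required.

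The step requiring the most care is really a bookkeeping one rather than a deep obstacle: ensuring that the Skeleton Lemma genuinely delivers a derivation in the observe-free fragment (so that \Cref{prop:uniqueness} can be applied), and that $\skeleton{t}$ remains closed and of ground type. Both are straightforward but must be checked when writing out the argument in full.
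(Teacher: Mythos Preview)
Your proposal is correct and follows essentially the same route as the paper: pass to skeletons, invoke \Cref{prop:uniqueness} in the observe-free system to get $\skeleton{\pi_1}=\skeleton{\pi_2}$, then use the corollary $\skeleton{\pi_1}=\skeleton{\pi_2}\Rightarrow\gen{\pi_1}=\gen{\pi_2}$ together with $\gen{\pi_i}=\pi_i$ for general derivations. Your write-up is in fact a bit more explicit than the paper's (spelling out that $\iobs$ becomes $\ivar$ under the skeleton and that $\skeleton t$ stays closed), which is fine.
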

\begin{proof}
By observing that the skeleton of a type derivation only uses the rules in \Cref{fig:iTypes}, and therefore satisfies  \Cref{prop:uniqueness}, we have that   all the \emph{ground type} derivations of the same term  have the same skeleton. Hence the claim (notice that if $\pi_i$ is general, then $\gen {\pi_i }=\pi_i$).
\end{proof}


\section{Complements to Sect.~\ref{sec:term-semantics} (the Semantics of Typed Terms)}

\subsection{Interpretation of the Probabilistic Axioms,  Formally } 
In the paper, we are a slightly  informal when describing how to associate a factor to a $\icond$ axiom. Let us 
 describe how to associate a \cpt to each \pax, formally, and what are the subtleties. 

\begin{itemize}
	\item To a \icoin axiom of shape:
	\[ \infer{ \Lambda\vdash {\sample{\dist}} : \X  \dmd { \ft{\phi}{\{\X\}}  }}{} \]
	we associate the factor $\phi$ over the singleton $\set{\X}$ such that, for all $\x \in \Val{\X}$:
	\[\phi (\x) =\dist(\x) \]
	\item To a \icond axiom of shape:
	\[
	\infer{ \Lambda, y_1 : \Y_1, \dots, y_n : \Y_n 
		\vdash \texttt{case}\;\langle y_1,\dots, y_n \rangle\;\texttt{of}\; \{\bvs \caseSym \sample{\dist_{\bvs}}\}_{\bvs\in\{\tmt,\tmf\}^n}
		: \X  \dmd{ \ft{\phi}{\bY \cup \set{\X}} } }
	{ \X \not \in \{\Y_1,\dots,\Y_n\} \tand \X \not \in \Nm{\Lambda} }
	\]
	we associate a factor $\phi$ over the set of names $\bY \cup \set{\X}$, where $\bY= \set{\Y_1,\dots,\Y_n}$. 
	Remark that $\Y_1, \dots, \Y_n$ need not to be pairwise distinct, hence $\bY$ is a set
	whose cardinality $|\bY| = k$ may be less than $n$. The factor $\phi$ is defined as follows, for all $\x \in \Val{\X}$ and $\ys\in  \Val{\bY}$ (please observe that $\ys$ is a tuple of $k$ elements):
	\[ \phi(\ys \x)= \dist_{\bvs}(\x)\]  
where	$\bvs = \langle  \Proj \ys {\Y_1}, \dots,  \Proj \ys {\Y_n}\rangle$, recalling that  (by \Cref{notation_BN})  $\, \Proj  \ys{\Y_i} \in  \Val {\Y_i}$  is the 
value that   $\Y_i$ assumes in $\ys \in \Val{\bY}$. 

 The subtlety in the  definition is that if the same names $\Y$ occurs  several times in the context (say, $\Y_1=\Y_2=\Y_4$) then the \emph{same value} will be repeated  several times in the tuple $\bvs$, at the corresponding positions.
We give two examples to clarify.

	
\end{itemize}

\begin{example}
Taking the \BN in \Cref{fig:BNrain} as reference, we provide an example of a \texttt{case} expression and the associated \cpt. In the term, $x : \nS$ and $y :\nR$ are two variables of suitable type.
{\small \[
	\begin{array}{ccc}
		\centering
		\arraycolsep=1.5pt
		\begin{array}{rl}
		\texttt{case} &\pair{x}{y}  \texttt{ of} \\
		 \{ &\pair{\true}{\true} \Rightarrow \coin{0.99} \\
			&\pair{\false}{\true} \Rightarrow \coin{0.7} \\
			&\pair{\true}{\false} \Rightarrow \coin{0.9} \\
			&\pair{\false}{\false}  \Rightarrow \coin{0.01} \ \} : \nW \\
		\end{array}
		&  \qquad \qquad  &
        \begin{array}{|ccc|c|}
            \hline	
            \nS  & \nR  & \nW  & \Pr(\nW | \nS,\nR) \\
            \hline
            \true	& \true  & \true	& 0.99 \\
            \true	& \true  & \false	& 0.01 \\
            \hline
            \false  & \true  & \true	& 0.7 \\
            \false	& \true  & \false	& 0.3 \\
            \hline
            \true	& \false & \true	& 0.9 \\
            \true	& \false & \false	& 0.1 \\
            \hline
            \false  & \false & \true	& 0.01 \\
            \false  & \false & \false	& 0.99 \\
            \hline
          \end{array}
	\end{array}
\]}
\end{example}
	
	\begin{example}Let us also give an example where the same name appear several times. 
If   we have 
$ y_1 : \Y,y_2:\Y \vdash \texttt{case}\;\langle y_1,y_2 \rangle\;\texttt{of}\; \{\bvs \caseSym \sample{\dist_{\bvs}}\}_{\bvs\in\{\tmt,\tmf\}^2}:\X $
the factor over the names $\{\Y,\X\}$ associates to $\pair \true \true \in \Val \Y \times \Val \X$ the value $\dist_{\true,\true}(\true)$,
to $\pair \false \true \in \Val \Y \times \Val \X$ the value $\dist_{\false,\false}(\true)$, and so on.
	\end{example}

\subsection{Invariance of the Semantics}
The factor semantics we have defined is invariant under reduction and expansion, as expected. 
This is due to the fact that probabilistic axioms are stable w.r.t. reduction and expansion.
	
\begin{theorem*}[\ref{thm:invariance}, Invariance]
	Let $\tm$ be a $\Bang$-term and $t \to u$. Then
	$\Lambda \vdash t : \iL \dmd{\psi}$ if and only if $\Lambda \vdash u: \iL \dmd{\psi}$.
\end{theorem*}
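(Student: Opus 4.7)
The plan is to reduce this to the combinatorial fact that reduction steps do not create or destroy probabilistic axioms, together with Subject Reduction/Expansion (Prop. 4.3) and the inductive characterization of the semantics (Lem. 5.4, Thm. 5.5). The decorated judgment $\Lambda \vdash t : \iL \dmd \psi$ means exactly that there is a derivation $\pi \dem \Lambda \vdash t : \iL$ whose inductive factor (as in Fig. \ref{fig:decorated}) equals $\psi$; by Lem. 5.4 this is the same as $\sem{\pi} = \psi$. So it suffices to show that whenever $t \to u$, there is a bijection between derivations $\pi \dem \Lambda \vdash t : \iL$ and $\pi' \dem \Lambda \vdash u : \iL$, such that corresponding derivations have identical multisets of probabilistic axioms ($\Cpts{\pi} = \Cpts{\pi'}$) and the same conclusion type. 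Since $\sem{\pi}$, by definition, depends only on $\Cpts{\pi}$ and on $\Nm{J}$, this yields $\sem{\pi} = \sem{\pi'}$ and hence invariance of $\psi$.

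First, I would strengthen Subject Reduction/Expansion as follows: for each root rule $\mapsto_r$ with $r \in \{\dB, \sval, \dbang, \dpair\}$ and each derivation $\pi$ of the redex, I construct a canonical derivation $\pi'$ of the contractum such that $\Cpts{\pi} = \Cpts{\pi'}$, and conversely. This is a routine but careful case analysis: none of the root rules touch a $\sample d$, a $\mathtt{case}$, or an $\mathtt{obs}$ subterm—they merely rearrange let-bindings, substitute values into variables, or unfold a $\der \oc (\cdot)$. In the non-idempotent intersection setting each such rearrangement corresponds to a standard surgery on the derivation (splitting a $\ibang$ rule with $n$ premises across a substitution, plugging $\ivar$-axioms, etc.) that provably preserves the list of $\isample$, $\icond$, and $\iobs$ axioms verbatim, as well as their associated factors (which depend only on the axiom's conclusion and local data, both invariant under these operations).

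Next, to lift the result from root steps to arbitrary reduction steps, I close the construction under evaluation contexts $\ss$ by a straightforward induction on $\ss$. The typing and the set of \paxs in the sub-derivation typing the redex are unaffected by the surrounding context; the context contributes its own \paxs, which are untouched. Hence $\Cpts{\pi} = \Cpts{\pi'}$ lifts to the whole derivation, and $\Nm{J}$ is preserved because the overall conclusion type and context do not change.

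Finally, I combine these ingredients. Given $\Lambda \vdash t : \iL \dmd \psi$, there is $\pi$ with $\sem \pi = \psi$; the construction yields $\pi' \dem \Lambda \vdash u : \iL$ with $\Cpts{\pi'} = \Cpts{\pi}$ and the same $\Nm J$, so $\sem{\pi'} = \sem{\pi} = \psi$, hence $\Lambda \vdash u : \iL \dmd \psi$. The reverse direction uses Subject Expansion in the same strengthened form. The only genuine difficulty is the bookkeeping in the $\dS$ case, since substitution of a value $v$ (typed possibly by a $\ibang$-rule with several premises) for a variable $x$ requires merging several sub-derivations into the body; I would follow the standard non-idempotent-intersection technique (as in \cite{BucciarelliKRV20, ArrialGK23}) to show that the multiset of \paxs of the merged derivation is exactly the disjoint union of the multisets of \paxs of the pieces, which is the same multiset appearing in the redex derivation.
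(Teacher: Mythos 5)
Your proposal is correct and follows essentially the same route as the paper: the paper's proof of \Cref{thm:invariance} likewise combines \Cref{thm:core} with Subject Reduction/Expansion strengthened to show that $\Cpts{\pi}=\Cpts{\pi'}$ (carried out in the appendix via the substitution lemma and closure under evaluation contexts, exactly the surgery you describe), from which $\sem{\pi}=\sem{\pi'}$ and hence invariance of $\psi$ follow. The only simplification you could make is that, by \Cref{prop:uniqueness}, the derivation for a given ground context is unique, so no explicit bijection between derivations is needed.
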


\begin{proof} 
	Invariance of the semantics is a consequence of \Cref{thm:core}, 
	together with the Subject Reduction and Subject Expansion (\Cref{sec:subject-reduction}, \Cref{sec:subject-expansion}): 
	in the proof of such properties it suffices to observe that $\Cpts{\pi} = \Cpts{\pi'}$. 
	Then $\sem \pi = \sem {\pi'}$ by \Cref{thm:core}.
\end{proof}

\begin{example}\label{ex:weakening}
To better understand the invariance of $\Cpts{\pi}$ under reduction, recall  that \emph{only values can be duplicated and deleted}. One can see the consequences of this fact by considering the two following terms, and their type derivation. Observe that the first term is in normal form, precisely because {only values can be deleted}. By contrast, the subterm $!\sample d$, which is going to be deleted, can only be assigned an empty type.
\[
\begin{array}{ccc}
	\bottomAlignProof
	\AxiomC{$\vdash  \sample{d} : \X \dft {\phi} {\set \X}$}
		\AxiomC{$x:\X, y : \Y \vdash y : \Y $ }
		\UnaryInfC{$x : \X\vdash \lambda y.y : \Y \arrow \Y  \dft \ftone \emptyset$}
	\BinaryInfC{$\vdash \letin{x}{ \sample{d}}{\lambda y.y : \Y \arrow \Y} \dft \ftone \emptyset  $}
	\DisplayProof 
	& \not\rightarrow & \quad \\[12pt]
	\bottomAlignProof
	\AxiomC{$\vdash \oc \sample{d} : [] \dft \ftone \emptyset$}
		\AxiomC{$y : \nY \vdash y : \nY $}
		\UnaryInfC{$x : [] \vdash \lambda y.y : \nY \arrow \nY  \dft \ftone \emptyset $}
	\BinaryInfC{$\vdash \letin{x}{\oc \sample{d}}{\lambda y.y : \nY \arrow \nY} \dft \ftone \emptyset  $}
	\DisplayProof 
	&  \rightarrow  & 
	\bottomAlignProof
	\AxiomC{$y : \nY \vdash y : \nY$}
	\UnaryInfC{$\vdash \lambda y.y : \nY \arrow \nY   \dft \ftone \emptyset$}
	\DisplayProof
\end{array}
\]

\end{example}

\condinc{}{
\subsection{Closed Terms}
\begin{remark}[Semantics of Closed Terms]
	By \Cref{prop:uniqueness}, 	each closed term  $\tjudg{}{t}\iL$ of ground type has a unique type derivation  $ \pi\dem \vdash t:\iL$. Therefore 
	we could simply  write $\sem t$ for $\sem \pi$.
	
	Notice moreover that  necessarily $\Nm{\sem{\tm}} = \Nm{\iL}$ and this way $\sem t$ is a factor over the set of names which appear in $\iL$.
\end{remark}
}
\subsection{Completion of the Semantics}\label{sec:completion}
 As briefly discussed in \Cref{sec:compositionality}, the reader may expect that the interpretation of a type derivation $\pi\dem J$ were a factor over $\Nm J$. For example, one would expect to interpret an identity axiom  $ \alpha $ as 
	\begin{center}
		$ y : \nY \vdash y : \nY  \dmd{ \ft  {\ftone}{\set \Y} } $   \qquad instead of \qquad
		$ y : \nY \vdash y : \nY  \dmd{ \ft  {\ftone}{\emptyset} } $
	\end{center}
	%
	The fact is that our semantics  focuses on  the \emph{probabilistic content} of a derivation    $\pi \dem J$, only. Please notice  that  the non-probabilistic information is already fully contained in the type judgment $J$ (because intersection types carry such  information). 
	Indeed, an interpretation of $\pi \dem J$ as a factor over $\Nm J$ is easily obtained by a form of \emph{completion},  which is simply a multiplication for the trivial factor: 
	\[\den \pi \defeq \sem \pi \FProd \ft \ftone {\Nm{J}}\]  In the case of the identity axiom, this gives  $ \den {\alpha}= \ft  {\ftone} {\emptyset} \FProd \ft  {\ftone} {\Y} =\ft  {\ftone} {\Y} $.
	This completion  commutes with composition. We give some more details below.

\condinc{}{As briefly discussed in \Cref{sec:compositionality}, one may want the interpretation of a type derivation $\pi \dem J$ 
to be a factor over all the names $\Nm{J}$ appearing in the conclusion. We can easily \emph{complete} $\sem \pi$
and obtain a factor over $\Nm J$, exploiting the fact that the types appearing in the conclusion contain all the information we need. }


\begin{definition}[Completion]
Given a type derivation $\pi \dem J$, the completion of $\sem \pi$ is the
factor over the set of names $\Nm J$ defined as:
\[\den \pi \defeq \sem \pi \Fprod \ft{\ftone}{\Nm J}\]
\end{definition}

\begin{example}[Completed factor]
The example below shows, respectively, the factor semantics and its completion for \ivar and \isample axioms.
\[
\begin{array}{ccc}
	\infer{\pi_1 \dem y:\Y \vdash y:\Y}{}  & \qquad\qquad & \infer{\pi_2 \dem y:\Y \vdash \coin{p} :\X}{}  \\[8pt]	
	{	\small
	\sem{\pi_1} = \ftone_\emptyset 
	\qquad \qquad
	\begin{array}{c|c}
		\Y & {\den{\pi_1}}_\y  \\
		\hline
		\false & 1 \\
		\true &  1\\
	\end{array}
	}
	& \qquad\qquad &
	{	\small
	\begin{array}{c|c}
			\X  &  {\sem{\pi_2}}_{\x }\\ \hline
			\false  &p \\
			\true  &(1-p)\\
	\end{array}
	\qquad \qquad
	\begin{array}{cc|c}
			\Y & \X &{\den{\pi_2}}_{\y\x } \\
			\hline
			\false &\false  &p \\
			\false & \true  &(1-p)\\
			\true &\false  &p \\
			\true & \true  &(1-p)\\
	\end{array}
	}
\end{array}
\]

\end{example}

\begin{example}[Completed derivation] 
	Below we annotate the same derivation $\pi$ with $\sem{\pi}$ (l.h.s.) 
	and with its completion $\den{\pi}$ (r.h.s.).
	{\small 
	\[	
	\bottomAlignProof
	\AxiomC{$\vdash  \sample{d} : \X \dmd {\ft{\phi}{\set \X} }$}
		\AxiomC{$x:\X, y : \Y \vdash y : \Y  \dmd { \ft{\ftone}{\emptyset} } $ }
		\UnaryInfC{$x : \X\vdash \lambda y.y : \Y \arrow \Y  \dmd { \ft{\ftone}{\emptyset} } $}
	\BinaryInfC{$\vdash \letin{x}{ \sample{d}}{\lambda y.y : \Y \arrow \Y}  \dmd { \ft{\ftone}{\emptyset} }  $}
	\DisplayProof 
	\quad\quad
	\bottomAlignProof
	\AxiomC{$\vdash  \sample{d} : \X \dmd { \ft {\phi} {\set \X} } $}
		\AxiomC{$x:\X, y : \nY \vdash y : \nY  \dmd { \ft{\ftone}{\set{\X, \Y}} } $ }
		\UnaryInfC{$x : \X\vdash \lambda y.y : \nY \arrow \nY  \dmd { \ft{\ftone}{\set {\X,\Y}} } $}
	\BinaryInfC{$\vdash \letin{x}{ \sample{d}}{\lambda y.y : \nY \arrow \nY} \dmd{ \ft{\ftone}{\set \Y} } $}
	\DisplayProof 
	\]}
\end{example}



\subsubsection{Bridging with the Relational Models.}
The   completed interpretation  is a key step in order  to bridge  the gap between our semantics and \emph{weighted relational models/probabilistic coherence spaces} such as \cite{LairdMMP13,EhrhardT19, EhrPagTas14}. 

The other observation we need is  the following, suggested by \Cref{ex:names1}. 
\begin{remark}[On Redundancy]\label{rem:redundancy}
Looking at	\Cref{ex:names1} 
	one notice an interesting fact, which then shines in our semantics: to memorize the probability distribution over the (four) tuples in  $\Bool_\X \otimes \Bool_\X$, we only need  the distribution over the (two) values of $\Bool_\X$, since (as one see in the example) reconstructing the former from the latter is trivial.  For example:
	\begin{itemize}
		\item $\Val {\Bool_\X} \to \Real$:\quad
		$\true \mapsto 0.2, \false\mapsto 0.8$.
		
		\item $\Val {\Bool_\X} \otimes \Val{\Bool_\X} \to \Real$: \quad
		$\pattern{\true, \true} \mapsto 0.2, \pattern{\false, \false}\mapsto 0.8, \pattern{\true, \false} \mapsto 0,
		\pattern{\false, \true} \mapsto 0$.
	\end{itemize}

\end{remark}				

A similar process allows us to retrieve the relational model  from the completed semantics.


\section{Complements to  Sect.~\ref{sec:flow} (the Flow Graph)}

\subsection{The Flow Graph}

To define the flow graph for a \emph{general} type derivation in $\iBang$, we need to specify what are the positions (\ie the vertexes), what are the edges, and how the edges are oriented. 

\paragraph{Orientation and Polarity.}
The edges of the flow graph are oriented according to the input/output  polarity of the positions: the orientation is upwards on inputs, and downwards on outputs. 

In a judgment of ground type 
$\Lambda \vdash t: \iL$, all occurrence of atoms in $ \Lambda $  are seen as input, and all occurrences of atoms in $\iL$
are seen as output. 
However, when considering a generic type derivation in $\iBang$, 
we need to  take into account arrow types: note for example that the occurrence $\Y$ is an input in the judgment 
$~ \vdash t:\Y \larrow \X$. The definition of polarity given below is rather standard in proof-theory.

\begin{definition}[Input and Output Polarity]
To each occurrence of atom $\XX\in \{\X,\X^\true,\X^\false\}$ in a judgment is associated either an input polarity (denoted $\up \XX$) or output popularity (denoted $\down \XX$),  according to the following definition.

\[
Pol(x:\iP_1, \dots, x_n:\iP_n \vdash t:\iA) \defeq x: \Up{\iP_1}, \dots, x_n: \Up{\iP_n} \vdash t:\Down{\iA}
\]

\begin{minipage}{.49\linewidth}
	\begin{align*}
	\Up \XX &\defeq \up \XX \\
	\Up {A\otimes B} &\defeq \Up A \otimes \Up B \\
	\Up {A\larrow B} &\defeq \Down A \larrow \Up B  \\
	\Up {\mset{A_1, \dots, A_n}} &\defeq \mset{\Up {A_1}, \dots, \Up {A_n}}  \\
	\end{align*}
\end{minipage}
\begin{minipage}{.49\linewidth}
	\begin{align*}
		\Down \XX &\defeq \down \XX \\
	 	\Down {A\otimes B} &\defeq \Down A \otimes \Down B  \\
		\Down {A\larrow B } &\defeq \Up A \larrow \Down B	\\
		\Down {\mset{A_1, \dots, A_n}} &\defeq \mset{\Down {A_1}, \dots, \Down {A_n}}\\
	\end{align*}
\end{minipage}

%

\end{definition}

\paragraph{Polarized Positions.}
Similarly to what we have already done for ground types, we indicate a specific occurrence of an atom inside a type $\iA$ by means of a (type) context, \ie types with a hole, as follows:
\[\begin{array}{lrcl}
	\textsc{Ground Type Ctxs} & \cK, \cL & \grameq&  \ctxhole \gpipe  \cK \otimes \iL \gpipe \iK \otimes \cL \\[4pt]
		\textsc{Positive Type Ctxs}&	 \cP,\cQ & \grameq & \cL \gpipe\, \mset{\iA_1,\ldots,\cA,\ldots \iA_n}  \\[4pt]
		\textsc{Type Ctxs}&\cA,\cB &\grameq &  \cP \gpipe \cP \larrow \iA \gpipe \iP \larrow \cA  
\end{array}\]
The definition can be extended to ground and exponential contexts in a straightforward way; we will use $\cLamnp{}{\XX}$ and $\cGamnp{}{\XX}$ to refer to a specific atom occurrence in $\Lambda$ and $\Gamma$, respectively.
We assume that each atom occurrence appearing in a sequent of a derivation $\pi$ is given a distinct label. 
We call such a label a \emph{position}; each position has the polarity of the corresponding atom.

\paragraph{Flow Graph.}
The flow graph associated to a type derivation $\pi$ is the \emph{directed graph} that has for \emph{vertexes} the positions of $\pi$, and \emph{edges} as indicated in \Cref{fig:flowB}. The \emph{orientation} of the edges is given by the polarity of the positions: each  edge enters and exits an\emph{ input   position $\XX^\uparrow$  going upwards} (\ie, going from the conclusion of a rule to its premises). Similarly, edges  enter and exit  an \emph{output position $\XX^\downarrow$  going downwards} (\ie, going from the premises to the conclusion).

\begin{figure}
	\[
	\infer[\isample]{ \Lambda \vdash \sample{d} : \XX }{ \X \not \in \Nm{\Lambda}}
	\qquad 
	\infer[\icond]{ \Lambda, y_1: \vnode0{\YY_1}, \dots, y_k : \vnode1{\YY_k} \vdash \CPT{y_1, \dots, y_k} :  \vnode2{\XX} }
		{ \X \not \in \{\Y_1,\dots,\Y_n\} \tand \X \not \in \Nm{\Lambda} }
	\bentdirflowedges{node0/node2/40, node1/node2/40} 
	\qquad 
	\]
	
	\[\infer[\iobs]{ \Lambda, x : {\vnode0{\nX}}^\bv \vdash \obsb{x} : {\vnode1{\nX}^\bv} }{}  
	\bentdirflowedges{node0/node1/40}\]
	
	\[	
	\infer[\ivar]{ \Lambda, x : \cPp{\vnode0{\XX}} \vdash x :  \cPp{\vnode1{\XX}} }{}  
	\bentflowedges{node0/node1/60} 
	\qquad 
	\infer[\ipair]{ \Lambda \vdash \pair{v}{w} : \cLnp{1}{\vnode2{\XX}} \otimes \cLnp{2}{\vnode3{\YY}} }
	{ \Lambda \vdash v : \cLnp{1}{\vnode0{\XX}} & & \Lambda \vdash w: \cLnp{2}{\vnode1{\YY}} }
	\dirflowedges{node0/node2, node1/node3} 
	\]
	
	\[
	\quad
	\infer[\iletp]{\Lambda, \Gamma \vdash \letp{\pair{x}{y}}{v}{t} : \cAp{\vnode5{\ZZ}} }
	{ \Lambda \vdash v : \cLnp{1}{\vnode0{\XX}} \otimes  \cLnp{2}{\vnode1{\YY}}  &
		 \Lambda, \Gamma, y :  \cLnp{2}{\vnode3{\YY}}, x : \cLnp{1}{\vnode2{\XX}} \vdash t : \cAp{\vnode6{\ZZ}} }
	\specdirflowedge{node0}{node2}{to path={ .. controls +(.5,-.5) and +(-.5,-0.5) .. (\tikztotarget) } }
	\specdirflowedge{node1}{node3}{to path={ .. controls +(.3,-.3) and +(-.3,-0.3) .. (\tikztotarget) } }
	\flowedge{node5}{node6}
	\]	
	
	\[
	\infer[\ilet]{\Lambda, \Gamma_1 \uplus \Gamma_2 \vdash \letin{x}{u}{t} : \cAp{\vnode2{\YY}} }
	{ \Lambda, \Gamma_1 \vdash u : \cPp{\vnode0{\XX}}  &  \Lambda, \Gamma_2, x : \cPp{\vnode1{\XX}} \vdash t : \cAp{\vnode3{\YY}} }
	\flowedge{node2}{node3}
	\specflowedge{node0}{node1}{to path={ .. controls +(.5,-.5) and +(-.5,-0.5) .. (\tikztotarget) } }
	\]
	
	\[
	\infer[\iabs]{\Lambda, \Gamma \vdash \lambda x.t : \cPp{\vnode0{\XX}} \larrow \cAp{\vnode1{\YY}} }
	{\Lambda, \Gamma, x : \cPp{\vnode2{\XX}} \vdash t : \cAp{\vnode3{\YY}} }
	\flowedge{node0}{node2}
	\flowedge{node1}{node3}
	\qquad 
	\infer[\iapp]{\Lambda, \Gamma_1 \uplus \Gamma_2  \vdash t u : \cAp{\vnode0{\YY}} }
	{\Lambda, \Gamma_1 \vdash t : \cPp{\vnode1{\XX}} \larrow  \cAp{\vnode2{\YY}}  &  \Lambda, \Gamma_2 \vdash u : \cPp{\vnode3{\XX}} }
	\specflowedge{node1}{node3}{to path={ .. controls +(.5,-.5) and +(-.5,-0.5) .. (\tikztotarget) } }
	\flowedge{node0}{node2}
	\]
	
	\[
	\infer[\ibang]{\Lambda, \biguplus_{i=1}^n \Gamma_i \vdash \oc t : [ \cAnp{1}{\vnode0{\XX_1}}, \dots, \cAnp{n}{\vnode1{\XX_n}} ] }
	{\Lambda, \Gamma_1 \vdash t :  \cAnp{1}{\vnode2{\XX_1}}  & \dots &  \Lambda, \Gamma_n \vdash t : \cAnp{n}{\vnode3{\XX_n}} }
	\flowedges{node0/node2, node1/node3}
	\qquad
	\infer[\ider]{\Lambda, \Gamma \vdash \der t : \cAp{\vnode0{\XX}} }{\Lambda, \Gamma \vdash t : [ \cAp{\vnode1{\XX}} ] }
	\flowedge{node0}{node1}
	\]
	 
	\medskip
	\textbf{Contexts:}
	\medskip
	\[
	\infer{\cLamnp{}{\vnode0{\XX}}, \cGamnp{1}{\vnode3{\YY_1}} \uplus \dots \uplus \cGamnp{n}{\vnode4{\YY_n}} \vdash t : \iA }
	{\cLamnp{}{\vnode1{\XX}}, \cGamnp{1}{\vnode5{\YY_1}} \vdash t_1 : \iA_1  & \dots &
		\cLamnp{}{\vnode2{\XX}}, \cGamnp{n}{\vnode6{\YY_n}} \vdash t_n : \iA_n}
	\dirflowedges{node0/node1, node0/node2}
	\flowedges{node3/node5, node4/node6}
	\]
	\caption{Flow graph of a $\iBang$ derivation.} \label{fig:flowB}
\end{figure}

\begin{prop*}[\ref{prop:DAG}, The Flow is Acyclic]
	Let $\pi \dem \Lambda \vdash t:\iL$. Then $\flow{\pi}$ is a DAG.
\end{prop*}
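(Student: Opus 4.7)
My plan is to prove Proposition~\ref{prop:DAG} by bootstrapping from the case of BN normal forms (where the result can be read off the derivation directly), and then lifting the property to arbitrary typable terms via a strengthened subject expansion that tracks the flow graph, using the compilation theorem (\Cref{thm:compileI}) as the bridge.

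\textbf{Base case (normal forms).} First, I would show that if $t$ is in BN normal form and $\pi \dem \Lambda \vdash t : \iL$, then $\flow{\pi}$ is a DAG. By Progress (\Cref{lem:progressSimple}, reformulated for intersection types), $\pi$ only uses the first-order rules in \Cref{fig:normal_flow}. I would proceed by induction on $\pi$, checking each rule: \isample and \icond introduce fresh main names (by the naming condition and the side conditions $\X\notin\Nm{\Lambda}$), so no new edge can close a cycle; \ivar, \ipair, \iletp, \ilet, \iobs, and the ground-context duplication all add edges that go ``strictly outward'' in the sense that they connect positions of a premise judgment to brand new positions of the conclusion (or connect input positions of the context across branches with a consistent orientation). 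No rule adds an edge internal to a single premise's subgraph, so the inductive property ``$\flow{\pi}$ is acyclic'' is preserved.

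\textbf{Inductive step (arbitrary terms via subject expansion).} For a general typable term $t$, \Cref{thm:compileI} gives $t \to^\ast u$ with $u$ in BN normal form. By the base case, the (unique) derivation $\pi' \dem \Lambda \vdash u : \iL$ has acyclic flow graph. I would then prove a strengthened form of subject expansion: if $t \to t'$ and $\pi' \dem \Lambda \vdash t' : \iL$ is the expanded derivation with $\flow{\pi'}$ acyclic, then the derivation $\pi \dem \Lambda \vdash t : \iL$ obtained by expansion also has $\flow{\pi}$ acyclic. Iterating this claim along the reduction $t \to^\ast u$ yields acyclicity of $\flow{\pi}$.

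The heart of the argument, and the main obstacle, is this strengthened expansion lemma, which I would prove (equivalently) via its subject-reduction form: each reduction step $t \to t'$ induces a transformation $\pi \mapsto \pi'$ on derivations such that any path (hence any cycle) in $\flow{\pi'}$ can be pulled back to a path (hence a cycle) in $\flow{\pi}$. Concretely, for each root rule I would exhibit this pullback: $\mapsto_\dbang$ turns a two-edge sequence crossing an \ider/\ibang pair into a single edge, so new paths are just concatenations of old paths; $\mapsto_\dpair$ and $\mapsto_\dS$ behave similarly, concatenating the edges that cross the \iletp/\ilet rule with the axiom edge of the corresponding \ivar; $\mapsto_\dB$ concatenates the edges crossing \iapp/\iabs with \ivar edges. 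Because the positions of the redex are consumed and replaced by compositions of pre-existing edges, $\flow{\pi'}$ is (graph-theoretically) obtained from $\flow{\pi}$ by contracting certain edges. Edge contraction preserves acyclicity in the reverse direction: an acyclic graph can become cyclic only if an edge contraction identifies two endpoints of a path, which never happens here because the positions being identified are separated only by a single edge. The delicate subcase is $\mapsto_\dS$ inside a \ibang rule with multiplicity $n>1$, where the value is duplicated: each copy contributes an independent sub-derivation with pairwise disjoint main names (by the naming condition of \Cref{sec:itypes}), so edges belonging to different copies cannot be combined into a new path, and the pullback argument still goes through copy-by-copy. Verifying these bookkeeping details---especially that the name-disjointness of duplicated \paxs prevents spurious cycles across copies---is the technical bulk of the proof.
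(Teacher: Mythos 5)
Your overall architecture is the paper's: establish acyclicity directly for BN normal forms, then transfer it to arbitrary typable terms along the reduction to normal form given by \Cref{thm:compileI}, via a strengthened subject reduction/expansion. The base case and your first statement of the key lemma (``if $t \to t'$ and $\flow{\pi'}$ is acyclic then $\flow{\pi}$ is acyclic'') are correct. The gap is in how you propose to prove that lemma: you reduce it to the claim that \emph{any cycle in $\flow{\pi'}$ can be pulled back to a cycle in $\flow{\pi}$}. That is the converse of what you need, not the contrapositive. The contrapositive of your lemma is that every cycle of $\flow{\pi}$ \emph{survives} reduction as a cycle of $\flow{\pi'}$ --- cycles must be pushed \emph{forward} along the reduction, so that a hypothetical cycle in the original derivation would propagate all the way to the (provably acyclic) normal form and yield a contradiction. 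Your pullback statement only gives ``$\flow{\pi}$ acyclic $\Rightarrow \flow{\pi'}$ acyclic'', which is useless here because acyclicity is known at the \emph{end} of the reduction sequence, not at the start. The direction the paper actually proves, inside its strengthened subject reduction, is exactly the forward one: ``if $\flow{\pi}$ contains a directed cycle, then $\flow{\pi'}$ contains a directed cycle'', together with the auxiliary invariant that directed paths between positions of the conclusion are preserved (needed to reassemble a surviving cycle when sub-derivations are rearranged). Your detailed case analysis (``new paths are just concatenations of old paths'') is likewise consistently oriented the wrong way.

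A second, related inaccuracy: $\flow{\pi'}$ is not obtained from $\flow{\pi}$ by edge contraction alone. Reduction also \emph{deletes} sub-derivations (the \ivar axiom consumed by $\mapsto_{\dS}$; the whole derivation of a value typed with the empty multiset, cf.\ \Cref{ex:weakening}) and \emph{splits} the derivation of a substituted value across its several uses. Deletion can destroy cycles and splitting can break them, so the forward survival of cycles is not automatic; verifying it is precisely the content of the substitution lemma's case analysis (which uses that deleted value-derivations are acyclic and contain no probabilistic axioms, and that conclusion-to-conclusion paths are preserved under the rearrangement). Note also that if the pure-contraction picture \emph{were} literally true, it would already give you the needed direction for free, since contracting an edge of a directed cycle yields a directed cycle; the fact that you instead derive the pullback from it confirms that the two directions have been conflated. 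Finally, the closing claim that contraction cannot create cycles ``because the positions being identified are separated only by a single edge'' is not a valid argument: identifying the endpoints of an edge $u \to v$ creates a cycle exactly when some other directed path from $u$ to $v$ (or from $v$ to $u$) exists elsewhere in the graph.
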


\begin{proof}
	If $\tm$ is in  normal form, it is immediate to verify, 
	by induction on the derivation $\pi$, that $\flow{\pi}$ is \emph{acyclic}.
	If $\tm$ is not in normal form, by  \Cref{thm:compileI} we know that there
	exists a term $\tmtwo$ in  normal form such that $\tm\to^*\tmtwo$. 
	Therefore it is enough to prove that cycles are preserved along a 
	reduction sequence; this can be done by strengthening the 
	subject reduction statement (see \Cref{sec:subject-reduction}).
\end{proof}

\subsection{From DAGs to Compositionality}\label{sec:jointree}

We say that a position $p$ has underlying \pink{name  $\X$} (resp. underlying atom $\XX$) if $p$ denotes an occurrence of the atom  $\XX\in \{\X,\Xt,\Xf\}$. The main position of a probabilistic axiom is the one corresponding to the type of the subject.

\begin{lemma*}[\ref{lem:jointree}, Named paths] 
	Let $\pi \dem \Lambda \vdash t: \iL$, and \pink{let $\X$  be the main name   of a \pax $\alpha^\X$}. Then  in 
	$\flow{\pi}$,  each position with underlying  name  $\X$ is  connected to the  main position in $\alpha^\X$ by a path in which all  positions have underlying name $\X$.
\end{lemma*}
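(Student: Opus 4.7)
The plan is to give a direct argument based on the DAG structure of $\flow{\pi}$ provided by \Cref{prop:DAG}, by analyzing the sources of the subgraph of $\flow{\pi}$ induced by name-preserving edges. The first step is to inspect the rules (as in \Cref{fig:flowB}) and observe that every edge of $\flow{\pi}$ is \emph{name-preserving}---that is, connects two positions sharing the same underlying name---with a single exception: at each \icond pax $\alpha^\Y$, the edges emanating from the input positions $y_i : \Y_i$ and entering the main output position of name $\Y$ cross names, since the freshness side condition forces $\Y \notin \{\Y_1,\dots,\Y_n\}$. All other rule-edges, including those propagating the ground and exponential contexts, respect the underlying name.

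Next, I would define $\flow{\pi}^\X$ as the subgraph of $\flow{\pi}$ induced by the positions of name $\X$ together with those edges of $\flow{\pi}$ both of whose endpoints have name $\X$. Being a subgraph of a DAG, $\flow{\pi}^\X$ is also a DAG. The central claim is that $m^\X$ (the main position of $\alpha^\X$) is the \emph{unique source} of $\flow{\pi}^\X$. I would prove this by case analysis on an arbitrary $\X$-named position $q \neq m^\X$: if $q$ is an output position, it is either the output of a \ivar or \iobs axiom (whose axiom-edge provides a name-preserving incoming edge from the matching input) or it sits at the conclusion of some non-axiom rule (which provides a name-preserving edge from the corresponding premise output); if $q$ is a context/input position, the rule that has the enclosing sequent as a premise contributes a name-preserving incoming edge via context propagation or via a binding rule such as \ilet, \iletp, or \iabs. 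The freshness side conditions of \isample and \icond, together with the Barendregt-like naming convention that main names are pairwise distinct, guarantee that $m^\X$ has no $\X$-preserving incoming edge and is therefore the only source of $\flow{\pi}^\X$.

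The conclusion then follows immediately: since $\flow{\pi}^\X$ is a finite DAG with $m^\X$ as its unique source, every vertex of $\flow{\pi}^\X$ is reachable from $m^\X$ by traversing edges (possibly backwards), and this traversal stays entirely within $\X$-named positions by construction, yielding the required $\X$-pure path. The main obstacle is carrying out the case analysis in the second step carefully: each rule of the flow graph must be checked individually to confirm that all its edges are name-preserving (with the only exception being the $\Y_i \to \Y$ edges at \icond), and one must be careful about bound variables introduced by \ilet, \iletp, or \iabs, which can inject the name $\X$ into a ground context above $\alpha^\X$, and about the \ibang rule, where the same term is typed multiple times so that several sub-derivations may independently contain $\X$-named positions. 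These are bookkeeping matters rather than conceptual ones, but each must be verified in order to conclude that $m^\X$ is indeed the unique source.
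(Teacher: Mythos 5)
Your argument is essentially the paper's own: the paper observes that every position has \emph{at most one} parent with the same name and that the only parentless ones are the main positions of \paxs or positions in the context of the final conclusion, so the same-name classes form directed trees rooted at such positions; you observe that every non-root $\X$-position has \emph{at least one} same-name parent and walk backwards in the DAG to the unique source. Same decomposition, same use of \Cref{prop:DAG}, slightly different packaging.

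One step needs to be made explicit. Your case analysis for input/context positions says that ``the rule that has the enclosing sequent as a premise'' supplies the incoming edge --- but this fails precisely when the enclosing sequent is the final conclusion $\Lambda \vdash t : \iL$, where there is no rule below. A position of name $\X$ sitting in that $\Lambda$ would be a second source of $\flow{\pi}^\X$, breaking the unique-source claim. So you must argue that $\X \not\in \Nm{\Lambda}$ for the main name of a \pax: this follows from the freshness side condition $\X \not\in \Nm{\Lambda}$ at the axiom itself together with the fact that ground contexts are managed additively (the ground context at any sequent of $\pi$ contains the names of the ground context of the conclusion). The paper states this explicitly; your proposal gestures at the freshness conditions but only worries about contexts \emph{above} $\alpha^\X$, not the conclusion's own context. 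Once that sentence is added, the proof is complete.
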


\begin{proof}Let us start with two easy observations. Notice that both  also holds by replacing "underlying name" with "underlying atom"
	\begin{enumerate}
		\item 
	 By inspecting  \Cref{fig:flowB} and taking into account the orientation of the edges , we realize  that in
	$\flow{\pi}$ each position has \emph{at most one parent with the same underlying name (resp., the same atom)}. Observe that 	the  only positions which do not have any parent with  the same name (resp., the same atom),  are either  the main position of a \pax, or those in the context of the conclusion $ \Lambda \vdash t:\iL$ (which have no parent at all).
	
\item From (1), 
	by recalling that a DAG where each node has
	at most one parent is a tree, we have the following key observation:  if we partition  $\flow{\pi}$ into maximal 
	connected subgraphs whose vertexes are \emph{positions with the same name $\X$},
	then each such subgraph  is a \emph{directed tree}, whose root corresponds  either 
	to the main type of a \pax, or to a position  in the context $\Lambda$ of the conclusion. 
	
	The same holds true if we partition  $\flow{\pi}$ into maximal 
	connected subgraphs whose vertexes are \emph{positions with the same atom  $\XX$},
		\end{enumerate}

	Now let $\nX$ be the main name of a \pax. One can easily verify that
	$\nX \not \in \Nm{\Lambda}$ (because ground contexts are additive,
	and because of the conditions on the axioms names). Since we require
	the main names of $\pi$  to be pairwise distinct, we conclude that
	in $\flow{\pi}$ there is \emph{exactly one} connected component of 
	positions with  the same name $\nX$.
\end{proof}
An immediate consequence is that if a name $\nX$ is summed out, then necessarily it is a main name.

We isolate two useful properties which we have shown as part of the proof of \Cref{lem:jointree}, because we will use in the following.
\begin{fact}[Main Names]	Let $\pi \dem \Lambda \vdash t: \iL$ be a type derivation, and 
	let $\nX$ be the main name of a \pax. One can easily verify that
	$\nX \not \in \Nm{\Lambda}$.
\end{fact}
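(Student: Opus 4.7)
My plan is to reduce the Fact to a simple \emph{monotonicity} property of the ground context along type derivations, and then trace this property from the probabilistic axiom down to the root of $\pi$. Precisely, the key lemma I will establish by inspection of the rules in \Cref{fig:iTypes_obs} is: for every rule of $\iTypesObs$ whose conclusion has ground context $\Lambda$ and whose $i$-th premise has ground context $\Lambda_i$, one has $\Nm{\Lambda} \subseteq \Nm{\Lambda_i}$ for every $i$.

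The required case analysis is elementary. For most rules---\iapp, \ipair, \ibang, \ider, and of course the axioms \isample, \icond, \iobs, \ivar---the ground context of the conclusion coincides with the ground context of each premise, so the inclusion is an equality. The only non-trivial cases are the binder rules \iabs, \ilet, and \iletp, where a variable $x$ that is bound in the conclusion may contribute to the ground context of one of the premises whenever its positive type happens to be a ground type $\iL$ (or is by definition ground, as for the two components in \iletp). In all such cases the binding only \emph{enlarges} the premise's ground context relative to the conclusion's, so the required inclusion $\Nm{\Lambda} \subseteq \Nm{\Lambda_i}$ still holds. Crucially, no rule ever introduces a brand-new name in the ground context when descending from a premise to the conclusion.

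Given this monotonicity, a straightforward induction on the length of the path inside $\pi$ from the probabilistic axiom $\alpha^\X$ down to the root yields the Fact. At the axiom itself, $\X \not \in \Nm{\Lambda_\alpha}$ by the side condition of the \isample or \icond rule; in the \icond case, the extra requirement $\X \not \in \{\Y_1,\dots,\Y_n\}$ guarantees that $\X$ is also absent from the $y_i$-bindings of the axiom, so $\X$ is truly absent from the entire ground context there. Iterating the monotonicity inclusion along the downward path then gives $\Nm{\Lambda} \subseteq \Nm{\Lambda_\alpha}$, whence $\X \not \in \Nm{\Lambda}$. I do not expect any genuine obstacle: the only subtlety is, when carrying out the case analysis, to remember that a variable $x:\iP$ contributes to the \emph{ground} context precisely when $\iP$ is a ground type $\iL$, not when $\iP$ is a multiset.
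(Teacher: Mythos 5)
Your proof is correct and takes essentially the same route as the paper, which disposes of this Fact in a single parenthetical remark — ``because ground contexts are additive, and because of the conditions on the axioms names'' — i.e., exactly your monotonicity inclusion $\Nm{\Lambda}\subseteq\Nm{\Lambda_i}$ from conclusion to premises combined with the side conditions of \isample and \icond. Your explicit case analysis, including the observation that the extra condition $\X\not\in\{\Y_1,\dots,\Y_n\}$ in \icond is what excludes $\X$ from the $y_i$-part of that axiom's ground context, just spells out what the paper leaves implicit.
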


\begin{fact}[Atoms] 
	Let $\pi \dem \Lambda \vdash t: \iL$ be a type derivation. Then  in 
	$\flow{\pi}$,  each position with underlying  atom  $\XX$ is  connected either to the  main type  $\XX$ of a probabilistic axiom or to an atom $\XX$  occurring in $\Lambda$   (in the conclusion of $\pi$), by a path in which all  positions have underlying atom $\XX$.
\end{fact}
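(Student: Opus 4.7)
The plan is to adapt the argument already used for \Cref{lem:jointree} (Named Paths), working with the finer partition of $\flow{\pi}$ induced by atoms rather than names. The core ingredients I would need to establish are: (i) each position in $\flow{\pi}$ has at most one parent carrying the same underlying atom, and (ii) the restriction of $\flow{\pi}$ to positions of a fixed atom $\XX$ is therefore a forest of directed trees, since $\flow{\pi}$ is acyclic by \Cref{prop:DAG}. Both observations were already remarked to transfer from names to atoms in the proof of Named Paths, so much of the groundwork is in place.

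For (i), I would perform a case analysis on the rules of \Cref{fig:flowB}. The observation is that the directed edges between same-atom positions track the polarized flow of types through the derivation: an input position in a rule's conclusion has at most one upward edge to an input position carrying the same atom in a given premise, and an output position in a rule's conclusion is reached by at most one downward edge from an output position carrying the same atom in a premise. Rules such as $\icond$, $\iapp$, and $\iletp$ do introduce edges between positions with distinct atoms (for instance the $\YY_i \to \XX$ edges in $\icond$), but these do not contribute same-atom parents. The contexts rule for ground $\Lambda$ and the $\ibang$ rule, which duplicate positions, must be handled explicitly but fit the pattern.

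Given (i) and (ii), the roots of the $\XX$-forest---positions with no same-atom parent---can then be characterized by inspection of the rules. They fall into two kinds: (a) the main position of a \pax whose conclusion carries atom $\XX$ (in $\isample$ this position has no incoming edges at all, and in $\icond$ its only incoming edges come from atoms $\YY_i \neq \XX$), and (b) an occurrence of $\XX$ in the ground context $\Lambda$ of the final conclusion of $\pi$, which sits at the bottom of the derivation and hence has no upward parent from below. Output positions in $\iL$ of the final conclusion are sinks, not roots. The main obstacle will be the sheer tedium of the casework, especially for the higher-order rules involving arrows and multisets where polarity alternates, but the polarity bookkeeping built into the type contexts makes each case mechanical. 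Once the roots are identified, the statement follows: every position with atom $\XX$ is connected by a same-atom path to its unique root, which is either a \pax main position or an $\XX$-occurrence in $\Lambda$.
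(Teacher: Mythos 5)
Your proposal is correct and follows essentially the same route as the paper: the paper establishes this Fact inside the proof of \Cref{lem:jointree}, via exactly your two observations---each position has at most one same-atom parent (by inspection of \Cref{fig:normal_flow}/\Cref{fig:flowB}), so by acyclicity (\Cref{prop:DAG}) the same-atom subgraphs are directed trees whose roots are either the main position of a \pax or an occurrence in the context $\Lambda$ of the conclusion. Your identification of the roots and your handling of the duplicating rules (contexts and $\ibang$) match the paper's argument.
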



\section{Complements to Sect.~\ref{sec:evidence} (Dealing with Evidence)}\label{app:evidence}

	Let us complete  the coin tosses example (\Cref{ex_coins2}), using   actual probabilities.  Assume that the tossed coin is  biased either  towards $\true$
	(bias $0.7$), or  towards false (bias $0.4$). Since we lack   knowledge, our prior belief  is that either coin could be tossed with equal probability. We want to infer the bias of the coin from the result of the tosses.  
\begin{example*}[\ref{ex_coins2},  cont.]
In    the term $\tmu$ of \Cref{ex_coins} modeling two tosses of the same (biased) coin 
	 \[	\tmu ~\defeq ~ \letin{x}{\sample{d}}{\letin{y}{\oc(\CPTN{}{x})}{\pair{x}{\der y, \der y}}}\]
 we  assume $\CPT{x}$ stands for the conditional expression $\case{\pattern{x}}{r_1\caseSym\coin{0.7}; r_2\caseSym\coin{0.4}}$ and $\sample d = \coin{0.5}$.
	
		Assume that   we observe that both  the  coin tosses yield $\true$, which is encoded as follows  (\Cref{ex_coins})
			\[\tmu' \defeq \letpin{x}{y_1,y_2}{\tmu}{\pair{x}{\obst{y_1},\obst{y_2}}} \]
Let us compare the 	 type derivation  $\pi$  for the term  $\tmu$  with the type derivation $\rho$ for the term $\tmu'$ 
and see \emph{how the semantics changes}.  In both cases, the semantics is  the product of the factors  associated to the  probabilistic 
axioms.
\begin{itemize}
	\item $\sem {\pi \dem \tmu: \X\otimes \Y_1 \otimes \Y_2}$. The    probabilistic 
	axioms in $\pi$ (see \Cref{ex_coins}) are the  following 
	\[ \vdash \sample{d} : \X \dmd \psi  \qquad\qquad  x : \X  \vdash \CPTN{}{x} : \Y_1 \dmd {\psi_1}\qquad\qquad  x : \nX  \vdash \CPTN{}{x} : \Y_2  \dmd{ \psi_2}\]
	where we have annotated the respective interpretations, which  are
		\[\psi\defeq
		\begin{array}{|c|c|}
			\hline	
			\X  & \Pr (\X) \\
			\hline	
			\true	& 0.5 \\
			\false	& 0.5 \\
			\hline
		\end{array}
\qquad\qquad
		\psi_i \defeq 
		\begin{array}{|cc|c|}
			\hline	
			\X  & \Y_i  & \Pr(\Y_i | \X) \\
			\hline	
			\true  & \true	& 0.7 \\
			\true  & \false	& 0.3 \\
			\hline
			\false & \true	& 0.4 \\
			\false & \false	& 0.6 \\
			\hline
		\end{array}\]

		\item $\sem {\rho \dem \tmu: \X\otimes \Yt_1 \otimes \Yt_2}$. The    probabilistic 
		axioms in $\rho$ (see \Cref{ex_coins2})are the  following three
		\[ \vdash \sample{d} : \X \dmd \psi  \qquad\qquad  x : \X  \vdash \CPTN{}{x} : \Yt_1 \dmd {\psi_1^\e}\qquad\qquad  x : \nX  \vdash \CPTN{}{x} : \Yt_2  \dmd{ \psi_2^\e}\]
		where 
		\[	
		\psi_i^\e =  
		\begin{array}{|cc|c|}
			\hline	
			\X & \Y_i  & \Pr(\Y_i=\true | \X) \\
			\hline	
			\true  & \true	& 0.7 \\
			\hline
			\false & \true	& 0.4 \\
			\hline
		\end{array}
	\]
	Proceeding like in \Cref{ex:evidence1}, $\sem \rho = \psi_1^\e \Fprod\psi_2^\e \Fprod\psi $ is exactly $\Pr(\X,\Y_1=\true, \Y_2=\true)$. From it, we have $\Pr(\Y_1=\true, \Y_2=\true) = 0.325$ and  $\Pr(\X|\Y_1=\true, \Y_2=\true)$.
	\[
		{ \psi_1^\e \Fprod\psi_2^\e \Fprod\psi } =
		\begin{array}{|ccc|c|}
			\hline	
			\X   & \Y_1  &  \Y_2  &  \Pr(\X,\Y_1=\true, \Y_2=\true) \\
			\hline	
			\true  & \true	&  \true	& 0.245 \\
			\hline
			\false & \true	&  \true	& 0.08 \\
			\hline
		\end{array}
		\qquad
		\begin{array}{|ccc|c|}
			\hline	
			\X  & \Y_1 &\Y_2  & \Pr(\X|\Y_1=\true, \Y_2=\true) \\
			\hline	
			\true  & \true & \true	& 0.753  \\
			\hline
			\false & \true	& \true & 0.246 \\
			\hline
		\end{array}
		\]
\end{itemize}			
Notice how 	the evidence has  increased  our confidence that the coins is biased towards $\true$ from $0.5$ to $0.753$.
\end{example*}

\condinc{}{
\RED{Variante 2}
\begin{example}[Coin tosses, cont.]
	Let us complete  the coin tosses example, using   actual probabilities. 	 For simplicity, we re-use the same 
	distributions as in \Cref{fig:paxs}.
	So   for  the term $\tmu$  modeling two tosses of the same (biased) coin 
	\[	\tmu ~\defeq ~ \letin{x}{\sample{d}}{\letin{y}{\oc(\CPTN{}{x})}{\pair{x}{\der y, \der y}}}\]
	we  assume $\CPT{x}$ stands for the conditional expression $\case{\pattern{x}}{r_1\caseSym\coin{0.7}; r_2\caseSym\coin{0.01}}$ and $\sample d = \coin{0.2}$. In words:
	\begin{itemize}
		\item the tossed coin either has bias $0.7$, or  is very heavily biased 
		towards $\false$   (bias $0.01$);
		\item our prior belief (perhaps due to previous bad experiences) is that the  coin which will be  tossed twice is the one  biased towards $\false$.  
	\end{itemize} 
	
	We want to infer the bias of the coin from the result of the tosses.  
	Assume that   we observe that both  the  coin tosses yield $\true$, which is encoded as follows
	\[\tmu' \defeq \letpin{x}{y_1,y_2}{\tmu}{\pair{x}{\obst{y_1},\obst{y_2}}} \]
	Let us compare the 	 type derivation  $\pi$  for $\tmu$ (given in \Cref{ex_coins}) with the type derivation $\rho$ for the term $\tmu'$ (given in \Cref{ex_coins2})
	and see how the semantics changes.  In both cases, the semantics is  the product of the factors  associated to the  probabilistic 
	axioms.
	
	\begin{itemize}
		\item $\sem \pi$. The    probabilistic 
		axioms in $\pi$ are the  following three
		\[ \vdash \sample{d} : \X \dmd \psi  \qquad\qquad  x : \X  \vdash \CPTN{}{x} : \Y_1 \dmd {\psi_1}\qquad\qquad  x : \nX  \vdash \CPTN{}{x} : \Y_2  \dmd{ \psi_2}\]
		where the respective interpretations are
		\[\psi\defeq
		\begin{array}{|c|c|}
			\hline	
			\X  & \Pr (\X) \\
			\hline	
			\true	& 0.2 \\
			\false	& 0.8 \\
			\hline
		\end{array}
		\qquad\qquad
		\psi_i \defeq 
		\begin{array}{|cc|c|}
			\hline	
			\X  & \Y_i  & \Pr(\Y_i | \X) \\
			\hline	
			\true  & \true	& 0.7 \\
			\true  & \false	& 0.3 \\
			\hline
			\false & \true	& 0.01 \\
			\false & \false	& 0.99 \\
			\hline
		\end{array}\]

		\item $\sem \rho$. The    probabilistic 
		axioms in $\rho$ are the  following three
		\[ \vdash \sample{d} : \X \dmd \psi  \qquad\qquad  x : \X  \vdash \CPTN{}{x} : \Yt_1 \dmd {\psi_1^\e}\qquad\qquad  x : \nX  \vdash \CPTN{}{x} : \Yt_2  \dmd{ \psi_2^\e}\]
		where 
		\[	
		\psi_i^\e =  
		\begin{array}{|cc|c|}
			\hline	
			\X & \Y_i  & \Pr(\Y_i=\true | \X) \\
			\hline	
			\true  & \true	& 0.7 \\
			\hline
			\false & \true	& 0.01 \\
			\hline
		\end{array}
		\]
		Proceeding like in \Cref{ex:evidence1} we obtain $\Pr(\X,\Y_1=\true, \Y_2=\true)$. From it we have $\Pr(\Y_1=\true, \Y_2=\true) = 0.98008$ and  $\Pr(\X|\Y_1=\true, \Y_2=\true)$.
		\[
		{ \psi_1^\e \Fprod\psi_2^\e \Fprod\psi } =
		\begin{array}{|ccc|c|}
			\hline	
			\X   & \Y_1  &  \Y_2  &  \Pr(\X,\Y_1=\true, \Y_2=\true) \\
			\hline	
			\true  & \true	&  \true	& 0.98 \\
			\hline
			\false & \true	&  \true	& 0.00008 \\
			\hline
		\end{array}
		\qquad
		\begin{array}{|ccc|c|}
			\hline	
			\X  & \Y_1 &\Y_2  & \Pr(\X|\Y_1=\true, \Y_2=\true) \\
			\hline	
			\true  & \true & \true	& 0.999...  \\
			\hline
			\false & \true	& \true & 0.0... \\
			\hline
		\end{array}
		\]
	\end{itemize}			
	The semantics $\sem \rho$ is exactly $\Pr(\X, \Y_1=\true, \Y_2=\true)$. Notice how 			
	the evidence flips our prior belief! 
\end{example}
}



\newcommand{\fct}[1]{\mathit{fcts}(#1)}
\section{Complements to Sect.~\ref{sec:cost} (Cost)}
In \Cref{fig:cost_derivations} we give the cost-annotated type derivation for the terms in \Cref{ex:cost}, according to the 
system in 
\Cref{fig:cost-system}.

\begin{figure}	[H]
	\small
	\fbox{
	\begin{minipage}{\textwidth}
	\begin{prooftree}
	\AxiomC{$ \ovdash{0} \sample{d} : \nX \dmdC{ \{\nX\} } $}
		\AxiomC{$x : \nX \ovdash{0} \CPTN{\Y}{x} : \nY \dmdC{ \{\nX,\nY\} }$}
			\AxiomC{$y : \nY \ovdash{0} \CPTN{\Z}{y} : \nZ \dmdC{ \{\nY,\nZ\} }$}
				\AxiomC{$z : \nZ \ovdash{0} z : \nZ \dmdC{ \emptyset }$}
			\BinaryInfC{$ y: \nY \ovdash{0} \letin{z}{\CPTN{\Z}{y}}{z} : \nZ \dmdC{ \{\nY,\nZ\} }$}
		\BinaryInfC{$x : \nX \ovdash{ 2^{|\{\nX,\nY,\nZ\}|}} \letin{y}{\CPTN{\Y}{x}}{\letin{z}{\CPTN{\Z}{y}}{z}} : \nZ \dmdC{ \{\nX,\nZ\} }$}		\BinaryInfC{$ \ovdash{2^3+2^{|\{\nX,\nY\}|}} \letin{x}{\sample{d}}{\letin{y}{\CPTN{\Y}{x}}{\letin{z}{\CPTN{\Z}{y}}{z}}} : \nZ \dmdC{ \{\nZ\} } $}
	\end{prooftree}	
	\dotfill
	\begin{prooftree}
	\AxiomC{$ \ovdash{0} \sample{d} : \nX \dmdC{ \{\nX\} } $}
		\AxiomC{$x : \nX \ovdash{0} \CPTN{\Y}{x} : \nY \dmdC{ \{\nX,\nY\} }$}
	\BinaryInfC{$\ovdash{2^{|\{\nX,\nY\}|}} \letin{x}{\sample{d}}{\CPTN{\Y}{x}} : \nY \dmdC{ \{\nY\} }$}
		\AxiomC{$ y : \nY \ovdash{0} \CPTN{\Z}{y} : \nZ \dmdC{ \{\nY,\nZ\} } $}
	\BinaryInfC{$\ovdash{2^2+2^{|\{\nY,\nZ\}|}} \letin{y}{(\letin{x}{\sample{d}}{\CPTN{\Y}{x}})}{\CPTN{\Z}{y}} : \nZ \dmdC{ \{\nZ\} }$}
		\AxiomC{$z : \nZ \ovdash{0} z : \nZ \dmdC{ \emptyset }$}
	\BinaryInfC{$ \ovdash{2^2+2^2} \letin{z}{(\letin{y}{(\letin{x}{\sample{d}}{\CPTN{\Y}{x}})}{\CPTN{\Z}{y}})}{z} : \nZ \dmdC{ \{\nZ\} } $}
	\end{prooftree}
	\end{minipage}
	}
	\caption{Type derivations for the terms in \Cref{ex:cost}, cost-annotated.}
		\label{fig:cost_derivations}
\end{figure}

\subsection{Cost of  Inductively Computing the  Interpretation of a Derivation}\label{sec:inductive_cost}
We now analyze  the cost of inductively computing the semantics of $\pi$ following the structure of the derivation. 
We prove that  the inductive  computation admits  a better upper  bound than the one    in \Cref{prop_uppercost}.

\begin{prop*}[\ref{prop_inductive_cost} Inductive cost]
Let  $ \pi$ be a type derivation,  $m_\pi$  the number of \paxs in $\pi$, $n_\pi=|\Nm{\Cpts\pi}|$  the number of names which appear in $\Cpts\pi$ (as in \Cref{prop_uppercost}).
The cost of inductively computing the semantics of $\pi$ following the structure of the derivation is 
	\[\BigO{m_\pi \cdot 2^W}\]
	where 
	$W\leq n_\pi $ is  the maximal cardinality $|\bY|$ of any set of names $\bY$ appearing in the derivation when decorated as in \Cref{fig:decorated}.
\end{prop*}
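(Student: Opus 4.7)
The plan is to count the elementary factor operations performed by the inductive computation of $\sem{\pi}$ described in Figure~\ref{fig:inductive_sem} and multiply by the worst-case cost per operation given by Remark~\ref{rem:factors_cost}. First I would separate the rules of Figure~\ref{fig:decorated} into three groups: (i) \emph{axioms} (\isample, \icond, \ivar, \iobs), whose factor is either a given constant or $\ftone_\emptyset$ and thus requires no computation; (ii) \emph{pass-through} rules (\iabs, \ipair, \iletp, \ider), which reuse verbatim the factor of their premise and contribute no elementary operation; and (iii) \emph{combining} rules (\ilet, \iapp, and \ibang), which form the product $\BigFProd_i \psi_i$ of the factors at the premises and, in the case of \ilet and \iapp, subsequently sum out the names $\bZ$ that do not appear in the conclusion. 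Only rules of group (iii) generate cost.

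Next I would use Remark~\ref{rem:factors_cost} to bound the cost of each elementary operation. By hypothesis no factor appearing anywhere in the decorated derivation is defined over more than $W$ names, so multiplying two such factors and summing out any subset of their names both cost $\BigO{2^W}$. Hence it suffices to bound the total number of binary multiplications and sum-outs performed over the whole derivation.

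The main counting argument proceeds as follows. For binary multiplications, each \pax contributes exactly one atomic factor to the pool of factors to be combined. Each combining rule replaces its $k$ input factors with a single output factor, performing $k-1$ binary multiplications along the way. Since we start with $m_\pi$ atomic factors (one per \pax) and end with a single factor at the root, the telescoping count gives a total of exactly $m_\pi - 1$ binary multiplications across the whole derivation. For sum-outs I would invoke Lemma~\ref{lem:jointree} (and the "Main Names" fact isolated in its proof): a name $\X$ that is summed out by an \ilet or \iapp rule must disappear from the conclusion while being present in some premise's factor, which forces $\X$ to be the main name of some \pax (otherwise every occurrence of $\X$ would be connected to an occurrence in the outer context $\Lambda$ and could not be removed). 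Since main names are pairwise distinct, the total number of sum-outs is at most $m_\pi$.

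Putting these pieces together, the total cost is bounded by $\bigl((m_\pi-1) + m_\pi\bigr)\cdot \BigO{2^W} = \BigO{m_\pi \cdot 2^W}$, which is the claimed bound. The step I expect to require the most care is the bound on the number of sum-outs: specifically, justifying that a summed-out name is necessarily the main name of a \pax of $\pi$ and hence that sum-outs can be injectively charged to \paxs. The rest is routine bookkeeping on the rules of Figure~\ref{fig:decorated} together with Remark~\ref{rem:factors_cost}.
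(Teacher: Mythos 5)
Your proposal is correct and follows essentially the same route as the paper: the paper also splits the cost into multiplications and additions, proves the $m_\pi-1$ bound on binary products by a telescoping count over non-trivial factors (its Lemma on ``nodes vs leaves, weighted'' via the quantity $\fct{J}$), and bounds the number of sum-outs by $m_\pi$ using the fact that only main names of \paxs can be summed out and each at most once. The only step you flag as delicate---charging each sum-out injectively to a \pax via \Cref{lem:jointree}---is exactly the argument the paper uses.
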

\begin{remark}Notice that given a type derivation $\pi$, we can easily  establish $W=|\bY|$ (the larger size  of any factor $\ft{\psi}{\bY}$ which will be inductively computed)  without actually performing any factor computation. This is easily obtained by restricting 
the decoration in \Cref{fig:decorated}  to the set of names, only. 
\end{remark}

We prove \Cref{prop_inductive_cost} by counting separately the number of multiplications and additions needed to compute the semantics of $\pi$. In the rest of the section, we set $\pi \dem J_{\pi}$ to  be an arbitrary but fixed type derivation. 


\paragraph{Counting Multiplications.}
We first restrict our attention only to \emph{multiplications}, ignoring the cost of summing out. 
To each  judgment $J$ occurring in  $\pi$,  we assign a number  $\fct{J}$.  Intuitively,  $\fct{J}$ is  the maximal number of non-empty  factors we need to multiply to compute the semantics  associated to $J$, given the semantics of its premises. 
Notice that  the product for an empty factor ( $\phi \FProd \ftone$)  demands no actual computation.

\begin{itemize}
	\item Case  of $J$ axiom.
\begin{itemize}
	\item  $ \fct{J}=0$ if  $J$ is conclusion of $\ivar$
		\item $ \fct{J}=1$ if $J$ is conclusion of  $\isample \tor \icond$.
\end{itemize}
\item Case of $J$ conclusion of a rule with $h\geq 1$ premises: 
$  \infer[R] { J}{{ J_1} & \dots & { J_h}} $
\begin{itemize}
\item $ \fct{J} $ is the number of premises $J_i$ such that $\fct{J_i}\not= 0$.
\end{itemize}
\end{itemize}

 \begin{lemma}[ Nodes vs leaves, weighted]\label{lem:tree_nodes} Let  $ \pi\dem J_\pi$ be a type derivation,  and 
	$m_\pi$  the number of \paxs  in $\pi$. Let   \[r_{\pi} \defeq  \sum\limits_J
	(\fct{J}-1)\]  where $J$  ranges over all judgments  in $\pi$ such that $\fct{J}\not=0$. 
	Then
	\[
	\begin{array}{lll}
		r_\pi=0=m_\pi,   & \mbox{if }~\fct{J_{\pi}}=0\\
		r_\pi=m_\pi-1, & \mbox{otherwise}.
	\end{array}
	\]
\end{lemma}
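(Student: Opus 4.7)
My plan is to recognize that the quantity $r_\pi$ counts something purely combinatorial on a subtree of the derivation, and then invoke the standard identity that in any finite tree the sum of (children$-1$) over internal nodes equals (number of leaves)$\,-\,1$.

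First, I would isolate the ``active'' sub-derivation $T_\pi \defeq \{J \in \pi : \fct{J}\neq 0\}$. Two simple observations follow directly from the definition of $\fct{\cdot}$: (i) a judgment $J$ which is the conclusion of a rule with premises $J_1,\dots,J_h$ belongs to $T_\pi$ if and only if at least one of its premises belongs to $T_\pi$, and in this case $\fct{J}$ equals exactly the number of premises of $J$ which lie in $T_\pi$; (ii) the axiom judgments lying in $T_\pi$ are precisely the probabilistic axioms ($\icoin$ and $\icond$), since $\ivar$ axioms have $\fct{J}=0$. From (i), by a straightforward bottom-up induction on $\pi$, every probabilistic axiom has all its ancestors inside $T_\pi$; therefore $T_\pi$, when nonempty, is a subtree of $\pi$ rooted at $J_\pi$ whose leaves are exactly the $m_\pi$ probabilistic axioms.

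Now I would split on whether $\fct{J_\pi}=0$. If $\fct{J_\pi}=0$, a downward induction based on (i) shows that every judgment of $\pi$ has $\fct{J}=0$; in particular $\pi$ contains no probabilistic axiom, so $m_\pi=0$ and the sum defining $r_\pi$ is empty, giving $r_\pi=0=m_\pi$ as required. If instead $\fct{J_\pi}\neq 0$, then $T_\pi$ is a nonempty rooted tree whose $L$ leaves are the $m_\pi$ probabilistic axioms and whose internal nodes $J$ satisfy $\fct{J}=\#\{\text{children of }J\text{ in }T_\pi\}$; the leaves contribute $\fct{J}-1=0$ to $r_\pi$, so $r_\pi=\sum_{J\text{ internal in }T_\pi}(\fct{J}-1)$.

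Finally I would apply the elementary tree identity: in any finite rooted tree with $L$ leaves and $I$ internal nodes, the total number of edges is $L+I-1$, and also equals the sum of the number of children over internal nodes, whence $\sum_{\text{internal}}(\text{children}-1) = (L+I-1)-I = L-1$. Instantiating with $L=m_\pi$ yields $r_\pi=m_\pi-1$. The only subtle point is checking that $T_\pi$ really forms a tree (no branches of the derivation floating disconnected from the root); this is taken care of by observation (i) together with the fact that in a proof tree each judgment has a unique rule concluding it, so the parent--child relation is well-defined. I do not expect any serious obstacle beyond carefully justifying these two bookkeeping steps.
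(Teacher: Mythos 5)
Your proof is correct, but it takes a genuinely different route from the paper's. The paper argues by direct structural induction on $\pi$: in the case $\fct{J_\pi}=k>0$ it applies the inductive hypothesis to the $k$ premises with nonzero $\fct{\cdot}$ and computes $r_\pi = \big(\sum_{i=1}^k r_{\pi_i}\big) + (k-1) = \big(\sum_{i=1}^k m_{\pi_i}\big) - k + (k-1) = m_\pi - 1$, never explicitly forming the active subtree. You instead extract the ancestor-closed set $T_\pi=\{J : \fct{J}\neq 0\}$, observe that it is a subtree rooted at $J_\pi$ whose leaves are exactly the $m_\pi$ probabilistic axioms and whose internal nodes $J$ have exactly $\fct{J}$ children in $T_\pi$, and then conclude by the global identity $\sum_{\text{internal}}(\text{children}-1)=L-1$. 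Both arguments are sound and elementary; the key observations you need (a judgment lies in $T_\pi$ iff some premise does, leaves contribute $\fct{J}-1=0$, the degenerate case $\fct{J_\pi}=0$ forces $m_\pi=0$) are all correctly identified and justified. Your version makes the combinatorial content behind the lemma's name ("nodes vs leaves") explicit and arguably more transparent, at the price of having to verify that $T_\pi$ really is a connected rooted subtree; the paper's induction folds that bookkeeping silently into the inductive hypothesis. One cosmetic remark: like the paper's own definition of $\fct{\cdot}$, you leave the $\iobs$ axiom implicit; it should be treated like $\ivar$ (weight $0$), which affects neither argument.
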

\begin{proof}By  induction on the structure of  $\pi$. If $\pi$ consists of a single axiom, the property holds trivially:  for $\ivar$  $m_\pi=0=r_{\pi}$, for a \pax $m_\pi=1$ and $r_\pi=0$. 
	
	Assume $\pi\dem J_{\pi}$	 has last rule $R$:
	\[ \infer[R] {\pi \dem J_{\pi} }{\pi_1 \dem J_1 & \dots & \pi_h  \dem J_h} \]
	
	\begin{itemize}
		\item Case $\fct{J_{\pi}}=0$ is immediate, observing that   $\fct{J_{\pi}}=0$ iff $\fct{J_i}=0$ for each $i$, and so  $\fct{J_{\pi}}=0$  iff  $m_\pi=0$ (\ie, the derivation of $J_{\pi}$ contains no \pax).
		\item  Case $\fct{J_{\pi}}=k>0$. By assumption, there
		 are $k$ premises such that $\fct{J_i}\not= 0$.  By \ih, for each such premise  $r_{\pi_i}=m_{\pi_i}-1$.
		Hence $r_{\pi}= (\sum_{1}^k r_{\pi_i}) + (k-1)  = (\sum_{1}^k m_{\pi_i}) -k + (k-1) = m_{\pi}-1 $.
	\end{itemize}

\end{proof}

\begin{lemma}[Multiplications] With the same notations as above,  the total number of multiplications to compute the semantics of a derivation $\pi$ is \[\pink{\BigO{m_{\pi} \cdot 2^W}. }\]  
\end{lemma}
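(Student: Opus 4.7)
The plan is to bound the total number of product operations performed during the inductive computation of $\sem{\pi}$, and then multiply by the maximal cost of a single product.

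First, I would observe that when the semantics is computed inductively as in \Cref{fig:decorated}, at each rule with $h$ premises of judgments $J_1,\dots,J_h$, the factor $\BigFProd_i \psi_i$ occurring in the conclusion is built from the factors $\psi_i = \sem{\pi_i}$ attached to the premises. Since a product $\phi \FProd \ftone_\emptyset$ requires no actual arithmetic, only the premises with $\fct{J_i}\neq 0$ contribute: combining $\fct{J} = k$ non-trivial factors requires exactly $k-1$ binary multiplications of factors. Summing over all judgments in $\pi$ with non-trivial content, the total number of binary factor products performed in the whole inductive computation equals $r_\pi = \sum_{J}(\fct{J}-1)$.

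Next, I would invoke \Cref{lem:tree_nodes}, which tells us that $r_\pi \leq m_\pi - 1$ (or $r_\pi = 0$ when $m_\pi = 0$, in which case the semantics is $\ftone_\emptyset$ and no computation is needed at all). Hence the total number of factor product operations performed during the inductive computation is $O(m_\pi)$.

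Finally, I would bound the cost of each individual binary product. Every intermediate factor appearing in the decorated derivation is a factor $\ft{\psi}{\bY}$ with $|\bY| \leq W$ by the definition of $W$. By \Cref{rem:factors_cost}, the product of two factors whose result is over at most $W$ names has time and space cost $O(2^W)$. Multiplying the $O(m_\pi)$ bound on the number of multiplications by the $O(2^W)$ bound on the cost of each one yields the claimed overall bound $O(m_\pi \cdot 2^W)$.

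The only subtlety worth flagging is the bookkeeping to ensure that every intermediate factor produced along the way is indeed over at most $W$ names: this follows because $W$ is defined as the maximal cardinality of \emph{any} set $\bY$ appearing in the decorated derivation, and the decoration records precisely the names of the intermediate factors. No step of the argument requires actually performing computations on the factors; the bound is read off purely from the type derivation.
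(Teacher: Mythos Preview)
Your proposal is correct and follows essentially the same approach as the paper: both arguments count the total number of binary factor products via $r_\pi = \sum_J (\fct{J}-1)$, invoke \Cref{lem:tree_nodes} to bound this by $m_\pi - 1$, and then multiply by the per-product cost $2^W$ coming from the maximal width of any intermediate factor. Your presentation is slightly more explicit in separating the count of operations from the cost per operation (and in citing \Cref{rem:factors_cost}), but the content is identical.
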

\begin{proof}Immediate consequence of \Cref{lem:tree_nodes}, \pink{recalling that to multiply  $k$ factors  requires  
	${(k-1)\cdot 2^\textsc{w}}$ multiplications}, where $\textsc{w}$ is the number of names in the resulting factor. 
	
 Assume we are inductively computing the semantics of $\pi$, following  \Cref{fig:decorated}. Let   $W\leq n_\pi$ be  the largest number of names involved in the computation of any factor. To compute the semantics of a sub-derivation $\pi'\dem J$  given the interpretation of its premises,
we  need to compute  the product of $\fct{J}$ factors, which has cost $0$ if $\fct{J}=0$, and otherwise has cost 
\pink{at most ${(\fct{J}-1) \cdot 2^W}$. }
So (using \Cref{lem:tree_nodes}) the total number of multiplications to compute the semantics of $\pi$ is bounded by 
%
%
\pink{\[{r_{\pi} \cdot 2^W}= {(m_\pi-1) \cdot 2^W} \]}
\end{proof}

\paragraph{Counting Additions.}
The cost of \pink{summing out any number of (binary) \rvs from a factor of size $W$} is $\BigO{2^W}$\footnote{Summing out observed variable does not imply any actual computation, because---as we have already remarked in \Cref{sec:cost}---the set $\bE$ of observed variable does not actually contribute to the number of entries in a factor:  $\Val {\widehat\bY} \times \Val \bE$ is isomorphic to $\Val {\widehat\bY}$.}. 
Notice that, when computing the semantics of $\pi$ following \Cref{fig:decorated}, we perform a sum out when (in a rule $\ilet$ or $\iapp$) some names appear in the premisses, but not in the conclusion. By the observations in \Cref{sec:jointree}, a name can be summed out at most once in the derivation $\pi$, and moreover, only main names are summed out.
\condinc{}{Includere lemma opportuno}
Therefore the number of possible sum outs is bounded by the number $m$ of \paxs in $\pi$. We immediately have the following. 
\begin{lemma}[Additions] With the same notations as above,  the total number of additions  to compute the semantics of a derivation $\pi$ is \[\BigO{m_\pi \cdot 2^W}. \]
\end{lemma}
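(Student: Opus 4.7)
The plan is to count, across the whole inductive computation of $\sem{\pi}$ (as laid out in \Cref{fig:decorated}), first the number of distinct sum-out operations that are ever performed, and then the cost of each one; multiplying the two gives the bound. I would localize where additions occur: only the $\ilet$ and $\iapp$ rules introduce an explicit $\sum_{\bZ}$ in the decoration, while $\isample$, $\icond$, $\ivar$, $\iobs$, $\ipair$, $\iletp$, $\iabs$, $\ibang$, $\ider$ produce factors by pure products (or simply propagate a factor unchanged). Hence the total number of additions equals the sum, over all $\ilet$ and $\iapp$ rules appearing in $\pi$, of the cost of the corresponding sum-out.

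Next I would bound the cost of one such operation. At any node of $\pi$, the decorating factor has name set of cardinality at most $W$ by definition of $W$; consequently the factor $\psi_1\FProd\psi_2$ from which we sum out has at most $2^W$ entries once we restrict to non-observed names (by \Cref{rem:cost}, observed names contribute a trivial factor of $1$ possible value each). Summing out any subset $\bZ$ from a factor of size $\le 2^W$ costs $\BigO{2^W}$ elementary additions, since each entry of the output factor is obtained as a sum of entries of the input factor, and there are at most $2^W$ input entries to be read in total.

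Then I would bound the number of such sum-out operations by $m_\pi$. The key is that a name $\nX$ is summed out at an $\ilet$/$\iapp$ node only when $\nX$ appears in the premise sets $\bY_1\cup \bY_2$ but does not appear in $\Nm{\Lambda,\Gamma_1,\Gamma_2,\iA}$ of the conclusion. By the "Main Names" fact following \Cref{lem:jointree}, and by \Cref{lem:jointree} itself, any such $\nX$ must be the \emph{main name} of a probabilistic axiom $\alpha^{\nX}$ of $\pi$: indeed all positions with name $\nX$ are connected in $\flow{\pi}$ to the unique main occurrence of $\nX$, and once we are at a node high enough for $\nX$ to disappear from the type judgment, the whole connected flow-subtree of $\nX$ lies below that node and must be discarded there and nowhere else. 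Since the naming condition makes each main name appear in exactly one probabilistic axiom, each main name $\nX$ can be summed out at most once along the inductive computation. Hence the number of sum-outs is at most the number $m_\pi$ of probabilistic axioms in $\pi$.

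Combining the two bounds yields a total cost of additions of $\BigO{m_\pi\cdot 2^W}$, as claimed. The main obstacle is the precise link between the inductive decoration and the flow-graph analysis of \Cref{sec:jointree}: one has to argue rigorously that the set $\bZ$ discarded at each $\ilet$/$\iapp$ rule consists exactly of main names whose entire flow-subtree is contained in the corresponding subderivation, and that distinct rules always discard disjoint sets of main names. Once this is spelled out using \Cref{lem:jointree}, the counting argument becomes routine.
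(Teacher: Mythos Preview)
Your proposal is correct and follows essentially the same approach as the paper: sum-outs occur only at $\ilet$/$\iapp$ nodes, each costs $\BigO{2^W}$, and the flow-graph analysis of \Cref{sec:jointree} (only main names are summed out, each at most once) bounds the number of non-trivial sum-out operations by $m_\pi$. The paper's own argument is simply a terser version of what you wrote.
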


\begin{remark}[Reduction and Cost of the Semantics]  The upper bound in \Cref{prop_uppercost} is invariant by reduction (because the number of \paxs axiom in a type derivation is invariant). Instead,  
	the cost of \emph{inductively} computing the semantics of a type derivation $\pi$ is \emph{not} stable by reduction, because as the derivation changes,   the size $W$ of the largest factor to be inductively computed may  grow or decrease. Notice that both cases are possible, so given a term $t$, the cost of inductively computing the semantics  of its normal form may be larger than the cost for $t$.
\end{remark}

\subsection{On the Product of Factors  vs. Product of  Matrices.}\label{app:on_product} 
We stress how  much the product of factors is \emph{different} from the product of matrices. In this difference  lies the efficiency of a factors-based semantics w.r.t. to a categorical 
\cite{JacobsZ}  or relational \cite{EhrPagTas14,EhrhardT19} one, where the  tensor  product $\otimes$ (which behaves as the tensor product of matrices) plays instead a central role.
\begin{example}[Factors are compact]\label{ex:product}
	Let $t_1,t_2$ be \texttt{case} expressions, respectively encoding two CPTs $\phi^{\X_1} = \Pr(\X_1|\Y_1, \Y_2,\Y_3)$ 
	and $\phi^{\X_2} = \Pr(\X_2|\Y_1, \Y_2,\Y_3)$  conditioned to the \emph{same} variables. 
	We can see each  $\phi^{\X_i}$ interpreting $t_i$ as a stochastic matrix (of size $2^4$). One   easily realizes that computing the tensor product of matrices $\phi^{\X_1} \otimes \phi^{\X_2}$ requires to \emph{compute and store}  $2^4\cdot 2^4=2^8$ entries, while 
	the factor product $\phi^{\X_1} \FProd \phi^{\X_2}$  computes $2^5$ entries. 
	In a categorical or relational model, to compute 
	the semantics of the   term $\pair {t_1}{t_2}$ will  (in general) require to compute $\phi^{\X_1} \otimes \phi^{\X_2}$. 
	{On this basis, it is 
		easy to build a term $t$ such as the following one
		{\small 	\begin{align*}
				\leti y_1 = & ~\sample{d_1} \inl~ \leti y_1'= y_1 \inl\\
				\leti y_2 =& ~\sample{d_2} \inl~ \leti y_2'= y_2 \inl\\
				\leti y_3 =& ~\sample{d_3} \inl~ \leti y_3'= y_3 \inl\\
				\leti x_1=& ~\CPTN{\X_1}{y_1,y_2,y_3}\inl\\
				\leti x_2= &~\CPTN{\X_2}{y_1',y_2',y_3'} \inl~\pattern {x_1,x_2}.
		\end{align*}}
		which encodes a BN   over the 5 variables $\X_1, \X_2,\Y_1, \Y_2,\Y_3$, where  computing the inductive interpretation 
		of $t$ in a categorical or relational model requires to compute and store $2^8$ values. 
		This  is somehow weird, since  the full joint distribution over 5 variables  has size $2^5$. 
		In contrast, as we have stressed in this section, 
		the cost of computing our  semantics is never larger  than the cost  of computing the joint distribution.
	}
\end{example}


\section{Subject Reduction} \label{sec:subject-reduction}
We strengthen Subject Reduction statement, in order to have the properties we need to prove also the strong
normalization of typable terms, the invariance of the factor-based semantics, and the acyclicity of the flow graph of a ground derivation. In the present section and in the following one we use the symbol $\mult$ to refer to a (possibly empty) multiset 
without explicitly naming its elements.

\begin{definition}
The measure of a derivation, denoted $\meas{\pi}$, is obtained by counting the number of rules \ilet, \ider, \iapp, \iletp occurring in $\pi$, giving them weight $1,1,2,3$ respectively. All the other rules have $0$ weight. 
\end{definition}

\begin{lemma} \label{lemma:basic_val}
If $\pi \dem \Lambda, \Gamma \vdash v : \iL$ then $\Gamma$ is an empty context and $\meas{\pi} = 0$. Moreover $\pi$ contains no \paxs, and $\flow{\pi}$ is acyclic. 
\end{lemma}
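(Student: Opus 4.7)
The plan is to proceed by structural induction on the value $v$ (equivalently, on the last rule of $\pi$), exploiting the fact that the grammar of values is very restrictive: $\val \grameq x \mid \pair{v_1}{v_2} \mid \oc t \mid \bv$. The first step is a quick elimination of two of the four cases. By inspection of the rules in \Cref{fig:iTypes_obs}, the only rule typing a $\oc t$ is \ibang, whose conclusion has a multiset type $[A_1,\ldots,A_n]$, which is never a ground type $\iL$. Likewise, there is no typing rule for a bare boolean constant $\bv$ in the intersection system (as noted in the remark right after \Cref{fig:iTypes}). Hence, if $\pi \dem \Lambda, \Gamma \vdash v : \iL$ with $\iL$ ground, then $v$ must be either a variable or a pair of values.

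In the base case $v = x$, the only applicable rule is \ivar, so $\pi$ consists of a single axiom $\Lambda, x : \iL \vdash x : \iL$, where the multiset context is empty by the shape of the rule. Thus $\Gamma = \emptyset$, $\meas{\pi}=0$ trivially, $\pi$ contains no \paxs (no \isample or \icond was used), and $\flow{\pi}$ is a disjoint union of parallel edges between matching atom occurrences of the context and the conclusion (one for each atom occurrence in $\iL$), which is evidently acyclic.

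In the inductive case $v = \pair{v_1}{v_2}$, the last rule of $\pi$ must be \ipair, whose premises $\pi_1 \dem \Lambda \vdash v_1 : \iL_1$ and $\pi_2 \dem \Lambda \vdash v_2 : \iL_2$ have empty multiset contexts by the shape of the rule itself. This already gives $\Gamma = \emptyset$. Applying the induction hypothesis to $\pi_1$ and $\pi_2$ yields $\meas{\pi_i}=0$, no \paxs in either, and $\flow{\pi_i}$ acyclic. The \ipair rule carries weight $0$ in the definition of $\meas{\cdot}$ and introduces no probabilistic axiom, so $\meas{\pi}=0$ and $\Cpts{\pi}=\emptyset$. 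For the flow graph, by inspection of \Cref{fig:flowB} the \ipair rule only adds \emph{downward} edges from premise positions $\vnode{}{\XX}$ in $\iL_i$ (premise side) to the matching position in the conclusion, with no edge going from the new conclusion positions back into the premises. The disjoint union $\flow{\pi_1} \cup \flow{\pi_2}$ is acyclic by IH, and adding edges that are sinks on the premise side and sources on the conclusion side cannot close a cycle; hence $\flow{\pi}$ is acyclic.

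The main conceptual point (which I do not expect to be an obstacle, but is where care is needed) is the acyclicity clause: it relies on the very simple shape of \ipair and \ivar in \Cref{fig:flowB}, which create only a strictly tree-like, ``outward'' flow with no opportunity for feedback, in contrast to the more delicate \ilet, \iletp, \iapp, or \ider rules that precisely account for the nonzero weights of $\meas{\cdot}$ and cannot appear in a derivation of a value, as shown above.
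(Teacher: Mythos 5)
Your proof is correct and follows essentially the same route as the paper's: induction on the derivation, observing that a ground-typed value can only be a variable or a pair (ruling out $\oc t$ and bare booleans), so only \ivar and \ipair occur and every clause is immediate. The one (harmless) imprecision is the claim that \ipair adds no edges from conclusion positions into the premises — the upward context edges do exactly that — but since every conclusion position is either a pure source (context side) or a pure sink (type side), no cycle can pass through the conclusion, which is precisely the paper's remark that the flow goes towards the axioms in $\Lambda$ and towards the conclusion in $\iL_1\otimes\iL_2$.
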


\begin{proof}
Since $v:\iL$,  then either $v=x$  or $v= \pair {v_1}{v_2}$. We reason by induction on $\pi$.
\begin{itemize}
	\item Case $v=x$. Then $\pi \dem \Lambda \vdash x : \iL$ where $\Lambda = \Lambda', x : \iL$, and the result immediately follows.
	\item Case $v=\pair{v_1}{v_2}$. By inductive hypothesis the property holds both for $\pi_1 \dem \Lambda \vdash v_1: \iL_1$ and $\pi_2 \dem \Lambda \vdash v_2: \iL_2$; then it immediately follows that $\pi \dem \Lambda \vdash v : \iL_1 \otimes \iL_2$ satisfies the property, as the flow goes towards the axioms in $\Lambda$, and towards the conclusion in $\iL_1 \otimes \iL_2$.
\end{itemize}
\end{proof}

\begin{lemma}[split] \label{lemma:split} 
Let $\pi \dem \Lambda, \Gamma \vdash v : \mult_1 \uplus \dots \uplus \mult_n$. Then there exist $\pi_i \dem \Lambda, \Gamma_i \vdash v : \mult_i $ ($1 \leq i \leq n$) such that $\Gamma = \biguplus_{i=1}^n \Gamma_i$ and $\meas{\pi} = \sum_{i=1}^n \meas{\pi_i}$.  Moreover, if  $\pi \dem \Lambda, \Gamma \vdash v :[]$, then $\Gamma$ is empty, $\meas{\pi} =0$ and $\flow{\pi}$ is acyclic.
\end{lemma}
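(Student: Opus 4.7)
The proof will proceed by case analysis on the value $v$, equivalently on the last rule of $\pi$. The crucial observation is that the conclusion type $\mult_1 \uplus \dots \uplus \mult_n$ is a proper multiset type, not a ground type; inspecting the typing rules of Figure~\ref{fig:iTypes} (equivalently Figure~\ref{fig:iTypes_obs}), the only rules whose conclusion can assign a multiset type to a value are \ivar (when $v$ is a variable) and \ibang (when $v = \oc t$). In particular, the rule \ipair is excluded because it only produces ground tensor types, so we never need to recurse inside pairs.

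The first case is $v = x$. Then $\pi$ is a single \ivar axiom, and necessarily $x \in \dom(\Gamma)$ with $\Gamma(x) = \mult_1 \uplus \dots \uplus \mult_n$, while $\Gamma(y) = []$ for every other $y$. Setting $\Gamma_i \defeq (x : \mult_i)$ and $\pi_i \dem \Lambda, x:\mult_i \vdash x:\mult_i$ (also an \ivar axiom), we obtain $\Gamma = \biguplus_{i=1}^n \Gamma_i$ and $\meas{\pi} = 0 = \sum_i \meas{\pi_i}$. The second case is $v = \oc t$. Then $\pi$ ends with an instance of \ibang with $k$ premises $\rho_j \dem \Lambda, \Delta_j \vdash t : \iA_j$, so that $\mult_1 \uplus \dots \uplus \mult_n = [\iA_1, \dots, \iA_k]$ and $\Gamma = \biguplus_{j=1}^k \Delta_j$. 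This induces a partition of $\{1,\dots,k\}$ into blocks $I_1, \dots, I_n$ where $I_i$ collects the indices of premises assigning the elements of $\mult_i$. Setting $\Gamma_i \defeq \biguplus_{j\in I_i} \Delta_j$ and building $\pi_i$ by grouping the premises $(\rho_j)_{j\in I_i}$ under a new \ibang rule yields the required split; since \ibang contributes $0$ to the measure, the measures sum as required.

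For the moreover part, consider the case of the empty multiset, i.e. $n = 0$ and $\mult_1 \uplus \dots \uplus \mult_n = []$. If $v = x$, the \ivar rule gives $\Lambda, x : [] \vdash x : []$, and since $x : []$ is identified with the empty assignment, $\Gamma$ is empty. If $v = \oc t$, the \ibang rule is applied with zero premises, so $\Gamma = \biguplus_{i=1}^{0} \Delta_i$ is empty and $\pi$ is just the single \ibang rule itself. In both cases $\meas{\pi} = 0$, and since the type $[]$ contains no atom occurrence, $\flow{\pi}$ has no edges touching the conclusion; combined with the absence of probabilistic axioms (immediate in the \ibang case, vacuous in the \ivar case), we obtain that $\flow{\pi}$ is acyclic.

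The argument is entirely combinatorial, and the only delicate point is bookkeeping in the \ibang case: we must make sure that regrouping the premises is allowed, which amounts to observing that \ibang is completely insensitive to the order and grouping of its premises (beyond their multiset union), and that the ground context $\Lambda$ is shared additively across all premises so that it remains unchanged in each $\pi_i$. Everything else is direct from the definitions of $\meas{\cdot}$ and of the flow graph.
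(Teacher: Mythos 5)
Your proof is correct and follows essentially the same route as the paper's (much terser) argument: case analysis on whether $v$ is a variable or a thunk, splitting the multiset in the \ivar context or regrouping the premises of \ibang, with the $[]$ case forcing zero premises and hence an empty context, zero measure, and a trivially acyclic flow. Your version just spells out the bookkeeping that the paper leaves implicit.
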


\begin{proof}
By observing that the subject can be assigned a multiset type either by a rule \ivar or by a rule \ibang; in both cases we  can choose a suitable partition of the multiset. Note that $v = \oc u:[]$ can only be obtained by rule \ibang with $n = 0$ premises, hence $\Gamma$ is empty and $\flow{\pi}$ is trivially acyclic.
\end{proof}

\begin{lemma}[substitution lemma] \label{lemma:substitution}
Let  $\pi^v \dem \Lambda, \Gamma \vdash v : \iP$ and $\pi \dem \Lambda, \Delta, x : \iP \vdash t : \iA $. 
Then there exists $\pi' \dem \Lambda, \Gamma \uplus \Delta \vdash t \subs{x}{v} : \iA $ such that $\meas{\pi'} = \meas{\pi^v} + \meas{\pi}$.
Moreover, consider the rewriting:
\[
\infer[\ilet]{\pi^{\tlet}\dem \Lambda, \Gamma \uplus \Delta \vdash t \esub{x}{v} : \iA }
{\pi^v \dem \Lambda, \Gamma \vdash v : \iP & \pi \dem \Lambda, \Delta, x : \iP \vdash t : \iA }
\quad \rightsquigarrow \quad
\pi' \dem \Lambda, \Gamma \uplus \Delta \vdash t \subs{x}{v} : \iA
\]
The following properties hold:
\begin{enumerate}
\item $\pi^{\tlet}$ and $\pi'$ contain the same \paxs, modulo renaming of free variables;
\item if $\flow{\pi^{\tlet}}$ contains a directed cycle, then $\flow{\pi'}$ contains a directed cycle;
\item if there is a directed path between two positions $i$ and $j$ in the conclusion of $\pi^{\tlet}$, then there is a path between $i$ and $j$ in the conclusion of $\pi'$. 
\end{enumerate}
\end{lemma}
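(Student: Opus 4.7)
The plan is a proof by structural induction on $\pi \dem \Lambda, \Delta, x:\iP \vdash t:\iA$, considering its last rule. The base cases are the axioms. When $t=y$ with $y\neq x$, the context $x:\iP$ must be $x:[]$ (only possible if $\iP$ is a multiset): by the Split Lemma $\pi^v$ has empty multiset context and zero measure, so we obtain $\pi'$ by weakening. When $t=x$, necessarily the premise of \ivar forces $\iP$ to be the full declared type $\iA$; we take $\pi' \defeq \pi^v$ (with suitable sharpening via Split if $\iP$ is a multiset of one element), and $\meas{\pi'}=\meas{\pi^v}=\meas{\pi^v}+0=\meas{\pi^v}+\meas{\pi}$. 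Probabilistic axioms and \iobs do not mention $x$ in the subject, so $\pi'$ is the same axiom on the enlarged context, with $\meas{\pi^v}=0$ by Lemma~\ref{lemma:basic_val} (since the typability of $x$ as positive type in such a rule forces $\iP=[]$).

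For the inductive cases, we reconstruct each rule by applying the IH to the premises. When $\iP=[\iA_1,\ldots,\iA_n]$ is a multiset and the context $x:\iP$ is split among several premises of a binary rule (\ilet, \iapp, \iletp, \ibang), the Split Lemma decomposes $\pi^v$ as $\pi^v_i \dem \Lambda,\Gamma_i \vdash v:\iP_i$ with $\iP=\biguplus_i \iP_i$ and $\meas{\pi^v}=\sum_i \meas{\pi^v_i}$; each $\pi^v_i$ is substituted into the corresponding premise by IH. The measure count is immediate: each rule is reconstructed with its own weight, and the IH plus additivity of Split give $\meas{\pi'}=\meas{\pi^v}+\meas{\pi}$. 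Since the weight of \ilet is $1$, the global identity $\meas{\pi^{\tlet}}=\meas{\pi'}+1$ will be the key fact used later to prove strong normalization.

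For the flow-graph properties of the rewriting $\pi^{\tlet}\rightsquigarrow\pi'$, the key structural observation is that substituting $v$ for $x$ corresponds, at the level of derivations, to replacing each \ivar axiom $\Lambda,x:\iP\vdash x:\iP$ in $\pi$ by a (possibly split) copy of $\pi^v$. Probabilistic axioms are untouched by this transformation (they neither introduce nor consume $x$ directly), giving (1). For (2), any directed cycle in $\flow{\pi^{\tlet}}$ either lies entirely inside $\pi$ or $\pi^v$—and is trivially preserved—or traverses the \ilet interface, i.e.\ an edge from a position in the conclusion of $\pi^v$ to a position in an \ivar axiom for $x$ inside $\pi$; in $\pi'$ these two positions are identified into a single flow through the embedded $\pi^v$, so the cycle persists. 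For (3), the conclusion positions $\Lambda,\Gamma\uplus\Delta\vdash\iA$ are exactly the same in both derivations, and any directed path between two of them in $\pi^{\tlet}$ is similarly short-circuited—but not broken—by the substitution, yielding a path in $\pi'$.

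The main obstacle is the multiset case: when $x$ is used $n$ times in $\pi$ via an \ibang-\ider interaction, the Split Lemma produces $n$ disjoint copies of $\pi^v$, each with its own fresh positions in the flow graph. One must check that the flow edges internal to each copy of $\pi^v$ and the interface edges from copy $i$ to the $i$-th occurrence of $x$ together form a disjoint union, so that paths and cycles can be tracked copy-by-copy. Once one recognises that the positions labelled by each copy are distinct (by the naming convention on type derivations), this disjointness is clear and the transport of paths and cycles across the rewriting becomes routine.
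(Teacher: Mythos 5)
Your overall strategy coincides with the paper's: induction on the last rule of $\pi$, using Lemma~\ref{lemma:split} to distribute $\pi^v$ over premises when $\iP$ is a multiset and Lemma~\ref{lemma:basic_val} when it is ground, with the flow properties obtained by decomposing paths into subpaths inside the subderivations plus interface edges. However, there is a concrete error in your treatment of the base cases. You claim that ``probabilistic axioms and \iobs do not mention $x$ in the subject.'' This is false: the subject of an \icond axiom is $\CPT{y_1,\dots,y_n}$ and the substituted variable $x$ may be one of the $y_i$; likewise the subject of \iobs is $\obsb{x}$ and the observed variable may be exactly the one being substituted. In these cases the type of $x$ is a (possibly observed) atomic type, which forces $v$ to be a variable $z$, and the substitution amounts to renaming a free variable inside the axiom --- this is precisely why item~(1) of the statement is qualified by ``modulo renaming of free variables.'' As written, your argument never produces that renaming and so does not establish item~(1) for these axioms; the paper handles them as explicit subcases.

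A second, smaller inaccuracy: in the case $t=y\neq x$ you assert that $\iP$ must be $[]$ and invoke the Split Lemma. Since ground contexts are managed additively, $\iP$ may equally well be a ground type $\iL$ carried along in $\Lambda$; in that situation the Split Lemma does not apply and one must use Lemma~\ref{lemma:basic_val} to conclude that $\Gamma$ is empty, $\meas{\pi^v}=0$, $\pi^v$ contains no \paxs, and $\flow{\pi^v}$ is acyclic. The same dichotomy (ground $\iP$ handled by Lemma~\ref{lemma:basic_val} versus multiset $\iP$ handled by Lemma~\ref{lemma:split}) recurs in every binary rule, so it should be stated once and tracked systematically rather than assumed away. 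Neither issue requires a new idea to repair, but both are steps where your argument, taken literally, fails.
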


\begin{proof} By induction on the derivation $\pi$, considering its last rule. 
\begin{itemize}
\item Case \ivar. Two subcases:
\begin{enumerate}
\item If $t = x$, then  $\iA = \iP$, that is $\pi \dem \Lambda, x : \iP \vdash x : \iP $. Observe that $\Delta$ is empty. We have:
\[
\infer[\ilet]{\pi^{\tlet}\dem \Lambda, \Gamma \vdash x \esub{x}{v} : \vnode3{\iP} }
  {\pi^v \dem \Lambda, \Gamma \vdash v : \vnode0{\iP}  & 
     \infer[\ivar]{\pi \dem \Lambda, x : \vnode1{\iP } \vdash x : \vnode2{\iP} }{}}
\specflowedge{node0}{node1}{to path={ .. controls +(.5,-.5) and +(-.5,-0.5) .. (\tikztotarget) }}
\flowedge{node2}{node3}
\bentflowedges{node1/node2/55}
\quad \rightsquigarrow \quad
\pi' := \pi^v \dem \Lambda, \Gamma \vdash v:\iP
\]
Clearly $\meas{\pi'} = \meas{\pi^v} = \meas{\pi^v} + \meas{\pi}$ and the \paxs contained in $\pi^{\tlet}$ and $\pi'$ are the same. Lastly, even if we delete $\pi$, the flow remains essentially unaltered, as $\pi$ contains no cyclic path.
\item If $t=y\not=x$ then $\pi \dem \Lambda', x : \iP, y : \iA  \vdash y : \iA$, and $\Lambda = \Lambda', y : \iA$ if $\iA$ is ground, $\Lambda = \Lambda'$ otherwise. It is immediate to check that  $\meas{\pi^v} = 0$, $\Gamma$ is empty, $\pi^v$ contains no \paxs and $\flow{\pi^v}$ is acyclic: 
indeed, either  $\iP = \iL$ is ground, and we use \Cref{lemma:basic_val}, or  $\iP = []$, and we resort to \Cref{lemma:split}.
Therefore we have:
\[
\infer[\ilet]{\pi^{\tlet}\dem \Lambda', y : \iA \vdash y\esub{x}{v} : \iA}
  {\pi^v \dem \Lambda \vdash v : \vnode0{\iP}  & 
    \infer[\ivar]{\pi \dem \Lambda', x : \vnode1{\iP}, y : \iA  \vdash y : \iA}{}}
\specflowedge{node0}{node1}{to path={ .. controls +(.5,-.5) and +(-.5,-0.5) .. (\tikztotarget) }}
\quad \rightsquigarrow \quad 
\infer[\ivar]{\pi' \dem \Lambda', y : \iA \vdash y : \iA}{}
\] 
Clearly $\meas{\pi'} = 0 = \meas{\pi^v} + \meas{\pi}$. Observe that erasing $\pi^v$ has no effect on the flow, because $\pi^v$ does not contain cyclic paths; similarly, there are no \paxs in $\pi^{\tlet}$ nor in $\pi'$, as $\pi^v$ contains no \paxs.
\end{enumerate} 
\item Case \icoin is  similar to the second subcase of \ivar. 
\item Case \icond. Let $A = \XX$; the only interesting subcase that differs from the second subcase of \ivar is: 
\[
\infer[\ilet]{\pi^{\tlet}\dem \Lambda', y: \vnode3{\YY} \vdash \CPT{\dots, x, \dots}\esub{x}{y} : \XX}
{ \infer[\ivar]{\pi^v \dem \Lambda', y: \vnode0{\YY} \vdash y: \vnode1{\YY} }{}  &  
  \infer[\icond]{\pi \dem \Lambda', y: \vnode5{\YY}, x : \vnode2{\YY} \vdash \CPT{\dots, x, \dots} : \XX}{} }
\specdirflowedge{node1}{node2}{to path={ .. controls +(.5,-.5) and +(-.5,-0.5) .. (\tikztotarget) }}
\bentdirflowedges{node0/node1/55}
\dirflowedges{node3/node0, node3/node5}
\]
\[\rightsquigarrow\]
\[
\infer[\icond]{\pi' \dem \Lambda', y : \YY   \vdash \CPT{\dots, y , \dots} : \XX}{}
\]
Note that both $\Gamma$ and $\Delta$ are empty contexts, and that the free variable $x$ has been renamed. Again we have $\meas{\pi'} = 0 = \meas{\pi^v} + \meas{\pi}$; moreover the requirements on cyclic paths and \paxs are trivially satisfied, as $\pi^v$ contains neither of them.
\item Case \iobs. Similarly to what happens with case \ivar, we distinguish two subcases. The first one is $\pi \dem \Lambda, x : \nX^\tb \vdash \obsb{x} : \nX^\tb$, that is $A = P = \nX^\tb$; observe that necessarily $ \pi^v \dem \Lambda \vdash y : \nX^\tb$, so that $\pi' \dem \Lambda, y : \nX^\tb \vdash \obsb{y} : \nX^\tb$ is immediately otained by variable renaming. The other subcase is $\pi \dem \Lambda', x : P, y : \nX^\tb \vdash \obsb{y} : \nX^\tb$, that is $\Lambda = \Lambda', y : \nX^\tb$; the proof proceeds as in the second subcase of \ivar, yielding $\pi' \dem \Lambda', y : \nX^\tb \vdash \obsb{y} : \nX^\tb$. Clearly in both cases one has $\meas{\pi'} = 0 = \meas{\pi} + \meas{\pi^v}$, no probabilistic axiom is involved, and the flow, after rewriting $\pi^\tlet$ into $\pi'$, is essentially unchanged because $\pi^v$ contains no cyclic paths.
\item Case \ipair. Posing $A = \iL_1 \otimes \iL_2$, we have:
\[
\infer[\ilet]{\pi^{\tlet}\dem \Lambda \vdash \pair{w_1}{w_2}\esub{x}{v}: \iL_1 \otimes \iL_2}
{\pi^v \dem \Lambda \vdash v : \vnode0{\iP}   &
  \infer[\ipair]{\pi \dem \Lambda, x : \vnode1{\iP} \vdash \pair{w_1}{w_2} : \iL_1 \otimes \iL_2 }
   {\pi_1 \dem \Lambda, x : \vnode2{\iP} \vdash w_1 : \iL_1 & \pi_2 \dem \Lambda, x : \vnode3{\iP} \vdash w_2 : \iL_2 }
\specdirflowedge{node0}{node1}{to path={ .. controls +(.5,-.4) and +(-.5,-0.4) .. (\tikztotarget) }}
\dirflowedges{node1/node2, node1/node3}
}
\]
\[\rightsquigarrow\]
\[
\infer[\ilet]{\pi^{\tlet}_1 \dem \Lambda \vdash w_1 \esub{x}{v} : \iL_1}
	{\pi^v \dem \Lambda \vdash v : \vnode0{\iP} & \Lambda, x : \vnode1{\iP} \vdash w_1 : \iL_1}
\specdirflowedge{node0}{node1}{to path={ .. controls +(.4,-.4) and +(-.4,-0.4) .. (\tikztotarget) }}
\qquad
\infer[\ilet]{\pi^{\tlet}_2 \dem \Lambda \vdash w_2 \esub{x}{v} : \iL_2}
	{\pi^v \dem \Lambda \vdash v : \vnode0{\iP} & \Lambda, x : \vnode1{\iP} \vdash w_2 : \iL_2}
\specdirflowedge{node0}{node1}{to path={ .. controls +(.4,-.4) and +(-.4,-0.4) .. (\tikztotarget) }}
\]
\[\rightsquigarrow\]
\[
\infer[\ipair]{\pi' \dem \Lambda \vdash \pair{w_1}{w_2}\subs{x}{v}: \iL_1 \otimes \iL_2}
{\infer=[i.h.]{ \pi_1' \dem \Lambda \vdash w_1 \subs{x}{v} : \iL_1}{}
&
\infer=[i.h.]{\pi_2' \dem \Lambda  \vdash w_2 \subs{x}{v} : \iL_2}{}
}
\]
By \Cref{lemma:basic_val}, $\iP$ is either ground or the empty multiset $[]$; we also deduce that $\meas{\pi^v} = \meas{\pi} = \meas{\pi'} = 0$, that both $\Gamma$ and $\Delta$ are empty, and that $\pi^{\tlet}$ contains no \pax. 
Since all the types involved are ground (with the possible exception of $\iP = []$), both $\flow{\pi^{\tlet}}$ and $\flow{\pi'}$ are clearly acyclic, and paths between positions in the conclusion are trivially preserved.
\item Case \iabs.
By letting $A = \iA_1 \arrow \iA_2$, we have:
\[ 
\infer[\ilet]{\pi^{\tlet} \dem \Lambda, \Gamma \uplus \Delta \vdash \lam y. u\esub{x}{v} : \iA_1 \larrow \iA_2}
{\pi^v \dem \Lambda, \Gamma \vdash v : \vnode1{\iP}  & 
	\infer[\iabs]{\pi \dem \Lambda, \Delta, x : \vnode2{\iP} \vdash \lambda y.u : \iA_1 \larrow \iA_2  }
	{\pi_1 \dem \Lambda, \Delta, x : \vnode0{\iP}, y : \iA_1 \vdash u : \iA_2}}
\flowedge{node0}{node2}
\specflowedge{node1}{node2}{to path={ .. controls +(.5,-.4) and +(-.5,-0.4) .. (\tikztotarget) }}
\]
\[\rightsquigarrow\]
\[
\infer[\ilet]{\pi_1^{\tlet} \dem \Lambda, \Gamma \uplus \Delta, y : \iA_1 \vdash u\esub{x}{v} : \iA_2}{ \Lambda, \Gamma \vdash v : \vnode0{\iP} & \Lambda, \Delta, x : \vnode1{\iP}, y : \iA_1 \vdash u : \iA_2  }
\specflowedge{node0}{node1}{to path={ .. controls +(.5,-.4) and +(-.5,-0.4) .. (\tikztotarget) }}
\]
\[\rightsquigarrow\]
\[
\infer[\iabs]{\pi' \dem \Lambda, \Gamma \uplus \Delta \vdash \lam y. u\isub{x}{v} : \iA_1 \larrow \iA_2}
{\infer=[\ih]{\pi_1' \dem \Lambda, \Gamma \uplus \Delta, y : \iA_1 \vdash u\isub{x}{v} : \iA_2}{} }
\]
Alpha conversion ensures $y \not \in \dom(\Gamma)$, and the \ih guarantees there exists $\pi_1' \dem \Lambda, \Gamma \uplus \Delta, y : \iA_1 \vdash u\isub{x}{v} : \iA_2$ containing the same \paxs as $\pi^{\tlet}_1$ and such that $\meas{\pi_1'} = \meas{\pi^v} + \meas{\pi_1}$; observe that $\meas{\pi_1} = \meas{\pi}$ and $\meas{\pi_1'} = \meas{\pi'}$. The \ih also ensures that cycles and paths between positions of the conclusion are preserved when one rewrites $\pi^{\tlet}_1$ into $\pi'_1$.
\item Case \iapp. 
We have:
\begin{small}
\[
\infer[\ilet]{\pi^{\tlet} \dem \Lambda, \Gamma \uplus \Delta \vdash {u_1 u_2}\esub{x}{v}: \iA}
{\pi^v \dem \Lambda, \Gamma \vdash v : \vnode2{\iP}  &
\infer[\iapp]{\pi \dem \Lambda, \Delta, x : \vnode3{\iP} \vdash u_1 u_2 : \iA }
{\pi_1 \dem \Lambda, \Delta_1, x : \iP_1 \vdash u_1 : \vnode0{\iQ} \arrow \iA & 
 \pi_2 \dem \Lambda, \Delta_2, x : \iP_2 \vdash u_2 : \vnode1{\iQ} }
\specflowtwoedge{node0}{node1}{to path={ .. controls +(.6,-.5) and +(-.6,-0.5) .. (\tikztotarget) } }
\specflowedge{node2}{node3}{to path={ .. controls +(.6,-.5) and +(-.6,-0.5) .. (\tikztotarget) }}
}
\]
\[\rightsquigarrow\]
\[
\infer[\ilet]{ \pi_1^{\tlet}\dem \Lambda, \Gamma_1 \uplus \Delta_1 \vdash u_1 \esub{x}{v} : \iQ \arrow \iA}
{\pi^v_1\dem \Lambda, \Gamma_1 \vdash v : \vnode0{\iP_1} & \Lambda, \Delta_1, x : \vnode1{\iP_1} \vdash u_1 : \iQ \arrow \iA }
\quad\quad
\infer[\ilet]{\pi_2^{\tlet} \dem \Lambda, \Gamma_2 \uplus \Delta_2  \vdash u_2 \esub{x}{v} : \iQ}
{\pi^v_2\dem\Lambda, \Gamma_2 \vdash v : \vnode2{\iP_2} & \Lambda, \Delta_2, x : \vnode3{\iP_2} \vdash u_2 : \iQ}
\specflowedge{node0}{node1}{to path={ .. controls +(.5,-.5) and +(-.5,-0.5) .. (\tikztotarget) } }
\specflowedge{node2}{node3}{to path={ .. controls +(.5,-.5) and +(-.5,-0.5) .. (\tikztotarget) }}
\]
\[\rightsquigarrow\]
\[
\infer[\iapp]{\pi' \dem \Lambda, \Gamma \uplus \Delta \vdash {u_1 u_2}\subs{x}{v}: \iA}
{\infer=[\ih]{ \pi_1'\dem \Lambda, \Gamma_1 \uplus \Delta_1 \vdash u_1 \isub{x}{v} : \vnode0{\iQ} \arrow \iA}{}
&
\infer=[\ih]{\pi_2' \dem \Lambda, \Gamma_2 \uplus \Delta_2  \vdash u_2 \isub{x}{v} : \vnode1{\iQ} }{}
\specflowtwoedge{node0}{node1}{to path={ .. controls +(.6,-.5) and +(-.6,-0.5) .. (\tikztotarget) } }
}
\]
\end{small}
We distinguish to subcases:
\begin{itemize}
\item case  $\iP = \iP_1 = \iP_2$ is ground, i.e. $\pi^v = \pi^v_1 = \pi^v_2$. The result follows by  \Cref{lemma:basic_val}, which guarantees $\meas{\pi^v} = 0$ and $\Gamma$ empty, and by \ih, ensuring there exist $\pi_1' \dem \Lambda, \Delta_1  \vdash u_1 \subs{x}{v}: \iQ \arrow \iA$ and $\pi_2' \dem \Lambda, \Delta_2  \vdash u_2 \subs{x}{v} : \iQ $ such that $\meas{\pi_i'} = \meas{\pi^v} + \meas{\pi_i} = \meas{\pi_i}$, for $i \in \{1,2\}$. Indeed, $\meas{\pi'} = \meas{\pi_1'} + \meas{\pi_2'} + 2 = \meas{\pi^v} + \meas{\pi}$.
\item case $\iP_1 = \mult_1$, $\iP_2 = \mult_1$ and $\iP = \mult_1 \uplus \mult_2$. By \Cref{lemma:split}, there exist $\pi^v_i \dem \Lambda, \Gamma_i \vdash v : \mult_i$ ($i \in \{1,2\}$) such that $\Gamma = \Gamma_1 \uplus \Gamma_2$ and $\meas{\pi^v} = \meas{\pi^v_1} + \meas{\pi^v_2}$. Moreover, by \ih there are $\pi_1' \dem \Lambda, \Gamma_1 \uplus \Delta_1 \vdash u_1 \subs{x}{v} : \iQ \arrow \iA$ and $\pi_2' \dem \Lambda, \Gamma_2 \uplus \Delta_2 \vdash u_2 \subs{x}{v} : \iQ$ such that $\meas{\pi_i'} = \meas{\pi^v_i} + \meas{\pi_i}$ for $i \in \{1,2\}$. Therefore $\meas{\pi'} = \meas{\pi_1'} + \meas{\pi'_2} + 2 = \meas{\pi^v_1} + \meas{\pi^v_2} + \meas{\pi_1} + \meas{\pi_2} + 2 = \meas{\pi^v} + \meas{\pi}$.
\end{itemize}
In both cases, the \ih ensures that $\pi^{\tlet}_i$ and $\pi'_i$ ($i \in \{1,2\}$) contain the same \paxs.
Lastly we examine the flow. Observe that:  
\begin{itemize}
\item any path or cycle in $\pi^v$ appears in $\pi^v_1$ or in $\pi^v_2$. 
\item any (possibly cyclic) path in $\pi^{\tlet}$ and be partitioned into  paths in $\pi^v$, paths in $\pi_1$, paths in $\pi_2$, the edges between $\pi_1$ and $\pi_2$ via $\iQ$, 
the paths  between $\pi^v$ and $\pi_i$ via $\iP_i$ ($i\in \{1,2\}$), plus the trivial paths  between the border of $\pi^{\tlet}$ 
and the borders of $\pi^v$, $\pi_1$ and $\pi_2$. 
\end{itemize}
Consider a (possibly cyclic) path $\gamma$ in $\pi^{\tlet}$:
\begin{itemize}
\item If $\gamma$ does not use $\iQ$, then necessarily it is composed of subpaths $\{\gamma_j\}_{j\in I_1}$  in $\pi^v_1$ and $\pi_1$, possibly via $ \iP_1 $,  or paths $\{\gamma_j\}_{j \in I_2}$ in  
$\pi^v_2$ and $\pi_2$, possibly via $ \iP_2 $. In the first case, for each $\gamma_j$ $(j \in I_1)$, a path with the same behaviour  
appears in $\pi^{\tlet}_1$ and, by \ih,  in $\pi_1'$: consequently it appears also in $\pi'$. Similarly for the second case. 
\item  If $\gamma$ does use $\iQ$, we consider the subpaths using $\pi^v_1$ and $\pi_1$, possibly via $\iP_1$, and those using $\pi^v_2$ and $\pi_2$, possibly via $ \iP_2 $. By \ih, we find all of them in $\pi^{\tlet}_1$ and $\pi^{\tlet}_1$, and therefore in $\pi_1'$ and $\pi_2'$. Hence in $\pi'$, using the edges via $\iQ$, we find again a cycle if $\gamma$ was a cycle, or a path between the positions $i$ and $j$  in the conclusion of $\pi'$, if $\gamma$ was a path between the positions $i$ and $j$ in the conclusion of $\pi^{\tlet}$.
\end{itemize}
\item Case \ilet. We have: 
\begin{small}
\[
\infer[\ilet]{\pi^{\tlet} \dem \Lambda, \Gamma \uplus \Delta \vdash u\esub{y}{s}\esub{x}{v}: \iA}
{\pi^v \dem \Lambda, \Gamma \vdash v : \vnode0{\iP}   &
 \infer[\ilet]{\pi \dem \Lambda, \Delta, x : \vnode1{\iP} \vdash u \esub{y}{s} : \iA }
 {\pi_1 \dem \Lambda, \Delta_1, x : \iP_1 \vdash s : \vnode2{\iQ} & \pi_2 \dem \Lambda, \Delta_2, x : \iP_2, y : \vnode3{\iQ} \vdash u : \iA }
}
\specflowedge{node0}{node1}{to path={ .. controls +(.5,-.4) and +(-.5,-0.4) .. (\tikztotarget) }}
\specflowtwoedge{node2}{node3}{to path={ .. controls +(.5,-.5) and +(-.5,-0.5) .. (\tikztotarget) }}
\]
\[\rightsquigarrow\]
\[
 \infer[\ilet]{ \pi^{\tlet}_1 \dem \Lambda, \Gamma_1 \uplus \Delta_1 \vdash s\esub{x}{v} : \vnode2{\iQ} }
  {\pi^v_1 \dem \Lambda, \Gamma_1 \vdash v : \vnode0{\iP_1} & \Lambda, \Delta_1, x : \vnode1{\iP_1} \vdash s : \vnode3{\iQ} }
\specflowedge{node0}{node1}{to path={ .. controls +(.5,-.5) and +(-.5,-0.5) .. (\tikztotarget) }}
\qquad
  \infer[\ilet]{ \pi^{\tlet}_2 \dem \Lambda, \Gamma_2 \uplus \Delta_2, y : \vnode4{\iQ} \vdash u\esub{x}{v} : \iA}
  {\pi^v_2 \dem \Lambda, \Gamma_2 \vdash v : \vnode0{\iP_2} & \Lambda, \Delta_2, x : \vnode1{\iP_2},  y : \vnode5{\iQ} \vdash u : \iA}
\specflowedge{node0}{node1}{to path={ .. controls +(.5,-.5) and +(-.5,-0.5) .. (\tikztotarget) }}
\]
\[\rightsquigarrow\]
\[
\infer[\ilet]{\pi' \dem \Lambda, \Gamma \uplus \Delta \vdash u\esub{y}{s}\isub{x}{v}: \iA}
{   \infer=[i.h.]{ \pi_1' \dem \Lambda, \Gamma_1 \uplus \Delta_1 \vdash s\isub{x}{v} : \vnode0{\iQ} }{}
	&
	\infer=[i.h.]{\pi_2' \dem \Lambda, \Gamma_2 \uplus \Delta_2, y : \vnode1{\iQ} \vdash u\isub{x}{v} : \iA}{}
\specflowtwoedge{node0}{node1}{to path={ .. controls +(.5,-.5) and +(-.5,-0.5) .. (\tikztotarget) }}
}
\]
\end{small}
The reasoning is similar to case \iapp, with two subcases:
\begin{itemize}
\item case $\iP = \iP_1 = \iP_2$ is ground. Again \Cref{lemma:basic_val} assures $\meas{\pi^v} = 0$ and $\Gamma$ empty. By \ih there exist $\pi_1' \dem \Lambda, \Delta_1  \vdash s \subs{x}{v}: \iQ$ and $\pi_2' \dem \Lambda, \Delta_2, y : \iQ \vdash u \subs{x}{v} : \iA$ such that $\meas{\pi_i'} = \meas{\pi^v} + \meas{\pi_i} = \meas{\pi_i}$, for $i \in \{1,2\}$; the result follows observing that $\meas{\pi'} = \meas{\pi_1'} + \meas{\pi_2'} + 1 = \meas{\pi^v} + \meas{\pi}$.
\item case $\iP_1 = \mult_1$, $\iP_2 = \mult_1$ and $\iP = \mult_1 \uplus \mult_2$. We use \Cref{lemma:split} and the inductive hypothesis to obtain $\pi_1' \dem \Lambda, \Gamma_1 \uplus \Delta_1  \vdash s \subs{x}{v}: \iQ$ and $\pi_2' \dem \Lambda, \Gamma_2 \uplus \Delta_2, y : \iQ \vdash u \subs{x}{v} : \iA$ such that $\meas{\pi_i'} = \meas{\pi^v_i} + \meas{\pi_i}$ for $i \in \{1,2\}$. Therefore $\meas{\pi'} = \meas{\pi_1'} + \meas{\pi'_2} + 1 = \meas{\pi^v_1} + \meas{\pi^v_2} + \meas{\pi_1} + \meas{\pi_2} + 1 = \meas{\pi^v} + \meas{\pi}$.
\end{itemize}
As in case \iapp, the \ih ensures that $\pi^{\tlet}_i$ and $\pi'_i$ ($i \in \{1,2\}$) contain the same \paxs; the analysis of the flow is also analogous to the aforementioned case.
\item Case \iletp. 
\begin{small}
\[
\infer[\ilet]{\pi^{\tlet} \dem \Lambda, \Gamma \uplus \Delta \vdash u\esub{\pair{y_1}{y_2}}{w} : \iA}
{\pi^v \dem \Lambda, \Gamma \vdash v : \vnode0{\iP}   &
 \infer[\iletp]{\pi \dem \Lambda, \Delta, x : \vnode1{\iP} \vdash u\esub{\pair{y_1}{y_2}}{w} : \iA }
 {\pi_1 \dem \Lambda, x : \iP_1 \vdash w : \vnode2{\iK_1} \otimes \vnode4{\iK_2} & \pi_2 \dem \Lambda, \Delta, x : \iP_2, y_2 : \vnode5{\iK_2}, y_1 : \vnode3{\iK_1},  \vdash u : \iA }
}
\specflowedge{node0}{node1}{to path={ .. controls +(.5,-.4) and +(-.5,-0.4) .. (\tikztotarget) }}
\specdirflowtwoedge{node2}{node3}{to path={ .. controls +(.5,-.5) and +(-.5,-0.5) .. (\tikztotarget) }}
\specdirflowtwoedge{node4}{node5}{to path={ .. controls +(.35,-.35) and +(-.35,-0.35) .. (\tikztotarget) }}
\]
\[\rightsquigarrow\]
\[
 \infer[\ilet]{ \pi^{\tlet}_1 \dem \Lambda \vdash w\esub{x}{v} : \vnode2{\iK_1 \otimes \iK_2} }
  {\pi^v_1 \dem \Lambda \vdash v : \vnode0{\iP_1} & \Lambda, x : \vnode1{\iP_1} \vdash w : \vnode3{\iK_1 \otimes \iK_2} }
\specflowedge{node0}{node1}{to path={ .. controls +(.4,-.5) and +(-.4,-0.5) .. (\tikztotarget) }}
\qquad
  \infer[\ilet]{ \pi^{\tlet}_2 \dem \Lambda, \Gamma \uplus \Delta, y : \vnode4{\iK_1 \otimes \iK_2} \vdash u\esub{x}{v} : \iA}
  {\pi^v_2 \dem \Lambda, \Gamma \vdash v : \vnode0{\iP_2} & \Lambda, \Delta, x : \vnode1{\iP_2},  y_1 : \vnode5{\iK_1}, y_2 : {\iK_2} \vdash u : \iA}
\specflowedge{node0}{node1}{to path={ .. controls +(.5,-.5) and +(-.5,-0.5) .. (\tikztotarget) }}
\]
\[\rightsquigarrow\]
\[
\infer[\iletp]{\pi' \dem \Lambda, \Gamma \uplus \Delta \vdash u\esub{\pair{y_1}{y_2}}{w}\isub{x}{v} : \iA}
{   \infer=[i.h.]{ \pi_1' \dem \Lambda \vdash w\isub{x}{v} : \vnode0{\iK_1} \otimes \vnode2{\iK_2} }{}
	&
	\infer=[i.h.]{\pi_2' \dem \Lambda, \Gamma \uplus \Delta, y_2 : \vnode3{\iK_2}, y_1 : \vnode1{\iK_1} \vdash u\isub{x}{v} : \iA}{}
\specdirflowtwoedge{node0}{node1}{to path={ .. controls +(.5,-.5) and +(-.5,-0.5) .. (\tikztotarget) }}
\specdirflowtwoedge{node2}{node3}{to path={ .. controls +(.35,-.35) and +(-.35,-0.35) .. (\tikztotarget) }}
}
\]
\end{small}
There are two subcases:
\begin{itemize}
\item case $\iP_1 = \iP_2 = \iP$ is ground. \Cref{lemma:basic_val} assures $\meas{\pi^v} = 0$ and $\Gamma$ empty. By \ih there are $\pi_1' \dem \Lambda \vdash w\subs{x}{v}: \iK_1 \otimes \iK_2$ and $\pi_2' \dem \Lambda, \Delta, y_1 : \iK_1, y_2 : \iK_2 \vdash u \subs{x}{v} : \iA$ such that $\meas{\pi_i'} = \meas{\pi^v} + \meas{\pi_i} = \meas{\pi_i}$, for $i \in \{1,2\}$; observe that $\meas{\pi_1} = \meas{\pi_1'} = 0$, therefore $\meas{\pi'} = \meas{\pi_2'} + 3 = \meas{\pi^v} + \meas{\pi}$.
\item case $\iP_1 = []$ and $\iP_2 = \iP$. By \Cref{lemma:split} and the \ih we obtain $\pi_1' \dem \Lambda  \vdash w\subs{x}{v}: \iK_1 \otimes \iK_2$ and $\pi_2' \dem \Lambda, \Gamma \uplus \Delta, y_1 : \iK_1, y_2 : \iK_2 \vdash u\subs{x}{v} : \iA$ such that $\meas{\pi_i'} = \meas{\pi^v_i} + \meas{\pi_i}$ for $i \in \{1,2\}$. Remark that $\meas{\pi_1} = \meas{\pi_1'} = \meas{\pi^v_1} = 0$, hence $\meas{\pi'} = \meas{\pi'_2} + 3 = \meas{\pi^v_2} + \meas{\pi_2} + 3 = \meas{\pi^v} + \meas{\pi}$.
\end{itemize}
The analysis of both the \paxs and of the flow proceeds as usual.
\item Case \ibang. Let $\iA = [\iA_1,\dots,\iA_n]$; then we have:
\[ 
\infer[\ilet]{\pi^{\tlet} \dem \Lambda, \Gamma \uplus \Delta \vdash \oc u\esub{x}{v} : [\iA_1,\dots,\iA_n]}
{\pi^v \dem \Lambda, \Gamma \vdash v : \vnode0{\iP}  & 
	\infer[\ibang]{\pi \dem \Lambda, \Delta, x : \vnode1{\iP} \vdash  \oc u : [\iA_1,\dots,\iA_n] }
    { \big( \ \pi_i \dem \Lambda, \Delta_i, x : \iP_i \vdash u : \iA_i \ \big)_{i=1}^n }  }
\specflowedge{node0}{node1}{to path={ .. controls +(.5,-.5) and +(-.5,-0.5) .. (\tikztotarget) } }
\]
\[ \rightsquigarrow \]
\[
\Big( \ \vcenter{
 \infer[\ilet]{ \pi^{\tlet}_i \dem \Lambda, \Gamma_i \uplus \Delta_i \vdash u\esub{x}{v} : \iA_i }
  { \pi^v_i \dem \Lambda, \Gamma_i \vdash v : \vnode0{\iP_i} & \Lambda, \Delta_i, x : \vnode1{\iP_i} \vdash u : \iA_i  }  } \
\Big)^n_{i=1}
\specflowedge{node0}{node1}{to path={ .. controls +(.5,-.5) and +(-.5,-0.5) .. (\tikztotarget) } }
\]
\[ \rightsquigarrow \]
\[
\infer[\ibang]{\pi' \dem \Lambda, \Gamma \uplus \Delta \vdash \oc u \subs{x}{v} : [\iA_1,\dots,\iA_n]}
{ \infer=[\ih]{\big( \ \pi_i' \dem \Lambda, \Gamma_i \uplus \Delta_i \vdash u\subs{x}{v} : \iA_i \ \big)_{i=1}^n}{}
	  }
\]
As always, consider the two subcases:
\begin{itemize}
\item case $\iP = \iP_i$ is ground, i.e. $\pi^v = \pi^v_i$ for all $i$. By \Cref{lemma:basic_val} we know that $\meas{\pi^v} = 0$ and $\Gamma $ is an empty context. Moreover, by \ih there exist $\pi_i' \dem \Lambda, \Delta_i  \vdash u \subs{x}{v} : \iA_i$ such that $\meas{\pi_i'} = \meas{\pi^v} + \meas{\pi_i} = \meas{\pi_i}$ ($1 \leq i \leq n$).
\item case $\iP_i = \mult_i$ and $\iP = \biguplus_{i=1}^n \mult_i$. Then by Lemma \ref{lemma:split} there exist $\Lambda, \Gamma_i \vdash v : \mult_i$ ($1 \leq i \leq n$) such that $\biguplus_{i=1}^n \Gamma_i = \Gamma$ and $\meas{\pi} = \sum_{i=1}^n \meas{\pi_i}$. By \ih there are  $\pi_i' \dem \Lambda, \Gamma_i \uplus \Delta_i \vdash u \subs{x}{v} : \iA_i$ such that $\meas{\pi_i'} = \meas{\pi^v_i} + \meas{\pi_i}$ ($1 \leq i \leq n$).
\end{itemize}
In both cases, \ih guarantees that $\pi^{\tlet}_i$ and $\pi'_i$ ($1 \leq i \leq n$) contain the same \paxs. The analysis of the flow is straightforward, as every path between the positions of $\iP$ in $\pi^{\tlet}$ can be recovered via the paths between the positions of $\iP_i$ in $\pi^{\tlet}_i$.
\item Case \ider. 
\[ 
\infer[\ilet]{\pi^{\tlet} \dem \Lambda, \Gamma \uplus \Delta \vdash \der u\esub{x}{v} : [\iA]}
{\pi^v \dem \Lambda, \Gamma \vdash v : \vnode0{\iP}  & 
	\infer[\ider]{\pi \dem \Lambda, \Delta, x : \vnode1{\iP} \vdash \der u : \iA }
     {\pi_1 \dem \Lambda, \Delta, x : \vnode2{\iP} \vdash u : [\iA] }  }
\specflowedge{node0}{node1}{to path={ .. controls +(.5,-.5) and +(-.5,-0.5) .. (\tikztotarget) } }
\flowedge{node1}{node2}
\]
\[ \rightsquigarrow \]
\[
{ \infer[\ilet]{\pi_1' \dem \Lambda, \Gamma \uplus \Delta \vdash u\esub{x}{v} : [\iA]}
  { \Lambda, \Gamma \vdash v : \vnode0{\iP} & \Lambda, \Delta, x : \vnode1{\iP} \vdash u : [\iA] }  }
\specflowedge{node0}{node1}{to path={ .. controls +(.5,-.5) and +(-.5,-0.5) .. (\tikztotarget) } }
\]
\[ \rightsquigarrow \]
\[
\infer[\ider]{\pi' \dem \Lambda, \Gamma \uplus \Delta \vdash \der u\subs{x}{v} : \iA}
{ \infer=[\ih]{\pi_1' \dem \Lambda, \Gamma \uplus \Delta \vdash u\subs{x}{v} : [\iA]}
	{ }}
\]
By \ih there exists $\pi_1' \dem \Lambda, \Gamma \uplus \Delta \vdash u \subs{x}{v} : [\iA]$ such that $\meas{\pi_1'} = \meas{\pi^v} + \meas{\pi_1}$; note that $\meas{\pi_1} = \meas{\pi}$ and $\meas{\pi_1'} = \meas{\pi'}$. 
Moreover, \ih guarantees that the \paxs contained in $\pi^{\tlet}_1$ and $\pi'_1$ are the same, and that both cycles and path between positions of the conclusion are preserved when transforming the former derivation into the latter.
\end{itemize}
\end{proof}


\begin{lemma}[Subject reduction] 
If $\pi \dem \Lambda, \Gamma \vdash t : \iA$ and $t \mapsto t'$, then $\pi' \dem \Lambda, \Gamma \vdash t' : \iA$ . Moreover:
\begin{enumerate}
\item $\meas{\pi} > \meas{\pi'}$;
\item $\pi$ and $\pi'$ contain the same \paxs, modulo renaming of free variables;
\item if $\flow{\pi}$ contains a directed cycle, then $\flow{\pi'}$ contains a directed cycle;
\item if there is a directed path between two positions $i$ and $j$ in the conclusion of $\pi$, then there is a path between $i$ and $j$ in the conclusion of $\pi'$. 
\end{enumerate}
\end{lemma}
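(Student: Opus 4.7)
The plan is to proceed by case analysis on the root rule applied in $t \mapsto t'$, handling $\mapsto_{\dS}$, $\mapsto_{\dbang}$, $\mapsto_{\dpair}$, $\mapsto_{\dB}$ in turn, and reducing each to the Substitution Lemma (\Cref{lemma:substitution}) plus a local rearrangement of the derivation. The assigned weights $1,1,2,3$ on the rules $\ilet,\ider,\iapp,\iletp$ are calibrated precisely so that the measure strictly decreases in every case, while the Substitution Lemma alone only preserves the measure.

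First I would dispatch the three ``small'' rules. For $\mapsto_\dS$: the typing of $t\esub{x}{\scp{v}}$ is an $\ilet$ whose first premise types $\scp{v}$; this premise is necessarily obtained by a (possibly empty) list of $\ilet$/$\iletp$ rules over a final derivation of $v$. I would permute those rules outside the outer $\ilet$ and then directly invoke the Substitution Lemma to obtain $\pi' \dem \Lambda,\Gamma \vdash \scp{t\isub{x}{v}} : \iA$, with strict decrease because the outer $\ilet$ (weight $1$) is removed and the substitution lemma preserves the measure. For $\mapsto_{\dbang}$: the typing of $\der(!t)$ is an $\ider$ of type $\iA$ above an $\ibang$ of type $[\iA]$, and the $\ibang$ has (by its single-premise form) a sub-derivation typing $t$ at $\iA$; I would simply take this sub-derivation as $\pi'$, with strict decrease from eliminating an $\ider$ (weight $1$). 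For $\mapsto_{\dpair}$: the typing of $t\esub{\pair x y}{\pair v w}$ is an $\iletp$ (weight $3$) whose value premise is necessarily an $\ipair$ splitting $\pair v w$ into two typings of $v$ and $w$; I would replace $\iletp$ by two nested $\ilet$ (total weight $2$) and apply the Substitution Lemma twice. Strict decrease follows.

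The case $\mapsto_{\dB}$ is similar in spirit but requires commuting the $\iapp$ past the list of $\ilet$/$\iletp$ that make up $\scp{\cdot}$. Concretely, in $\scp{\lambda x.t}\,v \mapsto \scp{t\esub{x}{v}}$ the $\iapp$ typing $\scp{\lambda x.t}\,v$ has its function premise obtained from a derivation that ends at the $\iabs$ of $\lambda x.t$ via a list of $\ilet$/$\iletp$. I would permute the $\iapp$ downwards through this list (using additive management of the ground context and the disjoint-domain conventions for multiset contexts) until it sits directly above the $\iabs$, then fold $\iapp{+}\iabs$ into an $\ilet$. Strict decrease comes from replacing $\iapp$ (weight $2$) by $\ilet$ (weight $1$); the commutations through inner $\ilet$/$\iletp$ are measure-preserving.

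Preservation of \paxs is routine: none of the above transformations creates or destroys a probabilistic axiom, and the only bookkeeping is the renaming of free variables that may arise from the Substitution Lemma. The main obstacle is the preservation of the flow-graph properties (items 3 and 4). Fortunately, the Substitution Lemma already provides exactly the two properties we need (cycle preservation and path-between-conclusion-positions preservation), so the remaining work is to check that each of the local rearrangements above (commutations of $\iapp$ past $\ilet$/$\iletp$, the $\ider$/$\ibang$ collapse, and the $\iletp$/$\ipair$ splitting) sends every directed path of $\flow\pi$ into a directed path of $\flow{\pi'}$ with the same endpoints in the conclusion — hence cycles to cycles and conclusion-paths to conclusion-paths. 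This is a finite check inspecting the flow edges described in \Cref{fig:flowB}, exploiting that each rearrangement merely reroutes a segment of a path through the new, locally smaller set of edges rather than disconnecting anything. Packaging these checks is the technically heavy but conceptually straightforward core of the proof.
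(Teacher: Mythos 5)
Your proposal is correct and follows essentially the same route as the paper's proof: case analysis on the root rule, with the Substitution Lemma carrying the real work for $\mapsto_{\dS}$, a commutation of $\iapp$ through the substitution list followed by folding $\iapp{+}\iabs$ into $\ilet$ for $\mapsto_{\dB}$, direct collapse for $\mapsto_{\dbang}$, and the same weight calibration ($1,1,2,3$) to force strict decrease, with flow-graph preservation delegated to the Substitution Lemma plus finite local checks. The only small divergence is $\mapsto_{\dpair}$, where the paper (using the explicit-substitution formulation of the rule) merely rearranges the $\iletp/\ipair$ pair into two nested $\ilet$s without invoking the Substitution Lemma, whereas you apply it twice — both variants are valid and give the strict decrease.
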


\begin{proof}
The proof is by induction on the reduction context $E$ in which the reduction takes place. For the sake of space we focus on the base case $E = \eshole{\cdot}$, the inductive cases being easy to check using the \ih.
There is one case per rewriting rule.
\begin{itemize}
\item Rule $\dB$. 
In this case we have $t = (\lambda x.s)\eslist u$ and $t' = \eshole{s \esub{x}{u}}\eslist $. Thus we proceed by induction on $\eslist$.
\begin{itemize}
\item Case $\eslist = \shole{\cdot}$. The derivation $\pi$ has shape:
\begin{prooftree}
\AxiomC{$\Lambda, \Gamma_1, x : \vnode0{\iP} \vdash s : \iA$}
\RightLabel{\iabs}
\UnaryInfC{$\Lambda, \Gamma_1 \vdash \lambda x.s : \vnode1{\iP} \arrow \iA $}
	\AxiomC{$\Lambda, \Gamma_2 \vdash u : \vnode2{\iP} $}
\RightLabel{\iapp}
\BinaryInfC{$\pi \dem \Lambda, \Gamma \vdash (\lambda x.s)u : \iA $}
$
\flowedge{node0}{node1}
\specflowedge{node1}{node2}{to path={ .. controls +(.5,-.5) and +(-.5,-0.5) .. (\tikztotarget) } }
$
\end{prooftree}

We conclude with 
\begin{prooftree}
	\AxiomC{$\Lambda, \Gamma_2 \vdash u : \vnode0{\iP} $}
	\AxiomC{$\Lambda, \Gamma_1, x : \vnode1{\iP} \vdash s : \iA $}
\RightLabel{\ilet}
\BinaryInfC{$\pi' \dem \Lambda, \Gamma \vdash s \esub{x}{u} : \iA $}
$
\specflowedge{node0}{node1}{to path={ .. controls +(.5,-.5) and +(-.5,-0.5) .. (\tikztotarget) } }
$
\end{prooftree}

Observe that $\meas{\pi} = \meas{\pi'} + 1$. 
Clearly the rewriting rule does not affect \paxs nor the paths between positions in the conclusion; 
cyclic paths are trivially preserved too.
\item Case $\eslist = \eslist'\esub{y}{r}$. The derivation $\pi$ is: 
\begin{prooftree}
\AxiomC{$\pi_1 \dem \Lambda, \Gamma_1 \vdash r : \vnode0{\iP} $}
	\AxiomC{$ \pi_2 \dem \Lambda, \Gamma_2, y : \vnode1{\iP} \vdash \eshole{\lambda x.s}\eslist' : \vnode2{\iQ} \arrow \iA $}
\RightLabel{\ilet}
\BinaryInfC{$\Lambda, \Gamma_1 \uplus \Gamma_2 \vdash \eshole{\lambda x.s}\eslist : \vnode3{\iQ} \arrow \iA $}
	\AxiomC{$\pi_3 \dem \Lambda, \Gamma_3 \vdash u : \vnode4{\iQ} $}
\RightLabel{\iapp}
\BinaryInfC{$\pi \dem \Lambda, \Gamma \vdash \eshole{\lambda x.s}\eslist u : \iA$}
$
\specflowedge{node0}{node1}{to path={ .. controls +(.4,-.5) and +(-.4,-.5) .. (\tikztotarget) } }
\flowtwoedge{node2}{node3}
\specflowtwoedge{node3}{node4}{to path={ .. controls +(.6,-.5) and +(-.6,-.5) .. (\tikztotarget) } }
$
\end{prooftree}

Looking at $\pi$, we can safely assume that $y \not \in \dom(\Gamma_3)$; then one can build 
\begin{prooftree}
\AxiomC{$\pi_2 \dem \Lambda, \Gamma_2, y : \vnode2{\iP} \vdash \eshole{\lambda x.s}\eslist' : \vnode0{\iQ} \arrow \iA$}
	\AxiomC{$\pi_3 \dem \Lambda, \Gamma_3 \vdash u : \vnode1{\iQ} $}
\RightLabel{\iapp}
\BinaryInfC{$\rho \dem \Lambda, \Gamma_2 \uplus \Gamma_3, y : \vnode3{\iP} \vdash \eshole{\lambda x.s}\eslist' u : \iA$}
$
\specflowtwoedge{node0}{node1}{to path={ .. controls +(.6,-.5) and +(-.6,-.5) .. (\tikztotarget) } }
\flowedge{node2}{node3}
$
\end{prooftree}

By \ih there exists $\rho' \dem \Lambda, \Gamma_2 \uplus \Gamma_3, y : \iP \vdash \eshole{s \esub{x}{u}}\eslist' : \iA $ preserving \paxs and such that $\meas{\rho} > \meas{\rho'}$; hence one can build 
\begin{prooftree}
\AxiomC{$\pi_1 \dem \Lambda, \Gamma_1 \vdash r : \vnode0{\iP}$}
	\AxiomC{$\rho' \dem \Lambda, \Gamma_2 \uplus \Gamma_3, y : \vnode1{\iP} \vdash \eshole{s \esub{x}{u}}\eslist' : \iA$}
\RightLabel{\ilet}
\BinaryInfC{$\pi' \dem \Lambda, \Gamma \vdash \eshole{s \esub{x}{u}}\eslist : \iA $}
$
\specflowedge{node0}{node1}{to path={ .. controls +(.5,-.5) and +(-.5,-.5) .. (\tikztotarget) } }
$
\end{prooftree}
Note that $\meas{\pi} = \meas{\rho} + \meas{\pi_1} + 1 > \meas{\rho'} + \meas{\pi_1} + 1 = \meas{\pi'}$.
We now examine the flow in detail. Any (possibly cyclic) path $\gamma$ in $\pi$ can be partitoned into paths inside $\pi_1,\pi_2,\pi_3$, those connecting $\pi_2$ and $\pi_3$ via $\iQ$,  the edges connecting $\pi_1$ and $\pi_2$ via $\iP$, and the trivial edges between the border of $\pi$ and the borders of the three subderivations. Any path in $\pi$ which does not use $\iP$ nor $\Gamma_1$ appears in $\rho$ and consequently, by \ih, in $\rho'$. 
Therefore if $\pi$ has a a cycle or a path between two positions $i$ and $j$ in its conclusion, we are able to recover a cycle or a path between  positions $i$ and $j$  also in $\pi'$, possibly by using the edges between $\iP$ in $\pi'$.
\end{itemize}
\item Rule $\sval$.
Then $t = s \esub{x}{\eshole{v}\eslist}$ and $t' = \eshole{s\subs{x}{v}}\eslist$. Again, we proceed by induction on $\eslist$.
\begin{itemize}
\item $\eslist = \shole{\cdot}$. Then $\pi$ is:
\begin{prooftree}
\AxiomC{$\pi_1 \dem \Lambda, \Gamma_1 \vdash v : \iQ$}	
	\AxiomC{$\pi_2 \dem \Lambda, \Gamma_2, x : \iQ \vdash s : \iA$}
\RightLabel{\ilet}
\BinaryInfC{$\pi \dem \Lambda, \Gamma \vdash s\esub{x}{v} : \iA$}
\end{prooftree}
and we conclude by Lemma \ref{lemma:substitution}, assuring there exists $\pi' \dem \Lambda, \Gamma \vdash s\subs{x}{v} : \iA$ such that $\meas{\pi'} = \meas{\pi_1} + \meas{\pi_2}$. The same Lemma guarantees that \paxs, paths between positions in the conclusion and cycles are preserved.
\item $\eslist = \eslist'\esub{y}{r}$. The derivation $\pi$ ha shape:
\begin{prooftree}
\AxiomC{$\pi_1 \dem \Lambda, \Gamma_1 \vdash r : \vnode0{\iP} $}
    \AxiomC{$\pi_2 \dem \Lambda, \Gamma_2, y : \vnode1{\iP} \vdash \eshole{v}\eslist' : \vnode2{\iQ} $}
\RightLabel{\ilet}
\BinaryInfC{$\Lambda, \Gamma_1 \uplus \Gamma_2 \vdash \eshole{v}\eslist : \vnode3{\iQ} $}	
	\AxiomC{$\pi_3 \dem \Lambda, \Gamma_3, x : \vnode4{\iQ} \vdash s : \iA$}
\RightLabel{\ilet}
\BinaryInfC{$\pi \dem \Gamma \vdash s\esub{x}{\eshole{v}\eslist} : \iA$}
$
\specflowedge{node0}{node1}{to path={ .. controls +(.5,-.5) and +(-.5,-.5) .. (\tikztotarget) } }
\flowtwoedge{node2}{node3}
\specflowtwoedge{node3}{node4}{to path={ .. controls +(.5,-.5) and +(-.5,-.5) .. (\tikztotarget) } }
$
\end{prooftree}
We start by building 
\begin{prooftree}
    \AxiomC{$\pi_2 \dem \Lambda, \Gamma_2, y : \vnode0{\iP} \vdash \eshole{v}\eslist' : \vnode2{\iQ} $}
\AxiomC{$\pi_3 \dem \Lambda, \Gamma_3, x : \vnode3{\iQ} \vdash s : \iA$}
\RightLabel{\ilet}
\BinaryInfC{$\rho \dem \Lambda, \Gamma_2 \uplus \Gamma_3, y: \vnode1{\iP} \vdash s\esub{x}{\eshole{v}\eslist'} : \iA$}
$
\flowedge{node0}{node1}
\specflowtwoedge{node2}{node3}{to path={ .. controls +(.5,-.5) and +(-.5,-.5) .. (\tikztotarget) } }
$
\end{prooftree}
By \ih there exists $\rho' \dem \Lambda, \Gamma_2 \uplus \Gamma_3 , y : \iP \vdash \eshole{s\isub{x}{v}}\eslist' : \iA$ preserving \paxs and such that $\meas{\rho} > \meas{\rho'}$. Therefore we can conclude with 
\begin{prooftree}
	    \AxiomC{$ \pi_1 \dem \Lambda, \Gamma_1 \vdash r : \vnode0{\iP} $}
\AxiomC{$\rho' \dem \Lambda, \Gamma_2 \uplus \Gamma_3 , y : \vnode1{\iP} \vdash \eshole{s\isub{x}{v}}\eslist' : \iA$}
\RightLabel{\ilet}
\BinaryInfC{$\pi' \dem \Lambda, \Gamma \vdash \eshole{s \subs{x}{v}}\eslist : \iA$}
$\specflowedge{node0}{node1}{to path={ .. controls +(.5,-.5) and +(-.5,-.5) .. (\tikztotarget) } }$
\end{prooftree}
It is easy to check that $\meas{\pi} = \meas{\rho} + \meas{\pi_1} + 1 > \meas{\rho'} + \meas{\pi_1} + 1 = \meas{\pi'}$.
The reasoning on the flow is similar to case $\dB$.
\end{itemize}
\item Rule $\dbang$.
In this case $ t = \der \oc s$ and $t' = s$. We simply have:
\begin{prooftree}
\AxiomC{$\pi' \dem \Gamma \vdash s : \iA$}
\RightLabel{\ibang}
\UnaryInfC{$\Gamma \vdash \oc s : [\iA]$}
\RightLabel{\ider}
\UnaryInfC{$\pi \dem \Gamma \vdash \der (\oc s) : \iA$}
\end{prooftree}
Clearly $\meas{\pi} = \meas{\pi'} + 1$, and the \paxs contained in $\pi$ and $\pi'$ are the same. Moreover, since cyclic paths can only be in $\pi'$, they are trivially preserved; a similar argument applies to the paths between positions in the conclusion.
\item Rule $\dpair$.
In this case $t$ is $\letp{\pair{x_1}{x_2}}{\pair{v_1}{v_2}}{s}$ and $t'$ is $s\esub{x_1}{v_1}\esub{x_2}{v_2}$.
The derivation $\pi$ is
\begin{prooftree}
\AxiomC{$\Lambda \vdash v_1 : \iL_1 $ }
	\AxiomC{$\Lambda \vdash v_2 : \iL_2 $ }
\RightLabel{\ipair}
\BinaryInfC{$\Lambda \vdash \pair{v_1}{v_2} : \iL_1 \otimes \iL_2 $}
	\AxiomC{$\Lambda, \Gamma, x_1 : \iL_1, x_2 : \iL_2 \vdash s : \iA $}
\RightLabel{\iletp}
\BinaryInfC{$\pi \dem \Lambda, \Gamma \vdash \letp{\pair{x_1}{x_2}}{\pair{v_1}{v_2}}{s} : \iA$}
\end{prooftree}
and it is easy to build
\begin{prooftree}
\AxiomC{$\Lambda \vdash v_2 : \iL_2$}
	\AxiomC{$\Lambda \vdash v_1 : \iL_1$ }
		\AxiomC{$\Lambda, \Gamma, x_1 : \iL_1, x_2 : \iL_2 \vdash s : \iA$}
	\RightLabel{\ilet}
	\BinaryInfC{$\Lambda, \Gamma, x_2 : \iL_2 \vdash s\esub{x_1}{v_1} : \iA$}
\RightLabel{\ilet}
\BinaryInfC{$\pi' \dem \Lambda, \Gamma \vdash s\esub{x_1}{v_1}\esub{x_2}{v_2} : \iA$}
\end{prooftree}
Observe that $\meas{\pi} = \meas{\pi'} + 1$. Since we only need to rearrange the subderivations, it is easy to verify that \paxs, the paths between positions in the conclusion, and cyclic paths, are indeed preserved.
\end{itemize}
\end{proof}


\section{Subject Expansion} \label{sec:subject-expansion}
We strengthen the statement of Subject Expansion, in order to have the elements we need to prove also \Cref{prop:uniqueness}.

\begin{lemma}[Anti-split] \label{lemma:anti-split}
Let $\pi_i \dem \Lambda, \Gamma_i \vdash v : \mult_i$ $(1 \leq i \leq n)$. Then there exists $\pi \dem \Lambda, \Gamma \vdash v : \mult$ such that $\Gamma = \biguplus_{i=1}^n \Gamma_i$ and $\mult = \biguplus_{i=1}^n \mult_i$. Moreover, if all $\pi_i$ $(1 \leq i \leq n)$ are unique, then $\pi$ is unique.
\end{lemma}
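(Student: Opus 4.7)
The approach is a straightforward case analysis on the shape of $v$, exploiting the fact that only a variable or a $\oc$-term can receive a multiset (positive) type.  Note that the grammar of values is $v ::= x \mid \pair{v}{w} \mid \oc t \mid \bv$, and among the typing rules, the only ones whose conclusion has a multiset type are $\ivar$ (when the variable is assigned a multiset positive type) and $\ibang$.  A pair has a tensor type, and a boolean is only typed via $\iobs$ (with an atomic observed type).  So each $\pi_i$ is either an $\ivar$ axiom (if $v = x$) or ends with $\ibang$ (if $v = \oc u$).

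\textbf{Case $v = x$.}  Each $\pi_i$ is the axiom $\Lambda, \Gamma_i \vdash x : \mult_i$, where by the $\ivar$ rule necessarily $\Gamma_i(x) = \mult_i$ and $\Gamma_i(y) = []$ for $y \neq x$.  Then $\Gamma = \biguplus_i \Gamma_i$ satisfies $\Gamma(x) = \biguplus_i \mult_i = \mult$ and $\Gamma(y) = []$ otherwise, so the single $\ivar$ axiom $\pi \dem \Lambda, \Gamma \vdash x : \mult$ is well-formed.  Uniqueness is immediate since the only rule typing a variable is $\ivar$, which is fully determined by the conclusion.

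\textbf{Case $v = \oc u$.}  Each $\pi_i$ ends with an $\ibang$ rule: writing $\mult_i = [A_{i,1}, \dots, A_{i,k_i}]$, there are sub-derivations $\rho_{i,j} \dem \Lambda, \Gamma_{i,j} \vdash u : A_{i,j}$ with $\Gamma_i = \biguplus_{j=1}^{k_i} \Gamma_{i,j}$ (with $k_i = 0$ allowed, in which case $\mult_i = []$ and $\Gamma_i$ is empty by the constraint on $\ibang$).  Collecting the whole family $\{\rho_{i,j}\}_{i,j}$ as premises of a single $\ibang$ rule yields
\[
\pi \dem \Lambda, \biguplus_{i,j} \Gamma_{i,j} \vdash \oc u : \biguplus_i [A_{i,1}, \dots, A_{i,k_i}],
\]
that is $\pi \dem \Lambda, \Gamma \vdash \oc u : \mult$ as required.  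For uniqueness: assume each $\pi_i$ is unique, and suppose $\pi'$ is another derivation of $\Lambda, \Gamma \vdash \oc u : \mult$.  It must end with $\ibang$, decomposing $\mult$ into its $\sum_i k_i$ components, each typed by a sub-derivation of $u$.  Grouping these sub-derivations according to the partition of $\mult$ into $\mult_1, \dots, \mult_n$ yields, via the original $\ibang$ rules, derivations $\pi'_i \dem \Lambda, \Gamma'_i \vdash v : \mult_i$; by uniqueness of each $\pi_i$, we get $\pi'_i = \pi_i$, and hence $\pi' = \pi$.

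\textbf{Main obstacle.}  The only non-routine point is the uniqueness argument in the $\ibang$ case, since the $\ibang$ rule is ``polyadic'' and one must argue that regrouping its sub-derivations according to the splitting $\mult = \biguplus_i \mult_i$ gives back precisely the $\pi_i$'s.  This is where the hypothesis of uniqueness of each $\pi_i$ is used essentially; everything else is bookkeeping on multiset unions of contexts.
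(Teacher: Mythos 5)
Your proof is correct and follows essentially the same route as the paper's: a case analysis on $v$ being a variable or a $\oc$-term, building $\pi$ by a single $\ivar$ axiom in the first case and by merging all the $\ibang$ premises in the second, with uniqueness inherited from the components. You are in fact slightly more explicit than the paper in the variable case (where the context assigning $\mult$ to $x$ is itself the multiset context $\Gamma$) and in spelling out the regrouping argument for uniqueness.
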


\begin{proof} We distinguish two cases:
\begin{itemize}
\item $v = x$. Then $\pi_i \dem \Lambda, x: \mult_i \vdash x : \mult_i$, and there is only one way to build $\pi \dem \Lambda, x : \mult \vdash x : \mult$, by using rule $\tvar$. Observe that $\Gamma$ is empty.
\item $v = \oc t$. Let $\mult_i = [A_{i1},\dots,A_{ik_i}]$; then $\pi_i$ is obtained from $k_i$ subderivations of shape $\pi_{ij}  \dem \Lambda, \Gamma_{ij} \vdash t : A_{ij}$ $(1 \leq j \leq k_i)$, and it is possible to build $\pi$ by combining all of the $\pi_{ij}$ via rule \ibang. Therefore the uniqueness of all $\pi_i$ implies the uniqueness of $\pi$.
\end{itemize}
\end{proof}

\begin{lemma}[Anti-substitution]\label{lemma:anti-substitution} 
If $\pi \dem \Lambda, \Gamma\uplus\Delta \vdash t\subs{x}{v} : \iA$, then there exist $\pi^t \dem \Lambda, \Gamma, x : \iP \vdash t : \iA$ and $\pi^v \dem \Lambda, \Delta \vdash v : \iP$. Moreover, if $\pi$ is unique, both $\pi^t$ and $\pi^v$ are unique.
\end{lemma}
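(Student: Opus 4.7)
The proof proceeds by induction on the structure of $t$, mirroring the structure of the Substitution Lemma (\Cref{lemma:substitution}) but in reverse. The overall idea is that each occurrence of $x$ in $t$ corresponds, after substitution, to a sub-derivation of $\pi$ typing $v$; I will collect all of these sub-derivations and glue them together via \Cref{lemma:anti-split} to obtain the single premise derivation $\pi^v$, and use the resulting multiset type $\iP$ to build $\pi^t$ with $x$ reinstated.

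\paragraph{Base cases.} If $t = x$, then $t\subs{x}{v} = v$, so $\pi$ itself is already a derivation of $v$ of some type $\iA$. I take $\iP \defeq \iA$ (which must be ground or a multiset, since only positive types can appear in a context) with $\Delta$ inherited from $\pi$ and $\Gamma$ empty; then $\pi^v \defeq \pi$, and $\pi^t$ is the \ivar axiom $\Lambda, x : \iA \vdash x : \iA$. If $t = y \neq x$, then $t\subs{x}{v} = y$ and $\pi$ is an \ivar axiom typing $y$. Here $x$ does not occur in $t$, so the derivation $\pi^v$ must still be constructed with multiset type $\iP \defeq []$: by \Cref{lemma:split} applied in reverse (the trivial instance of anti-split), such a derivation $\pi^v \dem \Lambda \vdash v : []$ exists via the \ibang rule with zero premises, with $\Delta$ empty. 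The probabilistic axioms \icoin, \icond, \iobs are handled analogously.

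\paragraph{Inductive cases.} For each constructor of $t$, I read off the shape of $\pi$ (which is forced by the fact that the type system is syntax-directed up to the premise contexts splitting), apply the inductive hypothesis to each immediate sub-derivation whose subject contains an occurrence of $x$, obtaining multiset types $\iP_1, \dots, \iP_n$ and derivations $\pi_i^v \dem \Lambda, \Delta_i \vdash v : \iP_i$. I then invoke \Cref{lemma:anti-split} to combine them into a single $\pi^v \dem \Lambda, \Delta \vdash v : \iP$ with $\iP = \biguplus_i \iP_i$ and $\Delta = \biguplus_i \Delta_i$, and rebuild $\pi^t$ by replacing the corresponding occurrences of $v$ by $x : \iP_i$ in each premise and applying the same rule. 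For example, for $t = t_1 t_2$ (rule \iapp), I split the premises of $\pi$ between $t_1\subs{x}{v}$ and $t_2\subs{x}{v}$, apply the inductive hypothesis to each, and anti-split to merge the two derivations of $v$. Rules \ibang and \ipair for values require care because an $n$-ary \ibang on $\oc r$ produces $n$ premises, and the substituted $x$ must be distributed coherently across them (the inductive hypothesis applied to each premise, then anti-split across all of them).

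\paragraph{The main obstacle and uniqueness.} The delicate point is the case \ilet (and similarly \iletp, \iapp) where the context splits additively between the two premises: the variable $x$ might occur in either sub-term, and the multiset types collected from the two branches have to be anti-split correctly so that the types of the \emph{bound} variable in the premise remain consistent. This is just bookkeeping, but it must be carefully done to show $\Gamma = \Gamma_1 \uplus \Gamma_2$ and $\iP = \iP_1 \uplus \iP_2$ in the reconstructed derivation. For the uniqueness clause, I note that at each inductive step the shape of $\pi^t$ is fully determined by the shape of $\pi$ (the syntax-directed rule is the same), and the uniqueness of each sub-derivation follows from the inductive hypothesis; the uniqueness of $\pi^v$ follows from \Cref{lemma:anti-split}, which transports uniqueness from its $n$ inputs to its single output. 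Hence if $\pi$ is unique, every intermediate choice is forced, and both $\pi^t$ and $\pi^v$ are unique.
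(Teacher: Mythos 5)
Your overall skeleton — induction on $t$, reconstructing $\pi^v$ by collecting the sub-derivations of $v$ created by the substitution and merging them with \Cref{lemma:anti-split} — is the same as the paper's, and it is the right one for the occurrences of $x$ that carry a \emph{multiset} type. The gap is that you treat $\iP$ as if it were always a multiset, and this fails in this calculus: the $\dS$ rule substitutes arbitrary values, so $x$ may occur at a \emph{ground} type (e.g.\ as the subject of a $\CPT{\cdot}$ or inside a pair), and ground types live in the additively managed context $\Lambda$, to which \Cref{lemma:anti-split} simply does not apply. Concretely, in your base case $t = y \neq x$ you assert that one can always take $\iP \defeq []$ and derive $\Lambda \vdash v : []$ by a zero-premise \ibang; but \ibang only types terms of the form $\oc u$, so this derivation does not exist when $v$ is, say, a pair $\pair{z_1}{z_2}$ or a ground-typed variable — and these cases do arise in the subject-expansion step for $\mapsto_{\dS}$. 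Likewise, in the binary rules (\iapp, \ilet, \iletp) and in \ibang, when $\iP$ is ground the two (or $n$) premises all see the \emph{same} $x : \iL$ in $\Lambda$, and the reconstruction must argue that the corresponding derivations of $v$ coincide rather than get $\uplus$-merged.

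The paper closes this gap with a systematic two-subcase analysis in every rule: either $\iP$ is a multiset, in which case your anti-split argument applies verbatim, or $\iP$ is ground, in which case \Cref{lemma:basic_val} guarantees that the derivation of the value $v$ at a ground type uses only \ivar and \ipair, has empty exponential context $\Delta$, and is canonical — so $\pi^v = \pi^v_1 = \cdots = \pi^v_n$ and no merging is needed. This same dichotomy is also what makes the uniqueness clause go through in the ground case (canonicity of ground-value derivations), whereas your uniqueness argument only cites \Cref{lemma:anti-split}. Adding this case split (and replacing the incorrect $\iP = []$ claim in the non-occurrence base cases with the ground/empty-multiset alternative) would bring your proof in line with the paper's.
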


\begin{proof}
By induction on $t$.
\begin{enumerate}
\item Case $t = x$. \label{anti-subs-x} Then $\iP = \iA$, the derivation $\pi^t$ is necessarily $\pi^t \dem \Lambda, x : \iA \vdash x : \iA$ (observe that $\Gamma$ is empty), and $\pi^v = \pi$.
\item \label{anti-subs-y} Case $t = y \not = x$.  Then $\pi^t$ is necessarily $\pi^t \dem \Lambda', y : \iA \vdash y : \iA$ where $\Lambda = \Lambda', y : \iA$ if $\iA$ is ground, $\Lambda = \Lambda'$ otherwise. We distinguish two subcases: 
\begin{itemize}
\item $\Lambda' = \Lambda'', x : \iP$, i.e. $\iP$ is ground. Then by \Cref{lemma:basic_val} $\Delta$ is empty; observe that there is a unique way of constructing $\pi^v \dem \Lambda \vdash v : \iP$, using rules \ivar and \ipair only.
\item otherwise $\iP = []$, therefore $v = \oc u$ and $\pi^v \dem \Lambda \vdash v : \iP$ can only be obtained by rule \ibang posing $n=0$.
\end{itemize}
\item Case $t = \sample{d}$. Similar to case (\ref{anti-subs-y}).
\item Case $t = \CPT{y_1,\dots,y_n}$ where $x = y_i$ ($1 \leq i \leq n$). This scenario is similar to case (\ref{anti-subs-x}), with $\pi^t \dem \Lambda, x : \YY_i \vdash \CPT{y_1,\dots,y_n} : \XX$ and $\pi^v \dem \Lambda \vdash v : \YY_i$, where $v = z$ is a variable. 
\item Case $t = \CPT{y_1,\dots,y_n}$ where $x \not = y_i$ ($1 \leq i \leq n$). Similar to to case (\ref{anti-subs-y}).
\item Case $t = \obsb{x}$. Similar to case (\ref{anti-subs-x}): necessarily $\pi^t \dem \Lambda, x : \nX^\tb \vdash \obsb{x} : \nX^\tb$ and $\pi^v \dem \Lambda \vdash v : \nX^\tb$, where $v = z$ is a variable.
\item Case $t = \obsb{y}$ where $y \not = x$.  Necessarily $\pi^t \dem \Lambda, x : \iP \vdash \obsb{y} : \nX^\tb$, and for $\pi^v \dem \Lambda \vdash v : \iP$ we distinguish the subcases $\iP$ ground or $\iP = []$, as in case (\ref{anti-subs-y}).
\item Case $t = \pair{w_1}{w_2}$. Then $\pi$ has shape:
\[
\infer[\ipair]{\pi \dem \Lambda \vdash \pair{w_1}{w_2} \subs{x}{v} : \iA}{\Lambda \vdash w_1\subs{x}{v} : \iL_1  &  \Lambda \vdash w_2\subs{x}{v} : \iL_2 }
\] 
Observe that since $\iA = \iL_1 \otimes \iL_2$, all exponential contexts are empty (by \Cref{lemma:basic_val}).
By \ih there are $\pi^{w_1} \dem \Lambda, x : \iP \vdash w_1 : \iL_1$ and $\pi^v_1 \dem \Lambda \vdash v : \iP$, together with $\pi^{w_2} \dem \Lambda, x : \iP \vdash w_2 : \iL_2$ and $\pi^v_2 \dem \Lambda \vdash v : \iP$; remark that $P$ is ground (again, by \Cref{lemma:basic_val}). Since all the aforementioned derivations are built from rules \ivar and \ipair only, they are necessarily unique. Clearly $\pi^t$ is obtained from $\pi^{w_1}$ and $\pi^{w_2}$ by rule \ipair, and $\pi^v = \pi^v_1 = \pi^v_2$.
\item \label{anti-subs-abs} Case $t = \lambda y.u$. The derivation $\pi$ has shape: 
\[
\infer[\iabs]{\pi \dem \Lambda, \Gamma\uplus\Delta \vdash \lambda y.u\subs{x}{v} : \iA}{\Lambda, \Gamma\uplus\Delta, y : \iQ \vdash u\subs{x}{v} : \iB}
\] 
where $\iA = \iQ \arrow \iB$. Note that, by definition of substitution, $y$ cannot occur free in $v$; therefore the \ih assures there exist $\pi^u \dem \Lambda, \Gamma, y : \iQ, x : \iP \vdash u : \iB$ and $\pi^v \dem \Lambda, \Delta \vdash v : \iP$. One can easily construct $\pi^t$ starting from $\pi^u$, via rule \iabs. Lastly, note that $\pi$ unique implies its premise is unique; in turn, by \ih, this implies $\pi^u$ (and consequently $\pi^t$) and $\pi^v$ are unique.
\item \label{anti-subs-app} Case $t = su$. Then $\pi$ is:
\[
\infer[\iapp]{\pi \dem \Lambda, \Gamma\uplus\Delta \vdash su \subs{x}{v} : \iA}{\Lambda, \Gamma_1\uplus\Delta_1 \vdash s\subs{x}{v} : \iQ \arrow \iA  &  \Lambda, \Gamma_2\uplus\Delta_2 \vdash u\subs{x}{v} : \iQ }
\] 
By \ih there are $\pi^s \dem \Lambda, \Gamma_1, x : \iP_1 \vdash s : \iQ \arrow \iA$ and $\pi^v_1 \dem \Lambda, \Delta_1 \vdash v : \iP_1$, together with $\pi^u \dem \Lambda, \Gamma_2, x : \iP_2 \vdash u : \iQ$ and $\pi^v_2 \dem \Lambda, \Delta_2 \vdash v : \iP_2$.
Clearly we can build the $\pi^t$ from $\pi^s$ and $\pi^u$ via rule \iapp; for $\pi^v$ we distinguish two cases:
\begin{itemize}
\item If $\iP = \iP_1 = \iP_2$ is ground, then $\Delta$ is empty; consequently $\pi^v = \pi^v_1 = \pi^v_2$.
\item Otherwise $\iP = \iP_1 \uplus \iP_2$ is a multiset; then by \Cref{lemma:anti-split} we can build $\pi^v \dem \Lambda,\Delta \vdash v : \iP$, where $\Delta = \Delta_1\uplus\Delta_2$, starting from $\pi^v_1$ and $\pi^v_2$.
\end{itemize} 
Remark that $\pi$ unique means that both its premises are unique; in turn, by \ih, this implies $\pi^s$, $\pi^u$ (and consequently $\pi^t$), $\pi^v_1$, $\pi^v_2$ (and consequently $\pi^v$) are unique. 
\item \label{anti-subs-let} Case $t = \letin{y}{u}{s}$. The derivation $\pi$ has shape:
\[
\infer[\ilet]{\pi \dem \Lambda, \Gamma\uplus\Delta \vdash (\letin{y}{u}{s})\subs{x}{v} : \iA}{\Lambda, \Gamma_1\uplus\Delta_1 \vdash u\subs{x}{v} : \iQ & \Lambda, \Gamma_2\uplus\Delta_2, y : \iQ \vdash s\subs{x}{v} : \iA}
\]
Observe that $y$ cannot occur free in $v$, by definition of substitution. Then by \ih there are $\pi^u \dem \Lambda, \Gamma_1, x : \iP_1 \vdash u : \iQ$ and $\pi^v_1 \dem \Lambda, \Delta_1 \vdash v : \iP_1$, together with $\pi^s \dem \Lambda, \Gamma_2, y : \iQ, x : \iP_2 \vdash s : \iA$ and $\pi^v_2 \dem \Lambda, \Delta_2 \vdash v : \iP_2$. Again, $\pi^t$ is obtained from $\pi^u$ and $\pi^s$ via rule \ilet; for what concerns $\pi^v$ and the claims about uniqueness, the same considerations we made in case (\ref{anti-subs-app}) apply.
\item Case $t = \letp{\pair{y_1}{y_2}}{w}{s}$. The derivation $\pi$ is:
\[
\infer[\iletp]{\pi \dem \Lambda, \Gamma \uplus \Delta \vdash (\letp{\pair{y_1}{y_2}}{w}{s})\subs{x}{v} : \iA}{\Lambda \vdash w\subs{x}{v} : \iL_1 \otimes \iL_2 & \Lambda, \Gamma \uplus \Delta, y_1 : \iL_1, y_2 : \iL_2 \vdash s\subs{x}{v} : \iA}
\]
By definition of substitution, $y_1$ and $y_2$ cannot occur free in $v$. Then by \ih there are $\pi^w \dem \Lambda, x : \iP_1 \vdash w : \iL_1 \otimes \iL_2$ and $\pi^v_1 \dem \Lambda \vdash v : \iP_1$, together with $\pi^s \dem \Lambda, \Gamma, y_1 : \iL_1, y_2 : \iL_2, x : \iP_2 \vdash s : \iA$ and $\pi^v_2 \dem \Lambda, \Delta \vdash v : \iP_2$. Clearly $\pi^t$ is obtained from $\pi^w$ and $\pi^s$ via rule \iletp. Regarding $\pi^v$ and uniqueness, the reasoning is again similar to case (\ref{anti-subs-app}), the only relevant difference being that either $\iP = \iP_1 = \iP_2$ is ground or $\iP_1 = []$ and $\iP = \iP_2$.
\item Case $t = \oc u$. The derivation $\pi$ is:
\[
\infer[\ibang]{\pi \dem \Lambda, \Gamma\uplus\Delta \vdash (\oc u)\subs{x}{v} : \iA}{ \big( \ \Lambda, \Gamma_i\uplus\Delta_i \vdash u\subs{x}{v} : \iA_i \ \big)_{i=1}^n }
\]
where $\Gamma\uplus\Delta = \biguplus_{i=1}^n(\Gamma_i\uplus\Delta_i)$ and $A = [\iA_1,\dots,\iA_n]$. By \ih we have $\pi^u_i \dem \Lambda, \Gamma_i, x : \iP_i \vdash u : \iA_i$ and $\pi^v_i \dem \Lambda, \Delta_i \vdash v : \iP_i$ for $1 \leq i \leq n$. Clearly $\pi^t$ can be obtained from all $\pi^u_i$ $(1 \leq i \leq n)$ by rule \ibang, while for $\pi^v$ we follow the same reasoning of case (\ref{anti-subs-app}): 
\begin{itemize}
\item If $\iP = \iP_i$, i.e. $\iP$ is ground, then $\Delta$ is empty; consequently $\pi^v = \pi^v_i$ for all $i$.
\item Otherwise $\iP = \biguplus_{i=1}^n \iP_i$ is a multiset; then by \Cref{lemma:anti-split} we can build $\pi^v \dem \Lambda,\Delta \vdash v : \iP$, where $\Delta = \biguplus_{i=1}^n \Delta_i$, starting from all $\pi^v_i$ $(1 \leq i \leq n)$.
\end{itemize} 
The result on uniqueness follows by generalizing the argument presented in case (\ref{anti-subs-app}) to $n$ premises.
In the special case $n=0$, the derivation $\pi^t \dem \Lambda, x : \iP \vdash \oc u : []$ must be obtained by using no premises; therefore both $\pi^t$ and $\pi^v \dem \Lambda \vdash v : \iP$ (observe that $\iP$ is necessarily ground) are uniquely determined.
\item Case $t = \der u$. Then $\pi$ has shape:
\[
\infer[\ider]{\pi \dem \Lambda, \Gamma\uplus\Delta \vdash (\der u)\subs{x}{v} : \iA}{\Lambda, \Gamma\uplus\Delta \vdash u\subs{x}{v} : [\iA]}
\]
The result immediately follows by \ih : indeed, there exist $\pi^u \dem \Lambda, \Gamma, x : \iP \vdash u : \iA$, from which we obtain $\pi^t$ via rule \ider, and $\pi^v \dem \Lambda, \Delta \vdash v : \iP$. The considerations on uniqueness are analogous to case (\ref{anti-subs-abs}).
\end{enumerate}
\end{proof}


\begin{theorem}[Subject Expansion]\label{thm:subject-expansion}
If $\pi' \dem \Lambda, \Gamma \vdash t' : \iA$ and $t \mapsto t'$, then there exists $\pi \dem \Lambda, \Gamma \vdash t : \iA$. Moreover, if $\pi'$ is unique, then $\pi$ is unique.
\end{theorem}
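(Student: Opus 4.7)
The plan is to follow closely the structure of the Subject Reduction proof in Appendix \ref{sec:subject-reduction}, proceeding by induction on the evaluation context $E$ in which the reduction occurs. The inductive cases (through $\ss\val$, $\letin{x}{\ss}{t}$, $\letin{x}{u}{\ss}$) are routine: after inversion of the last typing rule of $\pi'$, I would apply the induction hypothesis to the sub-derivation that types the reduct and reassemble the rule unchanged. The work is concentrated in the base case $E = \eshole{\cdot}$, one sub-case per root rule. The crucial external tool is the Anti-substitution Lemma \ref{lemma:anti-substitution}, together with Lemma \ref{lemma:anti-split}, both of which explicitly preserve uniqueness.

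For $\dbang$, given $\pi' \dem \Lambda, \Gamma \vdash s : \iA$, simply build $\pi$ by adding an $\ibang$ (with one premise, $n=1$) producing type $[\iA]$, followed by an $\ider$, yielding $\der(\oc s) : \iA$. For $\dpair$, the derivation of $s\esub{x_1}{v_1}\esub{x_2}{v_2} : \iA$ is two nested $\ilet$; invert both to recover derivations of $v_i : \iL_i$ (both $\iL_i$ ground, since $\val_i$ is a value summing inside a pair) and of $s : \iA$ under $x_1 : \iL_1, x_2 : \iL_2$, then recombine via $\ipair$ and $\iletp$.

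For $\dB$, I proceed by induction on the substitution list $\eslist$ exactly as in the SR proof. In the base case $\eslist = \shole{\cdot}$ one is typing $s\esub{x}{u} : \iA$; inversion of the outer $\ilet$ supplies $\Lambda, \Gamma_1 \vdash u : \iP$ and $\Lambda, \Gamma_2, x : \iP \vdash s : \iA$, which are reassembled with an $\iabs$ rule followed by an $\iapp$ to type $(\lambda x.s)u : \iA$. The inductive step permutes an outer $\ilet$-on-$r$ past the new $\iapp$, invoking the IH on the body. For $\dS$ the argument is symmetric: the base case directly applies Anti-substitution to extract $\pi^s$ and $\pi^v$ from $\pi'$ and recombines them through an $\ilet$; the inductive step on $\eslist$ again permutes an outer $\ilet$.

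Uniqueness propagates along the construction: every step is either a syntax-driven inversion (unique by inspection of the typing rules) or an invocation of Lemma \ref{lemma:anti-substitution} / Lemma \ref{lemma:anti-split}, both of which are designed to preserve uniqueness. The only mildly delicate point — the genuine obstacle in Anti-substitution and hence inherited here — is the dichotomy between $\iP$ ground (so $\Delta$ is empty and all copies of $\pi^v$ coincide, by Lemma \ref{lemma:basic_val}) and $\iP$ a multiset (so $\Delta$ must be split additively, by Lemma \ref{lemma:anti-split}). Since this split is already dealt with inside the cited lemmas, the Subject Expansion statement itself reduces to bookkeeping once the induction on $E$ and on $\eslist$ is set up.
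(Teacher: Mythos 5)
Your proposal follows essentially the same route as the paper's own proof: induction on the evaluation context with one base case per root rule, an inner induction on the substitution list for $\mapsto_\dB$ and $\mapsto_\dS$, the Anti-substitution and Anti-split lemmas doing the real work in the $\mapsto_\dS$ case, and uniqueness threaded through because every inversion is syntax-driven and the cited lemmas preserve it. The reassembly steps you describe for $\mapsto_\dbang$ (one-premise $\ibang$ followed by $\ider$) and $\mapsto_\dpair$ (inverting the two nested $\ilet$s and recombining via $\ipair$ and $\iletp$) are exactly those in the paper, so this is correct and not a different argument.
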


\begin{proof}
The proof is by induction on the reduction context $E$ in which the reduction takes place. Here we examine the base case, in which the reduction context is empty; the inductive cases follow by \ih. There is one subcase per rewriting rule.
\begin{itemize}
\item Rule $\dB$.
We have $t = (\lambda x.s)\eslist u$ and $t' = \eshole{s \esub{x}{u}}\eslist $; we proceed by induction on $\eslist$.
\begin{itemize}
\item Case $\eslist = \shole{\cdot}$. The derivation $\pi'$ has shape:
\begin{prooftree}
	\AxiomC{$\Lambda, \Gamma_2 \vdash u : \iP $}
	\AxiomC{$\Lambda, \Gamma_1, x : \iP \vdash s : \iA $}
\RightLabel{\ilet}
\BinaryInfC{$\pi' \dem \Lambda, \Gamma \vdash s \esub{x}{u} : \iA $}
\end{prooftree}
clearly it is possible to build $\pi$ as follows:
\begin{prooftree}
\AxiomC{$\Lambda, \Gamma_1, x : \iP \vdash s : \iA$}
\RightLabel{\iabs}
\UnaryInfC{$\Lambda, \Gamma_1 \vdash \lambda x.s : \iP \arrow \iA $}
	\AxiomC{$\Lambda, \Gamma_2 \vdash u : \iP $}
\RightLabel{\iapp}
\BinaryInfC{$\pi \dem \Lambda, \Gamma \vdash (\lambda x.s)u : \iA $}
\end{prooftree}
Observe that $\pi'$ unique means that the two premises of the \ilet rule are unique; from this we immediately get that $\pi$ is unique too.

\item Case $\eslist = \eslist'\esub{y}{r}$. The derivation $\pi'$ has shape: 
\begin{prooftree}
\AxiomC{$ \Lambda, \Gamma_1 \vdash r : \iP$}
	\AxiomC{$\rho' \dem \Lambda, \Gamma_2 \uplus \Gamma_3, y : \iP \vdash \eshole{s \esub{x}{u}}\eslist' : \iA$}
\RightLabel{\ilet}
\BinaryInfC{$\pi' \dem \Lambda, \Gamma \vdash \eshole{s \esub{x}{u}}\eslist : \iA $}
\end{prooftree}
By \ih there exists a derivation $\rho$ of shape: 
\begin{prooftree}
\AxiomC{$\Lambda, \Gamma_2, y : \iP \vdash \eshole{\lambda x.s}\eslist' : \iQ \arrow \iA$}
	\AxiomC{$\Lambda, \Gamma_3 \vdash u : \iQ$}
\RightLabel{\iapp}
\BinaryInfC{$\rho \dem \Lambda, \Gamma_2 \uplus \Gamma_3, y : \iP \vdash \eshole{\lambda x.s}\eslist' u : \iA$}
\end{prooftree}
where we can safely assume that $y \not \in \dom(\Gamma_3)$ by hypothesis of rule $\dB$. Therefore one can construct $\pi$ as follows:
\begin{prooftree}
\AxiomC{$ \Lambda, \Gamma_1 \vdash r : \iP $}
	\AxiomC{$ \Lambda, \Gamma_2, y : \iP \vdash \eshole{\lambda x.s}\eslist' : \iQ \arrow \iA $}
\RightLabel{\ilet}
\BinaryInfC{$\Lambda, \Gamma_1 \uplus \Gamma_2 \vdash \eshole{\lambda x.s}\eslist : \iQ \arrow \iA $}
	\AxiomC{$\Lambda, \Gamma_3 \vdash u : \iQ $}
\RightLabel{\iapp}
\BinaryInfC{$\pi \dem \Lambda, \Gamma \vdash \eshole{\lambda x.s}\eslist u : \iA$}
\end{prooftree}
Observe that $\pi'$ unique means that its two premises, one of which is $\rho'$, are unique; this by \ih implies that $\rho$ is unique, and the uniqueness of $\pi$ follows.
\item Rule $\sval$.
Then $t = s \esub{x}{\eshole{v}\eslist}$ and $t' = \eshole{s\subs{x}{v}}\eslist$. Again, we proceed by induction on $\eslist$.
\begin{itemize}
\item $\eslist = \shole{\cdot}$. Starting from $\pi' \dem \Lambda, \Gamma \vdash s\subs{x}{v} : \iA$, \Cref{lemma:anti-substitution} guarantees there exist $\pi_1$ and $\pi_2$ from which we can construct $\pi$:
\begin{prooftree}
\AxiomC{$\pi_1 \dem \Lambda, \Gamma_1 \vdash v : \iQ$}	
	\AxiomC{$\pi_2 \dem \Lambda, \Gamma_2, x : \iQ \vdash s : \iA$}
\RightLabel{\ilet}
\BinaryInfC{$\pi \dem \Lambda, \Gamma \vdash s\esub{x}{v} : \iA$}
\end{prooftree}
The same Lemma provides the result about uniqueness.
\item $\eslist = \eslist'\esub{y}{r}$. Then $\pi'$ is:
\begin{prooftree}
	    \AxiomC{$ \Lambda, \Gamma_1 \vdash r : \iP $}
\AxiomC{$\rho' \dem \Lambda, \Gamma_2 \uplus \Gamma_3 , y : \iP \vdash \eshole{s\isub{x}{v}}\eslist' : \iA$}
\RightLabel{\ilet}
\BinaryInfC{$\pi' \dem \Lambda, \Gamma \vdash \eshole{s \subs{x}{v}}\eslist : \iA$}
\end{prooftree}
and by \ih there exists $\rho$:
\begin{prooftree}
    \AxiomC{$\Lambda, \Gamma_2, y : \iP \vdash \eshole{v}\eslist' : \iQ $}
\AxiomC{$\Lambda, \Gamma_3, x : \iQ \vdash s : \iA$}
\RightLabel{\ilet}
\BinaryInfC{$\rho \dem \Lambda, \Gamma_2 \uplus \Gamma_3, y: \iP \vdash s\esub{x}{\eshole{v}\eslist'} : \iA$}
\end{prooftree}
where $y \not \in \dom(\Gamma_3)$ by hypothesis of rule $\sval$. Hence we conclude by building:
\begin{prooftree}
\AxiomC{$ \Lambda, \Gamma_1 \vdash r : \iP $}
    \AxiomC{$\Lambda, \Gamma_2, y : \iP \vdash \eshole{v}\eslist' : \iQ $}
\RightLabel{\ilet}
\BinaryInfC{$\Lambda, \Gamma_1 \uplus \Gamma_2 \vdash \eshole{v}\eslist : \iQ $}	
	\AxiomC{$\Lambda, \Gamma_3, x : \iQ \vdash s : \iA$}
\RightLabel{\ilet}
\BinaryInfC{$\pi \dem \Gamma \vdash s\esub{x}{\eshole{v}\eslist} : \iA$}
\end{prooftree}
The reasoning about uniqueness is similar to case $\dB$.
\end{itemize}
\item Rule $\dbang$.
In this case $ t = \der \oc s$ and $t' = s$.
Starting from $\pi' \dem \Lambda, \Gamma \vdash s : \iA$, we can easily build:
\begin{prooftree}
\AxiomC{$\pi' \dem \Lambda, \Gamma \vdash s : \iA$}
\RightLabel{\ibang}
\UnaryInfC{$\Lambda, \Gamma \vdash \oc s : [\iA]$}
\RightLabel{\ider}
\UnaryInfC{$\pi \dem \Lambda, \Gamma \vdash \der (\oc s) : \iA$}
\end{prooftree}
One immediately sees that $\pi'$ unique implies $\pi$ unique.
\item Rule $\dpair$. 
In this case $t$ is $\letin{\pair{x_1}{x_2}}{\pair{v_1}{v_2}}{s}$ and $t'$ is $s\esub{x_1}{v_1}\esub{x_2}{v_2}$.
The derivation $\pi'$ has shape:
\begin{prooftree}
\AxiomC{$\Lambda \vdash v_2 : \iL_2$}
	\AxiomC{$\Lambda \vdash v_1 : \iL_1$ }
		\AxiomC{$\Lambda, \Gamma, x_1 : \iL_1, x_2 : \iL_2 \vdash s : \iA$}
	\RightLabel{\ilet}
	\BinaryInfC{$\Lambda, \Gamma, x_2 : \iL_2 \vdash s\esub{x_1}{v_1} : \iA$}
\RightLabel{\ilet}
\BinaryInfC{$\pi' \dem \Lambda, \Gamma \vdash s\esub{x_1}{v_1}\esub{x_2}{v_2} : \iA$}
\end{prooftree}
By using the very same subderivations, and only changing the rules, it is easy to obtain $\pi$:
\begin{prooftree}
\AxiomC{$\Lambda \vdash v_1 : \iL_1 $ }
	\AxiomC{$\Lambda \vdash v_2 : \iL_2 $ }
\RightLabel{\ipair}
\BinaryInfC{$\Lambda \vdash \pair{v_1}{v_2} : \iL_1 \otimes \iL_2 $}
	\AxiomC{$\Lambda, \Gamma, x_1 : \iL_1, x_2 : \iL_2 \vdash s : \iA $}
\RightLabel{\iletp}
\BinaryInfC{$\pi \dem \Lambda, \Gamma \vdash \letp{\pair{x_1}{x_2}}{\pair{v_1}{v_2}}{s} : \iA$}
\end{prooftree}
Again, it easy to check that $\pi'$ unique implies $\pi$ unique.
\end{itemize}
\end{itemize}
\end{proof}

\end{document}